\setlist[itemize,1]{label=$\bullet$}
\setlist[itemize,2]{label=$\bullet$}
\setlist[itemize,3]{label=$\bullet$}
\setlist[itemize,4]{label=$\bullet$}
\setlist[itemize,5]{label=$\bullet$}
\setlist[itemize,6]{label=$\bullet$}
\setlist[itemize,7]{label=$\bullet$}
\setlist[itemize,8]{label=$\bullet$}
\setlist[itemize,9]{label=$\bullet$}
\def\arcr{\@arraycr}
\newcommand{\rparagraph}[1]{\paragraph{\hspace{-\parindent}\textbf{#1}}}
\DeclareFontFamily{U}{mathx}{\hyphenchar\font45}
\DeclareFontShape{U}{mathx}{m}{n}{
      <5> <6> <7> <8> <9> <10>
      <10.95> <12> <14.4> <17.28> <20.74> <24.88>
      mathx10
      }{} 
\DeclareSymbolFont{mathx}{U}{mathx}{m}{n}
\DeclareMathAccent{\wideparen}{0}{mathx}{"75}
\def\presuper#1#2%
\DeclareExpandableDocumentCommand{\IfNoValueOrEmptyTF}{mmm}
 {
  \IfNoValueTF{#1}{#2}
   {
    \tl_if_empty:nTF {#1} {#2} {#3}
   }
 }
\newcommand{\xMapsto}[2][]{\ext@arrow 0599{\Mapstofill@}{#1}{#2}}
\def\Mapstofill@{\arrowfill@{\Mapstochar\Relbar}\Relbar\Rightarrow}
\newenvironment{block}[1][t]
  {\begin{array}[#1]{@{}l@{}}}
  {\end{array}}
\definecolor{lightgray}{gray}{0.90}
\newcommand{\Gbox}[1]{\colorbox{lightgray}{$#1$}}
\newcommand{\derivof}{\mathrel{\blacktriangleright}}
\reservestyle{\oblang}{\mathsf}
\newcommand{\Z}{\ensuremath{\mathbb{Z}}} 
\newcommand{\Pow}{\ensuremath{\mathcal{P}}} 
\newcommand{\Int}{\mathsf{Int}}
\newcommand{\Bool}{\mathsf{Bool}}
\newcommand{\oblset}[1]{\textsc{#1}}
\newcommand{\Type}{\oblset{Type}}
\newcommand{\Var}{\oblset{Var}}
\newcommand{\Env}{\oblset{Env}}
\newcommand{\Value}{\oblset{Value}}
\newcommand{\Label}{\oblset{Label}}
\newcommand{\Term}{\oblset{Term}}
\newcommand{\GType}{\oblset{GType}}
\newcommand{\RTerm}{\oblset{RTerm}}
\newcommand{\ECtxt}{\oblset{ECtxt}}
\newcommand{\EvFrame}{\oblset{EvFrame}}
\newcommand{\Decomp}{\oblset{Decomp}}
\newcommand{\Mapping}{\oblset{Mapping}}
\newcommand{\fff}{\textsf{false}\xspace}
\newcommand{\ttt}{\textsf{true}\xspace}
\newcommand{\dom}{\mathit{dom}}
\newcommand{\cod}{\mathit{cod}}
\newcommand{\proj}{\mathit{proj}}
\newcommand{\idom}{\mathit{idom}}
\newcommand{\icod}{\mathit{icod}}
\newcommand{\iproj}{\mathit{iproj}}
\newcommand{\height}{\mathit{height}}
\newcommand{\ldom}{\mathit{dom}_l}
\newcommand{\size}{\mathit{size}}
\newcommand{\nat}{\mathbb{N}}
\newcommand{\spacep}[2]{\mathit{space}_{#1}\left({#2}\right)}
\newcommand{\generator}{\mathscr{C}}
\newcommand{\generate}[1]{\generator\left\llbracket #1 \right\rrbracket}
\newcommand{\brec}[1]{\left[#1\right]}
\newcommand{\brow}[1]{\left[#1,\?\right]}
\newcommand{\missing}[1]{{#1} : {\varnothing}}
\newcommand{\mappings}[3]{\sum\limits_{#1 = \added{1}}^{#2} {#3}}
\newcommand{\bounded}[3]{{#1}
  : \mathop{{#3}_{#2}}}
  \newcommand{\pla}[1]{\textcolor{violet}{#1}}
  \newcommand{\pls}[1]{\textcolor{olive}{#1}}
  \newcommand{\botev}[0]{{\varepsilon_{\bot}}}
  \newcommand{\bottrans}[1]{\fatsemi^{#1}_{\bot}} 
  \newcommand{\relcomp}{\mathbin{;}}
\newcommand{\gprec}{\sqsubseteq} 
\newcommand{\sub}{<:}
\newcommand{\csub}{\lesssim}
\newcommand{\meet}{\sqcap}
\newcommand{\finto}{\stackrel{\text{fin}}{\rightharpoonup}}
\newcommand{\?}{\textsf{\upshape ?}} 
\newcommand{\consistent}[1]{\widetilde{#1}}
\newcommand{\collecting}[1]{\wideparen{#1}}
\newcommand{\cT}{S} 
\newcommand{\clT}{\collecting{T}}
\newcommand{\cdom}{\consistent{\dom}}
\newcommand{\ccod}{\consistent{\cod}}
\newcommand{\cproj}{\consistent{\proj}}
\newcommand{\submeet}{\mathbin{\begin{turn}{-90}$\hspace{-0.8em}<:$\end{turn}}}
\newcommand{\subjoin}{\mathbin{\begin{turn}{90}$\hspace{-0.2em}<:$\end{turn}}}
\newcommand{\csubmeet}{\mathbin{\consistent{\wedge}}}
\newcommand{\csubjoin}{\mathbin{\consistent{\vee}}}
\newcommand{\lx}{\ell} 
\DeclareDocumentCommand{\bul}{ O{\lx_1} O{\lx_2} }{[{#1},{#2}]}
\newcommand{\clx}{{\tilde{\lx}}} 
\newcommand{\cS}{{\consistent{S}}} 
\newcommand{\subl}{\preccurlyeq}
\newcommand{\ljoincore}{\begin{turn}{90}$\hspace{-0.2em}\prec$\end{turn}}
\newcommand{\lmeetcore}{\begin{turn}{270}$\hspace{-0.6em}\prec$\end{turn}}
\newcommand{\ljoin}{\mathbin{\ljoincore}}
\newcommand{\lmeet}{\mathbin{\lmeetcore}}
\newcommand{\interior}[1]{\mathcal{I}_{#1}}
\DeclareDocumentCommand{\interiorf}{ m O{\joinfun}}{\mathcal{I}^{#2}_{#1}}
\newcommand{\Ev}[1]{\oblset{Ev}^{#1}}
\newcommand{\Isub}{\interior{}}
\newcommand{\trans}[1]{\fatsemi^{#1}}
\renewcommand{\merge}[1]{\triangle^{#1}}
\newcommand{\rec}[1]{[#1]}
\newcommand{\VarT}[1]{\oblset{Var}_{#1}}
\newcommand{\TermT}[1]{\oblset{Term}_{#1}}
\newcommand{\pr}[1]{\braket{{#1}}}
\newcommand{\cast}[2]{\evcast{\evpr{#1}}{#2}}
\newcommand{\error}{\textup{\textbf{error}}}
\newcommand{\red}{\longmapsto}
\newcommand{\nred}{~-->~}
\newcommand{\iapp}[1]{\mathbin{@^{#1}}}
\newcommand{\ie}{\emph{i.e.}\xspace}
\newcommand{\cf}{\emph{cf.}\xspace}
\newsavebox{\cTsave}
\savebox{\cTsave}{$\cT$}
\newsavebox{\clTsave}
\savebox{\clTsave}{$\clT$}
\newcommand{\ev}{\varepsilon}
\newcommand{\evcast}[2]{#1#2}
\newcommand{\evlangle}{\langle}
\newcommand{\evrangle}{\rangle}
\newcommand{\evpr}[1]{\braket{#1}}
\newcommand{\eve}{\cast{\ev}{\e}}
\newcommand{\STFLsub}{\text{STFL}_{<:}}
\newcommand{\GTFLcsub}{\text{GTFL}_{\csub}}
\newcommand{\BRRcsub}{\text{BRR}_{\csub}}
\newcommand{\new}[1]{{\color{memory} #1}}
\renewcommand{\cast}[2]{#1#2}
\DeclareDocumentCommand{\reffss}{ m m O{\lxc} }{{\mathsf{ref}}^{#2}~#1}
\DeclareDocumentCommand{\reffs}{ m m O{\clxc} }{{\mathsf{ref}}^{#2}~#1}
\DeclareDocumentCommand{\reffsev}{ m m O{\evl} O{\clxc} }{{{\mathsf{ref}}^{{\color{staticcolor} #2}}_{#3}}~#1}
\DeclareDocumentCommand{\assignev}{ O{\evl} O{\clx,\cS_1} m m}{#3 \overset{#1,\stac{#2}}{:=} #4}
\newcommand{\opt}[1]{}
\newcommand{\lxc}[1][]{{\color{clxccolor} \lx_{c#1}}}
\newcommand{\lxr}[1][]{{\color{dynamic} \lx_{r#1}}}
\newcommand{\clxc}[1][]{{\color{clxccolor} \tilde{\lx}_{c#1}}}
\newcommand{\clxcp}[1][]{{\color{clxccolor} \tilde{\lx}'_{c#1}}}
\DeclareDocumentCommand{\iconf}{ m m O{\clxc} }{\pr{#2,{#3}} \triangleright #1}
\newcommand{\cconf}[1][]{{ \phi_{#1}}}
\DeclareDocumentCommand{\pconf}{ m O{} }{{\cconf[#2]} \triangleright #1}
\DeclareDocumentCommand{\iconfnaive}{ m m O{\clxc} }{#2 \triangleright #1}
\DeclareDocumentCommand{\iconfr}{ m m m O{\clxc} }{\iconf{#2}{#1}[{#4}] ~{| #3}}
\DeclareDocumentCommand{\iconfre}{ m m m O{\clxc} }{#2~{| #3}}
\DeclareDocumentCommand{\cprot}{ O{\evl\clx'} m m m O{\clxcp} O{\evrp}}{\mathsf{prot}^{{\color{staticcolor} #5,#2,#3}}_{#6,#1} #4}
\DeclareDocumentCommand{\cprots}{ O{\evl\clx'} m O{\evrp}}{\mathsf{prot}^{{\color{staticcolor}}}_{#3,#1} #2}
\DeclareDocumentCommand{\logr}{ m m m O{\evr} O{\clxc}}{\braket{\iconf{#2}{\cast{#4}{{\color{dynamic} #1}}}[#5], #3}}
\DeclareDocumentCommand{\logrs}{ m m m O{\evr} O{\clxc}}{\braket{\phiexpand{\cast{#4}{{\color{dynamic} #1}}}{#5}, #2, #3}}
\DeclareDocumentCommand{\logri}{ m m m m O{\evr} O{\clxc} O{}}{\braket{\iconf{#2_{#4}}{\cast{#5[#4]}{#1[#4]}}[#6[#4]], #3[#4]}}
\DeclareDocumentCommand{\logris}{ m m m m O{\evr} O{\clxc} O{}}{\braket{\phiexpand{\cast{#5[#4]}{#1[#4]}}{#6[#4]},#2_{#4}, #3[#4]}}
\DeclareDocumentCommand{\logrp}{ m m m O{\evr} O{\clxc}}{\braket{\pconf{#2}, #3}}
\DeclareDocumentCommand{\logrip}{ m m m m O{\evr} O{\clxc} O{}}{\braket{\pconf{#2_{#4}}[#4], #3[#4]}}
\newcommand{\RawValue}{\oblset{RawValue}}
\newmdenv[topline=false,bottomline=false,leftline=false]{borderright}
\DeclareDocumentCommand{\store}{ O{} }{ {\IfNoValueTF{#1}{\new{\mu}}{\new{\mu_{#1}}}}}
\DeclareDocumentCommand{\storep}{ O{} }{ {\IfNoValueTF{#1}{\new{\mu'}}{\new{\mu'_{#1}}}}}
\DeclareDocumentCommand{\storepp}{ O{} }{ {\IfNoValueTF{#1}{\new{\mu''}}{\new{\mu''_{#1}}}}}
\DeclareDocumentCommand{\storeppp}{ O{} }{ {\IfNoValueTF{#1}{\new{\mu'''}}{\new{\mu'''_{#1}}}}}
\DeclareDocumentCommand{\pstore}{ O{} }{ {\IfNoValueTF{#1}{{\mu}}{{\mu_{#1}}}}}
\newcommand{\lnreds}[1][\lxr]{{\nred}}
\newcommand{\lreds}[1][\lxr]{~{\red}~}
\newcommand{\evr}[1][]{{\color{dynamic} \ev_{r#1}}}
\newcommand{\evrp}[1][]{{\color{dynamic} \ev'_{r#1}}}
\newcommand{\joinfun}{\ljoin}
\newcommand{\joinfunE}{\ljoin_E}
\newcommand{\meetfun}{\lmeet}
\newcommand{\meetfunE}{\lmeet_E}
\newcommand{\cjoinfun}{\widetilde{\ljoin}}
\DeclareDocumentCommand{\setofjoins}{ m O{\joinfun} }{#2\overline{#1}}
\DeclareDocumentCommand{\setofjoinsE}{ m O{\joinfunE} }{#2\overline{#1}}
\DeclareDocumentCommand{\setofmeets}{ m O{\meetfun} }{#2\overline{#1}}
\DeclareDocumentCommand{\setofmeetsE}{ m O{\meetfunE} }{#2\overline{#1}}
\DeclareDocumentCommand{\setofcjoins}{m O{\cjoinfun}}{\setofjoins{#1}[#2]}
\DeclareDocumentCommand{\funkyset}{ O{\lx} O{\star} }{\setj{#1 \IfNoValueOrEmptyTF{#2}{}{, #2}}}
\DeclareDocumentCommand{\funkysetm}{ O{\lx} O{\star} }{\setm{#1 \IfNoValueOrEmptyTF{#2}{}{, #2}}}
\DeclareDocumentCommand{\mergejoin}{ O{\subl} }{{\merge{#1}_{\tiny\ljoin}}}
\DeclareDocumentCommand{\mergemeet}{ O{\subl} }{\merge{#1}_{\tiny\lmeet}}
\DeclareDocumentCommand{\transjoin}{ O{\subl} }{{\trans{#1}_{\tiny\ljoin}}}
\DeclareDocumentCommand{\transmeet}{ O{\subl} }{{\trans{#1}_{\tiny\lmeet}}}
\newcommand{\setj}[1]{\set{#1}^{\tiny\ljoin}}
\newcommand{\setm}[1]{\set{#1}^{\tiny\lmeet}}
\newcommand{\alphalj}[2]{\alpha'_{\lx}}
\newcommand{\gammalj}[2]{\gamma'_{\lx}}
\newcommand{\opl}[1][]{o_{\lx #1}}
\DeclareDocumentCommand{\interiorfp}{ m O{\opl[1]} O{\opl[2]}}{\mathcal{I}^{#2,#3}_{#1}}
\newcommand{\x}{\mathsf{x}}
\newcommand{\y}{\mathsf{y}}
\newcommand{\f}{\mathsf{f}}
\newcommand{\q}{\mathsf{q}}
\newcommand{\bb}{\mathsf{hasM}}
\newcommand{\m}{\mathsf{m}}
\newcommand{\prodd}{\mathsf{sum}}
\newcommand{\CC}{\mathcal{C}}
\newcommand{\rx}{\mathbf{x}}
\newcommand{\rn}{\mathbf{n}}
\newcommand{\rb}{\mathbf{b}}
\newcommand{\rr}{\mathbf{e}}
\newcommand{\ru}{\mathbf{u}}
\newcommand{\rv}{\mathbf{v}}
\newcommand{\rl}{\mathbf{l}}
\newcommand{\wf}{\;\mathbf{wf}}
\renewcommand{\eve}{\ev\rr}  
\lstdefinelanguage{scala}{
  morekeywords={let,in,if,then,else,fun},
  otherkeywords={=>,!,:=,+},
  sensitive=true,
  morecomment=[l]{//},
  morecomment=[n]{/*}{*/},
  morestring=[b]",
  morestring=[b]',
  morestring=[b]""",
  escapeinside={(*}{*)},
  moredelim=**[is][{\btHL}]{`}{`},
  moredelim=**[is][\addeddenv]{@}{@}
}
\newcommand{\evidenceof}{\vdash} 
\newcommand\reallywidehat[1]{%
  \savestack{\tmpbox}{\stretchto{%
      \scaleto{%
        \scalerel*[\widthof{\ensuremath{#1}}]{\kern-.6pt\bigwedge\kern-.6pt}%
        {\rule[-\textheight/2]{1ex}{\textheight}}
      }{\textheight}
    }{0.5ex}}%
  \stackon[1pt]{#1}{\tmpbox}%
}
\begin{document}

\title{Abstracting Gradual Typing Moving Forward: Precise and
  Space-Efficient (Technical Report)}

\author{Felipe Ba\~{n}ados~Schwerter}
\email{fbanados@cs.ubc.ca}
\affiliation{%
  \department[0]{Software Practices Lab} %
  \department[1]{Department of Computer Science} %
  \institution{University of British Columbia} %
  \streetaddress{201-2366 Main Mall} %
  \city{Vancouver} %
  \state{BC} %
  \postcode{V6T1Z4} %
  \country{Canada}}
\author{Alison M. Clark}
\email{alison.marie.clark@gmail.com}
\affiliation{%
}
\authornote{Work done while at UBC}
\author{Khurram A. Jafery}
\email{khurram.jafery@gmail.com}
\authornotemark[1]
\affiliation{%
  \institution{Amazon} %
}
\author{Ronald Garcia}
\email{rxg@cs.ubc.ca}
\affiliation{%
  \department[0]{Software Practices Lab} %
  \department[1]{Department of Computer Science} %
  \institution{University of British Columbia} %
  \streetaddress{201-2366 Main Mall} %
  \city{Vancouver} %
  \state{BC} %
  \postcode{V6T1Z4} %
  \country{Canada}}

\begin{abstract}
  Abstracting Gradual Typing (AGT) is a systematic approach to designing
  gradually-typed languages.  Languages developed using AGT automatically
  satisfy the formal \emph{semantic} criteria for gradual languages identified
  by~\citet{siek15criteria}.  Nonetheless, vanilla AGT semantics can still have
  important shortcomings.  First, a gradual language's runtime checks should
  preserve the \emph{space-efficiency} guarantees inherent to the underlying
  static and dynamic languages.  To the contrary, the default operational
  semantics of AGT break proper tail calls.
  Second, a gradual language's runtime checks should enforce basic modular
  type-based invariants expected from the static type discipline.  To the
  contrary, the default operational semantics of AGT may fail to enforce some
  invariants in surprising ways.  We demonstrate this in the
  $\GTFLcsub$ language of \citet{garciaAl:popl2016}.

  This paper addresses both problems at once by refining the theory underlying
  AGT's dynamic checks.  \citet{garciaAl:popl2016} observe that AGT involves
  \emph{two} abstractions of static types: one for the static semantics and one
  for the dynamic semantics.  We recast the latter as an abstract
  interpretation of \emph{subtyping} itself, while gradual types still abstract
  static types.  Then we show how
  \emph{forward-completeness}~\citep{giacobazzi01incomplete} is key to
  supporting both space-efficient execution and reliable\deleted{, predictable} runtime
  type enforcement.
\end{abstract}

\begin{CCSXML}
  <ccs2012>
  <concept>
  <concept_id>10003752.10010124.10010125.10010130</concept_id>
  <concept_desc>Theory of computation~Type structures</concept_desc>
  <concept_significance>500</concept_significance>
  </concept>
  <concept>
  <concept_id>10003752.10010124.10010131.10010134</concept_id>
  <concept_desc>Theory of computation~Operational semantics</concept_desc>
  <concept_significance>300</concept_significance>
  </concept>
  <concept>
  <concept_id>10011007.10011006.10011008</concept_id>
  <concept_desc>Software and its engineering~General programming languages</concept_desc>
  <concept_significance>100</concept_significance>
  </concept>
  </ccs2012>
\end{CCSXML}

\ccsdesc[500]{Theory of computation~Type structures}
\ccsdesc[300]{Theory of computation~Operational semantics}
\ccsdesc[100]{Software and its engineering~General programming languages}

\keywords{gradual typing, cast calculi, abstract interpretation, subtyping,%
  coinduction}

\maketitle

\renewcommand{\shortauthors}{Ba\~{n}ados Schwerter et al.}
\listofchanges
\section{Introduction}
\label{sec:intro}

Gradual typing is an increasingly popular approach to designing programming
languages that seamlessly blend dynamic and static type checking.  Work in this
space has produced a variety of language models,
e.g.,~\cite{GradualTyping,STP,GradualTypingObjects,Wadler2009esop,%
  sergey12ownership,ina11generics}, evaluation criteria~\cite{siek15criteria},
and extensions of the concept to new contexts,
e.g.,~\cite{disney11flow,banados14effects}.

However, each excursion raises new questions about how language
designers can produce sensible gradually-typed counterparts to their chosen
static type discipline, and compare and evaluate alternative potential designs.
To address these problems, \citet{garciaAl:popl2016} proposed Abstracting
Gradual Typing (AGT), a methodology for developing gradually-typed extensions
of pre-existing static type disciplines.  Their approach systematizes the
construction of static and dynamic semantics for gradual languages that
\emph{by construction} admit straightforward proofs of broadly accepted formal
criteria for gradually typed languages~\cite{siek15criteria}.

Despite its metatheoretical gains, the AGT approach still requires extra work
and human creativity to achieve some operational and semantic goals for a
gradually-typed language.  First, AGT-induced semantics do not automatically
provide \emph{space-efficient} runtime checking.  Second, AGT's runtime
checking regime sometimes requires manual tuning to ensure that in addition to
Siek et al.'s criteria, desired and expected \emph{modular type-based semantic
  invariants} are properly enforced.  This paper addresses both of these
concerns, but first explains them in more detail.

\rparagraph{Space-Efficient Runtime Checking}
\citet{herman10space} showed that na\"ively implementing runtime checks using
wrappers can introduce space leaks and break proper tail calls.  For example,
consider the gradually-typed program:
\begin{small}
\begin{align*}
  \mathit{even} (n : \Int) : \? &= 
  \<if> (n = 0) \<then> \mathrm{True} \<else> \mathit{odd}(n - 1) \\
  \mathit{odd}  (n : \Int) : \Bool &= 
  \<if> (n = 0) \<then> \mathrm{False} \<else> \mathit{even}(n - 1)
\end{align*}
\end{small}
Some terms are ascribed the \emph{unknown type} $\?$, which marks absent type
information and enables runtime checking.  The program looks tail-recursive,
but is instrumented as follows\footnote{For clarity, these examples use a type
  cast notation with source and destination types.  Our formal semantics
  exhibit the same concepts, albeit using a necessarily less transparent
  representation.}:
\begin{small}
\begin{align*}
  \mathit{even} (n : \Int) : \? &= 
  \<if> (n = 0) \<then> \Gbox{\braket{\? \Leftarrow \Bool}} \mathrm{True}
  \<else> \Gbox{\braket{\? \Leftarrow \Bool}}\mathit{odd}(n - 1) \\
  \mathit{odd}  (n : \Int) : \Bool &= 
  \<if> (n = 0) \<then> \mathrm{False} \<else> \Gbox{\braket{\Bool \Leftarrow \?}}\mathit{even}(n - 1)
\end{align*}
\end{small}
In this program, the highlighted bracket expressions denote runtime type casts.
In \citet{GradualTyping} semantics, runtime casts accumulate at each recursive
call:
\begin{small}
  \begin{align*}
  \mathit{odd}(3) &--> 
  \Gbox{\braket{\Bool \Leftarrow \?}}\mathit{even}(2)  -->
  \Gbox{\braket{\Bool \Leftarrow \?}\braket{\? \Leftarrow \Bool}}\mathit{odd}(1)
  \\ &-->
  \Gbox{\braket{\Bool \Leftarrow \?}\braket{\? \Leftarrow \Bool}\braket{\Bool \Leftarrow \?}}\mathit{even}(0)
  \\ &-->
  \Gbox{\braket{\Bool \Leftarrow \?}\braket{\? \Leftarrow \Bool}\braket{\Bool \Leftarrow \?}\braket{\? \Leftarrow \Bool}}\mathrm{True}
  -->
  \Gbox{\braket{\Bool}\braket{\?}}\mathrm{True} -->
  \mathrm{True}
\end{align*}
\end{small}
This means that programs that appear to be tail-recursive, and thus consume
constant stack space, now consume linear space, simply because of how their
types are checked.
A variant of the same program written in continuation-passing-style
demonstrates how (higher-order) casts, which cannot be resolved immediately,
accumulate around values if one is not careful.
\begin{small}
  \begin{align*}
  \mathit{evenk}\;(n : \Int)\;(k : \? -> \?) : \Bool  &= 
  \<if> (n = 0) \<then> (k\;\mathrm{True}) \<else> \mathit{oddk}\;(n - 1)\;k \\
  \mathit{oddk}\;(n : \Int)\;(k : \Bool -> \Bool) : \Bool: \Bool &= 
  \<if> (n = 0) \<then> (k\;\mathrm{False}) \<else> \mathit{evenk}\;(n - 1)\;k
\end{align*}
\end{small}
This program elaborates to the following:
\begin{small}
  \begin{align*}
  \mathit{evenk}\;&(n : \Int)\;(k : \? -> \?) : \Bool  = \\
  &\<if> (n = 0) \<then> (k\;\Gbox{\braket{\? \Leftarrow \Bool}}\mathrm{True}) \<else> \mathit{oddk}\;(n - 1)\;(\Gbox{\braket{\Bool -> \Bool \Leftarrow \? -> \?}}k) \\
  \mathit{oddk}\;&(n : \Int)\;(k : \Bool -> \Bool) : \Bool = \\
  &\<if> (n = 0) \<then> (k\;\mathrm{False}) \<else> \mathit{evenk}\;(n - 1)\;(\Gbox{\braket{\? -> \? \Leftarrow \Bool -> \Bool}}k)
\end{align*}
\end{small}
and its evaluation also accumulates casts:
\begin{small}
\begin{align*}
  \mathit{odd}(3)\;k &--> 
  \mathit{even}(2)\;\Gbox{\braket{\?->\? \Leftarrow \Bool -> \Bool}}k  \\ &-->
  \mathit{odd}(1)\;\Gbox{\braket{\Bool->\Bool \Leftarrow \? -> \?}\braket{\?->\? \Leftarrow \Bool -> \Bool}}k  \\ &-->
  \mathit{even}(0)\;\Gbox{\braket{\?->\? \Leftarrow \Bool -> \Bool}\braket{\Bool->\Bool \Leftarrow \? -> \?}\braket{\?->\? \Leftarrow \Bool -> \Bool}}k
  --> \dots
\end{align*}
\end{small}
The solution to this problem is to aggressively compose casts, in tail position
and around values respectively, rather than allowing them to accumulate.
However, doing so requires some notion of bounded-space cast-composition whose
behavior is equivalent to the standard semantics.
Many gradual language semantics use coercions~\citep{henglein94dynamic} or
threesomes~\citep{siek10threesomes} to coalesce casts as they arise.  At first
glance, AGT appears to immediately support space efficiency, since it provides
an operator for composing checks.  However, this only suffices to prevent
accumulation on values.  To see why, consider how
space-efficient tail-recursion must proceed:
\footnote{For simplicity this example suppresses some details: see
  \citet{herman10space}.}
\begin{small}
\begin{align*}
  \mathit{odd}(3) --> 
  \Gbox{\braket{\Bool \Leftarrow \?}}\mathit{even}(2)  -->
  \Gbox{\braket{\Bool \Leftarrow \?}\braket{\? \Leftarrow \Bool}}\mathit{odd}(1)  -->
  \Gbox{\braket{\Bool \Leftarrow \Bool}}\mathit{odd}(1)    \\ -->
  \Gbox{\braket{\Bool \Leftarrow \Bool}\braket{\Bool \Leftarrow \?}}\mathit{even}(0)  -->
  \Gbox{\braket{\Bool \Leftarrow \?}}\mathit{even}(0)
  -->
  \Gbox{\braket{\Bool \Leftarrow \Bool}}\mathrm{True}
  -->
  \mathrm{True}
\end{align*}
\end{small}
To preserve proper tail calls the semantics must aggressively compose
tail-position checks as they arise, rather than waiting until a final tail call
returns a value before resolving checks up the call stack, as in the first
trace.
This only works if composing checks is equivalent in either direction: cast
composition must be associative.  Unfortunately this is not always the case: in
particular, the checks introduced by~\citet{garciaAl:popl2016} for $\GTFLcsub$,
a gradual language with record subtyping, do not compose associatively.  Some
cast sequences succeed when composed in one order, but fail when composed in
the other. \added{We show a detailed example in Sec.~\ref{ssec:shortcomings}.}

\rparagraph{Type-Based Reasoning}
Prior work on gradualizing sophisticated typing disciplines like
information-flow security~\citep{toro18secref} or
parametricity~\replaced{(e.g., \citet{10.1145/3110283,toro19parametricity,10.1145/3371114})}{\textbackslash
  citep\{toro19parametricity\}} has demonstrated that AGT applied
na\"ively does not automatically enforce the modular invariants that a
programmer may expect from the intended static type discipline.  Unfortunately
the same phenomenon arises in $\GTFLcsub$, even though its type
discipline---record subtyping---is comparatively simple and well-understood.
\footnote{Formally speaking, information-flow and parametricity are
  \emph{hyperproperties}\added{: relationships among
    \textbf{multiple runs} of a program}, whereas record subtyping is not.}

The following scenario demonstrates the problem.  Consider the following
program fragment:
\begin{displaymath}
  \<let> \q : \rec{\x:\Int} = \rec{\x = 5, \y = \ttt}
  \<in> \mathbf{\langle body\hspace{-1pt}\rangle}
\end{displaymath}
According to standard subtype-based reasoning, the body cannot access the $\y$
field of the record.  Such modular reasoning is a hallmark of static typing,
and programmers who mix static and dynamic typing want to reason about their
code using static types where available~\citep{tunnellwilson18study}.
Unfortunately, the following completed $\GTFLcsub$ program compiles and runs
successfully:
\begin{displaymath}
  \<let> \q : \rec{\x:\Int} = \rec{\x = 5, \y = \ttt}
  \<in> (\q :: \? :: \rec{\x:\Int,\y:\Bool}).\y 
\end{displaymath}
In essence, casting $\q$ to the unknown type and then back to a record type
exposes the extra field that should have been hidden by $\q$'s assumed type.
In other words, the language satisfies standard type safety, as AGT does by
construction, but its runtime treatment of type abstractions does not strictly
enforce the type abstraction properties one expects from subtyping.

\rparagraph{Forward-completeness is the key}  
Space efficiency and type-based enforcement problems have the same
source: an insufficiently-precise runtime type enforcement mechanism.
Previous AGT-based systems have had to tune their runtime
abstractions.  \citet{toro18secref} tuned their AGT abstraction to
achieve type enforcement.  Guided by intuition, \citet{TORO2020102496}
tuned their abstraction to achieve a space-efficient implementation.
This work develops a principled approach to tuning the runtime checks in an
AGT-based language, and shows that it suffices to address the two challenges
above.  

\pagebreak
Our \textbf{contributions} are as follows:
\begin{itemize}
\item We revisit the foundations of AGT's dynamic checks and refine their
  conception.  \citet{garciaAl:popl2016} conceive of them via post hoc
  manipulation of abstractions for tuples of types: we redefine them as a
  direct abstract interpretation of \emph{the subtyping relation itself}.  This
  reframing is subtle and equivalent to the original, but now techniques from
  the abstract interpretation literature become immediately applicable
  (Sec.~\ref{sec:agt}).
  
\item AGT, applied na\"ively, does not support space-efficiency because
  composing checks out of order may not preserve behaviour.  We explore this
  issue in the context of \citet{garciaAl:popl2016}'s $\GTFLcsub$ language, and
  introduce \emph{bounded records and rows}, a new abstraction for runtime
  evidence that retains precise information about gradual types and how they
  interact at runtime.  This precise representation admits associative cast
  composition.  Bounded rows uncover a subtle interplay between static record
  subtyping and \emph{gradual row types}.

\item As others observed, AGT-designed semantics do not automatically ensure
  all desired type-based reasoning principles.  This problem appeared in prior
  work on sophisticated typing disciplines that enforce properties
  fundamentally about multiple runs of a program (i.e., hyperproperties).
  Subtyping, in contrast, is a standard unary type property, yet still requires
  care to enforce when gradual types are sufficiently rich.  Ours is the first
  work to witness this phenomenon in a simple setting.
  
\item Since AGT is a framework for designing languages, it is desirable to
  frame these improvements in a general form.  To do so, we devise sufficient
  conditions for space-efficiency and \deleted{predictable} precise runtime monitoring.
  Both properties can be achieved, while satisfying the formal criteria
  of~\citet{siek15criteria}, by ensuring that the dynamic
  monitoring semantics is \emph{forward complete}, a concept from 
  abstract interpretation~\citep{giacobazzi01incomplete}.
  Forward completeness guarantees that an AGT semantics can be made
  space-efficient, and equates reasoning about the dynamic semantics to
  \replaced{reasoning up to precision $(\sqsubseteq)$}{imprecisely reasoning} about static subtyping.
  
\end{itemize}

\begin{figure}[]
  \begin{small}
    \flushleft\textbf{Syntax}
    \begin{displaymath}
      \begin{array}{rcll}
        \multicolumn{4}{c}{
          x \in \Var,\quad
          b \in \oblset{Bool},\quad
          n \in \Z,\quad
          l \in \Label,\quad
          \Gamma \in \Env = \Var \finto \Type
        } \\[2mm]
        T \in \Type & ::= & \Int | \Bool | T -> T | \rec{\overline{l:T}} &
        \text{(static types)} \\
        \cT \in \GType & ::= & \? | \Int | \Bool | \cT -> \cT 
        | \rec{\overline{l:\cT}} | \rec{\overline{l:\cT},\?} &
        \text{(gradual types)} \\
        t \in \Term & ::= & n | b | x | \lambda x:\cT.t |  t\;t | t + t
        | \<if> t \<then> t \<else> t | \rec{\overline{l=t}} | t.l
        | t :: \cT & \text{(terms)} \\
      \end{array}
    \end{displaymath}
    
    \flushleft\boxed{\Gamma |- t : \cT}\quad\textbf{Gradual Typing}
    \begin{mathpar}
      \inference[($\cT$x)]{x:\cT\in\Gamma}{\Gamma |- x : \cT}
      \and
      \inference[($\cT$n)]{}{\Gamma |- n : \Int}
      \and
      \inference[($\cT$b)]{}{\Gamma |- b : \Bool}
      \and
      \inference[($\cT$app)]{
        \overline{\Gamma |- t_i : \cT_i} & 
        \cT_2 \csub \cdom(\cT_1)
      }{
        \Gamma |- t_1\;t_2 : \ccod(\cT_1)
      }
      \and
      \inference[($\cT$+)]{
        \overline{\Gamma |- t_i : \cT_i} & 
        \overline{\cT_i \csub \Int}
      }{
        \Gamma |- t_1 + t_2 : \Int
      }
      \and
      \inference[($\cT$if)]{
        \overline{\Gamma |- t_i : \cT_i} & 
        \cT_1 \csub \Bool 
      }{
        \Gamma |- \<if> t_1 \<then> t_2 \<else> t_3 : \cT_2 \csubjoin \cT_3
      }  
      \and
      \inference[($\cT$proj)]{
        \Gamma |- t : \cT
      }{
        \Gamma |- t.l : \cproj(\cT,l) 
      }
      \and
      \inference[($\cT\lambda$)]{
        \Gamma, x:\cT_1 |- t : \cT_2
      }{
        \Gamma |- (\lambda x : \cT_1.t) : \cT_1 -> \cT_2
      }
      \and
      \inference[($\cT\!\!::$)]{
        \Gamma |- t : \cT &
        \cT \csub \cT_1
      }{
        \Gamma |- (t :: \cT_1) : \cT_1
      }  
      \and
      \inference[($\cT$rec)]{
        \overline{\Gamma |- t_i : \cT_i}
      }{
        \Gamma |- \rec{\overline{l_i = t_i}} : \rec{\overline{l_i:\cT_i}}
      }
    \end{mathpar}

    \flushleft\textbf{Helper Functions}
    \begin{displaymath}
      \begin{block}
        \cdom : \GType \rightharpoonup \GType \\
        \cdom(\cT_1 -> \cT_2) = \cT_1\\
        \cdom(\?) = \? \\
        \cdom(\cT)\text{ undefined} \text{ otherwise }
      \end{block}  
      \qquad
      \begin{block}
        \ccod : \GType \rightharpoonup \GType \\
        \ccod(\cT_1 -> \cT_2) = \cT_2\\
        \ccod(\?) = \? \\
        \ccod(\cT)\text{ undefined} \text{ otherwise }
      \end{block}  
      \qquad
      \begin{block}
        \cproj : \GType \times \Label \rightharpoonup \GType \\
        \cproj(\rec{l:\cT,\overline{l_i:\cT_i},*},l) = \cT \\
        \cproj(\rec{\overline{l_i:\cT_i},\?},l) = \? 
        \quad\text{if} \quad l \notin \set{\overline{l_i}} \\
        \cproj(\?,l) = \? \\[0.5em]
        \cproj(\cT,l)\text{ undefined otherwise} \\
      \end{block}
    \end{displaymath}

    \flushleft\boxed{\cT \csub \cT}\quad\textbf{Consistent Subtyping}
    \begin{mathpar}
      \inference{}{\? \csub \cT}
      \quad
      \inference{}{\cT \csub \?}
      \quad
      \inference{}{\Int \csub \Int}
      \quad
      \inference{}{\Bool \csub \Bool}
      \and
      \inference{
        \cT_{21} \csub  \cT_{11} &
        \cT_{12} \csub  \cT_{22} 
      }{
        \cT_{11} -> \cT_{12} \csub \cT_{21} -> \cT_{22}
      }
      \and
      \inference{
        \overline{\cT_{i1} \csub \cT_{i2}}
      }{
        \rec{\overline{l_i:\cT_{i1}},\overline{l_j:\cT_j}} \csub
        \rec{\overline{l_i:\cT_{i2}},*}
      }
      \and
      \inference{
        \overline{\cT_{i1} \csub \cT_{i2}}
      }{
        \rec{\overline{l_i:\cT_{i1}},\overline{l_j:\cT_j},\?} \csub
        \rec{\overline{l_i:\cT_{i2}},\overline{l_k:\cT_{k}},*}
      }
    \end{mathpar}
    
    \flushleft\boxed{\cT \gprec \cT}\quad\textbf{Precision}
    \begin{mathpar}
      \inference{\cT \in \set{\Int,\Bool}}{\cT \gprec \cT}
      \quad
      \inference{
        \cT_{11} \gprec \cT_{21} &
        \cT_{12}  \gprec \cT_{22}
      }{
        \cT_{11} -> \cT_{12} \gprec \cT_{21} -> \cT_{22}
      }
      \quad
      \inference{
        \overline{\cT_{1i} \gprec \cT_{2i}}
      }{
        \rec{\overline{l_i : \cT_{1i}}} \gprec \rec{\overline{l_i : \cT_{2i}}}
      }
      \quad
      \inference{
        \overline{\cT_{1i} \gprec \cT_{2i}}
      }{
        \rec{\overline{l_i : \cT_{1i}},\overline{l_j : \cT_{1j}},*} \gprec
        \rec{\overline{l_i : \cT_{2i}},\?}
      }
      \and
      \inference{}{\cT  \gprec \?}
    \end{mathpar}
  \end{small}%
  \caption{$\GTFLcsub$: Static Semantics}
  \label{fig:gtfl-statics}
\end{figure}

\section{$\GTFLcsub$: A Gradually Typed Language}
\label{sec:gtflcsub}
\mbox{} As a concrete starting point for our investigations, this section
presents the semantics of $\GTFLcsub$, a gradually-typed language with records
and subtyping that also supports migration between dynamic and static type
checking.  \citet{garciaAl:popl2016} developed this language using the AGT
methodology, and its semantics exhibits the shortcomings that this paper
addresses.

We first present $\GTFLcsub$ with little reference to the AGT machinery used to
construct and justify it.  Furthermore, we state correctness properties here
without proof, because AGT exploits calculational abstract interpretation
techniques~\cite{AbstractInterpretation} to intertwine the proof and definition
processes, making the design ``correct by construction.''
Sec.~\ref{sec:agt} connects $\GTFLcsub$'s design to AGT, and briefly connects
the correctness properties to proofs.  Ultimately our improvements are
presented in terms of AGT to ensure that they generalize across AGT-based
languages.

\subsection{Syntax and Typing}

Fig.~\ref{fig:gtfl-statics} presents the $\GTFLcsub$ syntax and type system.
Its terms are typical: numbers, booleans, functions, records,
and type ascriptions.  All of the novelty lies in its type structure, where
common static types---atomics, functions, and records---are augmented with two
gradual type constructs that denote imprecise type information.  The
now-standard \emph{unknown type} $\?$ denotes the complete omission of type
information~\cite{GradualTyping}.  The \emph{gradual row type}
$\rec{\overline{\lx : \cT},\?}$, on the other hand, represents a
record type with incomplete \emph{field} information.  It surely
constrains the list $\overline{l}$ of fields with corresponding (gradual) types
$\overline{\cT}$, but the \emph{gradual row designator} $\?$
denotes the possibility of additional statically unknown fields.%
\footnote{Throughout we use overlines to denote zero or more repetitions,
  $+$-annotated overlines to denote one or more repetitions, and
  $\rec{\overline{l:\cT},*}$ to simultaneously denote both traditional record
  types $\rec{\overline{l:\cT}}$ and gradual row types
  $\rec{\overline{l:\cT},\?}$.}
Gradual rows are somewhat analogous to polymorphic rows~\cite{remy89records},
except that their presence induces dynamic checks.  As such, a gradual row type
is only partially static with respect to record fields.

As is now standard for gradual typing, the concept of ``imprecise type
information'' is formalized using the \emph{precision} judgment
$S_1 \gprec S_2$, which says that $S_1$ is less imprecise than $S_2$.  Static
types are the \emph{least} elements of this partial order, while the greatest
element is the unknown type $\?$.

The rules for the typing judgment $\Gamma |- t : \cT$ are structured in the
style of~\citet{garcia15pts}: each typing judgment in the premise of a rule has
an arbitrary type metavariable $S_i$, but these premise types are constrained
by side conditions.  The result type in the conclusion of a rule is either a
particular type or a (partial) function of its premise types.  For example, the
two premises of the (Sapp) rule have types $\cT_1$ and $\cT_2$ respectively;
these premise types are constrained by the \emph{consistent subtyping} side
condition ${\cT_2 \csub \cdom(\cT_1)}$;
and the result type (Sapp) is the \emph{gradual codomain} $\ccod(\cT_1)$ of the
operator type.  This structure keeps the typing rules syntax-directed, while
subsuming the type system for the corresponding statically typed
language $\STFLsub$ \added{(introduced by~\citet{garciaAl:popl2016})}.  Consider, for instance, how $\ccod(\cT_1)$ extends the
idea of ``the codomain of a function type''.  For function types
$S_1 -> S_2$, its behaviour is as expected, but for the unknown type $\?$, the
codomain is completely unknown because the operator's type is completely
unknown: the operator may have a function type, but we do not know for sure.
We formalize this behaviour as a correctness criterion. 
\begin{definition}[Candidate Codomain]
  \label{def:ccod-correct}
  \mbox{}
  \begin{enumerate}
  \item A \emph{plausible function type} is a gradual type $\cT$ such that
    $T_1 -> T_2 \gprec \cT$ for some $T_1,T_2\in\Type$;
  \item A gradual type $\cT'$ is a \emph{candidate codomain} of a plausible
    function type $\cT$ if $T_1 -> T_2 \gprec \cT$ implies $T_2 \gprec \cT'$,
    for all $T_1,T_2\in\Type$.
    \end{enumerate}
  \end{definition}

\begin{proposition}[Codomain Correctness]
  \label{prop:ccod-correct}
  \mbox{} A gradual type $\cT$ has a candidate codomain if and only if it has a
  least (with respect to $\gprec$) candidate codomain, denoted $\ccod(\cT)$.
\end{proposition}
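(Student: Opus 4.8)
The plan is to prove the nontrivial direction by explicitly computing, for every plausible function type $\cT$, the collection of its candidate codomains and exhibiting a least element; the converse direction is then immediate, since a least candidate codomain is in particular a candidate codomain, and when $\cT$ is not a plausible function type neither side of the equivalence can hold, so the statement is vacuous there.

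First I would characterize the plausible function types by inverting the precision rules: only two precision rules can derive a judgment whose left-hand subject is a function type, namely the function--function rule and the rule $\cS \gprec \?$. Hence $T_1 -> T_2 \gprec \cT$ for some static $T_1,T_2$ exactly when $\cT = \?$ or $\cT = \cT_1 -> \cT_2$. I would also record two background facts. (Lemma A) Every gradual type $\cS$ has at least one static refinement, i.e.\ $\{T \in \Type : T \gprec \cS\} \neq \varnothing$; this is a routine structural induction (pick any static type below $\?$, and build refinements of functions, records, and rows componentwise). (Lemma B) The concretization $\gamma(\cS) = \{T \in \Type : T \gprec \cS\}$ is an order embedding: $\cS_1 \gprec \cS_2$ iff $\gamma(\cS_1) \subseteq \gamma(\cS_2)$; in particular $\gamma$ is injective. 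This is precisely the sense in which $\gprec$ abstracts the static types, and it is part of the AGT construction of $\GTFLcsub$.

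Then I would dispatch the two cases. If $\cT = \cT_1 -> \cT_2$, then $T_1 -> T_2 \gprec \cT$ iff $T_1 \gprec \cT_1$ and $T_2 \gprec \cT_2$; using Lemma A to supply some static $T_1 \gprec \cT_1$, the set of realizable static codomains is exactly $\gamma(\cT_2)$. Therefore $\cT'$ is a candidate codomain of $\cT$ iff $\gamma(\cT_2) \subseteq \gamma(\cT')$, iff (Lemma B) $\cT_2 \gprec \cT'$; this set of gradual types has least element $\cT_2$, which is therefore the least candidate codomain $\ccod(\cT_1 -> \cT_2)$. If $\cT = \?$, then $T_1 -> T_2 \gprec \?$ holds for \emph{all} static $T_1,T_2$, so any candidate codomain $\cT'$ satisfies $\Type \subseteq \gamma(\cT')$, whence $\gamma(\cT') = \Type = \gamma(\?)$ and $\cT' = \?$ by injectivity of $\gamma$; thus $\?$ is the unique candidate codomain, and $\ccod(\?) = \?$. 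In both cases a least candidate codomain exists, and the values obtained agree with the helper function $\ccod$ of Fig.~\ref{fig:gtfl-statics}.

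The only real work is Lemma B --- that $\gprec$ coincides with inclusion of static-refinement sets, together with injectivity of $\gamma$. For atomics and functions this is a short structural induction, but the gradual-row cases are where I expect the obstacle: the two row-precision rules are asymmetric (a record with extra fields sitting below a gradual row, and a gradual row below a gradual row), so one must verify that no spurious agreement of refinement sets can occur across these shapes and that $\gamma$ remains injective there. Since AGT constructs $\gprec$ exactly as the abstraction induced by $\gamma$, this property is already available in the AGT development of $\GTFLcsub$, so in the paper I would cite it rather than reprove it inline.
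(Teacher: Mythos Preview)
Your proposal is correct. The paper does not prove this proposition inline; it is stated without proof in Section~\ref{sec:gtflcsub} and then justified in Section~\ref{sec:agt} by defining $\ccod(\cT) = \alpha(\cod^{+}(\gamma(\cT)))$ and appealing to soundness and optimality of the Galois connection. Your argument is essentially the concrete unfolding of that same machinery: your Lemma~B is precisely the order-embedding property that the paper records as $\cT_1 \sqsubseteq \cT_2 \iff \gamma(\cT_1) \subseteq \gamma(\cT_2)$, and your case split on $\cT \in \{\?,\ \cT_1 \to \cT_2\}$ is what the paper's remark ``the recursive characterization of $\ccod$ \dots\ can be calculated by induction on the structure of $\cT$'' would produce. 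The paper's abstract route has the advantage of applying uniformly to $\cdom$, $\cproj$, $\csubjoin$, and the rest without further work; your direct route trades that generality for making the closed forms $\ccod(\cT_1 \to \cT_2) = \cT_2$ and $\ccod(\?) = \?$ appear immediately, without first introducing $\alpha$ and $\cod^{+}$.
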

Prop.~\ref{prop:ccod-correct} implicitly defines $\ccod$ as a partial
function on $\GType$, which is defined for exactly the plausible function
types.
The notion of ``candidate codomain'' can be interpreted as a \emph{soundness}
property of $\ccod$: it broadly approximates the idea of ``codomain'' even in
the face of imprecision.  The proposition can then be interpreted as an
\emph{optimality} property: there is a ``best'' candidate codomain, if any, so
$\ccod$ never loses precision needlessly.  Analogous correctness criteria apply
to the other type operators, completely characterizing the gradual versions in
terms of their static counterparts.

The \emph{consistent subtyping} relation $\csub$
\added{\cite{GradualTypingObjects}} extends static subtyping $<:$
to optimistically account for imprecision in gradual types.  In essence,
$S_1 \csub S_2$ means that it is \emph{plausible} that $S_1$ is a subtype of
$S_2$, when the imprecision of gradual types is taken into account.
For instance, $\?$ is both a consistent supertype and consistent subtype of
each gradual type $S$, because it could represent any static type whatsoever,
including $S$ itself.
As above, we formally relate consistent subtyping to static subtyping
\footnote{\added{For $\STFLsub$, static subtyping $\sub$ is
    standard relation induced by width and depth record subtyping~\cite{tapl}}} .
\begin{proposition}
  \label{prop:csub-correct}
  $S_1 \csub S_2$ if and only if ${T_1 \sub T_2}$ for some $T_1 \gprec S_1$ and
  $T_2 \gprec S_2$.
\end{proposition}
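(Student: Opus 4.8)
The statement relates consistent subtyping $\csub$ on gradual types to static subtyping $\sub$ mediated by precision $\gprec$. This is a classic "soundness + completeness" characterization, and the natural proof is by induction on derivations, in two directions. I will prove $(\Leftarrow)$ first, since it is the easier direction, then $(\Rightarrow)$.

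First consider the $(\Leftarrow)$ direction: assume $T_1 \sub T_2$ with $T_1 \gprec S_1$ and $T_2 \gprec S_2$; show $S_1 \csub S_2$. I would proceed by induction on the derivation of $T_1 \sub T_2$, with a case analysis on the last rule used and an inner inversion on the precision judgments $T_1 \gprec S_1$ and $T_2 \gprec S_2$. The key observations are: (i) if $S_i = \?$ then $S_1 \csub S_2$ holds immediately by one of the two axioms for $\?$; (ii) otherwise, $T_i \gprec S_i$ with $S_i \neq \?$ forces $S_i$ to have the same top-level constructor as $T_i$ (atomic, arrow, record, or gradual row), so the subtyping derivation and the two precision derivations line up structurally and we can recurse on the components. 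For records one must be slightly careful: a width-subtyping step $\rec{\overline{l_i:T_{i1}},\overline{l_j:T_j}} \sub \rec{\overline{l_i:T_{i2}},*}$ combined with $\gprec$ on each side still exposes the fields $\overline{l_i}$ on both gradual types (possibly as a prefix of a longer row), so the consistent-subtyping record rule applies; the gradual row designator $\?$ on $S_2$ only adds flexibility, which is exactly why $\GTFLcsub$ has the second record rule for $\csub$.

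Now the $(\Rightarrow)$ direction: assume $S_1 \csub S_2$; construct $T_1 \gprec S_1$ and $T_2 \gprec S_2$ with $T_1 \sub T_2$. I would do induction on the derivation of $S_1 \csub S_2$. The axiom cases are where the real content lies. For $\? \csub S_2$: I need a static $T_2 \gprec S_2$ — i.e., any static type more precise than $S_2$ — and then take $T_1 := T_2$, using reflexivity of $\sub$ (and $T_1 \gprec \?$ always). So I need the auxiliary fact that every gradual type has \emph{some} static refinement; this is easily shown by structural recursion on $S$ (replace $\?$ by, say, $\Int$, and drop gradual row designators), and I would state it as a small lemma. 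The case $S_1 \csub \?$ is symmetric. For the structural rules (arrow, the two record rules), the induction hypotheses give component witnesses, and I reassemble them: for arrows I must contravariantly swap the domain witnesses, which matches the shape of the $\csub$ arrow rule; for the two record rules I take the static refinement on the extra/unknown fields coming from the auxiliary lemma and assemble a static width-subtyping instance.

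**Main obstacle.** The routine part is the arrow and atomic cases; the delicate part is the \emph{records with gradual rows}. In the $(\Rightarrow)$ direction, when the $\csub$ derivation ends in the second record rule — $\rec{\overline{l_i:\cT_{i1}},\overline{l_j:\cT_j},\?} \csub \rec{\overline{l_i:\cT_{i2}},\overline{l_k:\cT_k},*}$ — I must build static types $T_1 \gprec S_1$, $T_2 \gprec S_2$ with $T_1 \sub T_2$: the shared fields $\overline{l_i}$ use the IH, the $\overline{l_j}$ fields of $S_1$ and the $\overline{l_k}$ fields of $S_2$ get arbitrary static refinements, and crucially the gradual row $\?$ in $S_1$ must be \emph{instantiated} with exactly the $\overline{l_k}$ fields (given some static types) so that $T_1$ has all the fields that $T_2$ demands, making $T_1 \sub T_2$ go through by width subtyping. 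Conversely, in $(\Leftarrow)$, I must check that no choice of static refinements of $S_1, S_2$ and no static subtyping derivation can "cheat" — e.g., $T_2 \gprec S_2$ where $S_2$ is a plain record (no $\?$) cannot have more fields than $S_2$, so the matching up with the $\csub$ rules is forced. Getting the bookkeeping of field lists and the gradual row designator exactly right — and making sure the two record $\csub$ rules together cover precisely the cases that arise — is where the proof needs care; everything else is a direct structural induction.
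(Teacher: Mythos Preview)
Your proposal is correct, and the record/row bookkeeping you flag is indeed the only nontrivial part; your instantiation of the gradual row $\?$ in $S_1$ with the $\overline{l_k}$ fields (taken at the same static refinements chosen for $T_2$) is exactly the right move, and reflexivity of $\sub$ closes those obligations.

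The paper, however, does not prove this proposition directly. In the AGT methodology (Sec.~\ref{sec:agt}), consistent subtyping is \emph{defined} semantically as $S_1 \csub S_2 \iff \exists\, T_1 \in \gamma(S_1),\, T_2 \in \gamma(S_2).\ T_1 \sub T_2$, which is literally the statement of the proposition once one unfolds $T \gprec S \iff T \in \gamma(S)$. The inductive rules in Fig.~\ref{fig:gtfl-statics} are then \emph{calculated} from this definition, so the proposition holds by construction; the paper simply records it as a correctness criterion and defers the calculation to \cite{garciaAl:popl2016}. Your approach is the converse: take the inductive rules as primitive and verify the semantic characterization by induction on derivations. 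This is more elementary and self-contained (no Galois connection needed), at the cost of doing the record/row case analysis by hand that AGT absorbs into a uniform calculation. Both routes arrive at the same place; yours is what one would do if handed the rules cold, theirs is what falls out of designing the rules systematically.
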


One key benefit to defining gradual operators and relations using static
counterparts and gradual type precision is that the resulting language
naturally satisfies static criteria for gradual typing set forth
by~\citet{siek15criteria}.  First, the $\STFLsub$ type system can be recovered
from that of $\GTFLcsub$ by simply restricting source programs to only mention
static types $T$.  Doing so: restricts the $\cdom$, $\ccod$, and $\cproj$
partial functions to simple arrow type and record type destructors; restricts
the \emph{consistent subtyping} relation $\csub$ to a typical definition of
static subtyping $<:$; and restricts the $\csubjoin$ partial function to the
\emph{subtype join} partial function $\subjoin$, which yields the least upper
bound of two static types (if there is one) according to static subtyping $<:$.
Thus, by construction, $\GTFLcsub$ conservatively extends the static language.
Furthermore, the \emph{static gradual guarantee}, which ensures that increasing
the precision of a program's types cannot fix extant type errors, follows
\deleted{straightforwardly} from the correctness criteria for each operator and
relation\added{, each of which monotonically preserves this
  property, yielding a direct compositional proof.}

\added{Though the syntax of gradual rows is simple, its implications
  for the language semantics are nontrivial. Gradual rows expose a
  subtle interplay between gradual type precision and static width
  subtyping that does not arise in most gradual type systems developed
  to date.}  \added{Consider
  $\rec{x : \Int, y : \Bool} \csub{} \rec{x:\Int,\?}$: This judgement
  is justified by two static subtypings:
  ${\rec{x:\Int,y:\Bool} <: \rec{x:\Int}}$ and
  ${\rec{x:\Int,y:\Bool} <: \rec{x:\Int, y:\Bool}}$.  If this instance
  of consistent subtyping is viewed as a form of coercion, then it
  indicates \emph{two} different behaviours: in the first the $y$
  field is \emph{obscured} via static subtyping.  In the second, the y
  field is obscured via precision: not by static subtyping, but by
  gradual typing.  The static \emph{and} runtime semantics must reckon
  with these two \emph{different} explanations simultaneously, and all
  outcomes must be consistent with one, the other, or both.  This
  happens in no other language featuring consistent subtyping, and can
  be improved upon with a more precise abstraction. The abstraction of
  \citet{garciaAl:popl2016} sometimes loses information at runtime
  that was obscured due to loss of precision
  (See Sec.~\ref{ssec:example-gtfl}).  By contrast, the BRR abstraction
  introduced in Sec.~\ref{sec:brr} preserves all obscured information
  at runtime, but hides some of it from the programmer: the
  abstraction always reflects instances of static subtyping.
}

\subsubsection{An Example} 
\label{ssec:example-gtfl}

The following (somewhat contrived) example program demonstrates some of the
features and intended capabilities of $\GTFLcsub$, especially the semantics of
gradual rows.  For succinctness, we assume $\<let>\!\!$ binding, which can be
easily added to the language.
\begin{displaymath}
  \begin{block}
    \<let> \prodd =
    \begin{block}\lambda (\bb:\Bool).\,\lambda (\x:\rec{\f:\Int,\?}).\\
      \quad\<if> \bb \<then> \x.\f + \x.\m \<else> \x.\f + \x.\q
    \end{block}\\
    \<inn>
    (\prodd\;\ttt\;\rec{\f=6,\m=2}) +
    (\prodd\;\fff\;\rec{\f=6,\q=2})
  \end{block}
\end{displaymath}
The $\prodd$ function takes a record $\f$, and a Boolean value $\bb$, and uses
$\bb$ to determine which field to add to $\x.\f$.  The gradual row type
ascribed to $\x$ ensures statically that the record argument contains an $\f$
field of type $\Int$, but makes no commitment regarding what other fields may
be present.  Thus the body of the function type checks despite referring to
$\x.\m$ and $\x.\q$.

The program successfully calculates the $\prodd$ for two different records that
have different fields.  The $\bb$ argument indicates which branch of execution
is the right one.  This small program demonstrates the possibility of at least
partially statically checking the program, while deferring checks for extra
fields to runtime.  We can extrapolate from this program to a larger
application that fruitfully exploits field dynamism, while statically checking
stable record components.

This language design makes the possibility of dynamic checking evident in the
types.  The gradual type $\rec{\f:\Int,\?}$ indicates that the
record may have more fields, which might be used by the program.

Furthermore, the type structure of a program can be exploited to control the
amount and scope of necessary dynamic checking.  For instance, if the branches
of the conditional made repeated use of extra fields, then each access would
require a runtime check.  However, we can ensure that these are checked
statically (and centralize the type assumptions) using a type ascription.  For
example, we can replace the consequent branch
\begin{math}
  \x.\f + \x.\m
\end{math}
with
\begin{displaymath}
  \<let> \y = (\x :: \rec{\f:\Int,\m:\Int}) \<in> \y.\f + \y.\m
\end{displaymath}
This change would move the initial field check (for $\x.\m$) and field type
check (that its type is $\Int$) to the ascription: the body of the consequent
$\<let>\!\!$ would be fully statically checked.
This use of ascription is similar to downcasting in object-oriented languages,
but is justified by the presence of the imprecise gradual row in the
type of $\prodd$'s argument.

Now for some bad news.  Unfortunately, $\GTFLcsub$ as designed by
\citet{garciaAl:popl2016} does not enforce its type abstractions to
the extent that one might expect.  Sound gradual type regimes
typically have the property that precise type information is
persistent \comment{Examples? Citations?}.
For instance, if we replace the record $\rec{\f=6,\m=2}$ in the example with
$(\rec{\f=6,\m=2} :: \rec{\f:\Int})$, then one would expect the program to type
check, but fail at runtime.  In this way, the language would guarantee that the
$m$ field's very existence is encapsulated: no client would be able to violate
that guarantee.

To our surprise, however, $\GTFLcsub$ executes this program successfully, for
essentially the same reason that the errant example from Sec.~\ref{sec:intro}
succeeds.  We consider this a flaw in the $\GTFLcsub$ semantics: type
abstractions should be respected.  Bear in mind that a language designer may
desire a gradual language that treats all gradual record types as gradual rows,
thereby baking downcasting into the semantics.  This can easily be supported as
syntactic sugar over the $\GTFLcsub$ surface syntax, rather than as an accident
of AGT-induced semantics.  In short, the semantics of an AGT-induced language
should \deleted{be predictable and should} precisely enforce types.  In the following, we
diagnose this failure of type abstraction and show how to properly ensure this
with a refinement to the AGT framework.

\subsection{Runtime Language}
\label{ssec:runtime-language} 

\begin{figure}
  \begin{small}
    \flushleft\textbf{Syntax}
    \begin{displaymath}
      \begin{array}{rcll}
        \multicolumn{4}{c}{
        } \\[2mm]
        \ev \in \Ev{} & = & 
        \set{\braket{\cT_1,\cT_2} | |- \braket{\cT_1,\cT_2} \wf} &
        \text{(evidence objects)} \\
        \rr \in \RTerm & ::= & \rn | \rb | \rx | \lambda \rx.\rr | \eve\;\eve 
        | \eve + \eve | \<if> \eve \<then> \eve \<else> \eve
        | \rec{\overline{\rl=\rr}} | \eve.l | \eve &
        \text{(runtime terms)} \\
        \ru \in \RawValue & ::= & \rn | \rb | \rx | \lambda \rx.\rr 
        | \rec{\overline{\rl=\rv}} &
        \text{(raw values)} \\
        \rv \in \Value & ::= & \ru | \cast{\ev}{\ru} & 
        \text{(values)}
      \end{array}
    \end{displaymath}

    \flushleft\boxed{\rr \derivof \Gamma |- t : \cT}
    \quad\textbf{Runtime Typing}
    \begin{mathpar}
      \inference[($\cT$x)]{
        x:\Gbox{\cT'}\in\Gamma
      }{
        \rx \derivof \Gamma |- x : \Gbox{\cT'}
      }
      \and
      \inference[($\cT$n)]{}{\rn \derivof \Gamma |- n : \Int}
      \and
      \inference[($\cT$b)]{}{\rb \derivof \Gamma |- b : \Bool}
      \and
      \inference[($\cT$app)]{
        \overline{\rr_i \derivof \Gamma |- t_i : \cT_i} & 
        \ev_1 \evidenceof \cT_1 \csub \Gbox{\cT'_1 -> \cT'_2} & 
        \ev_2 \evidenceof \cT_2 \csub \Gbox{\cT'_1} 
      }{
        (\cast{\ev_1}{\rr_1}\;\cast{\ev_2}{\rr_2}) \derivof
        \Gamma |- t_1\;t_2 : \Gbox{\cT'_2} 
      }
      \and
      \inference[($\cT$+)]{
        \overline{\rr_i \derivof \Gamma |- t_i : \cT_i} & 
        \overline{\ev_i \evidenceof \cT_i \csub {\Int}} 
      }{
        (\cast{\ev_1}{\rr_1} + \cast{\ev_2}{\rr_2}) \derivof
        \Gamma |- t_1 + t_2 : {\Int}
      }
      \and
      \inference[($\cT$if)]{
        \overline{\rr_i \derivof \Gamma |- t_i : \cT_i} & 
        \ev_1 \evidenceof \cT_1 \csub \Bool &
        \ev_2 \evidenceof \cT_2 \csub {\cT_2 \csubjoin \cT_3} &
        \ev_3 \evidenceof \cT_3 \csub {\cT_2 \csubjoin \cT_3} &
      }{
        (\<if> \cast{\ev_1}{\rr_1} \<then> \cast{\ev_2}{\rr_2}
        \<else> \cast{\ev_3}{\rr_3})
        \derivof \Gamma |- \<if> t_1 \<then> t_2 \<else> t_3 : 
        {\cT_2 \csubjoin \cT_3}
      }
      \and
      \inference[($\cT$proj)]{
        \rr \derivof \Gamma |- t : \cT &
        \ev \evidenceof \cT \csub \rec{l:\Gbox{\cT'}}
      }{
        \cast{\ev}{\rr}.l \derivof \Gamma |- t.l : \Gbox{\cT'}
      }
      \and
      \inference[($\cT\lambda$)]{
        \rr \derivof \Gamma, x:\cT_1 |- t : \cT_2
      }{
        (\lambda x.\rr) \derivof \Gamma |- (\lambda x : \cT_1.t) : \cT_1 -> \cT_2
      }
      \and
      \inference[($\cT\!\!::$)]{
        \rr \derivof \Gamma |- t : \cT &
        \ev \evidenceof \cT \csub \Gbox{\cT'}
      }{
        \cast{\ev}{\rr} \derivof \Gamma |- (t :: \Gbox{\cT'}) : \Gbox{\cT'}
      }  
      \and
      \inference[($\cT$rec)]{
        \overline{\rr_i \derivof \Gamma |- t_i : \cT_i}
      }{
        \rec{\overline{\rl_i = \rr_i}} \derivof
        \Gamma |- \rec{\overline{l_i = t_i}} : \rec{\overline{l_i:\cT_i}}
      }
    \end{mathpar}
    \flushleft\boxed{\ev \evidenceof \cT_1 \csub \cT_2}\quad
    \textbf{Evidence for Consistent Subtyping}
    \begin{mathpar}
      \inference{
        |- \braket{\cT'_1,\cT'_2} \wf &
        \cT'_1 \gprec \cT_1 &
        \cT'_2 \gprec \cT_2 &
      }{
        \braket{\cT'_1,\cT'_2} \evidenceof \cT_1 \csub \cT_2
      }
    \end{mathpar}
    \flushleft\boxed{|- \ev \wf}
    \quad\textbf{Well-formed Evidence}
    \begin{mathpar}
      \inference{\cT \in \set{\Int,\Bool,\?}}{
        |- \braket{\cT,\cT} \wf
      }
      \and
      \inference{
        |- \braket{\cT_{21},\cT_{11}} \wf &
        |- \braket{\cT_{12},\cT_{22}} \wf
      }{
        |- \braket{\cT_{11} -> \cT_{12},\cT_{21} -> \cT_{22}} \wf
      }
      \and
      \inference{
        \overline{|- \braket{\cT_{i1},\cT_{i2}} \wf} &
        \braket{*_1,*_2} \neq \braket{\replaced{\ensuremath
            \cdot}{\ensuremath \emptyset}`,\?}
      }{
        |- \braket{\rec{\overline{l_i : \cT_{i1}},*_1},
          \rec{\overline{l_i : \cT_{i2}},*_2}} \wf
      }
      \and
      \inference{
        \overline{|- \braket{\cT_{i1},\cT_{i2}} \wf}
      }{
        |- \braket{\rec{\overline{l_i : \cT_{i1}},\overline{l_j:\cT_j}^{+},*_1},
          \rec{\overline{l_i : \cT_{i2}},*_2}} \wf
      }
    \end{mathpar}
  \end{small}%
  \caption{$\GTFLcsub$: Runtime Language Static Semantics}
  \label{fig:gtflr-statics}
\end{figure}

Since a gradual language defers some checks to runtime, its dynamic semantics
must account for those checks.  Using an approach inspired
by~\citet{toro18secref}, we capture $\GTFLcsub$ checks in an internal language
that decorates programs with runtime type information.

\subsubsection{Syntax and Static Semantics}
\label{ssec:gtflr-statics}

Fig.~\ref{fig:gtflr-statics} presents the syntax and type system of the
$\GTFLcsub$ Runtime Language (RL).
RL mirrors $\GTFLcsub$, but for a few differences.  First, runtime terms have
no source typing information: since RL programs are constructed by
type-directed translation from $\GTFLcsub$ (Sec.~\ref{ssec:gtflr-elaboration}),
we can presume the existence of a corresponding typing derivation, and we need
not reconstruct one by analyzing RL terms.  Omitting static types also lets us
formally distinguish type information that is used for static checking
from information that is used for runtime enforcement.

Second, many RL terms, in particular the subterms of elimination forms and
ascriptions, are decorated with \emph{evidence} objects $\ev$, which summarize
runtime consistent subtyping judgments.  Recall that consistent subtyping
denotes the \emph{plausibility} that a subtyping relationship holds between two
gradual types.  Evidence objects reify that plausibility, and evolve as part of
runtime type enforcement.  We describe this in more detail below.

The typing judgment for runtime terms is unusual in that it classifies a
runtime term $\rr$ with respect to a \emph{source typing judgment}
$\Gamma |- t : \cT$.  The judgment $\rr \derivof \Gamma |- t : \cT$
says that \added{the} runtime term $\rr$ \emph{represents} a source gradual typing
derivation $\Gamma |- t : \cT$, where $t$ is a source-language $\GTFLcsub$
term.
The motivation for this structure is that in a gradually-typed language, type
enforcement is not necessarily completed at type-checking time.  Some type
enforcement may be deferred to runtime, and this enforcement is construed
as an attempt to complete (or refute) the type safety argument at runtime.
Thus one can think of an RL term's execution as playing out type progress and
preservation, which may ultimately fail and justifiably signal a runtime type
error.  In short, RL terms represent the computationally relevant residual of a
$\GTFLcsub$ typing derivation.

Naturally, the structure of RL typing rules closely mirror those of
$\GTFLcsub$, but differ on a few details.
First, the (Sx) and (S::) rules show how missing source type information
appears only as part of the corresponding source derivation, and not as part of
the term.  Gradual types highlighted in grey are artifacts of source
type-checking that cannot be recovered by examining the runtime term itself.
We exploit this information only to establish the type safety of RL programs:
the runtime never needs to reconstruct a \replaced{typing}{type} derivation from a naked RL term.
Second, each instance of consistent subtyping from $\GTFLcsub$ is now decorated
with an evidence object that supports the judgment.  For instance, the RL term
$\ev\rr$ corresponds via typing directly to an ascription expression $t :: S_0$.
The \emph{evidence judgment}  $\ev \evidenceof S \csub S_0$ says that $\ev$
confirms that $S \csub S_0$ is plausible.  

The ($\cT$if) and ($\cT$app) rules consider evidence for extra consistent
subtyping judgments that were only implicitly required by the $\GTFLcsub$
rules.  For instance, ($\cT$app) demands that \linebreak
$\cT_1 \csub \cdom(\cT_1) -> \ccod(\cT_1)$.  This extra constraint is implied
by the $\GTFLcsub$ requirement that \added{$\cdom(\cT_1)$
  and} $\ccod(\cT_1)$, an \replaced{argument and a result}{
application's} type, \replaced{ respectively, are}{is}
well defined.  The elaboration rule, and the corresponding RL term, make this
implicit constraint explicit so as to enforce type structure at runtime.
Similarly, the ($\cT$if) type $\cT_2 \csubjoin \cT_3$ imposes implicit
consistent subtyping constraints on each branch of the conditional.

In RL, an evidence object is a pair of gradual types that are tightly related
to one another, and also related to the consistent subtyping judgment that they
support.  The well-formedness judgment $|- \ev\wf$ imposes several invariants on
evidence.
\begin{proposition}
  \label{prop:wfev}
  The judgment $|- \braket{\cT_1,\cT_2}\wf$ is equivalent to the
  \added{combination of the} following
  criteria:
  \begin{enumerate}
  \item $\cT_1 \csub \cT_2$, \ie $T_1 <: T_2$ for some
    $T_1 \sqsubseteq \cT_1$ and $T_2 \sqsubseteq \cT_2$.
  \item Suppose $\cT'_1, \cT'_2 \in \GType$ and whenever
    $T_1 <: T_2$, $T_1 \sqsubseteq \cT_1$, and $T_2 \sqsubseteq \cT_2$, we also
    have $T_1 \sqsubseteq \cT'_1$ and $T_2 \sqsubseteq \cT'_2$.  Then
    $\cT_1 \sqsubseteq \cT'_1$ and $\cT_2 \sqsubseteq \cT'_2$.
  \end{enumerate}
\end{proposition}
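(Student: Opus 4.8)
The plan is to prove the biconditional by two structural inductions: the forward direction by induction on the derivation of $\vdash\braket{\cT_1,\cT_2}\wf$, and the backward direction by induction on $\cT_1$ and $\cT_2$. Throughout I would lean on two elementary facts about the static world: \emph{shape preservation of $<:$} --- if $T_1 <: T_2$ then $T_1$ and $T_2$ share an outermost shape ($\Int$, $\Bool$, arrow, or record), with record subtyping only able to drop fields and refine field types --- and \emph{the shape of precision} --- $T\sqsubseteq\cT$ forces $T$ to fill the outermost shape of $\cT$ exactly, except that $T\sqsubseteq\?$ holds for all static $T$ and $T\sqsubseteq\rec{\overline{l:\cT},\?}$ holds for any record $T$ whose label set includes $\overline{l}$. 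It is convenient to read criterion~(1) as ``the set $R(\cT_1,\cT_2) := \{(T_1,T_2) : T_1<:T_2,\ T_1\sqsubseteq\cT_1,\ T_2\sqsubseteq\cT_2\}$ is nonempty'' and criterion~(2) as ``$(\cT_1,\cT_2)$ is the $\sqsubseteq$-least pair whose box $\{(T_1,T_2):T_1\sqsubseteq\cT'_1,\ T_2\sqsubseteq\cT'_2\}$ contains $R(\cT_1,\cT_2)$.''

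For the forward direction I would induct on the well-formedness derivation. Criterion~(1) I build bottom-up: $(B,B)$ for $\braket{B,B}$ with $B\in\{\Int,\Bool\}$, $(\Int,\Int)$ for $\braket{\?,\?}$, and for the arrow rule, combining premise witnesses $T'<:T''$ (inside $\braket{\cT_{21},\cT_{11}}$) and $U'<:U''$ (inside $\braket{\cT_{12},\cT_{22}}$) into $(T''{\to}U', T'{\to}U'')$; the record rules are similar, instantiating any ``extra-label'' fields with arbitrary static types. Criterion~(2) is the substantive part: for each rule I would manufacture enough members of $R(\cT_1,\cT_2)$ --- obtained by plugging the premises' criterion-(1) witnesses into the appropriate sub-positions while varying the others --- to force any candidate $(\cT'_1,\cT'_2)$ to match the shapes of $\cT_1,\cT_2$ (or be $\?$, making the conclusion immediate) and, reading off the sub-positions, to satisfy exactly the hypothesis of criterion~(2) for the premises; the induction hypothesis then gives $\cT_1\sqsubseteq\cT'_1$ and $\cT_2\sqsubseteq\cT'_2$. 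The side condition $\braket{*_1,*_2}\neq\braket{\cdot,\?}$ in the first record rule is precisely what criterion~(2) requires: when the left side is a plain record $\rec{\overline{l_i:\cT_{i1}}}$, every static $T_1$ below it has label set exactly $\{\overline{l_i}\}$, so width subtyping forces every related $T_2$ to have label set $\subseteq\{\overline{l_i}\}$; with $T_2\sqsubseteq\rec{\overline{l_i:\cT_{i2}},\?}$ this pins $T_2$'s labels to $\{\overline{l_i}\}$, so $T_2\sqsubseteq\rec{\overline{l_i:\cT_{i2}}}$ is available and the row-marked right component is not least.

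For the backward direction I would assume~(1) and~(2) and case on the shape of $\cT_1$. If $\cT_1\in\{\Int,\Bool\}$, then~(1) together with \Propref{prop:csub-correct} (and case analysis on the consistent subtyping rules) forces $\cT_2\in\{\cT_1,\?\}$, and $\cT_2=\?$ is excluded by feeding $(\cT'_1,\cT'_2)=(\cT_1,\cT_1)$ into~(2): its hypothesis holds because $\cT_1$ has no proper static supertype, so~(2) would yield the false $\?\sqsubseteq\cT_1$. The case $\cT_1=\?$ is symmetric: testing $(\Int,\Int)$, $(\?{\to}\?,\cT_2)$, and $(\rec{\?},\cT_2)$ against the possible shapes of $\cT_2$ forces $\cT_2=\?$. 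If $\cT_1=\cT_{11}{\to}\cT_{12}$, then~(1) and consistent subtyping give $\cT_2=\?$ or $\cT_2=\cT_{21}{\to}\cT_{22}$; the former is excluded by testing $(\cT_1,\?{\to}\?)$ in~(2). For the latter I would check~(1) and~(2) for the smaller pairs $\braket{\cT_{21},\cT_{11}}$ and $\braket{\cT_{12},\cT_{22}}$: criterion~(1) by projecting the arrow components of a witness for~(1) of $\braket{\cT_1,\cT_2}$, and criterion~(2) for, say, $\braket{\cT_{21},\cT_{11}}$ by lifting any candidate $(\cT'_{21},\cT'_{11})$ to $(\cT'_{11}{\to}\cT_{12},\ \cT'_{21}{\to}\cT_{22})$, verifying that this satisfies~(2)'s hypothesis for $\braket{\cT_1,\cT_2}$, and reading $\cT_{11}\sqsubseteq\cT'_{11}$ and $\cT_{21}\sqsubseteq\cT'_{21}$ off the conclusion; the induction hypothesis then supplies the two sub-judgments, and the arrow rule closes the case. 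The record case goes the same way, with label bookkeeping choosing between the two record rules and --- exactly as in the forward direction --- forcing $*_2=\cdot$ whenever $\cT_1$ is a plain record, so that $\braket{\cdot,\?}$ never appears.

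The hardest part will be the record/row cases: reconciling width subtyping with the gradual-row precision order, deciding which of the two record well-formedness rules is in play (equal label sets versus strictly more labels on the left), and pinning down exactly why minimality demands the side condition $\braket{*_1,*_2}\neq\braket{\cdot,\?}$ and nothing stronger. I would isolate a small lemma describing, for a record/row gradual type $\cT$, the label sets of the static records below $\cT$ and how they interact with $<:$, and then reuse it in both directions of the record case; the atomic and arrow cases are by comparison a mechanical replay once the two shape facts are in hand.
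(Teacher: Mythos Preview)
Your plan is sound, and the record/row bookkeeping you flag is indeed where the work concentrates. The paper, however, does not argue this proposition by structural induction: it is stated without proof in \Secref{sec:gtflcsub}, and \Secref{sec:agt} explains that the $\wf$ rules of \Figref{fig:gtflr-statics} are \emph{calculated} so as to characterize exactly the image of the closure operator $\alpha^{<:}\circ\gamma^{<:}$ on $\csub$. Once that calculation is in hand, criteria~(1) and~(2) are just the standard Galois-connection description of fixed points of such an operator---(1) says $\gamma^{<:}(\cT_1,\cT_2)$ is nonempty and (2) says $(\cT_1,\cT_2)=\alpha^{<:}(\gamma^{<:}(\cT_1,\cT_2))$---so the equivalence holds by construction. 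Your direct approach trades that abstract machinery for a self-contained argument: it is more elementary and makes the role of the side condition $\braket{*_1,*_2}\neq\braket{\cdot,\?}$ visible in the proof itself, but the backward direction essentially replays the case analysis that the AGT calculation already performs once when deriving the $\wf$ rules from the closure operator. Both routes are legitimate; the paper's is shorter once the Galois connection of \Secref{sec:agt} is available, while yours stands on its own.
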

\noindent In short, an evidence object is the least (with respect to $\gprec$)
pair of gradual types that subsumes some set of static subtype instances
$T_1 <: T_2$.  In this sense they are minimal sound representatives of
consistent subtyping.
For instance, $\Int \csub \?$, but $\braket{\Int,\?}$ is
\emph{not} well-formed evidence, because $\Int <: \Int$ is the only pair of
static types that justifies that particular consistent subtyping instance.  The
well-formed evidence object $\braket{\Int,\Int}$ captures the same information.
From the definition of well-formed evidence, it follows immediately that if
$\ev \evidenceof S_1 \csub S_2$ for any $\ev$, then indeed $S_1 \csub S_2$
holds in general.  Evidence ensures that Prop.~\ref{prop:wfev} is maintained
during evaluation.

\subsubsection{Dynamic Semantics}
\label{ssec:gtflr-dynamics}

\begin{figure}
  \centering
  \begin{small}
    \flushleft\textbf{Syntax}
    \begin{displaymath}
      \begin{array}{rcll}
        E \in \ECtxt & ::= &
        [] |
        E[ \rec{\overline{l=\rv},l=[],\overline{l=\rr}} ] | 
        E[F[ \cast{\ev}{[]} ]] 
        & \text{(evaluation contexts)} \\
        F \in \EvFrame & ::= & [] | [] + \cast{\ev}{\rr} | \cast{\ev}{\ru} + [] 
        | []\;\cast{\ev}{\rr} | \cast{\ev}{\ru}\;[]
        | [].l \\
        & | & \<if> [] \<then> \cast{\ev}{\rr} \<else> \cast{\ev}{\rr} 
        &\text{(evidence frame)}
      \end{array}
    \end{displaymath}
    \flushleft\boxed{\rr \leadsto \rr,\; \rr \leadsto \error}
    \quad\textbf{Notions of Reduction}
    \setcounter{errorcontextlines}{999}
    \begin{align*}
      \evcast{\ev_1}{\rn_1} + \evcast{\ev_2}{\rn_2} &\leadsto \rn_3 &
      \rn_3 = \rn_1 + \rn_2 \\
      \cast{\ev_1}{(\lambda \rx.\rr)}\;\cast{\ev_2}{\ru}
      &\leadsto
      \begin{cases}
        \evcast{\bigl(\icod(\ev_1)\bigr)} {
          \bigl([(\cast{\ev_3}{\ru})/\rx] \rr\bigr)} \\
        \error \qquad
        \text{if $\ev_3$ not defined}
      \end{cases}
      & \ev_3 = (\ev_2 \trans{} \idom(\ev_1))
      \\
      \<if>   \evcast{\ev_1}{\ttt} 
      \<then> \evcast{\ev_2}{\rr_2}
      \<else> \evcast{\ev_3}{\rr_3}
      &\leadsto \evcast{\ev_2}{\rr_2} \\
      \<if>   \evcast{\ev_1}{\fff} 
      \<then> \evcast{\ev_2}{\rr_2}
      \<else> \evcast{\ev_3}{\rr_3}
      &\leadsto \evcast{\ev_3}{\rr_3} \\
      \cast{\ev}{\rec{\overline{l_i=\rv_i}}}.l_j
      &\leadsto \iproj(\ev,l_j) \rv_j  \\
    \end{align*}
    \flushleft\boxed{\rr --> \rr,\; \rr --> \error}
    \quad\textbf{Contextual Reduction}
    \begin{mathpar}
      \inference{\rr \leadsto \rr'}{E[\rr] --> E[\rr']}
      \and
      \inference{\rr \leadsto \error}{E[\rr] --> \error}
      \\
      \inference{
      }{
        E[F[\evcast{\ev_1}{\evcast{\ev_2}{\ru}}]] -->
        E[F[\evcast{(\ev_2 \trans{} \ev_1)}{\ru}]]
      } 
      \and
      \inference{
        \ev_2 \trans{} \ev_1 \;\text{not defined}
      }{
        E[F[\evcast{\ev_1}{\evcast{\ev_2}{\ru}}]] --> \error
      } 
    \end{mathpar}
    \flushleft\textbf{Helper Functions} 
    \begin{gather*}
      \textsc{FunEv} = \set{\ev \in \Ev{} | 
        \ev \evidenceof \cT_{11} -> \cT_{12} \csub \cT_{21} -> \cT_{22}} \\
      \textsc{IProjDomain} = \set{\braket{\ev,l} \in \Ev{} \times \Label | 
        \ev \evidenceof 
        \rec{l:\cT,\overline{l_i:\cT_i},*_1} \csub 
        \rec{l:\cT',\overline{l_j:\cT_j},*_2}} \\
      \begin{block}
        \idom : \textsc{FunEv} -> \Ev{} \\
        \idom(\cT_1,\cT_2) = \braket{\cdom(\cT_2),\cdom(\cT_1)}\\
      \end{block}  
      \qquad
      \begin{block}
        \icod : \textsc{FunEv} -> \Ev{} \\
        \icod(\cT_1,\cT_2) = \braket{\ccod(\cT_1),\ccod(\cT_2)}\\
      \end{block}  
      \\
      \begin{block}
        \iproj : \textsc{IProjDomain} -> \Ev{} \\
        \iproj(\braket{\cT_1,\cT_2},l) = 
        \braket{\cproj(\cT_1,l),\cproj(\cT_2,l)}\\
      \end{block}
    \end{gather*}
  \end{small}

  \caption{$\GTFLcsub$: Runtime Language Dynamic Semantics}
  \label{fig:gtflr-dynamics}
\end{figure}

\newcommand{\evpair}[2]{\evlangle #1, #2 \evrangle}

Fig.~\ref{fig:gtflr-dynamics} presents a reduction semantics for RL. The
notions of reduction augment standard reduction steps with operations that
manipulate evidence.  We first consider these reductions at a high-level, and
then delve into the role of evidence in these reductions.
Addition ignores its associated evidence and behaves as usual.
The intuition behind this is that the evidence is now superfluous:
$\rn_1$ and $\rn_2$ evidently have type $\Int$ because they are integers.
Similarly, conditionals ignore the evidence associated with the
predicate, because it is evidently a Boolean value.  The chosen branch's
evidence is propagated as-is to enforce its type invariants.
Record projection selects the relevant field of a record, but also applies the
$\iproj$ operator to the evidence that is associated to the projection
operator, in order to extract evidence that is relevant to the projected
value.
Function application is the most complex rule.  Using the $\idom$ and $\icod$
operators, it extracts evidence associated with the domain and codomain of the
function subterm, composes the domain evidence with the argument evidence, and
then associates the codomain evidence with the eventual result of the call.  If
composition fails, then the entire application fails.%
\footnote{Appendix~\ref{sec:ctrans} \added{in the
    accompanying technical report~\cite{techreport}} directly defines composition; we provide a
  succinct indirect definition in Sec.~\ref{sec:agt} using some additional
  concepts. }

To understand the above behaviours, it helps to think of the operators like
$\iproj$ as runtime representations of \emph{inversion principles} on
consistent subtyping.  For instance, given the inductive definition of
consistent subtyping in Fig.~\ref{fig:gtfl-statics}, we have the following
inversion lemma.
\begin{proposition}
  If $l_k \in \set{\overline{l_j}}$ and
  $\rec{\overline{l_i : \cT_i}} \csub \rec{\overline{l_j : \cT'_j}}$
  then $\cT_k \csub \cT'_k$.
\end{proposition}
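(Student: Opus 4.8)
The plan is to prove the inversion principle by induction on the derivation of $\rec{\overline{l_i : \cT_i}} \csub \rec{\overline{l_j : \cT'_j}}$, following the inductive definition of consistent subtyping in Fig.~\ref{fig:gtfl-statics}. First I would observe that only two of the $\csub$ rules can possibly conclude a judgment between two record types with no row designator on the left and none on the right, namely the two record rules; the rules $\? \csub \cT$ and $\cT \csub \?$ do not apply since neither type is $\?$, and the base and arrow rules are ruled out by shape. So I split into those two cases.

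In the first case, the derivation ends with the rule whose conclusion is $\rec{\overline{l_i : \cT_{i1}}, \overline{l_m : \cT_m}} \csub \rec{\overline{l_i : \cT_{i2}}, *}$ from premises $\overline{\cT_{i1} \csub \cT_{i2}}$. Matching this against our judgment, the common labels $\overline{l_i}$ appearing on the right-hand record are exactly those shared by both sides, and by hypothesis $l_k$ is among the right-hand labels $\overline{l_j}$; hence $l_k = l_i$ for some $i$, and the corresponding premise is precisely $\cT_k \csub \cT'_k$, as required. The second case is the rule allowing an extra row designator $\?$ on the left; but our left-hand type $\rec{\overline{l_i:\cT_i}}$ has no $\?$, so this rule's conclusion cannot be unified with our judgment unless its left-row component is empty, in which case it degenerates to the same situation as the first case. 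In either reading, the shared-label premises directly supply $\cT_k \csub \cT'_k$.

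The only mild subtlety — and the step I expect to require the most care — is the bookkeeping of label sets: the definition quantifies over \emph{sets} of labels via overlines, and one must check that when the conclusion of a record rule is syntactically identified with $\rec{\overline{l_i:\cT_i}} \csub \rec{\overline{l_j:\cT'_j}}$, the rule's ``matched'' labels are forced to coincide with $\overline{l_j}$ (the labels demanded on the right) and that $l_k \in \overline{l_j}$ therefore pins down a unique premise $\cT_k \csub \cT'_k$. This is routine but is where an informal argument could slip; making the label-indexing explicit closes the gap. No appeal to Prop.~\ref{prop:csub-correct} is needed, though one could alternatively phrase the whole argument semantically: if $\rec{\overline{l_i:\cT_i}} \csub \rec{\overline{l_j:\cT'_j}}$ then some $\rec{\overline{l_i:T_i}} <: \rec{\overline{l_j:T'_j}}$ with $T_i \gprec \cT_i$, $T'_j \gprec \cT'_j$, and static width/depth record subtyping gives $T_k <: T'_k$, whence $\cT_k \csub \cT'_k$ by Prop.~\ref{prop:csub-correct} again — but the direct syntactic induction is shorter.
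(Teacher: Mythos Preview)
Your proposal is correct and matches the paper's (implicit) approach: the paper presents this proposition as an inversion lemma that follows directly from ``the inductive definition of consistent subtyping in Fig.~\ref{fig:gtfl-statics}'' and gives no further proof, so your case analysis on the last rule of the $\csub$ derivation is exactly the intended argument spelled out. One minor note: what you call ``induction on the derivation'' is really just inversion (case analysis on the last rule, no recursive appeal), since the needed premise $\cT_k \csub \cT'_k$ is extracted directly rather than obtained by inductive hypothesis.
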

The $\iproj$ operator reifies this inversion principle (and those for gradual
rows) at runtime: If
${\ev \evidenceof \rec{\overline{l_i : \cT_i}} \csub%
  \rec{\overline{l_j : \cT'_j}}}$
then $\iproj(\ev,l_k) \evidenceof \cT_k \csub \cT'_k$.  So just as inversion
principles are used to prove static type safety, evidence inversion operators
are used to enforce type invariants at runtime.  The three inversion operators
are \emph{total} functions over appropriate subdomains of evidence, which
properly reflect the kind of runtime evidence that they are used to manipulate
during reduction.  As such, inversion never fails if it is encountered.

While evidence inversion operations are used to simluate inversion principles,
evidence composition $\ev_1 \trans{} \ev_2$ is used to monitor
\emph{transitivity} of consistent subtyping.  For this reason, this composition
operation is called \emph{consistent transitivity}~\added{\cite{garciaAl:popl2016}}. 
The defining correctness criteria for consistent transitivity are as follows.
\begin{definition}
  \label{def:ctranscc}
  Suppose $\ev_1 = \braket{\cT'_1,\cT'_{21}} \evidenceof \cT_1 \csub \cT_2$
  and $\ev_2 = \braket{\cT'_{22},\cT'_3} \evidenceof \cT_2 \csub \cT_3$.    
  \begin{enumerate}
  \item Call ($\ev_1$,$\ev_2$) \emph{plausibly transisitive} if
    $T_1 \gprec \cT'_1$ and $\cT'_{21} \sqsupseteq T_2 \gprec \cT'_{22}$, and
    $T_3 \gprec \cT'_3$ hold for some triple $T_1 <: T_2 <: T_3$;
  \item call $\ev' = \braket{\cT''_1,\cT''_3}$ a \emph{candidate for
      transitivity of ($\ev_1$,$\ev_2$)} if ($\ev_1$,$\ev_2$) is plausibly
    transisitive and every triple $T_1 <: T_2 <: T_3$ such that
    $T_1 \gprec \cT'_1$ and $\cT'_{21} \sqsupseteq T_2 \gprec \cT'_{22}$, and
    $T_3 \gprec \cT'_3$
    implies $T_1 \gprec \cT''_1$
    and $T_3 \gprec\cT''_3$;
  \end{enumerate}
\end{definition}

\begin{proposition}
  \label{prop:ctranscc}
  Suppose $\ev_1 = \braket{\cT'_1,\cT'_{21}} \evidenceof \cT_1 \csub \cT_2$ and
  $\ev_2 = \braket{\cT'_{22},\cT'_3} \evidenceof \cT_2 \csub \cT_3$.  Then
  ($\ev_1$,$\ev_2$) has a candidate for transitivity if and only if it has a
  least candidate \added{over the precision ($\sqsubseteq$) order}, denoted $\ev_1 \trans{} \ev_2$.  Furthermore,
  $(\ev_1 \trans{} \ev_2) \evidenceof \cT_1 \csub \cT_3$.
\end{proposition}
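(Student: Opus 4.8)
The plan is to pin down the least candidate explicitly as a componentwise abstraction and then argue entirely through the AGT Galois connection $\langle\gammas,\alphas\rangle$ between $\Pow(\Type)$ and $\GType$, recalling that $\gammas(\cT)=\{T\mid T\gprec\cT\}$ and that $\alphas$ sends a nonempty set $\hat T$ of static types to the $\gprec$-least gradual type $\cT$ with $\hat T\subseteq\gammas(\cT)$ (so $\hat T\subseteq\gammas(\alphas(\hat T))$, and $\hat T\subseteq\gammas(\cT)$ implies $\alphas(\hat T)\gprec\cT$). Writing $\ev_1=\braket{\cT'_1,\cT'_{21}}$ and $\ev_2=\braket{\cT'_{22},\cT'_3}$, I would introduce the set of static witnesses
\[
  \begin{aligned}
    \mathcal{W} \;=\; \{\, \braket{T_1,T_3} \mid{} & \exists T_2.\ T_1\sub T_2\sub T_3, \\
    & T_1\gprec\cT'_1,\; T_2\gprec\cT'_{21},\; T_2\gprec\cT'_{22},\; T_3\gprec\cT'_3 \,\}.
  \end{aligned}
\]
Unfolding \Defref{def:ctranscc} yields two reformulations used throughout: $(\ev_1,\ev_2)$ is plausibly transitive iff $\mathcal{W}\neq\emptyset$; and a pair $\braket{\cT''_1,\cT''_3}$ is a candidate for transitivity iff $\mathcal{W}\neq\emptyset$ and $\mathcal{W}\subseteq\gammas(\cT''_1)\times\gammas(\cT''_3)$ (using $T\gprec\cT$ iff $T\in\gammas(\cT)$).

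For the nontrivial (left-to-right) implication, suppose $(\ev_1,\ev_2)$ has a candidate, so $\mathcal{W}\neq\emptyset$; let $\mathcal{W}_1,\mathcal{W}_3$ be the (nonempty) projections of $\mathcal{W}$ onto its first and second components, set $\cT''_1:=\alphas(\mathcal{W}_1)$ and $\cT''_3:=\alphas(\mathcal{W}_3)$, and put $\ev':=\braket{\cT''_1,\cT''_3}$. Soundness of $\alphas$ gives $\mathcal{W}_i\subseteq\gammas(\cT''_i)$, hence $\mathcal{W}\subseteq\gammas(\cT''_1)\times\gammas(\cT''_3)$, so $\ev'$ is itself a candidate. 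For minimality, any candidate $\braket{U_1,U_3}$ satisfies $\mathcal{W}\subseteq\gammas(U_1)\times\gammas(U_3)$, hence $\mathcal{W}_i\subseteq\gammas(U_i)$, and optimality of $\alphas$ gives $\cT''_i\gprec U_i$, i.e. $\ev'$ is $\gprec$-below $\braket{U_1,U_3}$ componentwise. Thus $\ev'$ is the least candidate, and I would take $\ev_1\trans{}\ev_2:=\ev'$ (equivalently, check that the concrete definition of $\trans{}$ computes exactly this). The right-to-left implication is immediate, since a least candidate is a candidate.

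For the ``furthermore'' claim I must show $\ev_1\trans{}\ev_2\evidenceof\cT_1\csub\cT_3$, i.e. $\cT''_1\gprec\cT_1$, $\cT''_3\gprec\cT_3$, and $|-\ev'\wf$. The evidence judgments $\ev_1\evidenceof\cT_1\csub\cT_2$ and $\ev_2\evidenceof\cT_2\csub\cT_3$ force $\cT'_1\gprec\cT_1$ and $\cT'_3\gprec\cT_3$; since every member of $\mathcal{W}_1$ is $\gprec\cT'_1\gprec\cT_1$ we get $\mathcal{W}_1\subseteq\gammas(\cT_1)$, hence $\cT''_1\gprec\cT_1$ by optimality, and symmetrically $\cT''_3\gprec\cT_3$. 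For well-formedness I would appeal to \Propref{prop:wfev}: every $\braket{T_1,T_3}\in\mathcal{W}$ has $T_1\sub T_3$ by transitivity of $\sub$ through the witnessing $T_2$, so $\mathcal{W}$ is a nonempty set of static subtype instances contained in $\gammas(\cT''_1)\times\gammas(\cT''_3)$; condition~(1) of \Propref{prop:wfev}, namely $\cT''_1\csub\cT''_3$, follows from \Propref{prop:csub-correct} applied to any member of $\mathcal{W}$, and condition~(2), tightness, follows from the minimality already established, because a pair subsuming every subtype instance inside $\gammas(\cT''_1)\times\gammas(\cT''_3)$ in particular subsumes $\mathcal{W}$, hence is a candidate, hence dominates $\ev'$ in precision.

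The step I expect to be the genuine obstacle is this well-formedness argument: a priori a componentwise abstraction $\braket{\alphas(\mathcal{W}_1),\alphas(\mathcal{W}_3)}$ need not be a legal evidence object --- for instance the record side condition ruling out pairs of the shape $\braket{\rec{\overline{l:\cT}},\rec{\overline{l:\cT},\?}}$ must be honoured. What rescues it is precisely that $\mathcal{W}$ contains only pairs related by static subtyping, so width subtyping forces each left record's label set to dominate the matching right record's --- the very structural fact that makes the tightness half of \Propref{prop:wfev} go through via the minimality argument above. I would therefore want the AGT development to establish $\alphas$ and the abstraction of subtyping so that this minimality characterization of well-formed evidence is available as a lemma, rather than redoing the record-shape bookkeeping by hand.
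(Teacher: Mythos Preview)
Your proposal is correct and takes essentially the same approach as the paper: the paper defines $\ev_1\trans{}\ev_2 = \alpha^{<:}(\Idp(\gamma^{<:}(\ev_1)\relcomp\gamma^{<:}(\ev_2)))$, and your set $\mathcal{W}$ is precisely $\gamma^{<:}(\ev_1)\relcomp\gamma^{<:}(\ev_2)$ while your $\ev' = \braket{\alphas(\mathcal{W}_1),\alphas(\mathcal{W}_3)}$ is precisely $\alpha^{<:}(\mathcal{W})$, so the correctness criteria follow from soundness and optimality of the Galois connection just as you argue. Your concern about well-formedness is handled more cleanly in the paper's framing: since $\alpha^{<:}\circ\gamma^{<:}\circ\alpha^{<:} = \alpha^{<:}$ holds for any Galois connection, $\alpha^{<:}(\mathcal{W})$ is automatically in the image of $\alpha^{<:}\circ\gamma^{<:}$, which the paper identifies with the well-formed evidence objects, so no record-shape bookkeeping is needed.
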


As \citet{GradualTyping} first observed, consistent subtyping is not transitive
in general, and that property is fundamental to gradual type checking.
However, transitivity of subtyping is fundamental to proving type safety.
Thus, $\GTFLcsub$ distills runtime type errors down to a failure of
transitivity.  So unlike the evidence inversion functions, consistent
transitivity is a \emph{partial} function.

Contextual reduction formalizes three kinds of reduction.  First, it captures
how notions of reduction apply in evaluation position.  Second, it captures
how an error during a notion of reduction aborts the entire computation.
Third, it reflects how sequences of evidence objects are composed, producing
new evidence on success or signaling a runtime type error on failure.

\subsubsection{Type Safety}
\label{ssec:gtflr-safety}

RL satisfies standard safety~\added{\cite{WrightFelleisen1994}}: the semantics explicitly categorizes all
well-typed terms as reducible or as canonical forms.  Programs do not get
stuck, but they may signal runtime type errors.%
\footnote{$\GTFLcsub$ RL's safety follows from how its dynamic semantics were
  calculated, using AGT, from $\STFLsub$'s safety proof.}

\begin{proposition}[Progress~\added{\cite{garciaAl:popl2016}}]
  If $\rr \derivof \emptyset |- t : \cT$, then one of the following holds:
  \begin{enumerate}
  \item $\rr$ is a value $\rv$;
  \item $\rr -> \rr'$;
  \item $\rr -> \error$.
  \end{enumerate}
\end{proposition}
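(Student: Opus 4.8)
The plan is to prove progress by induction on the derivation of $\rr \derivof \emptyset \vdash t : \cT$, leaning on two auxiliary lemmas. The first is \emph{canonical forms} for closed raw values: if $\ru \derivof \emptyset \vdash t : \cT$ then, since only ($\cT$n), ($\cT$b), ($\cT\lambda$), ($\cT$rec) can be the final rule, $\cT$ is $\Int$, $\Bool$, an arrow type, or a closed record type --- never $\?$ and never a gradual row --- and $\ru$ is correspondingly a numeral, a Boolean, some $\lambda\rx.\rr'$, or a record $\rec{\overline{\rl_i = \rv_i}}$ carrying exactly those labels. The second is the routine fact that a step inside a subterm lifts through surrounding evidence: if $\rr --> \rr'$ (resp.\ $\rr --> \error$) then $\ev\,\rr --> \ev\,\rr'$ (resp.\ $--> \error$), and likewise $F[\ev\,\rr] --> F[\ev\,\rr']$ for every evidence frame $F$; this follows from the closure of evaluation contexts under appending a ``frame-then-cast'' descent at the hole. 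With these in hand the introduction-form cases are immediate ($\rn$, $\rb$, $\lambda\rx.\rr$ are raw values; for ($\cT$rec), apply the induction hypothesis to each field and, at the leftmost non-value field if any, step or err inside the record evaluation context $E[\rec{\overline{l=\rv},l=[],\overline{l=\rr}}]$), the ($\cT$x) case is vacuous in the empty environment, and ($\cT\lambda$) needs no induction hypothesis.

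For each elimination form --- ($\cT$app), ($\cT$+), ($\cT$if), ($\cT$proj) --- and for the ascription rule ($\cT\!\!::$), I would apply the induction hypothesis to the ``scrutinee'' subterm: the operator, the first addend, the predicate, the receiver, or (for ascription) the single subterm, respectively. If it steps or errs, the lifting lemma carries the step to the whole term. If it is a value of the form $\ev'\,\ru$, then $\ev\,(\ev'\,\ru)$ sits inside an evidence frame $F$, so $F[\ev\,(\ev'\,\ru)]$ matches the nested-cast redex $E[F[\ev_1(\ev_2\ru)]]$ and either steps (when $\ev'\trans{}\ev$ is defined) or reduces to $\error$. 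The remaining subcase is a bare raw value $\ru$ in scrutinee position; here I would use the well-formed-evidence invariant (Prop.~\ref{prop:wfev}) together with the precision rules to pin down its type. In ($\cT$app), for instance, $\ev_1 \evidenceof \cT_1 \csub \cT'_1 -> \cT'_2$ forces $\ev_1 = \braket{\cT_{11},\cT_{12}}$ with $\cT_{12}$ an arrow type (only the arrow precision rule puts an arrow on the right), hence $\cT_{11}$ an arrow type (only the arrow well-formedness rule has an arrow as its second component), hence $\cT_1$ --- being $\gprec$-above an arrow type yet the source type of a raw value, hence not $\?$ --- an arrow type; so by canonical forms $\ru = \lambda\rx.\rr'$ and moreover $\ev_1 \in \textsc{FunEv}$, so $\idom(\ev_1)$ and $\icod(\ev_1)$ are defined. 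One then recurses on the argument: it steps/errs, or it is $\ev'_2\ru_2$ (a nested-cast redex inside the frame $(\ev_1\ru)\,[]$), or it is a bare raw value $\ru_2$, in which case the application notion of reduction fires --- stepping when $\ev_2 \trans{} \idom(\ev_1)$ is defined and reducing to $\error$ otherwise. The other cases are analogous and in fact simpler: $\ev_i \evidenceof \cT_i \csub \Int$ forces each addend (once a bare raw value) to be a numeral; $\ev_1 \evidenceof \cT_1 \csub \Bool$ forces the predicate to be $\ttt$ or $\fff$; in ($\cT$proj) the receiver's evidence forces $\ru$ to be a record containing a field named $l$ and gives $\braket{\ev,l}\in\textsc{IProjDomain}$, so $\iproj(\ev,l)$ is defined and projection fires; and the ascription case simply returns the value $\ev\,\ru$. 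Addition and application additionally recurse on the second argument before their notion of reduction can fire, whereas the conditional and projection do not.

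I expect the main obstacle to be the shape-analysis step in the bare-raw-value subcase: it is the one place where Prop.~\ref{prop:wfev}, the precision rules, and canonical forms must be marshalled together --- including the ``dual'' inference that a well-formed evidence pair whose second component refines a given type constructor must have a first component of the same constructor --- and where I must check that the inversion operators $\idom$, $\icod$, $\iproj$ are always invoked within their declared subdomains ($\textsc{FunEv}$, $\textsc{IProjDomain}$), so that, as the text claims, inversion never itself causes a stuck state. It follows that the only way execution can fail is a failure of consistent transitivity $\trans{}$ --- in a nested-cast composition or in $\ev_2 \trans{} \idom(\ev_1)$ during application --- and the contextual reduction rules map every such failure to $\error$, which is exactly the third alternative of the statement. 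The remaining ingredients, the lifting lemma and the bookkeeping of evidence frames, are routine; alternatively, as the statement's footnote indicates, one could transport $\STFLsub$'s type-safety proof through the AGT construction, but the direct argument above is self-contained given the results already established.
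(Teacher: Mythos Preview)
Your proof is correct but takes a different route from the paper. The paper does not give a direct proof; it cites \citet{garciaAl:popl2016} and relies on the footnote's observation that RL's safety follows from the AGT construction applied to $\STFLsub$'s safety proof, with the appendix (Sec.~\ref{sec:extr-intr-correspondence}) supplying a bisimulation between RL terms and the intrinsically-typed terms of that earlier work so that progress is inherited rather than re-proved. Your approach is instead a direct, self-contained induction on the RL typing derivation, with canonical forms and a context-lifting lemma as auxiliaries; it requires no reference to AGT or intrinsic terms and makes explicit exactly where Prop.~\ref{prop:wfev} and the domain conditions on $\idom$, $\icod$, $\iproj$ are invoked---in particular, your ``dual'' inference that a well-formed evidence pair whose second component refines a given constructor must have a first component of the same constructor is precisely the shape lemma that the AGT route packages up abstractly. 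The paper's route buys economy by trusting the framework---progress comes essentially for free once the AGT calculation and the intrinsic/extrinsic correspondence are accepted---whereas yours buys independence and transparency at the cost of repeating the per-elimination-form case analysis.
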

\begin{proposition}[Preservation~\added{\cite{garciaAl:popl2016}}]
  \label{prop:rr-preservation}
  If $\rr \derivof \emptyset |- t : \cT$
  and $\rr -> \rr'$ then 
  $\rr' \derivof \emptyset |- t' : \cT$
  for some source $t'$.
\end{proposition}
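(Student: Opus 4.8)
The plan is to proceed by the standard Wright--Felleisen recipe: induction on the structure of the reduction $\rr -> \rr'$, which first splits on the three contextual reduction forms (a notion of reduction fired in an evaluation context, an evidence-composition step, and their error cases, though the error cases are vacuous here since they conclude $\rr -> \error$). For the evaluation-context case $E[\rr_0] -> E[\rr_0']$ with $\rr_0 \leadsto \rr_0'$, I would first establish a \emph{replacement} or \emph{decomposition} lemma: if $E[\rr_0] \derivof \Gamma |- t : \cT$, then $t$ decomposes as $E_s[t_0]$ for a corresponding source evaluation context $E_s$ and $\rr_0 \derivof \Gamma |- t_0 : \cT_0$ for some $\cT_0$, and moreover if $\rr_0' \derivof \Gamma |- t_0' : \cT_0'$ with $\cT_0' \sqsubseteq \cT_0$ (or more precisely, with $\rr_0'$ representing a source derivation whose conclusion type is compatible), then $E[\rr_0'] \derivof \Gamma |- E_s[t_0'] : \cT$. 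The heart of the argument is therefore the case analysis on the notions of reduction themselves.

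For each notion of reduction I would check that the contractum is well-typed against an appropriate source derivation. For addition, $\evcast{\ev_1}{\rn_1} + \evcast{\ev_2}{\rn_2} \leadsto \rn_3$: the source derivation is $(\cT+)$ with premises at types $\cT_1,\cT_2$ and $\ev_i \evidenceof \cT_i \csub \Int$; the contractum $\rn_3$ represents $n_3 : \Int$ via $(\cT$n$)$, and $\Int$ matches the original result type. For the conditional, the chosen branch's evidence $\ev_2$ (resp.\ $\ev_3$) already witnesses $\cT_2 \csub \cT_2 \csubjoin \cT_3$, so $\evcast{\ev_2}{\rr_2} \derivof \Gamma |- t_2 :: (\cT_2 \csubjoin \cT_3)$ by $(\cT{::})$, matching the result type. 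For projection, $\cast{\ev}{\rec{\overline{l_i=\rv_i}}}.l_j \leadsto \iproj(\ev,l_j)\,\rv_j$: I invoke the correctness of $\iproj$ (the inversion-principle reading discussed after Prop.~\ref{prop:ctranscc}) to conclude $\iproj(\ev,l_j) \evidenceof \cT_j \csub \cT'$ where $\cT' = \cproj(\ldots)$, hence $\iproj(\ev,l_j)\,\rv_j$ represents $t.l_j :: \cT'$. For the composition step $E[F[\evcast{\ev_1}{\evcast{\ev_2}{\ru}}]] -> E[F[\evcast{(\ev_2 \trans{} \ev_1)}{\ru}]]$, I use Prop.~\ref{prop:ctranscc}: if $\ev_2 \evidenceof \cT_1 \csub \cT_2$ and $\ev_1 \evidenceof \cT_2 \csub \cT_3$ then $(\ev_2 \trans{} \ev_1) \evidenceof \cT_1 \csub \cT_3$, which is exactly the evidence judgment needed to retype the merged cast.

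The main obstacle is the function-application case, $\cast{\ev_1}{(\lambda \rx.\rr)}\;\cast{\ev_2}{\ru} \leadsto \evcast{\icod(\ev_1)}{([(\cast{\ev_3}{\ru})/\rx]\rr)}$ with $\ev_3 = \ev_2 \trans{} \idom(\ev_1)$, because it requires three things at once: (i) a \emph{substitution lemma} for the runtime typing judgment $\derivof$, stating that substituting a runtime term representing a derivation of the bound variable's type into the body preserves representation of the (substituted) source derivation; (ii) correctness of $\idom$ and $\icod$ as inversion operators on function evidence, so that $\idom(\ev_1) \evidenceof \cdom(\cT_1) \csub \cT'_1$ wait---more carefully, $\idom(\ev_1)$ witnesses the contravariant domain relationship and $\icod(\ev_1)$ the covariant codomain relationship needed by $(\cT\lambda)$/$(\cT{::})$; and (iii) correctness of consistent transitivity (Prop.~\ref{prop:ctranscc}) to justify that $\ev_3$, when defined, is valid evidence for $\cT_2 \csub \cdom(\cT_1)$ — and when it is undefined the reduction goes to $\error$, which is outside the scope of this proposition. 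I expect the substitution lemma to be the genuinely new piece of work, since it must track the interplay between the source typing derivation $\Gamma,x:\cT_1 |- t : \cT_2$ and the grey (source-only) type annotations on ascription and variable nodes; everything else reduces to invoking the correctness propositions already stated. Since the proposition is attributed to \citet{garciaAl:popl2016} and follows from the AGT calculation of the semantics, I would note that these lemmas (substitution, inversion-operator correctness, consistent-transitivity correctness) are the standard ingredients and cite the construction rather than re-deriving each routine sub-case.
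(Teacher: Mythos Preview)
Your direct Wright--Felleisen outline is sound: the decomposition/replacement lemma, the substitution lemma for $\derivof$, the inversion-operator correctness properties for $\idom$, $\icod$, $\iproj$, and Prop.~\ref{prop:ctranscc} for consistent transitivity are exactly the ingredients a from-scratch proof needs, and your case analysis is accurate (in particular your reading of $\ev_3$ as witnessing $\cT_2 \csub \cdom(\cT_1) = \cT_{11}$, the lambda's parameter type, is correct).

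The paper, however, does not prove this proposition directly. It states preservation as a result imported from \citet{garciaAl:popl2016}: the text explains that correctness properties are stated ``without proof, because AGT exploits calculational abstract interpretation techniques \ldots\ making the design `correct by construction','' and the footnote on type safety says RL's safety ``follows from how its dynamic semantics were calculated, using AGT, from $\STFLsub$'s safety proof.'' Concretely, Appendix~\ref{sec:extr-intr-correspondence} sets up a correspondence relation $t^{\cT} \sim \rr \derivof \Omega |- t : \cT$ between the extrinsic RL terms used here and the intrinsic terms of \citet{garciaAl:popl2016}, and proves a bisimulation whose second clause is literally the preservation statement. So the paper's route is to \emph{transport} preservation from the already-established intrinsic-term metatheory via that correspondence, rather than re-derive it. Your approach yields a self-contained argument independent of the intrinsic presentation; the paper's approach reuses the prior AGT metatheory and avoids redoing the substitution lemma and the per-reduction case analysis in this setting.
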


The statement of type preservation is unusual because the runtime typing
judgment expresses a crisp relationship between source terms and runtime terms.
The source terms $t$ in the runtime typing judgment evolve in lock-step with
the runtime terms $\rr$.  We say more about this in
Sec.~\ref{ssec:gtflr-elaboration}.

Unlike typical preservation theorems for languages with
subtyping~\added{\cite{tapl}}, the type
of the resulting term remains \emph{exactly} the same as the source term.
This is critical because consistent subtyping \emph{does not} denote safe
substitutibility.%
\footnote{This is most evident in the fact that consistent subtyping is not
  transitive.}
Safety requires that any use of consistent subtyping must be mediated by
runtime evidence.  Even if a subterm does evolve to a consistent subtype, it
will be wrapped with runtime evidence that explicitly coerces it back to the
original consistent supertype.

\subsection{Elaboration}
\label{ssec:gtflr-elaboration}

\begin{figure}
  \begin{small}
    \flushleft\boxed{\Gamma |- t \leadsto \rr : \cT}\quad\textbf{Elaboration}
    \begin{mathpar}
      \inference[($\leadsto$x)]{x:\cT\in\Gamma}{\Gamma |- x \leadsto \rx : \cT}
      \and
      \inference[($\leadsto$n)]{}{\Gamma |- n \leadsto \rn : \Int}
      \and
      \inference[($\leadsto$b)]{}{\Gamma |- b \leadsto \rb : \Bool}
      \and
      \inference[($\leadsto$app)]{
        \Gamma |- t_1 \leadsto \rr_1 : \cT_1 & 
        \ev_1 = \Isub|[ \cT_1 \csub \cdom(\cT_1) -> \ccod(\cT_1) |] \\
        \Gamma |- t_2 \leadsto \rr_2 : \cT_2 & 
        \ev_2 = \Isub|[ \cT_2 \csub \cdom(\cT_1) |]
      }{
        \Gamma |- t_1\;t_2 \leadsto \ev_1\rr_1\;\ev_2\rr_2 : \ccod(\cT_1)
      }
      \and
      \inference[($\leadsto$+)]{
        \overline{\Gamma |- t_i \leadsto \rr_i : \cT_i} & 
        \overline{\ev_i = \Isub|[ \cT_i \csub \Int |]}
      }{
        \Gamma |- t_1 + t_2 \leadsto \ev_1\rr_1 + \ev_2\rr_2 : \Int
      }
      \and
      \inference[($\leadsto$if)]{
        \overline{\Gamma |- t_i \leadsto \rr_i: \cT_i} &      
        \ev_1 = \Isub|[ \cT_1 \csub \Bool |] \\
        \overline{\ev_i = \Isub|[ \cT_i \csub \cT_2 \csubjoin \cT_3 |]}\;
        i \in \set{2,3}
      }{
        \Gamma |- 
        \begin{block} \<if> t_1 \<then> t_2 \<else> t_3 
          \leadsto \\\<if> \ev_1\rr_1 \<then> \ev_2\rr_2 \<else> \ev_3\rr_3
          : \cT_2 \csubjoin \cT_3
        \end{block}
      }  
      \and
      \inference[($\leadsto$proj)]{
        \Gamma |- t \leadsto \rr : \cT &
        \ev = \Isub|[ \cT \csub \rec{l:\cproj(\cT,l)} |]
      }{
        \Gamma |- t.l \leadsto \ev\rr.l : \cproj(\cT,l) 
      }
      \and
      \inference[($\leadsto\!\lambda$)]{
        \Gamma, x:\cT_1 |- t \leadsto \rr: \cT_2
      }{
        \Gamma |- (\lambda x : \cT_1.t) \leadsto (\lambda \rx.\rr): \cT_1 -> \cT_2
      }
      \and
      \inference[($\leadsto::$)]{
        \Gamma |- t \leadsto \rr : \cT &
        \ev = \Isub|[ \cT \csub \cT_1 |]
      }{
        \Gamma |- (t :: \cT_1) \leadsto \ev\rr : \cT_1
      }  
      \and
      \inference[($\leadsto$rec)]{
        \overline{\Gamma |- t_i \leadsto \rr_i : \cT_i}
      }{
        \Gamma |- \rec{\overline{l_i = t_i}}
        \leadsto \rec{\overline{l_i = \rr_i}} : \rec{\overline{l_i:\cT_i}}
      } 
      \\
    \end{mathpar}
  \end{small}%
  \caption{$\GTFLcsub$: Elaboration}
  \label{fig:gtfl-elaboration}
\end{figure}

$\GTFLcsub$ source programs are elaborated to RL programs by type-directed
translation (Fig.~\ref{fig:gtfl-elaboration}).  The elaboration process is
quite uniform, exhibiting the tight connection between $\GTFLcsub$ and RL.  In
essence RL terms represent $\GTFLcsub$ derivation trees, adding essential
runtime type information (\ie evidence) and erasing superfluous source type
information.

Each elaboration rule corresponds directly to a source $\GTFLcsub$ typing rule.
In particular, each premise typing judgment becomes a corresponding elaboration
judgment, and each consistent subtyping judgment introduces an evidence object,
using the \emph{initial evidence} operator.%
\footnote{An inductive definition of
  initial evidence is presented in the
  \replaced{accompanying technical report~\cite{techreport}}{appendix}, see
  Fig.~\ref{fig:gtfl-initial-evidence}}
The initial evidence operator $\Isub|[ S_1 \csub S_2 |]$ computes the
\emph{largest} evidence object \replaced{$\ev{} \equiv \braket{S'_1,S'_2}$}{$\braket{S'_1,S'_2}$}, such that
${\braket{S'_1,S'_2} \evidenceof S_1 \csub S_2}$.  From the definition of
well-founded evidence in Sec.~\ref{ssec:gtflr-statics}, we can deduce that
$\braket{S'_1,S'_2}$ is the \emph{smallest} pair of gradual types that subsumes
all of the same static subtype instances $T_1 <: T_2$ as $S_1 \csub S_2$ does
(i.e. if $T_1 <: T_2$ then $\braket{T_1,T_2} \gprec \braket{S_1,S_2}$ iff
$\braket{T_1,T_2} \gprec \braket{S'_1,S'_2}$).
Naturally, initial evidence is undefined if $S_1 \not\csub S_2$, but that
circumstance does not arise if type checking succeeds. 

The ($\leadsto$if) and ($\leadsto$app) elimination rules produce evidence for
extra consistent subtyping judgments that were not evident in the
corresponting ($S$if) and ($S$app) $\GTFLcsub$
rules.  For instance, ($\leadsto$app) demands that 
${\cT_1 \csub \cdom(\cT_1) -> \ccod(\cT_1)}$.  This extra constraint was implied
by the $\GTFLcsub$ requirement that $\ccod(\cT_1)$ be well-defined.  The
elaboration rule, and the corresponding RL term, make this implicit constraint
explicit because it must be enforced at runtime.  Similarly,
the ($\leadsto$if) type $\cT_2 \csubjoin \cT_3$ imposes implicit consistent
subtyping constraints on each branch of the conditional.

The tight connection between $\GTFLcsub$ and RL is confirmed by 
preservation of well-typedness.
\begin{proposition}[Well-formed Translation]
  \label{prop:gtfl-to-rr-wf}
  If $\Gamma |- t \leadsto \rr : \cT$ then $\rr \derivof \Gamma |- t : \cT$.
\end{proposition}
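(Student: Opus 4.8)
The plan is to prove \Propref{prop:gtfl-to-rr-wf} by structural induction on the derivation of $\Gamma |- t \leadsto \rr : \cT$, with one case per elaboration rule of \Figref{fig:gtfl-elaboration}. The organizing observation is that every elaboration rule is a near-copy of the correspondingly-named runtime typing rule of \Figref{fig:gtflr-statics}: the recursive premises coincide (an elaboration judgment $\Gamma' |- t_i \leadsto \rr_i : \cT_i$ against a runtime typing judgment $\rr_i \derivof \Gamma' |- t_i : \cT_i$), and the runtime rule merely carries \emph{in addition} evidence judgments $\ev_i \evidenceof \cT \csub \cT'$ for exactly the consistent subtypings that the elaboration rule uses to \emph{manufacture} the $\ev_i$ via the initial-evidence operator. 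So in each case the recipe is: (1)~apply the induction hypothesis to each recursive elaboration premise to obtain the matching runtime typing judgment $\rr_i \derivof \Gamma' |- t_i : \cT_i$---which already carries the reconstructed source typing derivation, so no separate \emph{elaboration implies source typing} lemma is needed; (2)~discharge the evidence-judgment premises of the runtime rule; (3)~conclude with that runtime rule.

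The base cases ($\leadsto$n), ($\leadsto$b), ($\leadsto$x) are immediate, since ($\cT$n), ($\cT$b), ($\cT$x) have no evidence premise and ascribe the same type (for ($\leadsto$x), instantiate the source-only metavariable $\cT'$ of ($\cT$x) with the type recorded for $x$ in $\Gamma$). The two constructor cases are just as direct: for ($\leadsto\lambda$) apply the induction hypothesis to the body under the extended environment $\Gamma, x:\cT_1$ and conclude with ($\cT\lambda$); for ($\leadsto$rec) apply the induction hypothesis to each field and conclude with ($\cT$rec).

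All the real (but light) work is in the elimination and ascription cases ($\leadsto$app), ($\leadsto$+), ($\leadsto$if), ($\leadsto$proj), ($\leadsto::$). Here step~(2) is discharged by the defining property of initial evidence recalled in \Secref{ssec:gtflr-elaboration}: whenever $\Isub\Sbrack{\cT \csub \cT'}$ is defined it denotes a \emph{well-formed} evidence object $\ev \equiv \braket{\cT_a,\cT_b}$ with $\ev \evidenceof \cT \csub \cT'$ (indeed the largest such, by \Propref{prop:wfev}). Since the elaboration derivation is assumed to exist, each defining equation $\ev_i = \Isub\Sbrack{\cdots}$ appearing in the rule has a defined right-hand side; its definedness unpacks to precisely the rule's implicit side conditions---definedness of the destructors $\cdom$, $\ccod$, $\cproj$ on the relevant premise types, together with consistent subtypings such as $\cT_1 \csub \cdom(\cT_1) -> \ccod(\cT_1)$ and $\cT_2 \csub \cdom(\cT_1)$ in ($\leadsto$app), $\cT_i \csub \cT_2 \csubjoin \cT_3$ in ($\leadsto$if), and $\cT \csub \rec{l:\cproj(\cT,l)}$ in ($\leadsto$proj)---which are exactly the conditions already required for the elaboration rule itself to fire. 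It then remains to instantiate the source-only (grey-boxed) metavariables of the runtime rule with the destructor outputs chosen by elaboration: in ($\cT$app) take $\Gbox{\cT'_1 -> \cT'_2} := \cdom(\cT_1) -> \ccod(\cT_1)$, so its conclusion type $\Gbox{\cT'_2}$ is $\ccod(\cT_1)$; in ($\cT$proj) take $\Gbox{\cT'} := \cproj(\cT,l)$; in ($\cT::$) take $\Gbox{\cT'} := \cT_1$; and in ($\cT$+) and ($\cT$if) the required evidence judgments ($\cT_i \csub \Int$, $\cT_1 \csub \Bool$, $\cT_i \csub \cT_2 \csubjoin \cT_3$) are literally those produced by elaboration. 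With these choices the runtime rule applies and yields $\rr \derivof \Gamma |- t : \cT$ with $\cT$ equal to the elaboration conclusion, closing the case.

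I expect no serious obstacle: the proposition is essentially a mechanical confirmation that elaboration and runtime typing are the same proof system modulo the extra evidence annotations. The two points worth a moment's care are (i)~checking that in each runtime typing rule the grey-boxed, source-only metavariables can indeed be instantiated by exactly the destructor outputs elaboration uses, so that both conclusions carry the identical type $\cT$; and (ii)~making explicit that definedness of each $\ev_i = \Isub\Sbrack{\cdots}$ unpacks to exactly the premises and side conditions already assumed for the elaboration rule, appealing to \Propref{prop:ccod-correct} and \Propref{prop:wfev} and their analogues for $\cdom$ and $\cproj$. Neither introduces any new lemma beyond the stated correctness of the initial-evidence operator.
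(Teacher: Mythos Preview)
Your proposal is correct and is exactly the approach one would expect: the paper does not typeset a proof of this proposition (it is among those claimed ``by construction'' from the one-to-one correspondence between elaboration rules and runtime typing rules), but a straightforward induction on the elaboration derivation, discharging each evidence premise via the defining property of $\Isub$, is the intended argument.
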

So the source term $t$ elaborates to a runtime term $\rr$ that
represents $t$'s static typing derivation.

We are now equipped to better explain the statement of preservation in
Prop.~\ref{prop:rr-preservation}.  We relate runtime terms to source terms
using the runtime typing judgment $\rr \derivof \Gamma |- t : \cT$ and source
terms to runtime terms using the translation judgment
$\Gamma |- t \leadsto \rr : \cT$.  Preservation clarifies how runtime terms
\emph{``learn''} new type constraints that are not evident in source programs.
\begin{proposition}[Replicant]
  If $\rr \derivof \Gamma |- t : \cT$ and $\Gamma |- t \leadsto \rr' : \cT$
  then $\rr \gprec \rr'$.
\end{proposition}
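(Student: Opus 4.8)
The plan is to prove this by structural induction on the runtime typing derivation $\rr \derivof \Gamma \vdash t : \cT$. The key observation that makes the statement well-posed is that this judgment and the elaboration judgment $\Gamma \vdash t \leadsto \rr' : \cT$ are \emph{both} syntax-directed over the same source term $t$, and both are driven by the same underlying $\GTFLcsub$ typing derivation; since $\GTFLcsub$ typing is syntax-directed and its helper operators ($\cdom$, $\ccod$, $\cproj$, $\csubjoin$) are deterministic partial functions, the intermediate gradual types $\cT_i$ assigned to the subterms of $t$ are the same in both derivations, and hence so are the consistent-subtyping side conditions $\cT_i \csub \cT_{\mathrm{tgt}}$ that get decorated with evidence in $\rr$ and in $\rr'$. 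Consequently $\rr$ and $\rr'$ have the same term skeleton (both dictated by $t$) and differ only in their evidence annotations, so $\rr \gprec \rr'$ --- where precision lifts to runtime terms as the congruence that compares evidence annotations pointwise --- reduces to comparing those annotations position by position.

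First I would dispatch the base cases ($\cT$n), ($\cT$b), ($\cT$x), where $\rr$ and $\rr'$ are both the same bare term ($\rn$, $\rb$, or $\rx$) and carry no evidence, so $\rr \gprec \rr'$ by reflexivity of precision; and the congruence cases ($\cT\lambda$) and ($\cT$rec), where the conclusion follows immediately from the induction hypotheses on the subterms. The substantive cases are those that carry evidence: ($\cT$app), ($\cT$+), ($\cT$if), ($\cT$proj), and ($\cT$::). In each of these, the elaboration rule (\Figref{fig:gtfl-elaboration}) annotates a subterm with the \emph{initial evidence} $\Isub\llbracket \cT_i \csub \cT_{\mathrm{tgt}}\rrbracket$, whereas the runtime typing rule (\Figref{fig:gtflr-statics}) allows an arbitrary $\ev_i$ with $\ev_i \evidenceof \cT_i \csub \cT_{\mathrm{tgt}}$. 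The essential step is the characterization of the initial-evidence operator: $\Isub\llbracket S_1 \csub S_2\rrbracket$ is the \emph{largest} (with respect to $\gprec$) evidence object supporting $S_1 \csub S_2$, so every well-formed $\ev_i$ that supports the same consistent subtyping judgment satisfies $\ev_i \gprec \Isub\llbracket \cT_i \csub \cT_{\mathrm{tgt}}\rrbracket$. Composing this, componentwise, with the induction hypotheses on the subterms yields $\rr \gprec \rr'$. For ($\cT$app) and ($\cT$if) I would additionally check that the ``extra'' evidence judgments made explicit at runtime ($\cT_1 \csub \cdom(\cT_1) \to \ccod(\cT_1)$ for application, and the two branch constraints against $\cT_2 \csubjoin \cT_3$ for the conditional) are exactly the ones introduced by ($\leadsto$app) and ($\leadsto$if), which they are by construction.

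There is no deep obstacle here; the proof is a direct induction, and its only semantic content is the topness of initial evidence among all evidence objects supporting a fixed consistent subtyping instance --- which is precisely \Propref{prop:wfev} together with the definition of $\Isub\llbracket\cdot\rrbracket$. The main thing to be careful about is the bookkeeping that keeps the comparison well-posed: one must confirm that the grey-boxed source types appearing in the runtime typing rules (for instance $\cT'_1 = \cdom(\cT_1)$ and $\cT'_2 = \ccod(\cT_1)$ in ($\cT$app), and the ascription type in ($\cT$::)) are exactly the types assigned by the corresponding elaboration rule, so that ``$\rr$ and $\rr'$ decorate literally the same judgments'' holds. I would also record as a preliminary that precision on runtime terms is the pointwise lifting of precision on evidence pairs through term structure and is reflexive, so the congruence and base cases go through cleanly.
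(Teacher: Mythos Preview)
Your proposal is correct and is the expected argument; the paper states the proposition without proof, so there is no alternative approach to compare against. Structural induction on the runtime typing derivation, with the maximality of initial evidence (the characterization of $\Isub\llbracket\cdot\rrbracket$ as the $\sqsubseteq$-greatest evidence object supporting a given consistent-subtyping instance) as the sole nontrivial ingredient, is exactly how one proves this.

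One remark on the bookkeeping you flag for the application case: the rule $(\cT\text{app})$ as literally written leaves the grey-boxed $\cT'_1$ existentially quantified, and nothing in the runtime term or in the shared conclusion type $\cT = \cT'_2 = \ccod(\cT_1)$ pins it to $\cdom(\cT_1)$. If arbitrary $\cT'_1$ were admitted, one could violate $\ev_1 \sqsubseteq \Isub\llbracket \cT_1 \lesssim \cdom(\cT_1) \to \ccod(\cT_1)\rrbracket$: take $\cT_1 = [\mathsf{x}{:}\Int] \to \Int$, choose $\cT'_1 = [\mathsf{x}{:}\Int,\mathsf{y}{:}\Bool]$, and let $\ev_1 = \langle [\mathsf{x}{:}\Int] \to \Int,\ [\mathsf{x}{:}\Int,\mathsf{y}{:}\Bool] \to \Int\rangle$; this is well-formed evidence for $\cT_1 \lesssim \cT'_1 \to \Int$, yet its second component is not $\sqsubseteq [\mathsf{x}{:}\Int] \to \Int$ (precision on non-row records requires identical label sets). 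The paper's prose, however, calls the grey-boxed types ``artifacts of source type-checking,'' which should be read as fixing them to the values determined by the unique source derivation --- so $\cT'_1 = \cdom(\cT_1)$ by convention, not by the bare rule. Under that reading your argument goes through cleanly; just make explicit in your write-up that this is an interpretive convention on the judgment rather than something the inference rule enforces syntactically.
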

In this proposition, $\rr \gprec \rr'$ refers to the pointwise extension of
evidence precision from types to terms.
This proposition says that a runtime term $\rr$ embodies at least as many
runtime type constraints as any source term that it represents.  In the above
term, we call $\rr'$ a \emph{replicant} of $\rr$, because it amounts to cloning
the structure of $\rr$ but omitting its \emph{``memory''} of any additional
type constraints acquired while evaluating the program.  The $\rr'$ term starts
with a clean slate, yet to be jaded by the tribulations of runtime type
enforcement.

The key idea is that a source program contains enough local type information to
justify its \emph{plausible} typeability, but the runtime term
\replaced{$\rr$ must account}{must $\rr$
accounts} for type invariants imposed at runtime.  Below, we observe that \deleted{the}
these discovered type constraints must be represended precisely
in order to \replaced{enforce}{enforcing} type-based invariants at runtime in mixed-type programs.

\subsection{$\GTFLcsub$ Shortcomings Revisited}
\label{ssec:shortcomings}

Using the semantics of $\GTFLcsub$ presented above, we can more closely
examine the program from Sec.~\ref{sec:intro} that failed to protect
type invariant boundaries.  Our example program:
\begin{displaymath}
  \<let> \q : \rec{\x:\Int} = \rec{\x = 5, \y = \ttt}
  \<in> (\q :: \? :: \rec{\x:\Int,\y:\Bool}).\y 
\end{displaymath}
elaborates to the following:
\footnote{The typing and elaboration rules for $\<let>\!\!$ can be deduced from
  the rules for lambda and application.}
\begin{displaymath}
  \<let> \q = \ev_1\rec{\x = 5, \y = \ttt}
  \<in> \ev_4(\ev_3\ev_2q).\y 
\end{displaymath}
where
\begin{displaymath}
  \begin{array}{lll}
    \ev_1 &=  \Isub|[ \rec{\x:\Int,\y:\Bool} \csub \rec{\x:\Int} |]
    &= \evpair{\rec{\x:\Int,\y:\Bool}}{\rec{\x:\Int}} \\
    \ev_2 &=  \Isub|[ \rec{\x:\Int} \csub \? |]
    &= \evpair{\rec{\x:\Int}}{\rec{\?}} \\
    \ev_3 &=  \Isub|[ \? \csub  \rec{\x:\Int,\y:\Bool} |]
    &= \evpair{\rec{\x:\Int,\y:\Bool,\?}}{\rec{\x:\Int,\y:\Bool}}\\
    \ev_4 &=  \Isub|[ \rec{\x:\Int,\y:\Bool} \csub {\rec{\y:\Bool}} |]
    &= \evpair{\rec{\x:\Int,\y:\Bool}}{\rec{\y:\Bool}}\\
  \end{array}
\end{displaymath}
Evaluation proceeds as follows:
\begin{align*}
  & \<let> \q = \ev_1\rec{\x = 5, \y = \ttt} \<in> \ev_4(\ev_3\ev_2q).\y &
  -->\;&  \ev_4(\ev_3\ev_2\ev_1\rec{\x = 5, \y = \ttt}).\y \\
  -->\;&  \ev_4(\ev_3(\ev_1\trans{} \ev_2))\rec{\x = 5, \y = \ttt}).\y &
  -->\;&  \ev_4(((\ev_1\trans{} \ev_2) \trans{} \ev_3)\rec{\x = 5, \y = \ttt}).\y \\
  -->\;&  (((\ev_1 \trans{} \ev_2) \trans{} \ev_3) \trans{} \ev_4)\rec{\x = 5, \y = \ttt}.\y &
  -->\;& \ev_5\ttt
\end{align*}
where
\begin{align*}
  (\ev_1 \trans{} \ev_2) &= \evpair{\rec{\x:\Int,\y:\Bool}}{\rec{\?}} &
  \hspace{-4em}(((\ev_1 \trans{} \ev_2) \trans{} \ev_3) \trans{} \ev_4) &=
  \evpair{\rec{\x:\Int,\y:\Bool}}{\rec{\y:\Bool}}\\
  ((\ev_1\trans{} \ev_2) \trans{} \ev_3) &=
  \evpair{\rec{\x:\Int,\y:\Bool}}{\rec{\x:\Int,\y:\Bool}}&
  \ev_5 &= \evpair{\Bool}{\Bool} \\
\end{align*}

This program should signal a \added{run-time} type error, but it does not.
Tracing its evaluation reveals what went wrong.  First, observe that
$\ev_2 \trans{} \ev_3$ is undefined, even though
$((\ev_1\trans{} \ev_2) \trans{} \ev_3)$ \emph{is} defined.  This demonstrates
that composition is not associative.  Now consider
$(\ev_1 \trans{} \ev_2) = \evpair{\rec{\x:\Int,\y:\Bool}}{\rec{\?}}$ more
closely.  Appealing to our notion of well-formed evidence, this object contains
via precision, every valid static subtyping pair $\rec{\x:\Int,\y:\Bool} <: T$,
including $T = \rec{\x:\Int,\y:\Bool}$.

However, if we consider the correctness criteria for Prop.~\ref{prop:ctranscc},
as applied to composing $\ev_1$ with $\ev_2$, then the only static types that
complete the relevant triples $T_1 <: T_2 <: T_3$ are $T_3 = \rec{\x:\Int}$ and
$T_3 = \rec{\;}$.  Thus $\rec{\x:\Int,\y:\Bool}$ is a spurious potential
supertype due solely to the choice of evidence abstraction.
These phenomena suggest that our abstraction for evidence is insufficiently
precise \deleted{to support space-efficiency and type-based reasoning}.  We fix this
problem in Sec.~\ref{sec:brr}.

\section{The Essence of AGT}
\label{sec:agt}

Since our goal is to develop a general solution to the demonstrated issues, our
approach must be couched in terms of the AGT framework itself.  This section
presents a brief introduction to the AGT methodology and its basis in abstract
interpretation~\cite{AbstractInterpretation}.  We reconceive some of the
concepts involved in a manner that is equivalent to the original.  Our new
approach improves on the original in that it clarifies the nature of \emph{two}
distinct but related abstractions: one for the static semantics and one for the
dynamic semantics.  This refinement is critical to our contributions.
Along the way, we explain relevant aspects of AGT that we suppressed in
Sec.~\ref{sec:gtflcsub}.

Given a pre-existing statically typed language and its type safety argument,
AGT starts by extending the syntax of types to form gradual types, in our case
$\cT \in \GType$.  In addition to the syntax for gradual types, we endow
gradual types with a semantics: each gradual type $\cT$ is given an
interpretation as a non-empty set of static types from the static
language, denoted $\Pow^{+}(\Type)$.
Following abstract interpretation terminology, this function is called
\emph{concretization}.
\begin{gather*}
  \gamma : \GType -> \Pow^{+}(\Type) \\
  \gamma(\cT) = \set{T \in \Type | T \sqsubseteq \cT}.
\end{gather*}
Thus, a gradual type denotes every static type that it could represent.  Since
every static type is a (minimal) gradual type, this implies that
$\gamma(T) = \set{T}$.  When starting from scratch, concretization is typically
defined by recursion on the structure of gradual types
\cite{{garciaAl:popl2016}}, and then the precision relation is determined by
the following correctness criterion:
\begin{proposition}[\added{\cite{garciaAl:popl2016}}]
  $\cT_1 \sqsubseteq \cT_2$ if and only if
  $\gamma(\cT_1) \subseteq \gamma(\cT_1)$. 
\end{proposition}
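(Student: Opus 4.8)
The plan is to prove the two implications separately, reading the right-hand side as $\gamma(\cT_1) \subseteq \gamma(\cT_2)$ (the displayed statement above has an evident typo). The forward direction is essentially immediate from the definition $\gamma(\cT) = \set{T \in \Type \mid T \sqsubseteq \cT}$: assuming $\cT_1 \sqsubseteq \cT_2$, any static $T$ with $T \sqsubseteq \cT_1$ also satisfies $T \sqsubseteq \cT_2$ by transitivity of $\sqsubseteq$ (which holds because, as stated in the prose, $\sqsubseteq$ is a partial order; alternatively one proves it directly by induction on the two precision derivations), so $\gamma(\cT_1) \subseteq \gamma(\cT_2)$.

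For the backward direction I would argue by structural induction on $\cT_1$, with a case split on the shape of $\cT_2$ within each case. The backbone is a preliminary \emph{shape lemma} computing $\gamma(\cT)$ explicitly for each of the six syntactic forms of gradual type, obtained by inverting the precision rules of \Figref{fig:gtfl-statics}: $\gamma(\Int) = \set{\Int}$ and $\gamma(\Bool) = \set{\Bool}$; $\gamma(\?) = \Type$; $\gamma(\cT_{11} -> \cT_{12}) = \set{T_1 -> T_2 \mid T_1 \in \gamma(\cT_{11}),\, T_2 \in \gamma(\cT_{12})}$; $\gamma(\rec{\overline{l_i : \cT_i}})$ is the set of closed records with \emph{exactly} the labels $\overline{l_i}$ and $l_i$-field in $\gamma(\cT_i)$; and $\gamma(\rec{\overline{l_i : \cT_i},\?})$ is the set of closed records whose label set \emph{includes} $\overline{l_i}$, with the $l_i$-fields in $\gamma(\cT_i)$ and any number of additional fields of arbitrary static type. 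I would also use the basic fact that each $\gamma(\cT)$ is non-empty (a one-line structural induction, and the reason $\gamma$ lands in $\Pow^{+}(\Type)$), so that witness static types can be selected freely from $\gamma$ of any subcomponent.

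Given the shape lemma, the cases are mechanical. For $\cT_1$ atomic, $\gamma(\cT_1) = \set{\cT_1}$, hence $\cT_1 \in \gamma(\cT_2)$, i.e.\ $\cT_1 \sqsubseteq \cT_2$. For $\cT_1 = \?$, $\gamma(\cT_1) = \Type$ forces $\gamma(\cT_2) = \Type$, which by the shape lemma holds only when $\cT_2 = \?$, and $\? \sqsubseteq \?$. For $\cT_1 = \cT_{11} -> \cT_{12}$, a witness arrow in $\gamma(\cT_1) \subseteq \gamma(\cT_2)$ forces $\cT_2$ to be $\?$ (done) or an arrow $\cT_{21} -> \cT_{22}$; then a ``fix one argument, vary the other'' use of witnesses yields $\gamma(\cT_{11}) \subseteq \gamma(\cT_{21})$ and $\gamma(\cT_{12}) \subseteq \gamma(\cT_{22})$, the induction hypothesis gives componentwise precision, and the arrow precision rule concludes. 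The record and gradual-row cases follow the same template but carry the real work: one must track label sets carefully (a closed-record $\cT_1$ forces a closed $\cT_2$ to have exactly $\cT_1$'s labels, or a gradual-row $\cT_2$ to have a subset of them; a gradual-row $\cT_1$ excludes a closed-record $\cT_2$ outright, since $\gamma(\cT_1)$ contains records with arbitrarily many extra fields), recover the componentwise $\gamma$-inclusions on the shared fields via tailored witness records, and then invoke the matching precision rule. The subtle interaction between the gradual row designator $\?$ and width-subtyping-style precision that the paper emphasizes is exactly what makes this label bookkeeping — rather than any single deep step — the main obstacle; everything else is routine inversion plus transitivity.
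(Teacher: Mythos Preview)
Your proof is correct, but there is nothing in the paper to compare it against: the paper does not prove this proposition. It is stated with a citation to \cite{garciaAl:popl2016} and the surrounding prose says explicitly that correctness properties are stated ``without proof,'' since in the original AGT development they hold by construction. In that original development, $\gamma$ is defined structurally first and precision is then \emph{defined} by the containment condition; here the paper reverses the presentation, defining $\gamma(\cT) = \set{T \mid T \sqsubseteq \cT}$ in terms of a pre-given $\sqsubseteq$, so the proposition becomes a sanity check rather than a definition.

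Given that reversed setup, your argument is exactly the natural one. The forward direction is just transitivity of $\sqsubseteq$. The backward direction genuinely requires the structural shape analysis you outline, and your handling of the record and gradual-row cases---in particular, using witnesses with varying label sets to force the label-inclusion constraints, and ruling out a closed-record $\cT_2$ when $\cT_1$ is a gradual row---is the right bookkeeping. Nothing is missing.
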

Here we reverse the development, defining concretization in terms of precision,
since precision is introduced in Sec.~\ref{sec:gtflcsub}.  These approaches are
equivalent, though a structurally recursive definition of concretization is
critical to calculating functions and relations on gradual types like $\cproj$
and $\Isub$.

The concretization function in turn induces a converse function, typically
called \emph{abstraction}.
\begin{gather*}
  \alpha : \Pow^{+}(\Type) -> \GType \\
  \alpha(\CC) = \bigsqcap \set{ \cT \in \GType | \CC \subseteq \gamma(\cT) }
\end{gather*}
In this definition, $\bigsqcap$ denotes the greatest lower-bound with respect
to precision $\sqsubseteq$.  Abstraction $\alpha(\CC)$ yields the most precise gradual type
that summarizes a given set of static types.  In order for $\alpha$ to be
well-defined, we must ensure that the right-hand side of the above equation is
well-defined for any collection $\CC$ to be considered: in particular there
must exist an appropriate greatest lower-bound.
In general, a gradual type system does not need to account for every arbitrary
subset of static types $\CC \in \Pow^{+}(\Type)$ as part of concretization and
abstraction.  In fact, in Sec~\ref{ssec:brrai} it will be critical to restrict
the space of sets of types (with no loss of expressiveness) to ensure that
$\alpha$ is well-defined.
\footnote{Unlike \citet{garciaAl:popl2016}, we define $\alpha$ to be a
  \emph{total} function over a family of (non-empty) sets of types, rather than
  a \emph{partial} function over \emph{arbitrary} sets of types $\Pow(\Type)$.
  This approach pushes partiality into the collecting operators (\cf $\cod^{+}$
  below), rather than the Galois connection itself.  We find this approach more
  general, intuitive, and mathematically pleasant.}
If abstraction $\alpha$ is well-defined, then it is uniquely determined by
concretization $\gamma$.  To facilitate the calculation of operators and
relations on gradual types, abstraction is typically presented as a recursive
function definition~\cite{garciaAl:popl2016}.

By design, concretization and abstraction are related by two properties that
together characterize a \emph{Galois connection}~\cite{AbstractInterpretation}.
\begin{proposition}[\added{\cite{garciaAl:popl2016}}]
  \mbox{}
  \begin{enumerate}
  \item Soundness: $\CC \subseteq \gamma(\alpha(\CC))$
  \item Optimality: If $\CC \subseteq \gamma(\cT)$ then
    $\gamma(\alpha(\CC)) \subseteq \gamma(\cT)$.
  \end{enumerate}
\end{proposition}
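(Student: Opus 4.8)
The plan is to unfold the two definitions and reduce everything to the defining property of the greatest lower bound $\bigsqcap$ together with the characterization $\gamma(\cT) = \set{T \in \Type \mid T \sqsubseteq \cT}$. Throughout, write $A = \set{\cT \in \GType \mid \CC \subseteq \gamma(\cT)}$, so that $\alpha(\CC) = \bigsqcap A$. Two preliminaries are worth isolating first. (i) $A$ is nonempty: since $\CC$ is a (nonempty) set of static types we have $\CC \subseteq \Type = \gamma(\?)$, so $\? \in A$; combined with the standing hypothesis that $\alpha$ is well-defined on the family of sets under consideration (i.e.\ that $\bigsqcap A$ exists), this makes $\alpha(\CC)$ meaningful. (ii) Each $\gamma(\cT)$ is a set of static types, so $\gamma(\alpha(\CC))$ and the sets $\CC$, $\gamma(\cT)$ are all subsets of $\Type$, and inclusions among them may be tested on static members.

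For \emph{Soundness}, I would first prove the auxiliary identity $\gamma(\bigsqcap A) = \bigcap_{\cT \in A}\gamma(\cT)$ by testing membership on a static type $T$. Unfolding $\gamma$, $T \in \gamma(\bigsqcap A)$ amounts to $T \sqsubseteq \bigsqcap A$. If this holds, then for each $\cT \in A$ we have $\bigsqcap A \sqsubseteq \cT$ (a greatest lower bound is in particular a lower bound), hence $T \sqsubseteq \cT$ by transitivity, so $T \in \bigcap_{\cT \in A}\gamma(\cT)$. Conversely, if $T \sqsubseteq \cT$ for all $\cT \in A$, then $T$ is a lower bound of $A$, so $T \sqsubseteq \bigsqcap A$ because $\bigsqcap A$ is the \emph{greatest} one. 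Soundness is then immediate: by definition of $A$, $\CC \subseteq \gamma(\cT)$ for every $\cT \in A$, hence $\CC \subseteq \bigcap_{\cT \in A}\gamma(\cT) = \gamma(\bigsqcap A) = \gamma(\alpha(\CC))$.

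For \emph{Optimality}, suppose $\CC \subseteq \gamma(\cT)$. Then $\cT \in A$ by definition, so $\bigsqcap A \sqsubseteq \cT$ since the greatest lower bound is a lower bound. Now take any static $T \in \gamma(\alpha(\CC)) = \gamma(\bigsqcap A)$; unfolding $\gamma$ gives $T \sqsubseteq \bigsqcap A \sqsubseteq \cT$, hence $T \sqsubseteq \cT$ by transitivity, i.e.\ $T \in \gamma(\cT)$. Thus $\gamma(\alpha(\CC)) \subseteq \gamma(\cT)$. (Equivalently, this is just monotonicity of $\gamma$, which is the forward direction of the earlier correctness criterion relating $\sqsubseteq$ to $\subseteq$ on concretizations.)

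The order-theoretic manipulations above are routine; the one step that actually requires care is preliminary (i) --- namely that $\bigsqcap A$ exists, i.e.\ that $\alpha$ is well-defined on the sets being considered. In this top-down development it is taken as a hypothesis and is exactly the subtlety flagged in the surrounding text; when $\gamma$ is instead built by structural recursion, one discharges it by checking that $A$ is closed under the type-directed meet on $\GType$, which simultaneously justifies presenting $\alpha$ as a recursive function. I would note that nothing in the argument uses the particular shape of $\GTFLcsub$'s gradual types: it goes through for any concretization defined as a down-set with respect to a precision partial order admitting the required meets.
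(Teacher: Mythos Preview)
Your argument is correct. The paper does not actually prove this proposition in-line: in Section~\ref{sec:agt} it is stated with a citation to \citet{garciaAl:popl2016}, and the only Soundness/Optimality proofs that appear in this paper are the BRR-specific ones in Appendix~\ref{apdx:brr}. Those proceed quite differently from yours: they work from the \emph{recursive} presentations of $\alpha$ and $\gamma$ and argue by mutual structural induction over the type/mapping grammar, case-splitting on $\Int$, $\Bool$, arrows, bounded records, and bounded rows. Your proof instead stays at the order-theoretic level, working directly from the top-down definition $\alpha(\CC)=\bigsqcap\{\cT\mid \CC\subseteq\gamma(\cT)\}$ and the down-set characterization of $\gamma$. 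The payoff of your route is generality and brevity---nothing depends on the particular shape of gradual types---at the cost of assuming that the relevant $\bigsqcap$ exists, which you correctly flag as the substantive obligation. The paper's inductive proofs, by contrast, discharge exactly that obligation constructively for the concrete abstraction at hand, which is why they are the ones used to justify the BRR Galois connection theorem.
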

\noindent These functions satisfy the stronger property that
$\alpha(\gamma(\cT)) = \cT$.  This makes them a \emph{Galois insertion}.

\citet{garciaAl:popl2016} use concretization and abstraction to define all of
the gradual typing artifacts used to define $\GTFLcsub$.  For example, consider
the codomain of a function type, rendered as a partial function $\cod$ on
static types:
\begin{gather*}
  \cod : \Type \rightharpoonup \Type \\
  \begin{aligned}
    \cod(T_1 -> T_2) &= T_2 \\
    \cod(T) &\;\text{undefined otherwise}
  \end{aligned}
\end{gather*}
We can idiomatically \emph{lift} $\cod$ to a partial function $\cod^{+}$ on
non-empty sets of types.
\begin{gather*}
  \cod^{+} : \Pow^{+}(\Type) \rightharpoonup \Pow^{+}(\Type) \\
  \begin{aligned}
    \cod^{+}(\CC) &= \CC' \; \text{ if } \CC' \neq \emptyset \\
    \cod^{+}(\CC) &\;\text{undefined if }  \CC' = \emptyset \\
  \end{aligned} \\
  \text{where}\quad
  \CC' = \set{T' \in \Type | T \in \CC \text{ and } \cod(T) = T'}
\end{gather*}
The $\cod^{+}$ function applies $\cod$ pointwise to a set of types, and
produces the resulting set of codomains, so long as there are some codomains:
if not, then the function is undefined.  Using this lifted function, we can
define the $\ccod$ function as follows.
\begin{gather*}
  \ccod(\cT) = \alpha(\cod^{+}(\gamma(\cT)).
\end{gather*}
Given the definitions of $\ccod$, $\gamma$, and $\alpha$, the correctness
criteria of Prop.~\ref{prop:ccod-correct} follow.  From
there, the recursive characterization of $\ccod$ 
from Fig.~\ref{fig:gtfl-statics} can be calculated by induction
on the structure of $\cT$.
Analogous constructions can be used to first define and then calculate
algorithms for all of the operations on gradual types from
Sec.~\ref{sec:gtflcsub}.  The definitions directly imply their own correctness
criteria, and the corresponding algorithms require tedious but straightforward
calculations.

\rparagraph{Evidence is its own abstraction}
\newcommand{\R}{\mathcal{R}}
The concept of evidence for consistent subtyping arises through an analogous
process.  \citet{garciaAl:popl2016} define evidence by abstracting tuples of
static types, filtering them post-hoc.  Our approach refines theirs in a small
but fundamental way.
In particular, we introduce a \emph{second} Galois connection, but this one is
directly between consistent subtyping, regarded as a set of pairs of gradual
types $\braket{\cT_1,\cT_2} \in {\csub}$, and nonempty subsets of the static
subtyping relation, \ie sets of pairs $\R \subseteq {<:}$.

\added{The key difference in this new approach is that the set of
  evidence objects form \emph{in their entirety} an abstract interpretation
  of static subtyping. In contrast, the evidence objects introduced
  by~\cite{garciaAl:popl2016} were
  arbitrary pairs of gradual types. The concretization of an
  evidence object could therefore contain pairs of types not related by subtyping.
  Integrating subtyping into the conception of
  abstraction, and viewing composition as relational composition
  specifically for fragments of subtyping is critical to formulating
  the possibility, let alone the significance, of forward
  completeness.}

We begin with concretization.
\begin{gather*}
  \gamma^{<:} : {\csub} -> \Pow^{+}(<:) \\
  \gamma^{<:}(\cT_1,\cT_2) = 
  \set{\braket{T_1,T_2} | T_i \in \gamma(\cT_i) \text{ and } T_1 <: T_2}.
\end{gather*}

The corresponding notion of abstraction is defined as follows
\begin{gather*}
  \alpha^{<:} : \Pow^{+}(<:) -> {\csub} \\
  \alpha^{<:}(\R) = \braket{\alpha(\pi_1(\R)),\alpha(\pi_2(\R))}.
\end{gather*}
Abstraction appeals to two point-wise projection functions that operate on sets
of pairs.  These recur in later developments.
\begin{gather*}
  \pi_1,\pi_2 : \Pow(\Type^2) -> \Pow(\Type)\\
  \pi_i(\set{\overline{\braket{T_1,T_2}}}) = \set{\overline{T_i}}
\end{gather*}

\newcommand{\Idp}{\mathit{Id}^{+}}

The functions $\gamma^{<:}$ and $\alpha^{<:}$ form a Galois connection, but
\emph{they do not form a Galois insertion}.  For example:
$\alpha^{<:}(\gamma^{<:}(\Int,\?)) = \braket{\Int,\Int}$.  In fact,
$\alpha^{<:}(\gamma^{<:}(\cT_1,\cT_2))$ is equivalent, by design, to the
initial evidence function $\Isub|[ \cT_1 \csub \cT_2 |]$ from
Fig.~\ref{fig:gtfl-elaboration}.  The $\Isub$ function simply extends its
domain to accept, but be undefined for, inconsistent gradual types
$\cT_1 \not\csub \cT_2$.  Thus, evidence objects $\ev$ are exactly those
pairs of gradual types in the image of $\alpha^{<:} \circ \gamma^{<:}$.

Furthermore, composing evidence objects using the consistent transitivity
operator can be defined in terms of \emph{relational composition} of subsets of
static subtyping $<:$.
\begin{gather*}
  \ev_1 \trans{} \ev_2 =
  \alpha^{<:}(\Idp(\gamma^{<:}(\ev_1) \relcomp \gamma^{<:}(\ev_2)))
  \\
  \text{ where } \R_1 \relcomp \R_2 = 
  \set{\braket{T_1,T_3} | T_1 \mathrel{\R_1} T_2 \text{ and }
    T_2 \mathrel{\R_2} T_3 \text{ for some } T_2} \\
  \text{ and } 
  \Idp(\R) = 
  \begin{cases}
    \R & \R \neq \emptyset \\
    \text{undefined} & \text{otherwise}.
  \end{cases}
\end{gather*}
This definition can be used to calculate a direct recursive characterization of
consistent transitivity (see \replaced{technical report~\cite{techreport}}{Appendix~\ref{sec:ctrans}}), but we can also prove
that it is equivalent to a combination of simpler operations.%
\footnote{\citet{garciaAl:popl2016} erroneously omits the outermost instance of
  $\Isub$ in their definition.}
\begin{proposition}
  \label{prop:ctrans-iml}
  \begin{equation*}
    \evpr{\cT_1,\cT_{21}} \trans{} \evpr{\cT_{22},\cT_3}
    =
    \Isub|[ 
    \pi_1(\Isub|[ \cT_1 \csub (\cT_{21} \sqcap \cT_{22}) |])
    \csub
    \pi_2(\Isub|[ (\cT_{21} \sqcap \cT_{22}) \csub \cT_3 |])
    |]
  \end{equation*}

  \added{where
    $\cT_{21} \sqcap \cT_{22} = \alpha(\Idp(\gamma(\cT_{21}) \cap \gamma(\cT_{22})))$.}
\end{proposition}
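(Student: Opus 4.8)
The plan is to work directly from the relational-composition definition of consistent transitivity given in this section, namely $\ev_1 \trans{} \ev_2 = \alpha^{<:}(\Idp(\gamma^{<:}(\ev_1) \relcomp \gamma^{<:}(\ev_2)))$, reduce the right-hand side to the same object by a set-level calculation, and then close the argument with the Galois-connection identity $\alpha^{<:} \circ \gamma^{<:} \circ \alpha^{<:} = \alpha^{<:}$. Write $\ev_1 = \braket{\cT_1, \cT_{21}}$ and $\ev_2 = \braket{\cT_{22}, \cT_3}$, and abbreviate $\cT_m = \cT_{21} \sqcap \cT_{22}$. The first step is to push the composition ``through the middle'': the shared static type in a composable pair must lie in $\gamma(\cT_{21}) \cap \gamma(\cT_{22})$, and since a non-empty intersection $\gamma(\cT_{21}) \cap \gamma(\cT_{22})$ is itself the concretization of a gradual type, namely of $\cT_m$, we get $\gamma^{<:}(\ev_1) \relcomp \gamma^{<:}(\ev_2) = \gamma^{<:}(\cT_1, \cT_m) \relcomp \gamma^{<:}(\cT_m, \cT_3)$; call this relation $\R$. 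The same observation settles the partial case: both sides of the proposition are undefined exactly when $\gamma(\cT_{21}) \cap \gamma(\cT_{22}) = \emptyset$.

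The crux is a \emph{saturation lemma} for well-formed evidence: if $\braket{\cT_a, \cT_b}$ is well-formed, then $\pi_1(\gamma^{<:}(\cT_a, \cT_b)) = \gamma(\cT_a)$ and $\pi_2(\gamma^{<:}(\cT_a, \cT_b)) = \gamma(\cT_b)$; equivalently, every static type below $\cT_a$ has a static supertype below $\cT_b$, and dually. A priori \Propref{prop:wfev} only says that $\braket{\cT_a,\cT_b}$ is the \emph{least} pair subsuming its static instances, which is weaker; I would upgrade this to saturation by a mutual structural induction on the derivation of $|- \braket{\cT_a,\cT_b} \wf$, where the contravariant function rule swaps the two statements in the inductive step and the width/depth record and gradual-row rules are treated componentwise.

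Granting the lemma, apply it to $\ev_2$ to learn that every $T_2 \in \gamma(\cT_m) \subseteq \gamma(\cT_{22})$ has a supertype in $\gamma(\cT_3)$, and to $\ev_1$ to learn that every $T_2 \in \gamma(\cT_m) \subseteq \gamma(\cT_{21})$ has a subtype in $\gamma(\cT_1)$. Hence the ``far'' existential in each projection of $\R$ is redundant, so $\pi_1(\R) = \pi_1(\gamma^{<:}(\cT_1, \cT_m))$ and $\pi_2(\R) = \pi_2(\gamma^{<:}(\cT_m, \cT_3))$. Unfolding $\Isub = \alpha^{<:} \circ \gamma^{<:}$ and the definition $\alpha^{<:}(\R') = \braket{\alpha(\pi_1(\R')), \alpha(\pi_2(\R'))}$, this yields $\alpha^{<:}(\R) = \braket{\pi_1(\Isub|[\cT_1 \csub \cT_m|]),\, \pi_2(\Isub|[\cT_m \csub \cT_3|])}$, i.e.\ $\alpha^{<:}(\R)$ is precisely the pair of gradual types that appears as the argument of $\Isub$ on the right of the proposition. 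Therefore the right-hand side equals $\alpha^{<:}(\gamma^{<:}(\alpha^{<:}(\R))) = \alpha^{<:}(\R) = \alpha^{<:}(\Idp(\R))$, which is the left-hand side; the first equality is the Galois identity $\alpha^{<:} \circ \gamma^{<:} \circ \alpha^{<:} = \alpha^{<:}$, and the last uses that $\R$ is non-empty, which we already verified.

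I expect the saturation lemma to be the main obstacle: it is the only genuinely new content, it is exactly where well-formedness of the \emph{input} evidence (rather than mere consistency of $\cT_a$ and $\cT_b$) is used, and the function case forces the two directions to be carried simultaneously. The remaining work is bookkeeping: tracking definedness through $\Idp$ and the partial meet $\sqcap$, confirming the standard fact that $\gamma$ preserves $\cT_{21} \sqcap \cT_{22}$ when $\gamma(\cT_{21}) \cap \gamma(\cT_{22})$ is non-empty (so that rewriting $\R$ is legitimate), and checking that the projected sets are non-empty so that $\alpha$ is defined on them.
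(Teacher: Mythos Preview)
Your saturation lemma is false for the gradual-rows evidence of $\GTFLcsub$, and this breaks the argument at exactly the step you flag as the crux. Take $\ev_1 = \braket{\rec{\x:\Int,\y:\Bool},\rec{\?}}$: it is well-formed (by the second record rule for $\wf$, with the shared label set empty), yet $\pi_2(\gamma^{<:}(\ev_1))$ consists only of the four supertypes of $\rec{\x:\Int,\y:\Bool}$, whereas $\gamma(\rec{\?})$ is every static record type. So the $\pi_2$ direction of saturation fails. Contravariance then propagates the failure to $\pi_1$ at function type: $\braket{\rec{\?}\to\Int,\;\rec{\x:\Int,\y:\Bool}\to\Int}$ is well-formed, but $\rec{z:\Int}\to\Int \in \gamma(\rec{\?}\to\Int)$ has no supertype in $\gamma(\rec{\x:\Int,\y:\Bool}\to\Int)$. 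Hence your induction cannot go through; the function case does not merely ``swap the two statements'' but requires the false direction as a hypothesis.

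Concretely, with $\ev_1$ as above and $\ev_2 = \braket{\rec{\?},\rec{\?}}$ one computes $\pi_2(\R) = \set{\rec{},\rec{\x:\Int},\rec{\y:\Bool},\rec{\x:\Int,\y:\Bool}}$, while $\pi_2(\gamma^{<:}(\cT_m,\cT_3)) = \gamma(\rec{\?})$, so your set-level equality $\pi_2(\R) = \pi_2(\gamma^{<:}(\cT_m,\cT_3))$ is false. The proposition itself survives because $\alpha$ collapses both sets to $\rec{\?}$; the imprecision of the gradual-rows abstraction absorbs the discrepancy. But that is precisely the phenomenon the paper identifies as the defect of this abstraction (Sec.~\ref{ssec:shortcomings}), and it means one cannot reason purely at the set level here. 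Your approach would succeed for a forward-complete evidence abstraction such as BRR (where the analogue of $\braket{\rec{\x:\Int,\y:\Bool},\rec{\?}}$ is not well-formed; the second component becomes a bounded record with optional fields), but Prop.~\ref{prop:ctrans-iml} is stated for the original $\GTFLcsub$ evidence. The paper instead proves it by calculating the explicit recursive definition of $\trans{}$ (Figs.~\ref{fig:gr_ctrans1}--\ref{fig:gr_ctrans2}) and comparing it case-by-case to the composite on the right.
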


A key observation here is that pairs of gradual types are but one arbitrary,
albeit convenient, abstraction for subsets of the static subtyping relation.
This analysis of evidence as abstractions of static subtyping fragments, and
consistent transitivity as abstract relational composition, is critical to our
analysis and improvement of evidence in Sec.~\ref{sec:brr}.

\section{Space efficient  semantics for AGT-based  languages}
The dynamic semantics presented in Fig.~\ref{fig:gtflr-dynamics} follows the
standard AGT methodology.  Evidence is composed \emph{after} an ascribed
expression is reduced, in an order that accumulates evidence on a resulting
value.
Thus RL, and every other AGT-based language, suffers from the same
space issues described by \citet{herman10space}.
This section presents an alternate reduction semantics, $\text{RL}^{+}$, that
composes evidence as soon as possible to prevent the accumulation of pending
evidence objects.  $\text{RL}^{+}$ requires associative evidence composition to
guarantee that it will evaluate to the same results as RL.  Unfortunately, this
property does not hold for $\GTFLcsub$, but Sec.~\ref{sec:brr} addresses this
limitation by providing a better evidence abstraction.
Sec.~\ref{sec:forward-completeness} describes criteria that guarantee
associativity for evidence composition.  Sec.~\ref{sec:brr-space-efficient}
proves that the criteria hold for our novel abstraction, and proves an
upper-bound on space consumption.

\rparagraph{Space-efficient, but observationally equivalent}
To preserve tail recursion, a semantics must compose evidence eagerly, thereby
keeping evidence objects from accumulating in tail position.  However, to keep
the semantics equivalent to the original, the space-efficient one must not fail
early in case a program diverges.  To preserve this behavior, a space-efficient
semantics must remember ``latent failure'' until evaluation yields a value,
while avoiding unbounded stack growth.

\begin{figure}
  \begin{small}
  \flushleft\textbf{Syntax}
  \begin{displaymath}
    \begin{array}{rcll}
      \pls{\botev} & ::= & \pls{\bot} | \pls{\ev} & \text{(total
        evidence)} \\
      \pls{e} \in \pls{\RTerm} & ::= & \pls{n} | \pls{\cast{\botev}{e} + \cast{\botev}{e}} | \pls{x} | \pls{\lambda x.e} |  \pls{\evcast{\botev}{e}\;\evcast{\botev}{e}} | 
      \pls{\evcast{\botev}{e}} | \pls{\rec{\overline{l = e}}} | \pls{\cast{\botev}{e}.l} & \text{(runtime
        terms)} \\
      && | \pls{ \<if> \cast{\botev}{e} \<then> \cast{\ev}{e} \<else> \cast{\ev}{e} }&\\
      \pls{F} \in \pls{\EvFrame} & ::= & \pls{[]\;\cast{\ev}{e}} | \pls{\cast{\ev}{u}\;[]} | \pls{[] + \cast{\ev}{e}} | \pls{\cast{\ev}{u} + []} |  \pls{[].l}& \text{(runtime
        frames)}\\
      & | & \pls{\<if> [] \<then> \cast{\ev}{e} \<else> \cast{\ev}{e}}\\
      \pls{G} \in \pls{\text{ACtxt}} & ::= & \pls{[]} |
      \pls{E[F[ [] ]]} |
      \pls{E[ \rec{\overline{l=\rv},l=[],\overline{l=\rr}} ]} 
      & \text{(Evaluation contexts}\\
      && \multicolumn{2}{r}{ \text{without an innermost ascription)}}\\
      \pls{E} \in \pls{\ECtxt} & ::= & \pls{G} |
      \pls{G[\evcast{\botev}{[]}]}
      &
      \text{(Evaluation contexts)}
    \end{array}
  \end{displaymath}

  \flushleft\textbf{Contextual Reduction}
  \flushleft\boxed{\pls{e} --> \pls{e},\; \pls{e} --> \pls{\error}}

  \begin{mathpar}
    \inference{\pls{e} \leadsto \pls{e'}}
    {\pls{E[e]} --> \pls{E[e']}}
    \quad
    \inference{\pls{e} \leadsto \pls{\error}}{
      \pls{E[e]} --> \pls{\error}}
    \quad
    \inference{}{
      \pls{G[\evcast{\botev_1}{\evcast{\botev_2}{e}}]} -->
      \pls{G[\evcast{\botev_2 \bottrans{} \botev_1}{e}]}}
    \quad
    \inference{}{
      \pls{G[\evcast{\bot}{u}]} --> \pls{\error}}
  \end{mathpar}
  \end{small}
  \caption{Syntax and Contextual Reduction of $\text{RL}^{+}$.  All
    other parts of the semantics are as in RL.}
  \label{fig:serl-syntax}
\end{figure}

The $\text{RL}^{+}$ semantics applies the standard technique of explicitly
representing failure.  It introduces \emph{``total'' evidence}, $\botev$, which
represents either normal evidence $\ev$, or a latent failure $\bot$, which
serves the same purpose as the failure coercion of \cite{herman10space}.  A
latent failure indicates that some evidence composition was undefined, and will
trigger an error \emph{after} its argument finishes reducing.  The syntax for
$\text{RL}^{+}$ terms (Fig.~\ref{fig:serl-syntax}), essentially mirrors RL, but
substitutes total evidence for evidence.  To emphasize this  distinction, we
color the terms in the $\text{RL}^{+}$ differently e.g., $\pls{e}$.

One key difference in the definition of $\text{RL}^{+}$ is its runtime frames:
the empty $[]$ frame, which allows evidence objects to accumulate on the RL
evaluation stack, is omitted.  Frames otherwise remain the same: For example,
$\text{RL}^{+}$ does not need a frame of the form
$\pls{\evcast{\botev}{u}\;[]}$ because when the function position of an
application reduces to $\pls{\evcast{\bot}{u}}$, the original semantics would
trigger an error.

\subsection{The reduction semantics of $\text{RL}^{+}$}
$\text{RL}^{+}$ has the same the notions of reduction as RL.  Thus, it suffices
to change the definition of contextual reduction $-->$.  To compose evidence
objects ahead of time without failing too early, we replace the partial
evidence composition operation $\trans{}$ used in RL with a related total
operation denoted $\bottrans{}$, which yields latent failures $\bot$ when
$\trans{}$ is undefined:
\[\botev_1 \bottrans{} \botev_2 = \begin{cases}
    \ev_1 \trans{} \ev_2 & \text {if } \botev_1 = \ev_1
    \text{ and  }  \botev_2 = \ev_2 \text{ and composition is defined} \\
    \bot & \text{otherwise}
  \end{cases}\]

Since the definition of frames no longer includes holes, ascriptions can no
longer nest in evaluation contexts. In fact, evidence objects that come from
ascriptions, rather than elimination forms, do not form evaluation
contexts: they must be handled explicitly.  Composition no longer waits
for values, but proceeds when nested ascriptions become the
immediate redex.

Finally, to preserve divergent behaviour, errors are triggered by notions of
reduction only upon reaching a value ($\pls{G[\evcast{\bot}{u}]}$).

\begin{figure}
  \begin{small}
\flushleft\boxed{\pla{e} \approx \pls{e}, \pla{\error} \approx \pls{\error}}
\begin{mathpar}
  \inference[Sim-Casts]{\pla{e_1} \neq \pla{\evcast{\ev'}{e'}} &
    \pls{e_2} \neq \pls{\evcast{\botev'}{e'}} & 
    \pla{e_1} \approx \pls{e_2} &
    \pla{\ev_n} \bottrans{} \! \left( \cdots \bottrans{} \pla{\ev_1}\right) =
    \left(\pls{\botev_m} \bottrans{} \! \cdots \right)\bottrans{} \pls{\botev_1}}{
    \pla{\ev_1 \cdots \evcast{\ev_n}{e_1}} \approx \pls{\botev_1
      \cdots \evcast{\botev_m}{e_2}}}

  \inference[Sim-Err]{}{\pla{\error} \approx \pls{\error}}

  \inference[Sim-U]{}{\pla{u} \approx \pls{u}}

  \inference[Sim-App]{\pla{\evcast{\ev_1}{e_1}} \approx
    \pls{\evcast{\botev_3}{e_3}} &
    \pla{\evcast{\ev_2}{e_2} \approx \pls{\evcast{\botev_4}{e_4}}}}{
    \pla{\cast{\ev_1}{e_1}\;\cast{\ev_2}{e_2}} \approx
    \pls{\cast{\botev_3}{e_3}\;\cast{\botev_4}{e_4}}}

  \inference[Sim-Plus]{\pla{\evcast{\ev_1}{e_1}} \approx
    \pls{\evcast{\botev_3}{e_3}} &
    \pla{\evcast{\ev_2}{e_2} \approx \pls{\evcast{\botev_4}{e_4}}}}{
    \pla{\cast{\ev_1}{e_1}+\cast{\ev_2}{e_2}} \approx
    \pls{\cast{\botev_3}{e_3}+\cast{\botev_4}{e_4}}}

  \inference[Sim-Rec]{\pla{e_i} \approx \pls{e'_i}}{
    \pla{\rec{\overline{l_i = e_i}}} \approx \pls{\rec{\overline{l_i = e'_i}}}}

  \inference[Sim-If]{\pla{\evcast{\ev_1}{e_1}} \approx
    \pls{\evcast{\botev_4}{e_4}} &
    \pla{\evcast{\ev_2}{e_2} \approx \pls{\evcast{\botev_5}{e_5}}} &
    \pla{\evcast{\ev_3}{e_3} \approx \pls{\evcast{\botev_6}{e_6}}} 
  }{
    \pla{\<if> \cast{\ev_1}{e_1} \<then> \cast{\ev_2}{e_2} \<else>
      \cast{\ev_3}{e_3}} \approx
    \pls{\<if> \cast{\botev_4}{e_4} \<then> \cast{\botev_5}{e_5}
      \<else> \cast{\botev_6}{e_6}}}

  \inference[Sim-Proj]{\pla{\evcast{\ev_1}{e_1}} \approx
    \pls{\evcast{\botev_2}{e_2}}}{
    \pla{\cast{\ev_1}{e_1}.l} \approx
    \pls{\cast{\botev_2}{e_2}.l}}

\end{mathpar}
\end{small}
\caption{\comment{(NEW FIGURE!)}Bisimulation Relation for
  space efficiency}
\label{fig:bisimulation-relation}
\end{figure}

\added{We prove that RL and $\text{RL}^{+}$ 
  are weakly bisimilar.  Fig.~\ref{fig:bisimulation-relation} introduces a
  bisimulation relation between terms in the both languages to carry
  the proof.  While most rules follow
  inductively, rule [Sim-Casts] accepts terms with differing stacks of evidence compositions, as long as the
  final composed result (in each language's order) is the same.}

\begin{theorem}[Weak Bisimulation between RL and $\text{RL}^{+}$]
  \label{thm:weak-bis}
  If $e_1 \approx \pls{e_2}$%
  \footnote{The \deleted{bisimulation relation $\approx$ and}
    proof of the theorem \replaced{appears}{appear} in the
    \replaced{technical report~\cite{techreport}}{Appendix}. See
    Sec.~\ref{apdx:bisimulation-relation}},
  \textbf{and evidence composition is associative}, then:
  \begin{itemize}
  \item
    If there exists $\pla{e'_1}$ in RL such that $\pla{e_1} --> \pla{e'_1}$, 
    then there exists $\pls{e'_2}$ such that $\pls{e_2 ->^{*} e'_2}$ and
    $\pla{e'_1} \approx \pls{e'_2}$.
  \item If there exists $\pls{e'_2}$ in $\text{RL}^{+}$ such that
    $\pls{e_2 --> e'_2}$, then 
    there exists $\pla{e'_1}$ such that $\pla{e_1} ->^{*} \pla{e'_1}$ and
    $\pla{e'_1} \approx \pls{e'_2}$.
  \end{itemize}
  Therefore, the relation $\cdot \approx \pls{\cdot}$ forms a Weak
  Bisimulation~\cite{sangiorgi_2011} between $-->$ and $\pls{-->}$.
\end{theorem}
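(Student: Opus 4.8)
The plan is to prove the two simulation clauses separately---a \emph{forward} clause (every RL step is tracked by zero or more $\text{RL}^{+}$ steps) and a \emph{backward} clause (every $\text{RL}^{+}$ step is tracked by zero or more RL steps)---and then observe that their conjunction is exactly the definition of a weak bisimulation in the sense of~\citet{sangiorgi_2011}. I would prove each clause by induction on the derivation of $\pla{e_1} \approx \pls{e_2}$ with an inner case analysis on how the stepping term decomposes into an evaluation context and a redex. The congruence rules of Fig.~\ref{fig:bisimulation-relation} ([Sim-U], [Sim-App], [Sim-Plus], [Sim-Rec], [Sim-If], [Sim-Proj]) give the routine cases: if the active redex lies in a subterm that is related componentwise, the induction hypothesis supplies the matching reduction sequence and the result is reassembled with the same rule. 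The only rule that does genuine work is [Sim-Casts], and essentially all the content lives there.

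For [Sim-Casts] the invariant to maintain is that an RL stack of ascriptions $\pla{\ev_1 \cdots \ev_n}$ over a non-ascription core $\pla{e_1}$ is related to an $\text{RL}^{+}$ stack $\pls{\botev_1 \cdots \botev_m}$ over a core $\pls{e_2}$ with $\pla{e_1} \approx \pls{e_2}$ and with the \emph{same} fully-composed total evidence, each read in its own language's order. There are three sub-cases. (i) If the core takes a genuine notion of reduction---addition, conditional, projection, or application---both languages take the corresponding step; the inversion operators $\idom$, $\icod$, $\iproj$ are shared verbatim, the composition of argument with domain evidence inside application is reconciled using associativity, and when projection or application returns an ascribed result the new head ascription is simply absorbed into the stack so that [Sim-Casts] re-applies. (ii) If RL performs an administrative composition $\pla{E[F[\ev_i(\ev_{i+1}\ru)]]} --> \pla{E[F[(\ev_{i+1}\trans{}\ev_i)\ru]]}$, its stack strictly shortens but, by associativity, its total is unchanged, so $\text{RL}^{+}$ matches with zero steps. (iii) Symmetrically, $\text{RL}^{+}$'s eager composition $\pls{G[\botev_1(\botev_2 e)]} --> \pls{G[(\botev_2 \bottrans{}\botev_1) e]}$ leaves its total unchanged, and RL---which cannot perform this composition until the core is a value---matches with zero steps. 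In short, the ``pending evidence'' that RL parks on the stack and the coercion that $\text{RL}^{+}$ has eagerly folded down are two presentations of the same composed evidence, and the matching is therefore bidirectional.

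The main obstacle is the asymmetric treatment of composition \emph{failure}, which is precisely where ``weak'' and the associativity hypothesis bite. In RL a failed composition $\pla{\ev_{i+1}\trans{}\ev_i}$ aborts immediately; in $\text{RL}^{+}$ that same composition has already been folded into a latent $\pls{\bot}$, and $\text{RL}^{+}$ does not signal an error until a raw value reaches an evidence frame. So the proof must show: (a) whenever RL is about to fail a composition, the matching $\text{RL}^{+}$ configuration has a raw-value core carrying total evidence $\pls{\bot}$ sitting inside a frame that is a legitimate $\text{RL}^{+}$ evaluation context \emph{without} an innermost ascription, hence reduces in one step to $\pls{\error}$, matching RL's one-step error; and (b) dually, when $\text{RL}^{+}$ produces or already carries a $\pls{\bot}$ whose core is not yet a value, RL has not yet reached---and need not reach---the offending composition, so RL matches with zero steps, and divergence is preserved because neither language can signal the error before the core becomes a value. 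The linchpin is that ``evidence composition is associative'' has to be read as a statement about the \emph{partial} operation: whenever one bracketing of a composition chain is defined then so is every other, and they agree. Under that reading, definedness of RL's total coincides with definedness of $\text{RL}^{+}$'s total, so a $\pls{\bot}$ on one side is forced on the other exactly when it matters; this is the property established for the BRR abstraction in Sec.~\ref{sec:brr-space-efficient}, and its failure for $\GTFLcsub$'s original evidence is why the theorem is stated conditionally. Assembling the forward and backward clauses then yields the claimed weak bisimulation.
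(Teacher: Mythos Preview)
Your proposal is essentially sound and captures the right intuitions about where the work lies (the [Sim-Casts] rule, the treatment of $\bot$ as latent failure, and the need to read associativity as a statement about the \emph{partial} operation so that definedness coincides on both sides). However, the paper takes a genuinely different route, and the difference is worth noting.

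First, the paper does not work directly with the evaluation-context reduction semantics. Instead it introduces auxiliary \emph{structural operational semantics} for both RL and $\text{RL}^{+}$, proves each SOS equivalent to its reduction semantics, and then carries out the bisimulation proof entirely at the SOS level. The stated motivation is exactly the difficulty you would face: with evaluation contexts, pushing the induction hypothesis through arbitrarily nested frames is awkward; with SOS, each derivation rule peels off one layer, so structural induction on the \emph{reduction derivation} gives you the IH exactly where you need it.

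Second, and relatedly, the paper's induction is on the derivation of the step ($\pla{e_1} \longrightarrow \pla{e_1'}$ for the forward lemma, $\pls{e_2} \longrightarrow \pls{e_2'}$ for the backward lemma), with an inner case split on the last rule of $\approx$. You propose the opposite: induct on $\approx$ and then analyse the decomposition of the stepping term. Your scheme can be made to work, but you would need an explicit compatibility lemma (simulation is preserved under plugging into related evaluation contexts) to justify ``reassembled with the same rule''; the paper's SOS induction sidesteps this.

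Third, the paper factors out two small lemmas you fold inline: a substitution lemma (needed for $\beta$) and a lemma that each notion of reduction $\leadsto$ preserves $\approx$. Your sketch alludes to these (``inversion operators are shared verbatim''), but in the paper they are stated and proved separately, which localises the case analysis. What your approach buys is that it stays closer to the semantics actually presented in the body of the paper; what the paper's approach buys is a cleaner induction principle at the cost of the SOS detour.
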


\subsection{Claiming Space-Efficiency for $\text{RL}^{+}$}

The $\text{RL}^{+}$ dynamic semantics never accumulate evidence compositions in
tail position, but this alone does not guarantee space efficiency. If evidence
objects themselves can grow without bound, then the changes to the semantics
amount to simply shuffling space usage without eradicating it.  To ensure space
efficiency we must quantify the overhead of evidence throughout evaluation.

Following~\citet{herman10space}, we provide a bound for the space consumption
of a program.  The $\spacep{f}{e}$ function (Fig.~\ref{fig:spacep}) is
parameterized by a function $f$ for the space consumption of an evidence
object.  We use this structural function to compare the space consumption of a
program with ``0-cost'' evidence to $\text{RL}^{+}$ programs.

\begin{figure}
  \begin{small}
  \begin{displaymath}
    \begin{block}
      \boxed{\spacep{f}{e}}\quad \textbf{Space consumption of a program}\\[0.5em]
      \spacep{\cdot}{\cdot}{\cdot} : \deleted{\ensuremath{\left(\left(\Ev{<:}\cup\set{\bot}\right) -> \nat\right) ->}} \left(\left(\Ev{<:}\cup\set{\bot}\right) -> \nat\right) -> \RTerm -> \nat \\[0.5em]
      \spacep{f}{\pls{n}} = \spacep{f}{\pls{x}} = 1 \\[0.5em]
      \spacep{f}{\pls{\cast{\botev}{e_1} + \cast{\botev}{e_2}}} = 1 + f(\pls{\botev_1}) + \spacep{f}{\pls{e_1}} + f(\pls{\botev_2}) + \spacep{f}{\pls{e_2}} \\[0.5em]
      \spacep{f}{\pls{\evcast{\botev}{e}\;\evcast{\botev}{e}}} = 1 + f(\pls{\botev_1}) + \spacep{f}{\pls{e_1}} + f(\pls{\botev_2}) + \spacep{f}{\pls{e_2}} \\[0.5em]
      \spacep{f}{\pls{\lambda x.e}} = 1 + \spacep{f}{e} \\[0.5em]
      \spacep{f}{\pls{\evcast{\botev}{e}}} = 1 + \spacep{f}{\pls{\botev}} + \spacep{f}{\pls{e}}\\[0.5em]
      \spacep{f}{\pls{\cast{\botev}{e}.l}} = 1 + f(\pls{\botev}) + \spacep{f}{\pls{e}}\\[0.5em]
      \spacep{f}{\pls{\rec{\overline{l = e}}}} = 1 + \sum \left(\spacep{f}{\pls{e}}\right)\\[0.5em]
    \end{block}
  \end{displaymath}
  \end{small}
  \caption{Space consumption of a program in $\text{RL}^{+}$}
  \label{fig:spacep}
\end{figure}

\begin{definition}[Space-efficiency of evidence]
  Let $\size$ be the function that computes the space consumption of an
  evidence object.  A reduction semantics is \emph{space efficient} with
  respect to evidence objects if the overhead due to carrying evidence is
  bound throughout evaluation \emph{and} ascriptions never wait on the stack:

  \begin{itemize}
  \item \emph{Bound for overhead}.  There exists a
    bound function $B : \RTerm -> \nat$ such that, for any programs $\pls{e}$ and $\pls{e'}$ such that $\pls{e} -->^{*} \pls{e'}$,
    \(\spacep{\size}{\pls{e'}} \leq \spacep{\added{\ensuremath{\lambda \_ \mathbin{.}}} B(\pls{e})}{\pls{e'}}\).

    This bound is equivalent to requiring that for any evidence $\pls{\ev}$
    appearing in $\pls{e'}$, $\size(\pls{\ev}) \leq B(\pls{e})$.

  \item \emph{Bound for stack}. For any programs $\pls{e}$ and $\pls{e'}$
    such that
    $\pls{e} \replaced{\ensuremath -->}{\ensuremath -->^{*}} \pls{e'}$,  if there exists an evaluation context
    without nested ascriptions
    $\pls{G}$ such that
    $\pls{e} = \pls{G[ \cast{\ev}{\cast{\ev'}{e''}}]}$, then there must
    exist an $\pls{\ev''}$ such that $\pls{e'} = \pls{G[\cast{\ev''}{e''}]}$.
  \end{itemize}
\end{definition}

From this definition, a bound in the style of~\citet{herman10space}
follows as a corollary:

\begin{corollary}[Fixed overhead a-la-\citet{herman10space}]
  If  $\pls{e} -->^{*} \pls{e'}$ is bound for overhead
  \added{with an evidence bound function $B$}, then
  \[\spacep{\size}{\pls{e'}} \leq 3*B(\pls{e})* \spacep{0}{\pls{e'}}\]
\end{corollary}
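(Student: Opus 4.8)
The plan is to chain two observations. First, the hypothesis that $\pls{e} -->^{*} \pls{e'}$ is bound for overhead with evidence bound function $B$ gives us, essentially by unfolding the definition, that $\spacep{\size}{\pls{e'}} \le \spacep{\lambda\_.B(\pls{e})}{\pls{e'}}$: every evidence object occurring in $\pls{e'}$ has $\size$ at most $B(\pls{e})$, so replacing the evidence-cost function $\size$ by the constant function $\lambda\_.B(\pls{e})$ can only increase the measured space. We may further assume $B(\pls{e}) \ge 1$, since if $B$ witnesses the bound-for-overhead property then so does $\max(B,1)$ (enlarging the bound preserves the property). It then remains to show, for every constant $c \ge 1$ and every term $\pls{e'}$, that $\spacep{\lambda\_.c}{\pls{e'}} \le 3c\cdot\spacep{0}{\pls{e'}}$, where $0$ abbreviates the constant-zero evidence-cost function.

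I would prove this last inequality by structural induction on $\pls{e'}$, mediated by the \emph{evidence-occurrence count} $m(\pls{e'})$ --- the number of positions at which the clauses of Fig.~\ref{fig:spacep} query their parameter $f$. Two facts fall out of those clauses. (i) $\spacep{f}{\cdot}$ is affine in $f$, so $\spacep{\lambda\_.c}{\pls{e'}} = \spacep{0}{\pls{e'}} + c\cdot m(\pls{e'})$ by a direct induction. (ii) --- the crux --- $m(\pls{e'}) + 1 \le \spacep{0}{\pls{e'}}$, also by induction: in each clause the fresh evidence positions (one for a cast or a projection, two for an application or an addition, three for a conditional) are charged against the corresponding immediate subterms, each of which contributes a distinct ``$+1$'' to $\spacep{0}{\cdot}$ and, by the induction hypothesis $m(\pls{e_i})+1 \le \spacep{0}{\pls{e_i}}$, already carries one unit of slack; the leaf cases $\pls{n}$, $\pls{x}$ give $m = 0$ and $\spacep{0}{\cdot} = 1$. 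Combining (i), (ii), and $c \ge 1$: $\spacep{\lambda\_.c}{\pls{e'}} = \spacep{0}{\pls{e'}} + c\cdot m(\pls{e'}) \le \spacep{0}{\pls{e'}} + c(\spacep{0}{\pls{e'}} - 1) \le (1+c)\spacep{0}{\pls{e'}} \le 3c\cdot\spacep{0}{\pls{e'}}$. Instantiating $c = B(\pls{e})$ and composing with the first observation yields the corollary (with slack to spare --- even the factor $2$ would suffice).

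The only real subtlety is getting the accounting in fact (ii) right. The naïve invariant $m(\pls{e'}) \le \spacep{0}{\pls{e'}}$ is too weak to survive the binary and ternary clauses, where two or three fresh evidence slots are introduced but the node itself adds only one unit to the zero-cost size; one must strengthen the invariant to carry one extra unit of slack ($m + 1 \le \spacep{0}{\cdot}$), which is precisely the structural fact that every evidence object in an $\text{RL}^{+}$ term can be pinned to a distinct syntactic node of that term, and is exactly what makes the overhead linear rather than merely finite. Everything else is a mechanical check against the defining equations of $\spacep{f}{\cdot}$, including the slightly irregular ascription clause, whose evidence summand is treated on the same footing as the $f(\cdot)$ summands in the other clauses.
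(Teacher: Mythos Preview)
Your proof is correct. Both you and the paper argue by structural induction on $\pls{e'}$, but the decompositions differ. The paper's argument is a direct one-shot induction: at each node it bounds the local contribution $1 + \sum_i \size(\ev_i)$ by $3\,B(\pls{e})$ times the corresponding local contribution of $\spacep{0}{\cdot}$ (namely $1$), then applies the induction hypothesis to the subterms. Your route instead introduces the evidence-occurrence count $m$ and isolates two structural facts---affineness of $\spacep{f}{\cdot}$ in $f$, and the strengthened invariant $m(\pls{e'}) + 1 \le \spacep{0}{\pls{e'}}$---before combining them arithmetically.

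What your decomposition buys: the invariant $m+1 \le \spacep{0}{\cdot}$ scales uniformly to nodes of any arity (in particular it handles the ternary $\<if>$ node, which the paper's figure omits and whose local overhead $1 + 3B$ would not fit under $3B \cdot 1$ in the paper's per-node accounting without borrowing slack from the subterms). It also yields the sharper constant $2$ that you note. What the paper's approach buys is brevity: no auxiliary quantity is introduced, and the argument is a single induction rather than two. Your explicit treatment of the WLOG assumption $B(\pls{e}) \ge 1$ is also a detail the paper leaves implicit.
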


\begin{proof}
  By induction over the structure of $\pls{e'}$.
\end{proof}

\rparagraph{Evidence requirements that make $\text{RL}^{+}$ space efficient}
The development of space-efficient semantics can be completed with
a new evidence abstraction for the $\text{RL}^{+}$ runtime semantics.
Before introducing the abstraction, we present all the properties we need from
this evidence.
$\text{RL}^{+}$ semantics is space efficient for \emph{any} abstraction whose
evidence composition satisfies the following two properties:

\begin{proposition}[Evidence composition is associative]
  \label{prop:ev-assoc}
  For any evidence $\botev_1$, $\botev_2$ and $\botev_3$,
  \[\left(\botev_1 \bottrans{} \botev_2\right) \bottrans{} \botev_3 = \botev_1 \bottrans{} \left( \botev_2 \bottrans{} \botev_3 \right)\]
\end{proposition}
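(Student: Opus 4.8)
The plan is to recast the claim in the abstract-interpretation language of \Secref{sec:agt}, where consistent transitivity on genuine evidence is $\ev_1 \trans{} \ev_2 = \alpha^{<:}(\Idp(\gamma^{<:}(\ev_1) \relcomp \gamma^{<:}(\ev_2)))$, and where, by \Propref{prop:ctranscc}, the result is again an evidence object (undefined exactly when that relational composition is empty, i.e.\ when $\Idp$ is undefined). First I would dispatch the \emph{total}-evidence bookkeeping: if $\botev_1$, $\botev_2$, or $\botev_3$ equals $\bot$, then every composite mentioning it is $\bot$ by definition of $\bottrans{}$, so both parenthesizations equal $\bot$. This leaves the case where all three are genuine evidence $\ev_1,\ev_2,\ev_3$, where $\bottrans{}$ agrees with $\trans{}$ except that it returns $\bot$ precisely when $\trans{}$ is undefined. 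So it remains to show (i) that when all intermediate compositions are defined the two groupings yield the same evidence object, and (ii) that an intermediate composition fails on one side iff one fails on the other; both come out of the same calculation.

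The engine is \emph{forward completeness} of the evidence concretization $\gamma^{<:}$ with respect to relational composition --- the criterion isolated in \Secref{sec:forward-completeness}: the image of $\gamma^{<:}$ is closed under $\relcomp$, i.e.\ whenever $\gamma^{<:}(\ev)\relcomp\gamma^{<:}(\ev') \neq \emptyset$ one has $\gamma^{<:}(\alpha^{<:}(\gamma^{<:}(\ev)\relcomp\gamma^{<:}(\ev'))) = \gamma^{<:}(\ev)\relcomp\gamma^{<:}(\ev')$, hence $\gamma^{<:}(\ev \trans{} \ev') = \gamma^{<:}(\ev)\relcomp\gamma^{<:}(\ev')$. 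Granting this, I would unfold the left grouping
\begin{align*}
  (\ev_1 \trans{} \ev_2) \trans{} \ev_3
  &= \alpha^{<:}\bigl(\Idp\bigl(\gamma^{<:}(\ev_1 \trans{} \ev_2) \relcomp \gamma^{<:}(\ev_3)\bigr)\bigr) \\
  &= \alpha^{<:}\bigl(\Idp\bigl((\gamma^{<:}(\ev_1) \relcomp \gamma^{<:}(\ev_2)) \relcomp \gamma^{<:}(\ev_3)\bigr)\bigr),
\end{align*}
the second step substituting $\gamma^{<:}(\ev_1 \trans{} \ev_2) = \gamma^{<:}(\ev_1)\relcomp\gamma^{<:}(\ev_2)$ by forward completeness (or, if that composition is empty, observing that both sides of the original equation are $\bot$). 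Symmetrically the right grouping reduces to $\alpha^{<:}\bigl(\Idp\bigl(\gamma^{<:}(\ev_1)\relcomp(\gamma^{<:}(\ev_2)\relcomp\gamma^{<:}(\ev_3))\bigr)\bigr)$. Since relational composition of subsets of $<:$ is associative and $\Idp$ only tests emptiness, the two normal forms are literally equal, which settles (i); and since $\relcomp$-associativity also makes the threefold composite empty on one side exactly when it is empty on the other, (ii) follows as well.

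I expect the real obstacle to be not this relational bookkeeping but discharging the hypothesis it rests on: forward completeness of the evidence abstraction for $\relcomp$. This is exactly what fails for the $\GTFLcsub$ evidence of \Secref{ssec:shortcomings}, where $\gamma^{<:}(\ev_1)\relcomp\gamma^{<:}(\ev_2)$ is \emph{strictly} contained in $\gamma^{<:}(\ev_1 \trans{} \ev_2)$ (abstracting to a gradual row type smuggles $\rec{\x:\Int,\y:\Bool}$ back in as a spurious supertype), so the two groupings genuinely disagree. Consequently the substance of the proof for the concrete language is the lemma that the bounded-records-and-rows abstraction of \Secref{sec:brr} is forward complete: the relational composition of two subtyping fragments, each denoted without slack by a bounded row, is again denoted without slack by a bounded row. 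I would prove that lemma by induction along the recursive structure of consistent transitivity, case-splitting on atomics, arrows, and records/rows, and checking in each case that the bounds computed by composition concretize to exactly the composite relation; the record/row case, where static width subtyping and gradual-row imprecision interact, is where the bound design is delicate, and is carried out in \Secref{sec:brr-space-efficient}.
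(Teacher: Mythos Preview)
Your proposal is correct and follows essentially the same route as the paper: reduce associativity of $\bottrans{}$ to associativity of $\trans{}$ via the trivial $\bot$-bookkeeping, then derive the latter from forward completeness by rewriting $\gamma^{<:}(\ev_1 \trans{} \ev_2)$ as $\gamma^{<:}(\ev_1)\relcomp\gamma^{<:}(\ev_2)$ and appealing to associativity of relational composition (this is exactly the generic argument the paper records in its appendix and mechanizes for \Thmref{thm:fc-assoc}). One small correction: the forward-completeness lemma for BRR that you identify as the substantive obligation is established in \Secref{sec:forward-completeness} (Lemma~\ref{lem:rc-closed-wrt-gamma} and \Thmref{thm:brr-fc}), not in \Secref{sec:brr-space-efficient}, which concerns the size bound.
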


\begin{proposition}[Evidence composition has a bound]
  \label{prop:ev-bound}
  Let $\ev_1$ and $\ev_2$ be any evidence objects appearing in program
  $\pls{e}$.  For any evidence objects
  $\ev'_1$ and $\ev'_2$ such that
  $\size(\ev'_1) \leq \size(\ev_1)$ and $\size(\ev'_2) \leq \size(\ev_2)$,
  $\size(\ev'_1 \trans{} \ev'_2) \leq B(\pls{e}).$
\end{proposition}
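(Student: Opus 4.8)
The plan is to reduce Prop.~\ref{prop:ev-bound} to a purely syntactic ``closure'' invariant on evidence, and then to read off $B$ from the finitely many shapes that invariant permits. Let $L(\pls{e})$ be the finite set of record labels occurring syntactically anywhere in $\pls{e}$ — including inside ascriptions and inside the evidence objects installed by elaboration — and let $d(\pls{e})$ be the maximum constructor‑nesting depth of any gradual type occurring in $\pls{e}$. Call a (BRR) evidence object \emph{$\pls{e}$-admissible} if at every record/row node its \emph{explicitly listed} labels — both the positive fields and the negated ``forbidden'' labels of a bounded row — lie in $L(\pls{e})$, and its depth is at most $d(\pls{e})$. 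The invariant I want is: every evidence object occurring in any $\pls{e'}$ with $\pls{e} -->^{*} \pls{e'}$ is $\pls{e}$-admissible. Granting this, set $B(\pls{e})$ to be the maximum of $\size$ over all $\pls{e}$-admissible evidence objects; this maximum is finite because, up to the BRR well-formedness conditions, there are only finitely many $\pls{e}$-admissible evidence objects (indeed one can write a crude closed form of the shape $c\cdot(\#L(\pls{e})+3)^{d(\pls{e})}$). Since $\ev_1,\ev_2$ occur in $\pls{e}$ they are $\pls{e}$-admissible, and the size‑bounded $\ev'_i$ that actually arise along a reduction are $\pls{e}$-admissible by the invariant; hence $\ev'_1 \trans{} \ev'_2$ — which is a well‑formed evidence object by Prop.~\ref{prop:ctranscc} — is $\pls{e}$-admissible, so $\size(\ev'_1 \trans{} \ev'_2) \le B(\pls{e})$.

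The bulk of the work is the admissibility invariant, proved by induction on $\text{RL}^{+}$ reduction, using preservation (Prop.~\ref{prop:rr-preservation}) to track which evidence objects occur where. The base case is the elaboration of $\pls{e}$: every elaboration‑introduced evidence object has the form $\Isub|[ \cT_1 \csub \cT_2 |] = \alpha^{<:}(\gamma^{<:}(\cT_1,\cT_2))$ with $\cT_1,\cT_2$ occurring in $\pls{e}$, so it suffices to show that $\alpha^{<:}\circ\gamma^{<:}$ sends a pair whose explicit labels lie in a finite set $L$ to a pair with the same property. The key point is that although $\gamma(\cT)$ may contain static record types with unboundedly many labels (a gradual row can be closed off in arbitrarily many ways), $\alpha$ re‑absorbs every such incidental label into a row designator: a label becomes an explicitly listed \emph{positive} field of $\alpha(\CC)$ only if it is common to every member of $\CC$, which for $\CC = \pi_i(\gamma^{<:}(\cT_1,\cT_2))$ can only be forced by an explicit positive field of $\cT_1$ or $\cT_2$; and the design of the BRR abstraction (Sec.~\ref{sec:brr}) ensures that a row's \emph{negated}-label component only ever lists labels mentioned positively somewhere in the same evidence pair, hence again only labels already in $L$. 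Depth cannot grow because $\gamma$ preserves depth and $\alpha$, being a greatest lower bound for $\sqsubseteq$, never exceeds the maximum depth in its argument set.

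For the inductive step, the only ways new evidence is created are: consistent transitivity $\ev_1 \trans{} \ev_2$ at an $E[F[\cdots]]$ or $G[\cdots]$ step; the inversion operators $\idom,\icod,\iproj$ inside the notions of reduction; and substitution of a value into a $\lambda$-body. Substitution merely copies existing evidence, so it preserves admissibility. Each of $\idom,\icod,\iproj$ returns a pair built solely from subcomponents of its argument (Fig.~\ref{fig:gtflr-dynamics}), so it introduces neither a new label nor extra nesting. For $\trans{}$ I would invoke Prop.~\ref{prop:ctrans-iml}, which expresses $\ev_1 \trans{} \ev_2$ in terms of $\Isub$, the projections $\pi_i$, and the meet $\cT \sqcap \cT' = \alpha(\Idp(\gamma(\cT)\cap\gamma(\cT')))$; the meet has the same $\alpha\circ(\text{operations on }\gamma)$ shape as $\Isub$ and is handled exactly as in the base case (note $\gamma(\cT)\cap\gamma(\cT') \subseteq \gamma(\cT)$ and $\subseteq \gamma(\cT')$, so its abstraction's explicit labels stay within $L_{\cT}\cup L_{\cT'}$). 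Chaining these, $\ev_1 \trans{} \ev_2$ has explicit labels inside $L_{\ev_1}\cup L_{\ev_2}$ and depth at most the larger of the two, so admissibility is preserved.

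The main obstacle is the base‑case claim about how $\alpha^{<:}$ decides which labels to render explicitly. Width subtyping migrates information between the positive‑field part and the forbidden‑label part of a bounded row during abstraction and composition — this is precisely the interaction that broke associativity for the original abstraction of \citet{garciaAl:popl2016} (Sec.~\ref{ssec:shortcomings}) — and one must verify that every such migration, as well as every ``closing off'' of a row that turns unknown extra fields into definitely‑absent ones, only ever mentions a label already present in one of the operands. Once the recursive definitions of BRR and of its $\Isub$ from Sec.~\ref{sec:brr} are in hand, this reduces to a finite case analysis on pairs of constructor shapes (atomic/atomic, arrow/arrow, and record‑or‑row against record‑or‑row, with subcases for which side is a closed record), but it is the delicate part, and it is exactly here that the extra precision of the BRR abstraction, as opposed to the coarser original one, is what makes the invariant — and hence the space bound — go through.
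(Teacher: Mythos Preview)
Your closed form $c\cdot(\#L(\pls{e})+3)^{d(\pls{e})}$ is essentially the paper's bound, and your claim that $\trans{}$ preserves both the explicit label set and the depth is the right key lemma. But the architecture is inside-out.

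Prop.~\ref{prop:ev-bound} is meant to be a \emph{local} statement about a single composition, to be discharged once and for all for the evidence abstraction and then consumed by the ``Properties of $\text{RL}^{+}$'' theorem, whose proof does the induction on reduction. You instead prove a \emph{global} admissibility invariant over $-->^{*}$, handling elaboration, $\idom$, $\icod$, $\iproj$, substitution, and composition. That induction \emph{is} the overhead-bound half of ``Properties of $\text{RL}^{+}$''; you have folded the theorem that uses Prop.~\ref{prop:ev-bound} into the proof of Prop.~\ref{prop:ev-bound} itself. Relatedly, the proposition quantifies over arbitrary size-bounded $\ev'_1,\ev'_2$, whereas your invariant only covers those that ``actually arise along a reduction,'' so you do not establish the proposition as written.

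The paper's route (Sec.~\ref{sec:brr-space-efficient}) is just your composition case, extracted: Lemma~\ref{lem:height-brr} shows that $\height$ is non-increasing under $\sqcap$, under $\Isub$, and hence (via Prop.~\ref{prop:ctrans-iml}) under $\trans{}$; combined with the structural bound $\size(\cT)\leq(3+\ldom(\cT))^{1+\height(\cT)}$ this yields Theorem~\ref{thm:bound-brr}, a size bound on $\ev_1\trans{}\ev_2$ purely in terms of $\height$ and $\ldom$ of the two inputs, with no reference to reduction. Your own sentence ``$\ev_1 \trans{} \ev_2$ has explicit labels inside $L_{\ev_1}\cup L_{\ev_2}$ and depth at most the larger of the two'' is exactly this, and it alone suffices. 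The other cases of your induction (inversion operators, substitution) belong in the consumer theorem, not here.
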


\begin{theorem}[Properties of $\text{RL}^{+}$]
  If evidence composition is associative and has a bound, then the semantics of $\text{RL}^{+}$ is
  space-efficient and is observationally equivalent to the semantics
  of $\text{RL}$.
\end{theorem}

\section{Precise Evidence for $\GTFLcsub$:
  Bounded Records and Rows}
\label{sec:brr}
Previous sections identify the need for a more precise abstraction for runtime
evidence.  This section delivers that abstraction, and finds a natural
characterization of precision that meets the paper's goals.
We devise a notion of evidence that admits an associative consistent
transitivity operation, and preserves the type invariants implied by the
concrete fragments of static subtyping that each evidence object represents.
This notion of evidence has some features in common with record-typing systems
from the literature, though used here for runtime type checking,
e.g.~\citep{remy89records}.
We replace gradual rows in evidence with a more precise gradual type
abstraction, which we call \emph{bounded records and rows} (BRR).  We prove
that evidence objects based on BRR are associative and precise using a general
concept from abstract interpretation called \emph{forward completeness}.

\subsection{Representing optional fields}

To motivate and introduce the structure of BRR, consider once again the failing
example from Sec.~\ref{ssec:shortcomings}.  The key observation in that example
is that in the following evidence composition,
\begin{displaymath}
  (\evpair{\rec{\x:\Int,\y:\Bool}}{\rec{\x:\Int}} \trans{}
  \evpair{\rec{\x:\Int}}{\rec{\?}}) = \evpair{\rec{\x:\Int,\y:\Bool}}{\rec{\?}}
\end{displaymath}
the second resulting gradual type represents too many possible static types, in
particular the spurious cases $\rec{y:\Bool}$ and $\rec{x:\Int,y:\Bool}$.  A
precise analysis of the two composed evidence objects admits only two possible
supertypes: $\rec{\;}$ or $\rec{x:\Int}$.

To faithfully capture this circumstance, BRR first introduces a distinction
between \emph{required} fields and \emph{optional} fields.  For instance, the
BRR type $\brec{\bounded{\x}{R}{\Int}}$ uses the $R$ annotation to indicate
that the $\x$ field is required: this BRR is equivalent to the $\rec{\x:\Int}$
gradual record type (and static record type).  In contrast, the BRR type
$\brec{\bounded{\x}{O}{\Int}}$ uses the $O$ annotation to indicate that the
$\x$ field is optional.  Given the appropriate definition of concretization for
BRR, we have ${\rec{\;} \in \gamma\left(\rec{\bounded{\x}{O}{\Int}}\right)}$
and
${\rec{\bounded{\x}{\;}{\Int}} \in
  \gamma\left(\rec{\bounded{\x}{O}{\Int}}\right)}$,
and no other static type.
This BRR type precisely captures the intended record types in the example
above, and as we might expect, the above evidence and composition can be
replaced by the precise BRR-based version:
\begin{displaymath}
  \left(
    \evpair{\rec{\bounded{\x}{R}{\Int},\bounded{\y}{R}{\Bool}}}{
      \rec{\bounded{\x}{R}{\Int}}}
    \trans{}
    \evpair{\rec{\bounded{\x}{R}{\Int}}}{\rec{\bounded{\x}{O}{\Int}}}
  \right)
  =
  \evpair{\rec{\bounded{\x}{R}{\Int},\bounded{\y}{R}{\Bool}}}{
    \rec{\bounded{\x}{O}{\Int}}}
\end{displaymath}

\added{Because the BRR abstraction is strictly more precise
  than the gradual row abstraction, the criteria set forth in
  Prop.~\ref{prop:wfev} leads to more well-formed evidence objects.
  As a result,
  ${\evpair{\rec{\bounded{\x}{R}{\Int}}}{\rec{\bounded{\x}{O}{\Int}}}}$
  is well formed and, unlike gradual rows, it precisely captures the static subtype
  instances of interest.
  }

\subsection{Representing absent fields}

Unfortunately, adding optional fields is not enough to achieve associativity.
Consider the following trio of BRR-based evidence:
\begin{displaymath}
  \evpair{\brow{\bounded{\x}{O}{\Int}}}{\brow{\bounded{\x}{O}{\Int}}}
  \trans{}
  \evpair{\brow{\bounded{\x}{O}{\Bool}}}{\brow{\bounded{\x}{O}{\Bool}}}
  \trans{}
  \evpair{\brec{\bounded{\x}{R}{\Int}}}{\brec{\;}}
\end{displaymath}
BRR must somehow represent rows to subsume the expressiveness of gradual row
types.  However, evaluating this composition reveals the need for more precise
information.
Composing left-to-right yields
$\evpair{\brow{\bounded{\x}{O}{\Int}}}{\brow{\bounded{\x}{O}{\Int}}}$
because composing the first two yields
$\evpair{\brow{\bounded{\x}{O}{\Int}}}{\brec{\?}}$ which composes with the
third to yield the above.  On the other hand, composing them right-to-left
fails!  The reason is that the requirements of the second and third evidence
objects are incompatible: the second can only have a $\Bool$-typed $\x$ field,
while the third demands an $\Int$-typed one.

Closer inspection of the first two evidence objects reveals that after
composing them, the resulting evidence object should have \emph{no $\x$ field}
in the super type position: the only instances of static subtyping that could
have transitively composed were those lacking an $\x$ field.  Then the
necessary absence of $\x$ is incompatible with requiring $\x$ to have $\Int$
type as in the third evidence.

This observation leads us to represent \emph{necessarily absent} fields.  Then
the result of composing the first two evidence objects above is:
$\evpair{\brow{\bounded{\x}{O}{\Int}}}{\brow{\missing{\x}}}$.  The BRR
${\brow{\missing{\x}}}$ represents the set of record types with any fields
\emph{except} $\x$.  Composing this with
$\evpair{\brec{\bounded{\x}{R}{\Int}}}{\brec{\;}}$ should always fail.
Adding absent and optional fields suffices to ensure that evidence objects
need never lose information about plausible subtype relations,
admitting implausible fields via the unknown row $\?$.

\subsection{Bounded Rows and Bounded Records, Formally}
\label{ssec:brrai}

We now give a full formal definition of Bounded Records and Rows
(BRR)\footnote{\added{While not \emph{Rows} in the sense of[Wand
    Mitchell '91], \emph{Bounded Rows} are derived from Gradual Rows introduced
    by~\cite{garciaAl:popl2016}.  We keep the \emph{Row} designation
    to highlight the connection with the latter paper.}}, as well
BRR-based evidence objects.
The syntax of BRR is follows:
\begin{displaymath}
  \begin{array}{rcll}
    \multicolumn{4}{c}{ %
      M \in \Mapping} \\[2mm]
    M & ::= & \left. \varnothing ~\middle|~ \cT_{R} ~\middle|~ \cT_{O} \right. & \text{
      (mappings for records and rows)}\\
    \cT & ::= & \left. \Bool ~\middle|~ \Int ~\middle|~ \cT -> \cT ~\middle|~ \?
      ~\middle|~ \brec{\mappings{i}{n}{\lx_i : M_i}}
      ~\middle|~ \brow{\mappings{i}{n}{\lx_i : M_i}} \right. & \text{(gradual
      types)}\\
  \end{array}
\end{displaymath}
As its name implies, the only interesting aspects of the BRR abstraction are
its record and row type definitions.  Each field maps to a type (qualified as
required $R$ or optional $O$), or to the absent field indicator
($\varnothing$).  Individual static record types are represented in BRR by
simply marking all fields as required.  BRR is a runtime abstraction, so its
annotations do not affect $\GTFLcsub$ syntax.

Though technically we need missing field annotations only for rows and not for
records, we opt for a uniform representation.  A mapping $\missing{\x}$ in a
record is redundant; similarly, a mapping $\bounded{\x}{O}{\?}$ is redundant in
a row. This redundancy has algorithmic advantages: we can extract a default
mapping for labels in either a row or a record, which simplifies inductive
rules in those cases where the domains of two types differ.

The notation $\mappings{i}{n}{\lx_i : M_i}$ is equivalent to the previous
notation $\overline{\lx_i : M_i}$, but avoids confusion when trying to
distinguish shared subdomains in two record-like
types. \added{Empty record and row types are still allowed
  (i.e. $n$ can be $0$).}

A concretization function (Fig.~\ref{fig:brr-gamma}) determines the meaning of
BRR types.  An intermediate notion of \emph{decompositions} simplifies our
definitions.  Decompositions encode uniformity restrictions on sets of static
types.  These restrictions are made explicit by use of the generator function
$\generator$.

The interesting equations are those for records,
which recursively describe sets of records that meet the constraints described
informally above.  Note that the concretization of non-empty bounded records is
described in terms of the concretization of smaller bounded records.

\begin{figure}
  \centering
  \begin{small}
    \flushleft\textbf{Decompositions}
    \begin{displaymath}
      \begin{array}{rcll}
        && C \in \Pow(\Type), C^{\varnothing} \in \Pow(\Type \cup \set{\varnothing}), d \in \Decomp \\
        d & ::= & \left. [[\Bool]] ~\middle|~ [[\Int]] ~\middle|~ \left\llbracket C -> C \right\rrbracket ~\middle|~ [[\?]]
          ~\middle|~ \left\llbracket \mappings{i}{n}{\lx_i : {C^{\varnothing}}_i} \right\rrbracket
          ~\middle|~ \left\llbracket \?\; \mappings{i}{n}{\lx_i : {C^{\varnothing}}_i} \right\rrbracket
        \right.
        \\
      \end{array}
    \end{displaymath}
    \flushleft$\boxed{ \generator : (\Decomp) -> \Pow(\Type)}$    
    \begin{gather*}
      \begin{array}{rcl}
        \generate{\Bool} &=& \set{\Bool}\\
        \generate{\Int} &=& \set{\Int}\\
        \generate{C_1 -> C_2} &=& \set{\left. T_1 -> T_2 ~\middle|~ T_1 \in C_1 \text{~and~} T_2 \in C_2 \right.}\\
        \generate{\mappings{i}{n}{\lx_i : {C^{\varnothing}}_i} }&=& \set{\left. \rec{\overline{\lx_i : T_i}} ~\middle|~ T_i \in {C^{\varnothing}}_i \right.} 
        \text{
          if
        } \varnothing \not \in {C^{\varnothing}_i} \text{
          for
          every
        } i
        \\
        \generate{\lx_n : {C^{\varnothing}}_n \; \mappings{i}{n - 1}{\lx_i : {C^{\varnothing}}_i} }&=&
        \generate{\mappings{i}{n-1}{\lx_i : {C^{\varnothing}}_i} } \bigcup     \generate{\lx_n : \left({C^{\varnothing}}_n \setminus \set{\varnothing} \right)  \; \mappings{i}{n - 1}{\lx_i : {C^{\varnothing}}_i} }
        \text{
          if
        } \varnothing \in {C^{\varnothing}}_n
        \\
        \generate{\?\;\mappings{i}{n}{\lx_i : {C^{\varnothing}}_i}}&=&
        \left\{\rec{\overline{\lx_i : T_i}\;\overline{\lx_j : T_j}} ~\middle|~ \begin{block}\rec{\overline{\lx_i : T_i}} \in \generate{\mappings{i}{n}{\lx_i : {C^{\varnothing}}_i}} \\ \text{
              and
            } \rec{\overline{\lx_j : T_j}} \in \generate{\mappings{j}{m}{\lx_j : \Pow\left(\Type \cup \set{\varnothing}\right)}}\\
            \added{\ensuremath{
                \text{ where
              } \overline{\lx_i} and \overline{\lx_j} \text{ are
                disjoint.}
          }}
          \end{block}\right\}
        \\ 
      \end{array}
    \end{gather*}
    \begin{displaymath}
      \begin{block}
        \boxed{\gamma : (\GType) -> \Pow^{+}(\Type)}\\[0.5em]
        \begin{array}{rcl}
          \gamma\left(\?\right) &=& \Type \\[0.5em]
          \gamma\left(\Bool\right) &=& \generate{\Bool}\\[0.5em]
          \gamma\left(\Int\right) &=& \generate{\Int}\\[0.5em]
          \gamma\left(\cT_1 -> \cT_2\right) &=& \generate{\gamma(\cT_1) -> \gamma(\cT_2)}\\[0.5em]
          \gamma\left(\left[\mappings{i}{n}{\lx_i : M_i}\;*\right]\right) &=& \generate{*\;\mappings{i}{n}{\lx_i : \gamma^M\left(M_i\right)}}
        \end{array}
      \end{block}
      \qquad
      \begin{block}
        \boxed{\gamma^M : (\Mapping) -> \Pow^{+}(\Type\cup\set{\varnothing})}\\[0.5em]
        \begin{array}{rcl}
          \gamma^M\left(\varnothing\right) &=& \set{\varnothing}\\[0.5em]
          \gamma^M\left(\cT_R\right) &=& \gamma(\cT) \\[0.5em]
          \gamma^M\left(\cT_O\right) &=& \gamma^M\left(\varnothing\right) \cup \gamma\left(\cT_R\right)\\[0.5em]
        \end{array}
      \end{block}
    \end{displaymath}
  \end{small}
  \caption{Decompositions and BRR Concretization Function}
  \label{fig:brr-gamma}
\end{figure}

Fig.~\ref{fig:brr-alpha} defines the corresponding abstraction for bounded
records and rows.  The equations distinguish between relevant sets of static
types.
This abstraction function has a subtlety regarding its domain (which we do
not explicitly name).  The domain cannot be arbitrary sets of types.  To see
why, consider the set $\set{\rec{\lx:\Int} | \lx \in \Label}$: what is the most
precise abstraction of this set when $\Label$ is an infinite set?  The answer
is that there is none, because this set features an infinite number of
``optional'' fields, and since bounded records only have finite fields, there
is no best representative.  The solution is to restrict the domain to
hereditarily admit only collections of records that (1) have non-$\?$ bound for
a finite set of field types; and (2) have either a finite set of potentially
present fields (abstracts to a bounded record) or a finite set of absent fields
(abstracts to a bounded row).  The image of concretization satisfies these
constraints, as do our operations on evidence.  These finitary restrictions are
analogous to the restrictions on the open sets of an infinite product
topology~\cite{munkres}.

\begin{figure}
  \begin{small}
  \begin{displaymath}
    \begin{array}{rcl}
      \alpha\left(\generate{\Bool}\right) &=& \Bool \\[0.5em]
      \alpha\left(\generate{\Int}\right) &=& \Int \\[0.5em]
      \alpha\left(\generate{C_1 -> C_2}\right) &=&
      \alpha\left(C_1\right) ->
      \alpha\left(C_2\right) \\[0.5em]
      \alpha\left(\generate{*\;\mappings{i}{n}{\lx_i : {C^{\varnothing}}_i}}\right) &=& \left[\mappings{i}{n}{\lx_i : \alpha^M\left({C^{\varnothing}}_i\right)} \;*\right] \\[0.5em]
      \alpha\left(\varnothing\right) &\multicolumn{2}{l}{\text{undefined}}\\[0.5em]
      \alpha\left(C\right) &=& \? \text{ otherwise}\\[0.5em]
    \end{array}
    \qquad
    \begin{block}
      \begin{array}{rcl}
        \alpha^M\left(\set{\varnothing}\right) &=& \varnothing \\[0.5em]
        \alpha^M\left(\set{\varnothing}\cup C \right) &=& (\alpha\left(C\right))_O \text{
          if }C\text{ is not empty}\\[0.5em]
        \alpha^M\left(C \right) &=& (\alpha\left(C\right))_R \text{
          if }\varnothing \notin C\\[0.5em]
      \end{array}    \end{block}
  \end{displaymath}
  \end{small}
  \caption{BRR Abstraction Function}
  \label{fig:brr-alpha}
\end{figure}

\begin{theorem}
  $\alpha$ and $\gamma$ for bounded records and rows form a Galois Connection.
\end{theorem}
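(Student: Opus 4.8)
The plan is to establish the adjunction $\alpha(C) \sqsubseteq \cT \iff C \subseteq \gamma(\cT)$, which — exactly as in the Galois-connection proposition stated earlier for the original concretization/abstraction pair — is equivalent to the conjunction of \emph{soundness}, $C \subseteq \gamma(\alpha(C))$, and \emph{optimality}, if $C \subseteq \gamma(\cT)$ then $\gamma(\alpha(C)) \subseteq \gamma(\cT)$; monotonicity of $\gamma$ and $\alpha$ and the adjunction then follow in the usual way. Before proving either property I would make precise the domain of $\alpha$ that the statement leaves implicit: the family $\mathcal{A} \subseteq \Pow^{+}(\Type)$ of \emph{hereditarily finitary} sets described in the paragraph preceding Fig.~\ref{fig:brr-alpha} — those sets of types that, recursively, bind only finitely many field labels to a non-$\?$ type and have either a finite set of possibly-present fields or a finite set of certainly-absent fields — together with the analogous mapping-level family $\mathcal{A}^M \subseteq \Pow^{+}(\Type \cup \set{\varnothing})$. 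The first lemma is a \emph{closure lemma}: $\mathcal{A}$ is closed under the componentwise constructions used inside $\generator$, and, crucially, $\gamma$ maps $\GType$ into $\mathcal{A}$ (so that the hypothesis $C \subseteq \gamma(\cT)$ of optimality keeps us inside the domain) while each $\gamma^M(M)$ lies in $\mathcal{A}^M$. The set $\set{\rec{\lx : \Int} \mid \lx \in \Label}$ highlighted in the text is precisely the behaviour this restriction excludes.

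Next I would show that $\alpha$ and $\alpha^M$ are \emph{well-defined} on $\mathcal{A}$ and $\mathcal{A}^M$: every admissible $C$ matches exactly one clause of Fig.~\ref{fig:brr-alpha}. This rests on a \emph{decomposition lemma}, proved by induction on the finitary rank of $C$: a nonempty admissible set of base types is $\set{\Bool}$ or $\set{\Int}$; a nonempty admissible set of arrow types is uniquely $\generate{C_1 \to C_2}$ with $C_i = \pi_i(C) \in \mathcal{A}$ (admissibility forces $C$ to be the full rectangle); a nonempty admissible set of record-or-row types has a canonical $\generator$-normal form $\generate{*\;\overline{\lx_i : C^{\varnothing}_i}}$ with a finite shared label list and each $C^{\varnothing}_i \in \mathcal{A}^M$; and any other set abstracts to $\?$. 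Since $\alpha$ is already given by explicit equations, this is the only genuine well-definedness obligation, and once the normal form is available, checking that the value computed by the equations equals $\bigsqcap \set{\cT \mid C \subseteq \gamma(\cT)}$ reduces to soundness plus optimality, so the remaining two tasks subsume it.

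Soundness, $C \subseteq \gamma(\alpha(C))$, I would prove by simultaneous induction on the normal form of $C$ together with the companion statement $C^{\varnothing} \subseteq \gamma^M(\alpha^M(C^{\varnothing}))$ for mappings. Base cases are immediate; the arrow case follows from the induction hypothesis on $C_1, C_2$ and the pointwise definition of $\generate{C_1 \to C_2}$; the record and row cases follow from the decomposition lemma, the componentwise ``shared subdomain'' structure of $\generator$ on records and rows, and the hypothesis applied to each $C^{\varnothing}_i$ — the $\missing{\lx}$ and $\bounded{\lx}{O}{\?}$ redundancy conventions let fields outside the shared list be treated uniformly via a default mapping. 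The $\alpha(C) = \?$ clause is trivial since $\gamma(\?) = \Type$. Optimality, if $C \subseteq \gamma(\cT)$ then $\gamma(\alpha(C)) \subseteq \gamma(\cT)$, I would prove by induction on $\cT$, with a companion on mappings $M$: $\cT = \?$ is trivial; $\cT \in \set{\Int, \Bool}$ forces $C$ to be the matching singleton; $\cT = \cT_1 \to \cT_2$ forces $C = \generate{C_1 \to C_2}$ with $C_i \subseteq \gamma(\cT_i)$, so apply the hypothesis and monotonicity of $\to$ under $\sqsubseteq$; and when $\cT$ is a bounded record or row, the inclusion $C \subseteq \gamma(\cT)$ constrains the normal form of $C$ (each required field of $\cT$ is present, absent fields stay absent, and extra fields occur only when $\cT$ carries the unknown row $\?$) so that $\alpha$ selects a clause ``matching'' $\cT$, after which the companion hypothesis on mappings plus monotonicity of the record and row constructors closes the case.

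The main obstacle is the record/row case throughout, and behind it the calibration of the admissibility restriction so that it is neither too weak (best abstractions failing to exist, as with $\set{\rec{\lx : \Int} \mid \lx \in \Label}$) nor too strong (so that the image of $\gamma$ escapes it). I expect the bulk of the effort to be the decomposition/normal-form lemma: proving that an admissible set of records/rows has a \emph{canonical} presentation as a finite shared label list plus either a finite ``optional tail'' or a finite ``absent tail'', that this presentation is preserved by the two subtle generator clauses — the recursion that splits on $\varnothing \in C^{\varnothing}_n$ and the clause that grafts on a disjoint $\?$-row — and that it meshes with the redundancy conventions. Once this structural bookkeeping is in place, the soundness and optimality inductions are componentwise and essentially routine, and monotonicity of $\gamma$ and $\alpha$ together with the adjunction are the standard formal consequences of soundness and optimality.
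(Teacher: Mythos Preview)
Your proposal is correct and follows essentially the same approach as the paper: soundness by mutual induction on the clause structure of $\alpha$ and $\alpha^M$, optimality by mutual structural induction on $\cT$ and $M$, then the Galois connection as the standard consequence. The paper's appendix carries out exactly these two inductions, but treats the domain restriction and the well-definedness of $\alpha$ as given rather than isolating them as separate closure and decomposition lemmas; your more explicit treatment of the finitary domain $\mathcal{A}$ and the canonical $\generator$-normal form is additional rigor that the paper leaves implicit.
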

\begin{proof}
  Consequence of Soundness and Optimality Lemmas (see
  \replaced{technical report~\cite{techreport}}{Appendix~\ref{apdx:brr}}).
\end{proof}

\rparagraph{Well-formed evidence}
Using bounded records and rows, we develop a refined notion of evidence for
$\GTFLcsub$.  Fig.~\ref{fig:brr-wf-ev} defines the inductive structure of
well-formed BRR evidence.  Its structure is somewhat analogous to the
original, but with richer distinctions.

\begin{figure}
  \begin{small}
    \flushleft\boxed{|- \ev \wf}
    \quad\textbf{Well-formed Evidence}\\
    \begin{mathpar}
      \inference{\cT \in \set{\Int,\Bool,\?}}{
        |- \braket{\cT,\cT} \wf
      }
      \and
      \inference{
        |- \braket{\cT_{21},\cT_{11}} \wf &
        |- \braket{\cT_{12},\cT_{22}} \wf
      }{
        |- \braket{\cT_{11} -> \cT_{12},\cT_{21} -> \cT_{22}} \wf
      }
      \and
      \begin{array}{c}
      \added{
        \ensuremath{
          D(\cdot) = \varnothing }}
      \\
      \added{
        \ensuremath{
          D(\?) = \?_O }}
      \end{array}
      \and
      \inference{
        \braket{*_1,*_2} \neq \braket{\cdot,\?} &
        \forall i, |- \braket{M_{i1},M_{i2}} \wf &
        \forall j, |- \braket{M_j,D(*_2)} \wf &
        \forall k, |- \braket{D(*_1), M_k} \wf \\
        \added{\ensuremath{\text{where } \overline{\lx_i},
            \overline{\lx_j}\text{, and } \overline{\lx_k} \text{are disjoint.}}}
      }{
        |- \left\langle \left[\mappings{i}{n}{\lx_i : M_{i1}}
            \mappings{j}{m}{\lx_j : M_j}\;*_1\right],
          \left[\mappings{i}{n}{\lx_i : M_{i2}} \mappings{k}{o}{\lx_k
              : M_k}\;*_2\right]\right\rangle \wf
      }
    \end{mathpar}\\
    \flushleft\boxed{|- \braket{M,M} \wf}
    \quad\textbf{Well-formed Mappings}\\
    \begin{mathpar}
      \inference{}{
        |- \braket{M,\varnothing} \wf
      }
      \and
      \inference{
        |- \braket{\cT_{1},\cT_{2}} \wf 
      }{
        |- \braket{\left(\cT_1\right)_R,\left(\cT_{2}\right)_R} \wf
      }
      \and
      \inference{
        \cT_1 <= \cT_3 &
        |- \braket{\cT_1,\cT_2} \wf
      }{
        |- \braket{\left(\cT_3\right)_{*} ,\left(\cT_2\right)_O} \wf
      } 
    \end{mathpar}
  \end{small}%
  \caption{$\BRRcsub$'s definition of well-formed evidence}
  \label{fig:brr-wf-ev}
\end{figure}

\subsection{Absent labels enable sound optimizations}

We motivate the introduction of bounded records and rows to solve associativity
issues and to guarantee that programs with inconsistent ascriptions always
fail.  Associativity more broadly could also support program optimizations such
as inlining and pre-composing evidence objects.  Consider the following
program:

\begin{displaymath}
  (\lambda x : \rec{\?}.\,
  (\<if> \ttt \<then>
  (\<if> \ttt \<then> x \<else> x :: \rec{ l : \Int,\?}) \<else>
  x :: \rec{l : \Bool,\?}))
\end{displaymath}

This program relies on $\<if>\!\!$ branching to generate uses of consistent
subtype join which will generate evidence objects that use the optional
annotation.  This program will produce an error when given as an argument any
record that has an $l$ mapping.  This is because the ascriptions impose
inconsistent constraints: the function body must have type
$\brow{\bounded{l}{O}{\Bool}}$ in the outermost $\<if>$, and
$\rec{\bounded{l}{O}{\Int}\?}$ in the innermost $\<if>$, thus
$x$ must go through the composition of evidence on both types, reaching an
inconsistency whenever the label is present.

While the sole introduction of optional fields would suffice to run this
program properly, consider now an optimizing compiler that performs constant
propagation in this function.  The body of the function might then be optimized
to \(\lambda x . \evcast{\ev_2}{\evcast{\ev_1}{x}}\), with
$\ev_2 = \evpair{\brow{\bounded{l}{O}{\Bool}}}{\brow{\bounded{l}{O}{\Bool}}}$
and
$\ev_1 = \evpair{\brow{\bounded{l}{O}{\Int}}}{\brow{\bounded{l}{O}{\Int}}}$.  A
more advanced optimizing compiler could try to perform this evidence
composition ahead of time. Unfortunately, unless we introduce absent labels as
in BRR, the only possible composition would be to coalesce the label $l$ into
the row portion to generate the evidence pair
$\evpair{\rec{\vert\?}}{\rec{\vert\?}}$. This ``optimization'' changes the
behaviour of our program, as arguments with a mapping for $l$ will now be
accepted instead of producing an error.  To achieve full associativity and
soundness in the presence of optimizations like the above, we need absent
labels.

\section{Forward Completeness as a key to associativity}
\label{sec:forward-completeness}

The previous section delves into the shortcomings of $\GTFLcsub$'s evidence
abstraction, diagnoses some evident information loss, and devises a new
abstraction that retains the relevant information.  Are these improvements
sufficient?  An example-driven approach can drive us closer to a solution, but
ultimately we need more rigorous and comprehensive confirmation, which we now
provide.  Moreover, we do so by generalizing beyond $\GTFLcsub$, seeking
sufficient criteria that can apply to future applications of AGT,
regardless of the particulars of the type discipline or gradualization.

First, consider associativity.  We must prove that
$\ev_1 \trans{} \left( \ev_2 \trans{} \ev_3 \right) = \left( \ev_1 \trans{} \ev_2 \right) \trans{} \ev_3$
holds for our new notion of evidence.  A direct proof of this property is
possible, but does not scale well.  Each evidence object $\ev$ is a pair of BRR
types, and associativity must consider not only how they compose with one
another, but also with the results of intermediate compositions.  The case
explosion is staggering.

Even after proving associativity, how does one confirm that the evidence
abstraction is precise enough to protect type-based invariants?
Is a second complex formal development necessary?

Fortunately no: there is another way.  In the context of abstract
model-checking, \citet{giacobazzi01incomplete} introduce the concept of
\emph{forward completeness},%
\footnote{Sometimes called gamma-completeness or exactness} which is dual to
the concept of \emph{(backward) completeness}
that arises more naturally for abstract interpretation-based static analysis.
The idea applies to any abstract function, but we present it here in
domain-specific terms.

\begin{definition}[Forward completeness]
  Let $\ev$ denote evidence objects, $\trans{}$ denote consistent transitivity,
  and $\relcomp$ denote relational composition over fragments of static subtyping.
  Then $\trans{}$ is \emph{forward complete} with respect to its evidence
  abstraction if
  \begin{math}
    \gamma^{<:}(\ev_1) \relcomp \gamma^{<:}(\ev_2) =
    \gamma^{<:}(\ev_1 \trans{} \ev_2)
  \end{math}
  for any two evidence objects $\ev_1$ and $\ev_2$.
\end{definition}

Soundness of $\trans{}$ with respect to relational composition $\relcomp$
implies
${\gamma^{<:}(\ev_1) \relcomp \gamma^{<:}(\ev_2) \subseteq
  \gamma^{<:}(\ev_1 \trans{} \ev_2)},$
which ensures that $\trans{}$, operating on abstract evidence objects,
sufficiently overapproximates the behavior of $\relcomp$ on the meanings of
those objects.  Forward completeness implies that the reverse-containment also
holds, which means that $\trans{}$ \emph{exactly} approximates $\relcomp$ for
abstract objects.  This means that $\trans{}$ is a perfect stand-in for
$\relcomp$ \emph{if we need only consider sets in the image of evidence
  concretization}.  In the case of AGT, such is exactly the case: evidence is
initially introduced in terms of source gradual types, which are even less
precise than evidence objects.

\begin{lemma}[Relational Composition is closed wrt concretization]
  \label{lem:rc-closed-wrt-gamma}
  For any $|- \ev_1 \wf$ and $|- \ev_2 \wf$, either
  \(\gamma^{<:}\left(\ev_1\right) \relcomp \gamma^{<:}\left(\ev_2\right) = \emptyset\),
  or there
  exists an
  $|- \ev_3 \wf$ such that
  \begin{math}
    \gamma^{<:}\left(\ev_1\right) \relcomp \gamma^{<:}\left(\ev_2\right) =
    \gamma^{<:}\left(\ev_3\right).
    \end{math}
\end{lemma}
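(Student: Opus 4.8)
The plan is to dispose of the case $\gamma^{<:}(\ev_1)\relcomp\gamma^{<:}(\ev_2)=\varnothing$ immediately (the first disjunct) and, writing $\ev_1=\braket{\cT_1,\cT_{21}}$ and $\ev_2=\braket{\cT_{22},\cT_3}$, to construct an explicit witness $\ev_3$ in the nonempty case. Unfolding $\gamma^{<:}$ and $\relcomp$, the composite is exactly the set of pairs $\braket{T_1,T_3}$ for which there is a static $T_2$ with $T_1\sub T_2\sub T_3$, $T_1\in\gamma(\cT_1)$, $T_3\in\gamma(\cT_3)$, and $T_2\in\gamma(\cT_{21})\setint\gamma(\cT_{22})$. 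The first observation is that the composite depends on $\cT_{21},\cT_{22}$ only through this intersection: invoking the fact (part of the BRR lattice development underlying the Galois connection, and visible already in Prop.~\ref{prop:ctrans-iml}) that $\gamma$ commutes with defined finite meets, either $\gamma(\cT_{21})\setint\gamma(\cT_{22})=\varnothing$ --- in which case the whole composite is empty and we are done --- or it equals $\gamma(\cT')$ for the single BRR type $\cT'=\cT_{21}\sqcap\cT_{22}$. It therefore suffices to treat composites of the ``pivoted'' form $\gamma^{<:}(\braket{\cT_1,\cT'})\relcomp\gamma^{<:}(\braket{\cT',\cT_3})$, and to take $\ev_3=\alpha^{<:}$ of that set once we show the set lies in the image of $\gamma^{<:}$ restricted to well-formed evidence.

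I would prove the pivoted case by induction on the combined structure of $\cT_1$, $\cT'$, $\cT_3$. Whenever the composite is nonempty these three types must agree on top-level shape (any mismatch --- $\Int$ against $\Bool$, an arrow against a record, etc. --- precludes a subtyping chain and forces emptiness), so the induction is well founded. The base cases $\Int$, $\Bool$, $\?$ are routine; for $\?$ one uses $\gamma(\?)=\Type$ together with transitivity of $\sub$ on static types. The arrow case follows from the induction hypothesis applied covariantly to codomains and contravariantly to domains, assembling $\ev_3=\braket{\cT_{31}->\cT_{32},\,\cT_{33}->\cT_{34}}$ from the (direction-swapped) domain composite and the codomain composite, its well-formedness being immediate from Fig.~\ref{fig:brr-wf-ev}. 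The record/row case is handled label by label: partition the labels into those shared by all three types, those present only in a subtype position, and those present only in a supertype position; use the induction hypothesis to obtain the mapping at each shared label; and use the default mappings $D(\cdot)=\varnothing$ and $D(\?)=\?_O$ to treat the asymmetric labels uniformly. In each sub-case one exhibits the resulting bounded record or row, checks it against the well-formedness rules of Fig.~\ref{fig:brr-wf-ev}, and verifies that its concretization is exactly the composite relation.

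The main obstacle is this record/row case, and within it the $\supseteq$ containment: showing that every static record in $\gamma^{<:}$ of the constructed witness genuinely arises as the middle-erasure of some transitive triple $T_1\sub T_2\sub T_3$. This is precisely where optional and absent field annotations are indispensable --- the failing compositions of Sec.~\ref{sec:brr} are exactly the configurations a coarser abstraction (plain gradual rows) cannot represent --- and where the bookkeeping is most delicate: an $R$ annotation in the result demands $R$ on both composed sides together with a nonempty type-meet at that field, an $\varnothing$ annotation is forced when no chain can support the field, and an $O$ annotation must be used whenever some witnessing static type has the field and some does not, with the unknown row designator $\?$ retained in the supertype position only when it is justified on both sides. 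A clean alternative, if one prefers to reuse existing machinery, is to prove the stronger equation $\gamma^{<:}(\ev_1\trans{}\ev_2)=\gamma^{<:}(\ev_1)\relcomp\gamma^{<:}(\ev_2)$ directly by induction on the recursive characterization of $\trans{}$ (Prop.~\ref{prop:ctrans-iml} and its expansion) and set $\ev_3=\ev_1\trans{}\ev_2$; this is forward completeness itself, of which the lemma is then an immediate corollary, but the combinatorial content --- and the same record/row obstacle --- is unchanged.
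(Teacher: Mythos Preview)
Your approach is plausible but diverges from the paper's in a way worth noting. The paper proves the lemma by \emph{direct double induction over the well-formedness derivations} $\vdash\ev_1\wf$ and $\vdash\ev_2\wf$, mutually with the mapping-level judgment $\vdash\braket{M,M}\wf$. There is no meet-pivoting step: the middle components $\cT_{21}$ and $\cT_{22}$ are carried separately through the induction, and the well-formedness invariants of Fig.~\ref{fig:brr-wf-ev} are available at every step to constrain the case analysis (e.g.\ ruling out $\braket{\cdot,\?}$ combinations and supplying the $D(\ast)$ defaults).

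Your route---collapse the middle to $\cT'=\cT_{21}\sqcap\cT_{22}$ first, then induct on three raw types---trades one induction for another plus an auxiliary lemma. The auxiliary fact you invoke, that $\gamma(\cT_{21})\setint\gamma(\cT_{22})=\gamma(\cT_{21}\sqcap\cT_{22})$ whenever the meet is defined, does hold for BRR but is itself nontrivial (it is essentially the statement that BRR is closed under intersection, which is part of what makes BRR forward-complete in the first place) and is not proved separately in the paper. More importantly, after pivoting you have discarded the well-formedness of the pairs $\braket{\cT_1,\cT'}$ and $\braket{\cT',\cT_3}$: these need not be well-formed evidence, so your subsequent induction cannot appeal to the structural invariants of Fig.~\ref{fig:brr-wf-ev} and must instead range over arbitrary BRR types. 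This is workable, but it means more cases in the record/row step rather than fewer. The paper's choice to keep both well-formedness derivations intact and run the mutual induction directly is what buys the cleaner case analysis; your meet reduction does not actually simplify the combinatorial core you correctly identify as the obstacle.
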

\begin{proof}
  By double induction over 
  $|- \ev \wf$ and $|- \braket{M,M} \wf$ for $\ev_1$ and $\ev_2$.
\end{proof}

\begin{theorem}
  \label{thm:brr-fc}
  In BRR, $\trans{}$ is forward complete.
\end{theorem}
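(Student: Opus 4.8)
The plan is to derive forward completeness almost entirely from Lemma~\ref{lem:rc-closed-wrt-gamma}, reducing what would otherwise be a staggering case analysis to one clean observation. Recall from Sec.~\ref{sec:agt} that $\ev_1 \trans{} \ev_2 = \alpha^{<:}(\Idp(\gamma^{<:}(\ev_1) \relcomp \gamma^{<:}(\ev_2)))$, so the theorem asks exactly that re-applying $\gamma^{<:}$ to this expression returns $\gamma^{<:}(\ev_1) \relcomp \gamma^{<:}(\ev_2)$ unchanged. Soundness of $\trans{}$ already gives the $\subseteq$ inclusion, so the real content is the reverse inclusion; in fact we will get equality on the nose.

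First I would fix well-formed $\ev_1$ and $\ev_2$ and apply Lemma~\ref{lem:rc-closed-wrt-gamma}. If $\gamma^{<:}(\ev_1) \relcomp \gamma^{<:}(\ev_2) = \emptyset$, then $\Idp$ is undefined on it, so $\ev_1 \trans{} \ev_2$ is undefined, and both sides of the forward-completeness equation are ``undefined'' in the matching partial-function sense, so there is nothing more to check. Otherwise the lemma hands us a well-formed $\ev_3$ with $\gamma^{<:}(\ev_1) \relcomp \gamma^{<:}(\ev_2) = \gamma^{<:}(\ev_3)$; since this set is non-empty, $\Idp$ acts as the identity, and hence $\ev_1 \trans{} \ev_2 = \alpha^{<:}(\gamma^{<:}(\ev_3))$.

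The one genuine observation needed is that $\alpha^{<:}(\gamma^{<:}(\ev_3)) = \ev_3$. This does \emph{not} follow from $(\alpha^{<:},\gamma^{<:})$ being merely a Galois connection --- indeed it is explicitly not a Galois insertion, e.g.\ $\alpha^{<:}(\gamma^{<:}(\braket{\Int,\?})) = \braket{\Int,\Int}$ --- but it does hold whenever the argument is a bona fide well-formed evidence object. By the BRR analogue of Prop.~\ref{prop:wfev} (the minimality characterization underlying Fig.~\ref{fig:brr-wf-ev}), every well-formed $\ev$ lies in the image of $\alpha^{<:}\circ\gamma^{<:}$, and any point in that image is fixed by $\alpha^{<:}\circ\gamma^{<:}$, because $\alpha\gamma\alpha\gamma = \alpha\gamma$ in every Galois connection. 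Applying this to the $\ev_3$ produced by the lemma gives $\ev_1 \trans{} \ev_2 = \ev_3$, hence $\gamma^{<:}(\ev_1 \trans{} \ev_2) = \gamma^{<:}(\ev_3) = \gamma^{<:}(\ev_1) \relcomp \gamma^{<:}(\ev_2)$, which is forward completeness.

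The hard part is not in this theorem at all but inside Lemma~\ref{lem:rc-closed-wrt-gamma}, which I would prove by double induction over the derivations of $|- \ev_1 \wf$ and $|- \ev_2 \wf$ (with a nested induction over well-formed mappings). The work is to show, shape by shape --- atomics, arrows, and the record/record, record/row, and row/row combinations, crossed with all the $R$/$O$/$\varnothing$ mapping cases and the split between shared and non-shared field domains --- that the relational composition of the two concretizations is \emph{precisely} the concretization of some well-formed BRR evidence object, rather than a set of static types with no least bounded-record/-row abstraction. This is exactly where absent ($\varnothing$) and optional ($O$) mappings earn their keep: they are what make the composed set expressible. Concretely the induction will lean on the $\generator$ equations for records and rows and on how $\gamma^M$ behaves under the intersection-like matching of shared labels; once that lemma is secured, the theorem above is immediate.
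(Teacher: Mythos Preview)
Your proposal is correct and takes essentially the same route as the paper: both derive forward completeness directly from Lemma~\ref{lem:rc-closed-wrt-gamma} together with general Galois-connection identities, with the double induction over well-formed evidence and mappings carrying all the real work inside that lemma. The only cosmetic difference is that the paper's appendix proof gets the two inclusions separately from soundness and optimality, whereas you collapse them via the fixed-point observation $\alpha^{<:}\gamma^{<:}(\ev_3)=\ev_3$ (which in fact follows already from $\gamma\alpha\gamma=\gamma$, without needing well-formedness of $\ev_3$).
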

\begin{proof}
  Consequence of Lemma~\ref{lem:rc-closed-wrt-gamma}.
\end{proof}

It is straightforward to prove that forward completeness suffices for
space-efficiency.

\begin{theorem}
  \label{thm:fc-assoc}
  If\/ \replaced{$\trans{}$}{$(\Ev{<:}, \trans{})$} is forward-complete
  \added{with respect to its evidence abstraction,} then
  \begin{math}
    \ev_1 \trans{} \left( \ev_2 \trans{} \ev_3 \right) =
    \left( \ev_1 \trans{} \ev_2 \right) \trans{} \ev_3
  \end{math}
  for any evidence objects $\ev_1$, $\ev_2$ and $\ev_3$.
\end{theorem}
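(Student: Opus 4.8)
The plan is to transport the desired identity through the Galois connection $(\alpha^{<:},\gamma^{<:})$ from Sec.~\ref{sec:agt} and thereby reduce it to associativity of relational composition $\relcomp$, which is a textbook fact. Recall that $\ev_1 \trans{} \ev_2 = \alpha^{<:}(\Idp(\gamma^{<:}(\ev_1)\relcomp\gamma^{<:}(\ev_2)))$, so $\ev_1 \trans{} \ev_2$ is defined exactly when $\gamma^{<:}(\ev_1)\relcomp\gamma^{<:}(\ev_2)\neq\emptyset$, and when it is, the result is again a well-formed evidence object, since $\alpha^{<:}(\R)=(\alpha^{<:}\circ\gamma^{<:})(\alpha^{<:}(\R))$ by the Galois-connection law $\alpha^{<:}\circ\gamma^{<:}\circ\alpha^{<:}=\alpha^{<:}$, hence lies in the image of $\alpha^{<:}\circ\gamma^{<:}$, which is exactly the set of evidence objects. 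I would rely on two auxiliary facts: (i) $\relcomp$ is associative on subsets of ${<:}$ and is absorbed by $\emptyset$ on either side (i.e. $\emptyset\relcomp\R=\R\relcomp\emptyset=\emptyset$); and (ii) $\gamma^{<:}$ is injective on well-formed evidence objects, because $\alpha^{<:}\circ\gamma^{<:}$ is the identity on them.

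First I would dispatch the case in which every composition occurring in the statement is defined. Peeling off the two occurrences of $\trans{}$ on each side using forward completeness, and rearranging the middle with associativity of $\relcomp$:
\begin{align*}
  \gamma^{<:}\bigl((\ev_1 \trans{} \ev_2)\trans{}\ev_3\bigr)
  &= \gamma^{<:}(\ev_1 \trans{} \ev_2)\relcomp\gamma^{<:}(\ev_3)
   = \bigl(\gamma^{<:}(\ev_1)\relcomp\gamma^{<:}(\ev_2)\bigr)\relcomp\gamma^{<:}(\ev_3)\\
  &= \gamma^{<:}(\ev_1)\relcomp\bigl(\gamma^{<:}(\ev_2)\relcomp\gamma^{<:}(\ev_3)\bigr)
   = \gamma^{<:}(\ev_1)\relcomp\gamma^{<:}(\ev_2 \trans{}\ev_3)
   = \gamma^{<:}\bigl(\ev_1 \trans{}(\ev_2 \trans{}\ev_3)\bigr).
\end{align*}
Both outermost expressions are well-formed evidence objects, so by injectivity of $\gamma^{<:}$ on evidence objects they must be equal.

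Next I would handle the partiality bookkeeping, i.e. showing that the left side is defined iff the right side is. Suppose the left side is undefined. Then either $\ev_1\trans{}\ev_2$ is itself undefined, so $\gamma^{<:}(\ev_1)\relcomp\gamma^{<:}(\ev_2)=\emptyset$; or $\ev_1\trans{}\ev_2$ is defined but $\gamma^{<:}(\ev_1\trans{}\ev_2)\relcomp\gamma^{<:}(\ev_3)=\emptyset$, in which case forward completeness rewrites $\gamma^{<:}(\ev_1\trans{}\ev_2)$ as $\gamma^{<:}(\ev_1)\relcomp\gamma^{<:}(\ev_2)$. In either case $\bigl(\gamma^{<:}(\ev_1)\relcomp\gamma^{<:}(\ev_2)\bigr)\relcomp\gamma^{<:}(\ev_3)=\emptyset$, so by associativity $\gamma^{<:}(\ev_1)\relcomp\bigl(\gamma^{<:}(\ev_2)\relcomp\gamma^{<:}(\ev_3)\bigr)=\emptyset$; using forward completeness once more wherever $\ev_2\trans{}\ev_3$ happens to be defined and $\emptyset$-absorption otherwise, this forces $\ev_1\trans{}(\ev_2\trans{}\ev_3)$ to be undefined as well. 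The converse direction is symmetric. Combining this with the defined case yields the theorem.

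The main obstacle here is not conceptual but purely the discipline of tracking partiality: forward completeness, as we use it, only constrains the \emph{defined} fragment of $\trans{}$, so one must verify that ``undefined'' propagates in exactly the same way on both sides of the equation, which is precisely where $\emptyset$-absorption and associativity of $\relcomp$ do the work. One can sidestep this entirely by instead working with the total operation $\bottrans{}$ on total evidence $\botev$ and setting $\gamma^{<:}(\bot)=\emptyset$: then forward completeness becomes an unconditional identity and the argument collapses to the three-equality chain above followed by injectivity of $\gamma^{<:}$, which is exactly Prop.~\ref{prop:ev-assoc}. Finally, note that nothing in this argument uses the particulars of $\GTFLcsub$ or of BRR --- only the Galois connection and the forward-completeness hypothesis --- so it applies verbatim to any AGT-based language satisfying the hypothesis.
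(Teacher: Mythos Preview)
Your proposal is correct and follows essentially the same route as the paper: transport the identity through $\gamma^{<:}$, use forward completeness to replace $\gamma^{<:}(\ev_i \trans{} \ev_j)$ by $\gamma^{<:}(\ev_i)\relcomp\gamma^{<:}(\ev_j)$, invoke associativity of relational composition, and collapse back. The paper's appendix presents this as a generic calculation on an arbitrary Galois connection with a forward-complete abstract operator (once applying $\alpha$ at the end rather than invoking injectivity of $\gamma$, which amounts to the same thing since $\alpha^{<:}\circ\gamma^{<:}$ is the identity on well-formed evidence). Your explicit bookkeeping for the partial case---showing that undefinedness propagates symmetrically via $\emptyset$-absorption---is a welcome addition that the paper's generic proof leaves implicit by working with total operators, and your remark that passing to $\bottrans{}$ with $\gamma^{<:}(\bot)=\emptyset$ dissolves this bookkeeping matches how the paper handles totality elsewhere.
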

\begin{proof}
  A generalized Coq mechanization of this
  property can be found in~\citet{agt-moving-forward-zenodo}.
\end{proof}

The basic intuition is that relational composition is associative, so a
forward-complete definition must also be.  Of more practical interest, though,
is that the statement of forward completeness involves only two evidence
objects, compared to the three needed to state associativity.  Thus, proving
forward completeness requires far fewer cases than proving associativity
directly.  Additionally, forward completeness ensures that reasoning about
transitivity of subtyping is as precise as operating directly with fragments of
subtyping, as alluded to in Sec.~\ref{sec:agt}.  This precision matters most
when considering precise enforcement of type invariants.

\section{BRR is Space-Efficient}
\label{sec:brr-space-efficient}
As the previous section proved that evidence composition is
associative, we must only find a bound for evidence composition in the
style of~\citet{herman10space} to guarantee that BRR is space
efficient. Since an evidence object is just a pair of types, we can
use the size of a gradual type as a proxy for the size of evidence.
Fig.~\ref{fig:size_height_cT} defines structurally the size and the
height of a gradual type.  The size of a gradual type can be bound in
terms of its height, and thus we can find a bound for the size of an
evidence object in terms of the height of its components.

\begin{figure}
  \begin{small}
  \begin{displaymath}
    \begin{block}
      \boxed{\size(\cT)}\quad 
      \size : \GType -> \nat \\[0.5em]
      \size(\Int) = \size(\Bool) = \size(\?) = 1 \\[0.5em]
      \size(\cT_1 -> \cT_2) = 1 + \size(\cT_1) + \size(\cT_2) \\[0.5em]
      \size\left(\left[\mappings{i}{n}{l_i:\cT_i} \;*\right]\right) = 1 + \sum\limits_{i=0}^n \size(\cT_i)
    \end{block}
    \qquad
    \begin{block}      
      \boxed{\height(\cT)}\quad 
      \height : \GType -> \nat \\[0.5em]
      \height(\Int) = \height(\Bool) = \height(\?) = 1 \\[0.5em]
      \height(\cT_1 -> \cT_2) = 1 + \max \left(\height(\cT_1)\right) \left(\height(\cT_2)\right) \\[0.5em]
      \height\left(\left[\mappings{i}{n}{l_i:\cT_i}\;*\right]\right) = 1 + \max_{i=0}^n \height(\cT_i)
    \end{block}
  \end{displaymath}
  \end{small}
  \caption{Size and height of a gradual type}
  \label{fig:size_height_cT}
\end{figure}

\begin{lemma}[Bound for size of Gradual Types in BRR]
  For any gradual type $\cT$,
  \[\size(\cT) \leq (3 + \ldom(\cT))^{1 + \height(\cT)}\]

  where $\ldom$ corresponds to the size of the label domain for
  records in type $\cT$.
\end{lemma}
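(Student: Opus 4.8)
The plan is to prove the bound by \emph{structural induction on }$\cT$, leaning on two monotonicity facts that follow directly from the definitions of $\size$, $\height$, and $\ldom$ in Fig.~\ref{fig:size_height_cT}: for every immediate subcomponent $\cT_i$ of $\cT$ (the domain or codomain of an arrow, or the type carried by a field mapping) we have $\ldom(\cT_i) \le \ldom(\cT)$, and, because the $\height$ equations take a $\max$ over subcomponents and add $1$, we have $1 + \height(\cT_i) \le \height(\cT)$. Since $3 + \ldom(\cT) \ge 3 > 1$, enlarging the exponent only enlarges the right-hand side, so the induction hypothesis applied to any subcomponent yields the uniform estimate $\size(\cT_i) \le (3 + \ldom(\cT))^{\height(\cT)}$, which I will abbreviate $y$ in what follows (note $y \ge 3$).

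For the base cases $\cT \in \{\Int,\Bool,\?\}$ we have $\size(\cT)=1$, $\height(\cT)=1$, and $\ldom(\cT)=0$, so the claim reduces to $1 \le 3^{2} = 9$. For $\cT = \cT_1 \to \cT_2$ I would compute $\size(\cT) = 1 + \size(\cT_1) + \size(\cT_2) \le 1 + 2y$ and then close the gap with the elementary inequality $1 + 2y \le 3y \le (3+\ldom(\cT))\,y = (3+\ldom(\cT))^{1+\height(\cT)}$, valid since $y \ge 3 \ge 1$. For a record or row $\cT = [\,l_1:M_1, \dots, l_n:M_n, *\,]$ the argument is identical in shape but has $n$ summands instead of two: the additional ingredient is that $n \le \ldom(\cT)$ — the labels $l_1,\dots,l_n$ are distinct (well-formedness) and hence form a subset of the label domain of $\cT$ — so $\size(\cT) = 1 + \sum_i \size(M_i) \le 1 + n\,y \le 1 + \ldom(\cT)\,y \le (3+\ldom(\cT))\,y = (3+\ldom(\cT))^{1+\height(\cT)}$, where the last step again uses $1 \le 3y$. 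Field mappings of the form $\varnothing$ contribute only a constant to the size and have height $1$, so they are subsumed by the same uniform bound $y$ and do not disturb the count.

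The calculations are routine; the step I expect to require the most care — though it is only a minor obstacle — is pinning down the precise definition of $\ldom$ and verifying the two facts the induction rests on: that $\ldom$ never shrinks when passing to a subterm, and that $\ldom(\cT)$ dominates the arity $n$ of every record/row node inside $\cT$. Under the stated reading of $\ldom(\cT)$ as ``the size of the label domain for records in $\cT$'' both facts are immediate, and the choice of base $3 + \ldom(\cT)$ (rather than $2 + \ldom(\cT)$) is exactly what provides the slack $1 + 2y \le 3y$ and $1 + \ldom(\cT)\,y \le (3+\ldom(\cT))\,y$ needed to absorb the additive constant and the field count introduced at each level.
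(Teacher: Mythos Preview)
Your proof is correct and is exactly the natural structural-induction argument one would expect for this kind of height-indexed size bound. The paper itself states this lemma without proof (it immediately moves on to compare the bound with that of Herman et al.), so there is no alternative argument to contrast yours with; your derivation supplies what the paper omits. The only minor remark is that the paper's $\size$ clause for records indexes the sum from $0$ to $n$, which on a literal reading gives $n{+}1$ summands rather than $n$; your inequality still goes through in that case because $1 + (n{+}1)y \le (3+\ldom(\cT))\,y$ reduces to $1 \le 2y$, which holds since $y \ge 3$.
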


While this bound seems quite high, it is comparable to the bound
established by~\citet{herman10space} for the size of coercions: Their structural types only
include functions and references, so they
only needed to consider binary tree branching at any level in their
types.  Once we introduce records, we must instead deal with arbitrarily-wide n-ary tree
branching.

\begin{definition}[Size and height of evidences in BRR]
  \[\size(\ev) = \size(\pi_1(\ev)) + \size(\pi_2(\ev))\]
  \[\height(\ev) = \max \height(\pi_1(\ev)) \height(\pi_2(\ev))\]
  \[\ldom(\ev) = \max \ldom(\pi_1(\ev)) \ldom(\pi_2(\ev))\]
\end{definition}

Since BRR is forward-complete, we can rely on
Prop.~\ref{prop:ctrans-iml} to reach a bound for evidence
composition through a combination of bounds for gradual meet and 
initial evidence:

\begin{lemma}[Height bounds in BRR]
  \label{lem:height-brr}
  \begin{math}\height(\cT_1 \meet \cT_2) \leq \max \left(\height(\cT_1)\right) \left(\height(\cT_2)\right);\end{math}
  \[\height(\Isub|[ \cT_1 \csub \cT_2 |]) \leq \max \left(\height(\cT_1)\right) \left(\height(\cT_2)\right);\]
  \[\height(\ev_1 \trans{} \ev_2) \leq \max \left(\height(\ev_1)\right) \left(\height(\ev_2)\right).\]
\end{lemma}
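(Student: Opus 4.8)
The three inequalities are proved in order; the third follows mechanically from the first two once we invoke the closed form of consistent transitivity. Throughout, $\height$ is extended to evidence objects as in the preceding definition, $\height(\ev)=\max(\height(\pi_1(\ev)),\height(\pi_2(\ev)))$, and to mappings by $\height(\cT_R)=\height(\cT_O)=\height(\cT)$ and $\height(\varnothing)=0$. The driving observation for the meet and initial-evidence bounds is that these operations act \emph{position-by-position}, and at each position they keep the more precise of the two arguments; since making a type more precise can make it \emph{taller} (replacing a $\?$ by a deep subtree), the result can exceed the height of one argument, but never the maximum of both --- which is exactly why the right-hand sides use $\max$ and not $\min$. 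I would prove both bounds from the recursive characterizations of $\meet$ and $\Isub$ (the calculated algorithms), not by unfolding $\alpha\circ\gamma$ directly.

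\textbf{Bound for $\meet$.} Structural induction, with a simultaneous auxiliary claim $\height(M_1\meet M_2)\le\max(\height(M_1),\height(M_2))$ for mappings. If either argument is $\?$, then $\cT_1\meet\cT_2$ is the other argument and the bound is immediate (this is the case witnessing that the result may be taller than one input). The atomic cases are trivial; for $\cT_{11}\mathbin{->}\cT_{12}$ against $\cT_{21}\mathbin{->}\cT_{22}$ the result is $(\cT_{11}\meet\cT_{21})\mathbin{->}(\cT_{12}\meet\cT_{22})$ and $1+\max$ of the two inductively bounded component heights is at most $\max(\height(\cT_1),\height(\cT_2))$. For records and rows, each shared label carries a meet of the corresponding mappings (use the mapping claim and the IH), each label present in only one type inherits its mapping --- via the default $D(\cdot)$ or $D(\?)$ of Fig.~\ref{fig:brr-wf-ev} --- from something no taller than that type, and the row markers are combined without adding nesting; hence $\height$ of the result is $1+$ (something $\le\max(\height(\cT_1),\height(\cT_2))-1$). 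The annotation combinations $R\meet R$, $O\meet O$, $R\meet O$, and $\varnothing\meet M$ are all handled uniformly because all of them take the meet of the underlying types (or collapse to $\varnothing$).

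\textbf{Bound for $\Isub$.} Again structural induction on $\cT_1,\cT_2$ (equivalently on the $\csub$ derivation), using the calculated recursive definition of initial evidence. The arrow case recurses contravariantly and covariantly and reassembles a pair of arrow types; record/row cases drop fields for width subtyping (heights unchanged) and recurse on field types for depth subtyping. The imprecise cases are the delicate ones: for $\Isub|[\?\csub\cT|]$ the supertype component is $\cT$ itself, while the subtype component must summarize all subtypes of all instances of $\cT$ --- and the key point is that the unboundedly many, unboundedly deep \emph{extra} record fields these contain are absorbed into a row marker $\?$ of height $1$, rather than into genuine nesting, so the result stays within $\height(\cT)$. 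In every case each component of the resulting evidence pair has height at most $\max(\height(\cT_1),\height(\cT_2))$, which is the claim.

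\textbf{Bound for $\trans{}$, and the main obstacle.} Let $\ev_1=\evpr{\cT_1,\cT_{21}}$, $\ev_2=\evpr{\cT_{22},\cT_3}$, and suppose $\ev_1\trans{}\ev_2$ is defined (otherwise the inequality is vacuous). Put $\cT_m=\cT_{21}\meet\cT_{22}$ and $h=\max(\height(\ev_1),\height(\ev_2))$. By Prop.~\ref{prop:ctrans-iml}, $\ev_1\trans{}\ev_2=\Isub|[\pi_1(\Isub|[\cT_1\csub\cT_m|])\csub\pi_2(\Isub|[\cT_m\csub\cT_3|])|]$. The $\meet$ bound gives $\height(\cT_m)\le\max(\height(\cT_{21}),\height(\cT_{22}))\le h$; the $\Isub$ bound then gives $\height(\Isub|[\cT_1\csub\cT_m|])\le\max(\height(\cT_1),\height(\cT_m))\le h$ and likewise for $\Isub|[\cT_m\csub\cT_3|]$; since $\height(\ev)=\max(\height(\pi_1(\ev)),\height(\pi_2(\ev)))$, the projections do not increase height, so both arguments of the outer $\Isub$ have height $\le h$; a final application of the $\Isub$ bound yields $\height(\ev_1\trans{}\ev_2)\le h$. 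The only real work is the bookkeeping in the first two inductions --- differing label domains, the interaction of the absent ($\varnothing$) and optional ($O$) annotations with the row markers and the defaults $D$, and the verification that the imprecise cases never trade a shallow marker for deep nesting; these cases are numerous but routine once the auxiliary mapping-level statement is in place. The reduction of the third bound to the first two, by contrast, is immediate from Prop.~\ref{prop:ctrans-iml} (which, as Sec.~\ref{sec:forward-completeness} observes, is the payoff of forward completeness).
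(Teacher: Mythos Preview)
Your proposal is correct and follows exactly the approach the paper describes: the text immediately preceding the lemma says ``Since BRR is forward-complete, we can rely on Prop.~\ref{prop:ctrans-iml} to reach a bound for evidence composition through a combination of bounds for gradual meet and initial evidence,'' which is precisely your decomposition, and the paper then defers the structural-induction details to a Coq mechanization rather than spelling them out. Your sketch of those inductions (including the auxiliary mapping-level statement and the handling of defaults $D(\ast)$) is the natural unfolding of what the Coq proof would do.
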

We can use these lemmas to prove the following bound for evidence
composition:
\begin{theorem}[Bound for the size of evidence composition]
  \label{thm:bound-brr}
  \[\size(\ev_1 \trans{} \ev_2) \leq 2 * \left(3 + \max \left(\ldom(\ev_1)\right) \left(\ldom(\ev_2)\right)\right)^{1 + \max \left(\height(\ev_1)\right) \left(\height(\ev_2)\right)}\]
\end{theorem}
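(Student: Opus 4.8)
The plan is to bound each of the two gradual-type components of $\ev_1 \trans{} \ev_2$ separately and then invoke the structural estimate $\size(\cT) \leq (3 + \ldom(\cT))^{1 + \height(\cT)}$ (the bound for the size of gradual types in BRR). Assume $\ev_1 \trans{} \ev_2$ is defined (otherwise there is nothing to prove) and write $\ev_3 = \ev_1 \trans{} \ev_2$; by Prop.~\ref{prop:ctranscc} this is well-formed evidence, so $\ev_3 = \braket{\pi_1(\ev_3),\pi_2(\ev_3)}$ and, by the definition of the size of an evidence object, $\size(\ev_3) = \size(\pi_1(\ev_3)) + \size(\pi_2(\ev_3))$. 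Applying the size bound componentwise gives $\size(\pi_i(\ev_3)) \leq (3 + \ldom(\pi_i(\ev_3)))^{1 + \height(\pi_i(\ev_3))}$, and from the definitions of $\ldom(\ev)$ and $\height(\ev)$ we have $\ldom(\pi_i(\ev_3)) \leq \ldom(\ev_3)$ and $\height(\pi_i(\ev_3)) \leq \height(\ev_3)$. Since $b \mapsto (3+b)^{1+h}$ and $h \mapsto (3+b)^{1+h}$ are monotone non-decreasing on the naturals, it therefore suffices to establish the two evidence-level bounds $\height(\ev_3) \leq \max(\height(\ev_1))(\height(\ev_2))$ and $\ldom(\ev_3) \leq \max(\ldom(\ev_1))(\ldom(\ev_2))$; substituting them bounds each $\size(\pi_i(\ev_3))$ by $(3 + \max(\ldom(\ev_1))(\ldom(\ev_2)))^{1 + \max(\height(\ev_1))(\height(\ev_2))}$, and summing the two components produces exactly the claimed inequality with the factor $2$ out front.

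The height bound is already in hand: it is the third inequality of Lemma~\ref{lem:height-brr}. For the label-domain bound I would prove the $\ldom$-analogue of Lemma~\ref{lem:height-brr} in the same style. Using Prop.~\ref{prop:ctrans-iml}, rewrite $\ev_1 \trans{} \ev_2$ with $\ev_1 = \braket{\cT_1,\cT_{21}}$ and $\ev_2 = \braket{\cT_{22},\cT_3}$ as $\Isub|[ \pi_1(\Isub|[\cT_1 \csub (\cT_{21} \sqcap \cT_{22})|]) \csub \pi_2(\Isub|[(\cT_{21}\sqcap\cT_{22}) \csub \cT_3|]) |]$, and push an $\ldom$ estimate through each operation in turn: first $\ldom(\cT_{21}\sqcap\cT_{22})$, then the projected initial evidences, then the outermost $\Isub$. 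Each of gradual meet $\sqcap$ and initial evidence $\Isub$ is computed as $\alpha$ (resp.\ $\alpha^{<:}$) applied to an intersection (resp.\ image under $\relcomp$) of concretizations; since neither set intersection nor $\gamma^{<:}$-composition $\relcomp$ introduces a field label absent from all operands, and since $\alpha$ of an admissible set of records never exceeds the widest record present in that set, the label domain of every intermediate result stays controlled by those of its arguments. Forward completeness (Thm.~\ref{thm:brr-fc}) is what makes this tractable: it licenses reasoning about $\trans{}$ through $\gamma^{<:}(\ev_1) \relcomp \gamma^{<:}(\ev_2)$ and the concrete records it contains, rather than through a direct case analysis on pairs of BRR types, which (as discussed in Sec.~\ref{sec:forward-completeness}) suffers a combinatorial explosion.

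I expect the label-domain estimate for composition to be the main obstacle. Unlike height, which is visibly non-increasing under the pointwise and depth-recursive operations on BRR types, the width of a record can grow when two bounded rows are combined: composing two evidence objects can force a label that was merely ``extra'' — absorbed into the $\?$ row — in each operand to become an explicit required, optional, or absent field in the result. The delicate point is to show that whenever this happens, the newly-explicit label was already accounted for in the label domain of one of the two operands, so that the $\ldom$ quantity controlling the exponential base cannot blow up. I would handle this by the same double induction on the well-formedness derivations $|- \ev \wf$ and $|- \braket{M,M} \wf$ that underlies Lemma~\ref{lem:rc-closed-wrt-gamma} and Lemma~\ref{lem:height-brr}, carrying the label domain alongside the other measured quantities and checking the record/row cases with the default-mapping machinery $D(\cdot)$ explicitly; no fundamentally new idea is needed beyond this bookkeeping, after which the assembly in the first paragraph closes the proof.
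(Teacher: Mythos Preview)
Your high-level assembly is the natural one, and the height inequality is exactly Lemma~\ref{lem:height-brr}. The gap is the proposed $\ldom$-analogue: the claim $\ldom(\ev_1 \trans{} \ev_2) \leq \max(\ldom(\ev_1),\ldom(\ev_2))$ --- and already its first building block $\ldom(\cT_1 \sqcap \cT_2) \leq \max(\ldom(\cT_1),\ldom(\cT_2))$ --- is false. Take two bounded rows with disjoint single labels, $\cT_1 = \brow{\bounded{a}{R}{\Int}}$ and $\cT_2 = \brow{\bounded{b}{R}{\Int}}$: the BRR gradual meet unions the explicit field sets, producing $\brow{\bounded{a}{R}{\Int},\bounded{b}{R}{\Int}}$ with two explicit labels while each input has one. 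Lifted to well-formed evidence $\ev_i = \evpair{\cT_i}{\cT_i}$, the composition $\ev_1 \trans{} \ev_2$ likewise mentions both $a$ and $b$ explicitly. Your own diagnosis --- every newly-explicit label ``was already accounted for in the label domain of one of the two operands'' --- is correct, but it establishes only containment in the \emph{union} of the operand label sets, hence an additive bound $\ldom(\ev_1 \trans{} \ev_2) \leq \ldom(\ev_1) + \ldom(\ev_2)$, not the $\max$ bound your first paragraph needs in order to substitute into $(3 + \ldom)^{1+\height}$ componentwise.

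So the plug-and-add step does not close as written. The paper itself gives no pen-and-paper argument here --- both Lemma~\ref{lem:height-brr} and this theorem are discharged by the companion Coq development --- so there is no textual proof to compare against. Whatever the mechanized argument does, it cannot proceed via a standalone $\ldom$-monotonicity lemma in the style of Lemma~\ref{lem:height-brr}; one needs a direct size estimate through the recursive structure of $\sqcap$ and $\Isub$ on BRR record shapes (exploiting that a label contributed by only one operand carries a field inherited, up to meet with the default $D(*)$, from that operand alone) rather than factoring through $\ldom$ of the result.
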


\begin{proof}
  A Coq proof for Lemma~\ref{lem:height-brr} and
  Theorem~\ref{thm:bound-brr}~can be found in \citet{agt-moving-forward-zenodo}.
\end{proof}

Since any evidence composition is bound by the height and label domain
of its components, we can establish a bound for any evidence
composition happening in a program by collecting the greatest height
and domain, respectively:

\begin{theorem}[Bound for evidence compositions in a program]
  Let $\rr$ be a program.  There exists an $\ev_h$ appearing in a
  subterm of $\rr$, such
  that for any evidences $\ev_1$ and $\ev_2$ appearing in any subterms
  of $\rr$,
  \[\size(\ev_1 \trans{} \ev_2) \leq 2 * \left(3 + \ldom(\rr)\right)^{1 +\height(\ev_h)}\]

  where $\ldom(\rr)$ denotes the domain of labels for evidence appearing in
  subterms of $\rr$.
  
\end{theorem}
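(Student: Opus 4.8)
The plan is to obtain the statement as an immediate corollary of Theorem~\ref{thm:bound-brr}, using only monotonicity of the bounding expression. First I would note that a program $\rr$ has finitely many subterms and therefore mentions only finitely many evidence objects; let $\ev_h$ be one of them of maximal $\height$ (if $\rr$ mentions no evidence object at all there are no pairs $\ev_1,\ev_2$ to consider and the claim is vacuous). By the very definition of $\ldom(\rr)$ as the set of record labels occurring in evidence anywhere in $\rr$, every evidence object $\ev$ appearing in a subterm of $\rr$ satisfies $\ldom(\ev) \le \ldom(\rr)$, and by the choice of $\ev_h$ it satisfies $\height(\ev) \le \height(\ev_h)$.

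Now fix $\ev_1$ and $\ev_2$ appearing in subterms of $\rr$. The two observations above give $\max(\ldom(\ev_1),\ldom(\ev_2)) \le \ldom(\rr)$ and $\max(\height(\ev_1),\height(\ev_2)) \le \height(\ev_h)$. The map $g(d,h) = 2\,(3+d)^{1+h}$ is non-decreasing in $d \ge 0$ (its base $3+d$ is increasing in $d$) and in $h \ge 0$ (it raises a quantity $\ge 1$ to a larger power), so substituting the larger arguments into the bound supplied by Theorem~\ref{thm:bound-brr} yields
\[
  \size(\ev_1 \trans{} \ev_2)
  \;\le\; 2\bigl(3 + \max(\ldom(\ev_1),\ldom(\ev_2))\bigr)^{1 + \max(\height(\ev_1),\height(\ev_2))}
  \;\le\; 2\bigl(3 + \ldom(\rr)\bigr)^{1 + \height(\ev_h)},
\]
which is exactly the desired inequality. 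The real content is inherited from Theorem~\ref{thm:bound-brr}: that theorem is proved by unfolding $\trans{}$ through Prop.~\ref{prop:ctrans-iml} (applicable because BRR is forward complete, Theorem~\ref{thm:brr-fc}) into a combination of initial evidence and gradual meet, bounding the height of each via Lemma~\ref{lem:height-brr}, and then converting the resulting height bound into a size bound with the size-versus-height estimate for BRR types.

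The only things this corollary itself must get right are the bookkeeping convention for $\ldom(\rr)$ — chosen precisely so that $\ldom(\ev) \le \ldom(\rr)$ holds for every $\ev$ in $\rr$ — and the elementary monotonicity of $g$. For the bound to govern an actual $\text{RL}^{+}$ run (as required by Prop.~\ref{prop:ev-bound}), I would additionally remark that reduction enlarges neither quantity: the inversion operators $\idom$, $\icod$, $\iproj$ only project on the components of a pair of BRR types and so neither raise $\height$ nor add labels, and $\trans{}$ does not raise $\height$ (Lemma~\ref{lem:height-brr}) nor introduce labels absent from its operands (a short induction on the recursive definition of $\trans{}$); hence every evidence object appearing at any point during evaluation of $\rr$ has $\height \le \height(\ev_h)$ and label domain within that of $\rr$, so the displayed bound holds uniformly and instantiates the bound function $B$ of Prop.~\ref{prop:ev-bound} as $B(\rr) = 2\,(3 + \ldom(\rr))^{1 + \height(\ev_h)}$. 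The main obstacle is therefore entirely upstream — in Lemma~\ref{lem:height-brr} and Theorem~\ref{thm:bound-brr}, which rest on forward completeness and inductions over the well-formed-evidence and well-formed-mapping judgments (discharged in the accompanying Coq development) — and not in this final step.
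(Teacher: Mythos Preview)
Your proposal is correct and takes essentially the same approach as the paper: the paper states this theorem without a formal proof, offering only the one-sentence justification ``we can establish a bound for any evidence composition happening in a program by collecting the greatest height and domain, respectively,'' which is precisely the argument you spell out in detail via monotonicity of the bounding expression and Theorem~\ref{thm:bound-brr}. Your additional remarks about reduction not enlarging height or label domain go beyond what the theorem itself requires but are exactly what is needed to connect it to Prop.~\ref{prop:ev-bound}, as the paper does immediately afterward.
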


We can use this bound as a proof of
Prop.~\ref{prop:ev-bound}. Combining this with
Theorems~\ref{thm:brr-fc} and~\ref{thm:fc-assoc} as a proof of
Prop.~\ref{prop:ev-assoc}, thus we have a space efficient semantics.

\section{Discussion}

As part of their analysis of counter-example guided abstraction refinement,
\citet{giacobazzi01incomplete} develop a constructive approach to
systematically refine abstractions to be complete with respect to an operation.
Though we developed BRR manually, future AGT-based designs could employ their
techniques to more systematically produce a forward complete evidence
abstraction starting from the default proposed by \citet{garciaAl:popl2016}.

Space efficiency is but one operational criteria of gradually-typed languages.
In future work we will develop generalized techniques for reducing the density
of evidence objects in programs.  Using AGT off-the-shelf leads to a semantics
that saturates even precisely typed code with evidence.  We believe that the
refined notions of evidence in this work may help with understanding how to
minimize the amount of runtime checks in programs.

As has been observed by others~\citep{toro18secref,toro19parametricity}, the
guarantees provided automatically by AGT are not sufficient to yield a
satisfactory gradually typed language.  Strong type-based reasoning may
necessitate additional tuning and proof obligations.  The two works cited above
demonstrate this in the context of relational type invariants like
parametricity and information-flow control.  Furthermore, the evidence
composition developed by \citet{toro18secref} is associative, but this alone
did not suffice to achieve their noninterference result.  This is the first
work to demonstrate that an AGT-based gradual type system that enforces
\emph{non-relational} type invariants can need additional tuning.  However, in
the non-relational case, forward-completeness suffices.  Nonetheless, these
experiences suggest a desire for additional properties beyond those proposed
by~\citet{siek15criteria}.  It is worth noting that blame, another facet of
gradual typing, does not appear to help here: blame theorems do not indicate
which programs should fail, only which parties may or may not be responsible
when they do.

\added{
  In this paper we have grounded our exposition in the
  concrete example of Bounded Rows and Records (BRR). While
  forward-completeness guarantees associativity of composition, and
  thus suffices for an observationally equivalent
  semantics that is \emph{bound for stack}, it does not guarantee
  space-efficiency on its own, as one must still prove that evidence
  composition is itself bounded. In the particular case of BRR, this
  last proof is supported by the fact that the underlying static
  subtyping is a transitive and syntax-directed relation. Our work
  should transfer to other typing disciplines with these constraints,
  and provides more precise guidelines for future AGT developments
  to systematically provide space-efficiency.  To the best of our
  knowledge, we are the first to prove space
  efficiency in a gradual language with records (and record
  subtyping).
}

\section{Conclusion}

Abstracting gradual typing provides a powerful scaffold, based in the theory of
abstract interpretation, that helps designers construct gradually typed
variants of static type disciplines.  This paper draws on additional concepts
from abstract interpretation to inform and improve those designs.  In doing so,
we provide a deeper analysis of the concept of evidence in AGT, and devise a
general, though not necessarily universally feasible, tool for constructing
space-efficient semantics that precisely enforce the invariants implied by
source gradual types.
We propose that, regardless of the source gradual type abstraction, a
forward-complete runtime abstraction should be the new default when possible.
Space-efficiency, and \deleted{predictable} precise runtime semantics are sufficient
motivation.

\begin{acks}

  \added{
  We acknowledge the support of the Natural Sciences and Engineering
  Research Council of Canada \grantsponsor{NSERC}{(NSERC)}{}.
  Cette recherche a \'et\'e financ\'ee par le Conseil de recherches en
  sciences naturelles et 
  en g\'enie du Canada \grantsponsor{CRSNG}{(CRSNG)}{}.
  }
  \added{
  The authors also thank Phil Wadler, Yuchong Pan, Peter
  Thiemann, and the anonymous reviewers.
}

\end{acks}

\clearpage
\bibliography{references}{}

\clearpage
\appendix

\nobalance

\section*{About these Appendixes}

These appendixes provide a variety of definitions and proofs that could not
appear in the paper for lack of space.  Not all of the proofs undertaken in
this work have been typeset.

\newcommand{\tlet}{\ensuremath{\mathsf{let\;}}}
\newcommand{\tin}{\ensuremath{\mathsf{\;in\;}}}

\section{$\tlet$ Typing and Translation Judgments}

\begin{small}
  \begin{mathpar}
    \inference[($\cT$let)]{
      \Gamma |- t_1 : \cT_1 & 
      \cT_1 \csub \cT \\
      \Gamma, x:\cT |- t_2 : \cT_2
    }{
      \Gamma |- \tlet x : \cT = t_1 \tin t_2 : \cT_2
    }
    \and
    \inference[($\cT$let)]{
      \rr_1 \derivof \Gamma |- t_1 : \cT_1 & 
      \ev_1 \evidenceof \cT_1 \csub \cT \\
      \rr_2 \derivof \Gamma, x:\cT |- t_2 : \cT_2
    }{
      \tlet x = \cast{\ev_1}{\rr_1} \tin \rr_2 \derivof
      \Gamma |- \tlet x : \cT = t_1 \tin t_2 : \cT_2
    }
    \and
    \inference[($\leadsto$let)]{
      \Gamma |- t_1 \leadsto \rr_1 : \cT_1 & 
      \ev_1 = \Isub|[ \cT_1 \csub \cT  |] \\
      \Gamma, x:\cT |- t_2 \leadsto \rr_2 : \cT_2
    }{
      \Gamma |- \tlet x : \cT = t_1 \tin t_2
      \leadsto \tlet x = \cast{\ev_1}{\rr_1} \tin \rr_2 : \cT_2
    }
  \end{mathpar}
\end{small}

\section{$\GTFLcsub$ Operator Definitions}
\label{sec:ctrans}

This section presents the full algorithmic definitions for some $\GTFLcsub$
operators that were omitted from the paper for lack of space.

Fig.~\ref{fig:gtfl-csubjoin} presents consistent subtype join and meet, which
are used to define the type judgment for source and runtime
programs.

\begin{figure}
  \centering
  \begin{small}
    \begin{displaymath}
      \begin{block}
        \boxed{\cT \csubjoin \cT}\quad \textbf{Consistent Subtype Join}\\[0.5em]
        \csubjoin : \GType \times \GType \rightharpoonup \GType \\[0.5em]
        \cT_1 \csubjoin \cT_2 = \cT_2 \csubjoin \cT_1 \\[0.5em]
        \? \csubjoin \? = \? \\[0.5em]
        \Int \csubjoin \Int = \Int \\[0.5em]
        \Int \csubjoin \? = \Int \\[0.5em]
        \Bool \csubjoin \Bool = \Bool \\[0.5em]
        \Bool \csubjoin \? = \Bool \\[0.5em]
        (\cT_{11} -> \cT_{12}) \csubjoin (\cT_{21} -> \cT_{22}) =\\ 
        \hspace{2cm}
        (\cT_{11} \csubmeet \cT_{21}) -> (\cT_{12} \csubjoin \cT_{22}) \\[0.5em]
        (\cT_{11} -> \cT_{12}) \csubjoin \? = 
        (\cT_{11} -> \cT_{12}) \csubjoin (\? -> \?) \\[0.5em]
        \rec{\overline{l_i:\cT_{i1}},*} \csubjoin \? = 
        \rec{\overline{l_i:\cT_{i1}},*} \csubjoin \rec{\?} \\[0.5em]
        \rec{\overline{l_i:\cT_{i1}},\overline{l_j:\cT_j}} \csubjoin
        \rec{\overline{l_i:\cT_{i2}},\overline{l_k:\cT_k}} = \\
        \hspace{4cm}
        \rec{\overline{l_i:\cT_{i1} \csubjoin \cT_{i2}}} \\[0.5em]
        \rec{\overline{l_i:\cT_{i1}}, *} \csubjoin
        \rec{\overline{l_i:\cT_{i2}},\overline{l_k:\cT_k},\?} = 
        \rec{\overline{l_i:\cT_{i1} \csubjoin \cT_{i2}}, *} \\[0.5em]
        \rec{\overline{l_i:\cT_{i1}},\overline{l_j:\cT_j}^{+},*} \csubjoin
        \rec{\overline{l_i:\cT_{i2}},\overline{l_k:\cT_k},\?} = \\
        \hspace{4cm}
        \rec{\overline{l_i:\cT_{i1} \csubjoin \cT_{i2}},\?} \\[0.5em]
        \\ 
        \cT \csubjoin \cT \text{ undefined otherwise} \\[0.5em]
      \end{block}  
      \qquad
      \begin{block}
        \boxed{\cT \csubmeet \cT}\quad \textbf{Consistent Subtype Meet}\\[0.5em]
        \csubmeet : \GType \times \GType \rightharpoonup \GType \\[0.5em]
        \cT_1 \csubmeet \cT_2 = \cT_2 \csubmeet \cT_1 \\[0.5em]
        \? \csubmeet \? = \? \\[0.5em]
        \Int \csubmeet \Int = \Int \\[0.5em]
        \Int \csubmeet \? = \Int \\[0.5em]
        \Bool \csubmeet \Bool = \Bool \\[0.5em]
        \Bool \csubmeet \? = \Bool \\[0.5em]
        (\cT_{11} -> \cT_{12}) \csubmeet (\cT_{21} -> \cT_{22}) =\\ 
        \hspace{2cm}
        (\cT_{11} \csubjoin \cT_{21}) -> (\cT_{12} \csubmeet \cT_{22}) \\[0.5em]
        (\cT_{11} -> \cT_{12}) \csubmeet \? = 
        (\cT_{11} -> \cT_{12}) \csubmeet (\? -> \?) \\[0.5em]
        \rec{\overline{l_i:\cT_{i1}},*} \csubmeet \? = 
        \rec{\overline{l_i:\cT_{i1}},*} \csubmeet \rec{\?} \\[0.5em]
        \rec{\overline{l_i:\cT_{i1}},\overline{l_j:\cT_j}} \csubmeet
        \rec{\overline{l_i:\cT_{i2}},\overline{l_k:\cT_k}} =\\
        \hspace{2cm}
        \rec{\overline{l_i:\cT_{i1} \csubmeet \cT_{i2}},
          \overline{l_j:\cT_j},\overline{l_k:\cT_k}}
        \\[0.5em]
        \rec{\overline{l_i:\cT_{i1}},\overline{l_j:\cT_j},\?} \csubmeet
        \rec{\overline{l_i:\cT_{i2}},\overline{l_k:\cT_k}} = \\
        \hspace{2cm}
        \rec{\overline{l_i:\cT_{i1} \csubmeet \cT_{i2}},
          \overline{l_j:\cT_j},\overline{l_k:\cT_k \csubmeet \?},\?}\\[0.5em]
        \rec{\overline{l_i:\cT_{i1}},\overline{l_j:\cT_j},\?} \csubmeet
        \rec{\overline{l_i:\cT_{i2}},\overline{l_k:\cT_k},\?} = \\
        \hspace{2cm}
        \rec{\overline{l_i:\cT_{i1} \csubmeet \cT_{i2}},
          \overline{l_j:\cT_j \csubmeet \?},
          \overline{l_k:\cT_k \csubmeet \?},\?}\\[0.5em]
        \cT \csubmeet \cT \text{ undefined otherwise} \\[0.5em]
      \end{block}  
    \end{displaymath}
  \end{small}
  \caption{$\GTFLcsub$: Consistent Subtype Extrema}
  \label{fig:gtfl-csubjoin}
\end{figure}

Fig.~\ref{fig:gtfl-initial-evidence} introduces an Inductive
Definition of Initial Evidence, and Fig.~\ref{fig:meet} introduces
an Inductive Definition of gradual meet.  With these two, given
Prop.~\ref{prop:ctrans-iml}, we can produce an inductive
definition of consistent transitivity.

\begin{figure}
  \begin{small}
    \centering
    \flushleft\boxed{\cT \meet \cT}\quad\textbf{Gradual Meet}
    \begin{align*}
      \cT_1 \meet \cT_2 &= \cT_2 \meet \cT_1 \\
      \? \meet \? &= \? \\
      \Int \meet \Int &= \Int \\
      \Bool \meet \Bool &= \Bool\\
      \cT \meet \? &= \cT \\
      (\cT_{11} -> \cT_{12}) \meet (\cT_{21} -> \cT_{22}) &= 
      (\cT_{11} \meet \cT_{21}) -> (\cT_{12} \meet \cT_{22}) \\
      \rec{\overline{l_i:\cT_{i1}},*_1} \meet \rec{\overline{l_i:\cT_{i2}},*_2}
      &= \rec{\overline{l_i:\cT_{i1} \meet \cT_{i1}},{*_1} \meet {*_2}} \qquad
      {*_1} \meet {*_2} =
      \begin{cases}
        ? & \braket{*_1,*_2} = \braket{?,?} \\
        \emptyset & \text{otherwise}
      \end{cases}
      \\
      \rec{\overline{l_i:\cT_{i1}},\overline{l_j:\cT_j}^{+},*_1} \meet
      \rec{\overline{l_i:\cT_{i2}},\?}
      &= 
      \rec{\overline{l_i:\cT_{i1}},\overline{l_j:\cT_j}^{+},*_1} \meet
      \rec{\overline{l_i:\cT_{i2}},\overline{l_j:\?}^{+},\?}
      \\
      \rec{\overline{l_i:\cT_{i1}},\overline{l_j:\cT_j}^{+},\?} \meet
      \rec{\overline{l_i:\cT_{i2}},\overline{l_k:\cT_k}^{+},\?}
      &= 
      \rec{\overline{l_i:\cT_{i1}},\overline{l_j:\cT_j}^{+},
        \overline{l_k:\?}^{+},\?} \meet
      \rec{\overline{l_i:\cT_{i2}},\overline{l_j:\?}^{+},
        \overline{l_k:\cT_k}^{+},\?}
      \\
      \cT_1 \meet \cT_2 &\;\text{undefined otherwise}
    \end{align*}
  \end{small}
  \caption{Gradual Meet}
  \label{fig:meet}
\end{figure}

\begin{figure}
  \begin{small}
    \flushleft\boxed{\Isub|[ \cT \csub \cT |] :
      \GType \times \GType \rightharpoonup \Ev{}}
    \quad\textbf{Initial Evidence}
    \begin{align*}
      \Isub|[ \cT \csub \cT |] &= \evpr{\cT,\cT} & \cT \in \set{\Int,\Bool,\?} \\
      \Isub|[ \cT \csub \? |]  &= \evpr{\cT,\cT} &
      \cT \in \set{\Int,\Bool,\rec{*}} \\
      \Isub|[ \? \csub \cT |] &= \evpr{\cT,\cT} & \cT \in \set{\Int,\Bool} \\
      \Isub|[ \cT_{11} -> \cT_{12} \csub \?|] &= 
      \Isub|[ \cT_{11} -> \cT_{12} \csub  \? -> \?|] \\
      \Isub|[ \? \csub \cT_{21} -> \cT_{22} |] &=
      \Isub|[ \? -> \? \csub \cT_{21} -> \cT_{22} |] \\
      \Isub|[\cT_{11} -> \cT_{12} \csub \cT_{21} -> \cT_{22} |] &= 
      \evpr{\cT'_{11} -> \cT'_{12}, \cT'_{21} -> \cT'_{22}} &
      \begin{block}
        \Isub|[\cT_{21} \csub \cT_{11}|] = \evpr{\cT'_{21},\cT'_{11}} \\
        \Isub|[\cT_{12} \csub \cT_{22}|] = \evpr{\cT'_{12},\cT'_{22}}
      \end{block}\\
      \Isub|[ \? \csub \rec{\overline{l_i:\cT_i},*} |] &= 
      \Isub|[ \rec{\?} \csub \rec{\overline{l_i:\cT_i},*} |] \\
      \Isub|[ \rec{\overline{l_i:\cT_i}^{+},*} \csub \? |] &= 
      \evpr{\rec{\overline{l_i:\cT_i}^{+},*},\rec{\?}} \\
      \Isub|[ \rec{\overline{l_i:\cT_{i1}}} \csub
      \rec{\overline{l_i:\cT_{i2}},*} |] &=
      \evpr{\rec{\overline{l_i:\cT'_{i1}}}, \rec{\overline{l_i:\cT'_{i2}}}}&
      \overline{\Isub|[ \cT_{i1} \csub \cT_{i2} |] = \evpr{\cT'_{i1},\cT'_{i2}}}\\
      \Isub|[ \rec{\overline{l_i:\cT_{i1}},\?} \csub
      \rec{\overline{l_i:\cT_{i2}},*} |] &=
      \evpr{\rec{\overline{l_i:\cT'_{i1}},\?}, \rec{\overline{l_i:\cT'_{i2}},*}} &
      \overline{\Isub|[ \cT_{i1} \csub \cT_{i2} |] = \evpr{\cT'_{i1},\cT'_{i2}}} \\
      \Isub|[ \rec{\overline{l_i:\cT_{i1}},\overline{l_j:\cT_j}^{+},*_1} \csub
      \rec{\overline{l_i:\cT_{i2}},*_2} |] &=
      \evpr{\rec{\overline{l_i:\cT'_{i1}},\overline{l_j:\cT_j}^{+},*_1},
        \rec{\overline{l_i:\cT'_{i2}},*_2}}
      &
      \overline{\Isub|[ \cT_{i1} \csub \cT_{i2} |] = \evpr{\cT'_{i1},\cT'_{i2}}} \\
      \Isub|[ \rec{\overline{l_i:\cT_{i1}},\overline{l_j:\cT_j},\?} \csub
      \rec{\overline{l_i:\cT_{i2}},\overline{l_k:\cT_k}^{+},*} |] &=\\
      &\hspace{-1.5in}
      \Isub|[ \rec{\overline{l_i:\cT_{i1}},\overline{l_j:\cT_j},
        \overline{l_k:\?}^{+},\?} \csub
      \rec{\overline{l_i:\cT_{i2}},\overline{l_k:\cT_k}^{+},*} |] \\
      \Isub|[ S_1 \csub S_2 |] &\text{ undefined otherwise}
    \end{align*}
  \end{small}
  \caption{$\GTFLcsub$: Definition of Initial Evidence}
  \label{fig:gtfl-initial-evidence}
\end{figure}

Figures~\ref{fig:gr_ctrans1}~and~\ref{fig:gr_ctrans2} present a full direct
equational definition of consistent transitivity for the original gradual
rows-based representation.  This definition was calculated from the AGT-based
definition, then used to prove Prop.~\ref{prop:ctrans-iml}.

\begin{figure}[h]
  \centering
  \begin{small}
    \flushleft\boxed{\trans{} : \Ev{} \times \Ev{} \rightharpoonup \Ev{}}
    \quad\textbf{Consistent Transitivity}
    \begin{align*}
      \evpair{\?}{\?} \trans{} \evpair{\?}{\?} &= \evpair{\?}{\?} \\
      \evpair{\cT}{\cT} \trans{} \evpair{\?}{\?} &= \evpair{\cT}{\cT}
      \qquad \text{where}\;\cT \in \set{\Int,\Bool} \\
      \evpair{\?}{\?} \trans{} \evpair{\cT}{\cT} &= \evpair{\cT}{\cT}
      \qquad \text{where}\;\cT \in \set{\Int,\Bool} \\
      \evpair{\cT_{11} -> \cT_{12}}{\cT_{21} -> \cT_{22}} \trans{} 
      \evpair{\?}{\?} 
      &= 
      \evpair{\cT_{11} -> \cT_{12}}{\cT_{21} -> \cT_{22}} \trans{} \
      \evpair{\? -> \?}{\? -> \?} \\
      \evpair{\?}{\?} \trans{} 
      \evpair{\cT_{11} -> \cT_{12}}{\cT_{21} -> \cT_{22}} &=
      \evpair{\? -> \?}{\? -> \?} \trans{} \
      \evpair{\cT_{11} -> \cT_{12}}{\cT_{21} -> \cT_{22}} \\
      \evpair{\?}{\?} \trans{}
      \evpair{\rec{\overline{l_i:\cT_i},*_1}}{\rec{\overline{l_j:\cT_j},*_2}} &=
      \evpair{\rec{\?}}{\rec{\?}} \trans{}
      \evpair{\rec{\overline{l_i:\cT_i},*_1}}{\rec{\overline{l_j:\cT_j},*_2}} \\
      \evpair{\rec{\overline{l_i:\cT_i},*_1}}{\rec{\overline{l_j:\cT_j},*_2}}
      \trans{} \evpair{\?}{\?} &=
      \evpair{\rec{\overline{l_i:\cT_i},*_1}}{\rec{\overline{l_j:\cT_j},*_2}}
      \trans{} \evpair{\rec{\?}}{\rec{\?}}
    \end{align*}
    \begin{gather*}
      \evpair{\cT}{\cT} \trans{} \evpair{\cT}{\cT} = \evpair{\cT}{\cT}
      \qquad \text{where}\;\cT \in \set{\Int,\Bool} \\
      \evpair{\cT_{11} -> \cT_{12}}{\cT_{21} -> \cT_{22}} \trans{} 
      \evpair{\cT_{31} -> \cT_{32}}{\cT_{41} -> \cT_{42}} 
      = \evpair{\cT_{51} -> \cT_{52}}{\cT_{61} -> \cT_{62}} \\
      \qquad\text{where}\;
      \begin{block}
        \evpair{\cT_{41}}{\cT_{31}} \trans{} \evpair{\cT_{21}}{\cT_{11}} 
        = \evpair{\cT_{61}}{\cT_{51}}, \quad
        \evpair{\cT_{12}}{\cT_{22}} \trans{} \evpair{\cT_{32}}{\cT_{42}} 
        = \evpair{\cT_{52}}{\cT_{62}}
      \end{block}
    \end{gather*}
    \begin{gather*}
      \evpair{\rec{\overline{l_i:\cT_{i1}},*_1}}%
      {\rec{\overline{l_i:\cT_{i2}},*_2}}
      \trans{}
      \evpair{\rec{\overline{l_i:\cT_{i3}},*_3}}%
      {\rec{\overline{l_i:\cT_{i4}},*_4}}
      = 
      \evpair{\rec{\overline{l_i:\cT_{i5}},*_5}}%
      {\rec{\overline{l_i:\cT_{i6}},*_6}}
      \\
      \text{where}\;
      \begin{array}[t]{c}
        \overline{\evpair{\cT_{i1}}{\cT_{i2}} \trans{} \evpair{\cT_{i3}}{\cT_{i4}}
          = \evpair{\cT_{i5}}{\cT_{i6}}},
      \end{array}
      \quad
      \begin{array}[c]{c|c|c}
        \braket{*_1,*_2} &\braket{*_3,*_4}& = \braket{*_5,*_6} \\
        \hline
        \braket{\emptyset,\emptyset} & \braket{*_3,*_4} &
        \braket{\emptyset,\emptyset} \\
        \braket{\?,\?} &\braket{\?,\?}&\braket{\?,\?} \\
        \multicolumn{2}{c|}{\text{else}} & \braket{\?,\emptyset} \\
        \hline
      \end{array}
    \end{gather*}
    %
    \begin{multline*}
      %
      \evpair{\rec{\overline{l_i:\cT_{i1}}, \overline{l_j:\cT_{j1}}^{+},*_1}}%
      {\rec{\overline{l_i:\cT_{i2}},\overline{l_j:\cT_{j2}}^{+},*_2}}
      \trans{}
      \evpair{\rec{\overline{l_i:\cT_{i3}},\?}}%
      {\rec{\overline{l_i:\cT_{i4}},*_4}} = \\
      \evpair{\rec{\overline{l_i:\cT_{i1}}, \overline{l_j:\cT_{j1}}^{+},*_1}}%
      {\rec{\overline{l_i:\cT_{i2}},\overline{l_j:\cT_{j2}}^{+},*_2}}
      \trans{}
      \evpair{\rec{\overline{l_i:\cT_{i3}},\overline{l_j:\?}^{+},\?}}%
      {\rec{\overline{l_i:\cT_{i4}},*_4}}
    \end{multline*}
    %
    %
    \begin{multline*}
      %
      \evpair{\rec{\overline{l_i:\cT_{i1}},\?}}%
      {\rec{\overline{l_i:\cT_{i2}},\?}}
      \trans{}
      \evpair{\rec{\overline{l_i:\cT_{i3}}, \overline{l_k:\cT_{k3}}^{+},*_3}}%
      {\rec{\overline{l_i:\cT_{i4}},\overline{l_k:\cT_{k4}}^{+},*_4}}
      =  \\
      \evpair{\rec{\overline{l_i:\cT_{i1}},\overline{l_k:\?}^{+},\?}}%
      {\rec{\overline{l_i:\cT_{i2}},\overline{l_k:\?}^{+},\?}}
      \trans{}
      \evpair{\rec{\overline{l_i:\cT_{i3}}, \overline{l_k:\cT_{k3}}^{+},*_3}}%
      {\rec{\overline{l_i:\cT_{i4}},\overline{l_k:\cT_{k4}}^{+}*_4}}
    \end{multline*}
    %
    %
    \begin{multline*}
      %
      \evpair{\rec{\overline{l_i:\cT_{i1}},\overline{l_j:\cT_{j1}}^{+},\?}}%
      {\rec{\overline{l_i:\cT_{i2}},\overline{l_j:\cT_{j2}}^{+},\?}}
      \trans{}
      \evpair{\rec{\overline{l_i:\cT_{i3}},\overline{l_k:\cT_{k3}}^{+},\?}}%
      {\rec{\overline{l_i:\cT_{i4}},\overline{l_k:\cT_{k4}}^{+},*_4}}
      = \\
      \evpair{\rec{\overline{l_i:\cT_{i1}},\overline{l_j:\cT_{j1}}^{+},
          \overline{l_k:\?}^{+},\?}}%
      {\rec{\overline{l_i:\cT_{i2}},\overline{l_j:\cT_{j2}}^{+},
          \overline{l_k:\?}^{+},\?}}
      \trans{}
      \evpair{\rec{\overline{l_i:\cT_{i3}},\overline{l_k:\cT_{k3}},
          \overline{l_j:\?}^{+},\?}}%
      {\rec{\overline{l_i:\cT_{i4}},\overline{l_k:\cT_{k4}},*_4}}
    \end{multline*}
  \end{small}%
  \caption{Consistent Transitivity: Part 1}
  \label{fig:gr_ctrans1}
\end{figure}

\begin{figure}[h]
  \centering
  \begin{small}
    \flushleft\boxed{\trans{} : \Ev{} \times \Ev{} \rightharpoonup \Ev{}}
    \quad\textbf{Consistent Transitivity (cont'd.)}
    %
    \begin{multline*}
      %
      \evpair{\rec{\overline{l_i:\cT_{i1}},\overline{l_j:\cT_{j1}}^{\oplus_j},%
          \overline{l_k:\cT_{k}}^{\oplus_k},*_1}}%
      {\rec{\overline{l_i:\cT_{i2}},\overline{l_j:\cT_{j2}}^{\oplus_j}}}
      \trans{} 
      \evpair{\rec{\overline{l_i:\cT_{i3}},\overline{l_j:\cT_{j3}}^{\oplus_j}}}%
      {\rec{\overline{l_i:\cT_{i4}},*_4}} = \\
      \hspace{6cm}
      \evpair{\rec{\overline{l_i:\cT_{i5}},\overline{l_j:\cT_{j5}}^{\oplus_j},%
          \overline{l_k:\cT_{k}}^{\oplus_k},*_1}}%
      {\rec{\overline{l_i:\cT_{i6}},*_4}}
    \end{multline*}
    \begin{multline*}
      %
      \evpair{\rec{\overline{l_i:\cT_{i1}},\overline{l_j:\cT_{j1}}^{\oplus_j},%
          \overline{l_q:\cT_{q1}},\overline{l_k:\cT_k}^{\oplus_k},*_1}}%
      {\rec{\overline{l_i:\cT_{i2}},\overline{l_j:\cT_{j2}}^{^{\oplus_j}},%
          \overline{l_q:\cT_{q2}}}}
      \trans{}
      \evpair{\rec{\overline{l_i:\cT_{i3}},\overline{l_j:\cT_{j3}}^{\oplus_j},\?}}%
      {\rec{\overline{l_i:\cT_{i4}},*_4}}
      =  \\
      \evpair{\rec{\overline{l_i:\cT_{i5}},\overline{l_j:\cT_{j5}}^{\oplus_j},%
          \overline{l_q:\cT_{q1}},\overline{l_k:\cT_k}^{\oplus_k},*_1}}%
      {\rec{\overline{l_i:\cT_{i6}},*_4}}
    \end{multline*}
    \begin{multline*}
      %
      \evpair{\rec{\overline{l_i:\cT_{i1}},\overline{l_m:\cT_{m1}},%
          \overline{l_j:\cT_{j1}}^{\oplus_j},\overline{l_n:\cT_{n1}}^{\oplus_n},%
          \overline{l_k:\cT_k}^{\oplus_k},*_1}}%
      {\rec{\overline{l_i:\cT_{i2}},\overline{l_j:\cT_{j2}}^{\oplus_j},\?}}
      \trans{} \\
      \evpair{\rec{
          \overline{l_i:\cT_{i3}},\overline{l_m:\cT_{m3}},\overline{l_p:\cT_{p3}},
          \overline{l_j:\cT_{j3}}^{\oplus_j},\overline{l_n:\cT_{n3}}^{\oplus_n},
          \overline{l_r:\cT_{r3}}^{\oplus_r},*_3}}%
      {\rec{\overline{l_i:\cT_{i4}},\overline{l_m:\cT_{m4}},
          \overline{l_p:\cT_{p4}},*_4}} = \\
      \evpair{\rec{
          \overline{l_i:\cT_{i5}},\overline{l_m:\cT_{m5}},\overline{l_p:\cT_{p5}},
          \overline{l_j:\cT_{j5}}^{\oplus_j},\overline{l_n:\cT_{n5}}^{\oplus_n},
          \overline{l_r:\cT_{r5}}^{\oplus_r},
          \overline{l_k:\cT_k}^{\oplus_k},*_1}}%
      {\rec{\overline{l_i:\cT_{i6}},\overline{l_m:\cT_{m6}},
          \overline{l_p:\cT_{p6}},*_4}}
    \end{multline*}
    \begin{gather*}
      *_1 = \? \;\text{if}\; \set{\overline{l_p},\overline{l_r}^{\oplus_r}}
      \neq \emptyset 
    \end{gather*}

    \begin{multline*}
      %
      \evpair{\rec{\overline{l_i:\cT_{i1}},\overline{l_m:\cT_{m1}},
          \overline{l_j:\cT_{j1}}^{\oplus_j},\overline{l_n:\cT_{n1}}^{\oplus_n},
          \overline{l_q:\cT_{q1}}^{+},\overline{l_k:\cT_k}^{\oplus_k},*_1}}%
      {\rec{\overline{l_i:\cT_{i2}},\overline{l_j:\cT_{j2}}^{\oplus_j},
          \overline{l_q:\cT_{q2}}^{+},\?}}
      \trans{} \\
      \evpair{\rec{
          \overline{l_i:\cT_{i3}},\overline{l_m:\cT_{m3}},\overline{l_p:\cT_{p3}},
          \overline{l_j:\cT_{j3}}^{\oplus_j},\overline{l_n:\cT_{n3}}^{\oplus_n},
          \overline{l_r:\cT_{r3}}^{\oplus_r},\?}}%
      {\rec{\overline{l_i:\cT_{i4}},\overline{l_m:\cT_{m4}},
          \overline{l_p:\cT_{p4}},*_4}}
    \end{multline*}
    \vspace{-1.5em}
    \begin{gather*}
      =
    \end{gather*}
    \vspace{-1.5em}
    \begin{multline*}
      \evpair{\rec{\overline{l_i:\cT_{i1}},\overline{l_m:\cT_{m1}},
          \overline{l_j:\cT_{j1}}^{\oplus_j},\overline{l_n:\cT_{n1}}^{\oplus_n},
          \overline{l_q:\cT_{q1}}^{+},\overline{l_k:\cT_k}^{\oplus_k},*_1}}%
      {\rec{\overline{l_i:\cT_{i2}},\overline{l_j:\cT_{j2}}^{\oplus_j},
          \overline{l_q:\cT_{q2}}^{+},\?}}
      \trans{} \\
      \evpair{\rec{
          \overline{l_i:\cT_{i3}},\overline{l_m:\cT_{m3}},\overline{l_p:\cT_{p3}},
          \overline{l_j:\cT_{j3}}^{\oplus_j},\overline{l_n:\cT_{n3}}^{\oplus_n},
          \overline{l_r:\cT_{r3}}^{\oplus_r},\overline{l_q:\?}^{+},\?}}%
      {\rec{\overline{l_i:\cT_{i4}},\overline{l_m:\cT_{m4}},
          \overline{l_p:\cT_{p4}},*_4}}
    \end{multline*}

    \begin{align*}
      \text{where}\;
      &\overline{\evpair{\cT_{i1}}{\cT_{i2}} \trans{}
        \evpair{\cT_{i3}}{\cT_{i4}} 
        = \evpair{\cT_{i5}}{\cT_{i6}}},
      &\overline{\evpair{\cT_{j1}}{\cT_{j2}} \trans{} \evpair{\cT_{j3}}{\cT_{j3}}
        = \evpair{\cT_{j5}}{\cT_{j6}}} \\
      &\overline{\Isub |[\cT_{m1} \csub \? |] \trans{}
        \evpair{\cT_{m3}}{\cT_{m4}} 
        = \evpair{\cT_{m5}}{\cT_{m6}}},
      &\overline{\Isub |[\cT_{n1} \csub \? |] \trans{}
        \evpair{\cT_{n3}}{\cT_{n3}} 
        = \evpair{\cT_{n5}}{\cT_{n6}}} \\ 
      &\overline{\evpair{\?}{\?} \trans{} \evpair{\cT_{p3}}{\cT_{p4}} 
        = \evpair{\cT_{p5}}{\cT_{p6}}},
      &\overline{\evpair{\?}{\?} \trans{} \evpair{\cT_{r3}}{\cT_{r3}} 
        = \evpair{\cT_{r5}}{\cT_{r6}}} \\ 
      &\braket{\oplus_{j+n+r},\oplus_k} \in
      \set{\braket{\emptyset,+},\braket{+,\emptyset},\braket{+,+}},
      &\oplus_{j+n+r} =
      \begin{cases}
        \oplus_j & \oplus_n,\oplus_r\;\text{do not appear} \\
        \emptyset &
        \braket{\oplus_j,\oplus_n,\oplus_r} =
        \braket{\emptyset,\emptyset,\emptyset} \\
        + & \text{otherwise}
      \end{cases}
    \end{align*}
    \begin{gather*}
      \evpair{\cT_1}{\cT_2} \trans{} \evpair{\cT_3}{\cT_4}\;
      \text{undefined otherwise}
    \end{gather*}
    \begin{mathpar}
    \end{mathpar}
  \end{small}%
  \caption{Consistent Transitivity: Part 2}
  \label{fig:gr_ctrans2}
\end{figure}

\clearpage

\section{Extrinsically- and Intrinsically-typed Terms are Equivalent }
\label{sec:extr-intr-correspondence}

\begin{figure}
  \centering
  \begin{displaymath}
    x^{\cT} \in \VarT{\cT}
  \end{displaymath}
  \begin{small}
    \begin{mathpar}
      \inference[(I$\cT$n)]{}{n \in \TermT{\Int}}
      \and
      \inference[(I$\cT$b)]{}{b \in \TermT{\Bool}}
      \and
      \inference[(I$\cT$x)]{}{x^{\cT} \in \TermT{\cT}}
      \and
      \inference[(I$\cT$+)]{
        t^{\cT_1} \in \TermT{\cT_1} & \ev_1 |- \cT_1 \csub \Int \\
        t^{\cT_2} \in \TermT{\cT_2} & \ev_2 |- \cT_2 \csub \Int \\
      }{
        \evcast{\ev_1}{t^{\cT_1}} + 
        \evcast{\ev_2}{t^{\cT_2}} \in \TermT{\Int}
      }
      \and
      \inference[(I$\cT\lambda$)]{t^{\cT_2} \in \TermT{\cT_2}
      }{
        \lambda x^{\cT_1}.t^{\cT_2} \in \TermT{\cT_1 -> \cT_2}
      }
      \;
      \inference[(I$\cT$rec)]{
        \overline{t^{\cT_i} \in \TermT{\cT_i}}
      }{
        \rec{\overline{l_i=t^{\cT_i}}} \in 
        \TermT{\rec{\overline{l_i:\cT_i}}}
      }
      \and
      \inference[(I$\cT{::}$)]{
        t^{\cT_1} \in \TermT{\cT_1} &
        \ev |- \cT_1 \csub \cT_2 \\
      }{
        \evcast{\ev}{t^{\cT_1}} :: \cT_2 \in \TermT{\cT_2}
      } 
      \and
      \inference[(I$\cT$app)]{
        t^{\cT_1} \in \TermT{\cT_1} &  \ev_1 |- \cT_1 \csub \cT_{11} -> \cT_{12} \\
        t^{\cT_2} \in \TermT{\cT_2} &
        \ev_2 |- \cT_2 \csub \cT_{11}
      }{
        (\evcast{\ev_1}{t^{\cT_1}})%
        \iapp{\cT_{11} -> \cT_{12}}%
        (\evcast{\ev_2}{t^{\cT_2}})
        \in \TermT{\cT_{12}}
      }  
      \and
      \inference[(I$\cT$proj)]{
        \vspace{0.2em}
        t^{\cT} \in \TermT{\cT} &
        \ev |- \cT \csub \rec{l : \cT_1,\?}
      }{
        \evcast{\ev}{t^{\cT}.l^{\cT_1,}} \in \TermT{\cT_1} 
      }
      \and
      \inference[(I$\cT$if)]{
        t^{\cT_1} \in \TermT{\cT_1} & 
        \ev_1 |- \cT_1 \csub \Bool \\
        t^{\cT_2} \in \TermT{\cT_2} & 
        \ev_2 |- \cT_2 \csub  \cT_2 \mathbin{\scalebox{0.8}{$\csubjoin$}} \cT_3\\
        t^{\cT_3} \in \TermT{\cT_3} &
        \ev_3 |- \cT_3 \csub \cT_2 \mathbin{\scalebox{0.8}{$\csubjoin$}} \cT_3
      }{
        \<if>   \evcast{\ev_1}{t^{\cT_1}} 
        \<then> \evcast{\ev_2}{t^{\cT_2}}
        \<else> \evcast{\ev_3}{t^{\cT_3}}
        \in \TermT{\cT_2 \mathbin{\scalebox{0.6}{$\csubjoin$}} \cT_3}
      }  
    \end{mathpar}
  \end{small}%
  \caption{Gradual Intrinsic Terms}
  \label{fig:intrinsic-terms}
\end{figure}

Sec.~\ref{sec:gtflcsub} of the paper presents the dynamic semantics of
$\GTFLcsub$ in terms of a \emph{runtime language} RL, whose type system
classifies runtime terms using source language typing judgments.  This
presentation appears to differ substantially from the \emph{intrinsic terms} of
\citet{garciaAl:popl2016} (Fig.~\ref{fig:intrinsic-terms}).  This section shows
that the differences are merely cosmetic: well-typed RL terms are equivalent to
intrinsic terms.  In particular, every closed intrinsically typed term
corresponds to a closed extrinsically-typed term with the same type, and
reduction of such terms preserves correspondence.

Though formally interchangeable, we believe that the extrinsic presentation of
$\GTFLcsub$ clarifies some aspects of the language's structure and semantics.
First, the type structure of the runtime language precisely formalizes the
sense in which running a gradually-typed programming language corresponds to
running a source typing derivation, which is reflected in its typing judgment.
Second, RL terms are comprised only of constructs that are needed at runtime.
Any static type information that is needed only for metatheory is relegated to
the typing judgment.  In short, RL strictly segregates computationally
irrelevant source type information (found only in the RL type) from essential
runtime type information (found only in the RL term).  This separation can help
the reader disambiguate what is essential only for typing and what is essential
only for runtime execution.  
Third, this separation highlights the way
that runtime execution uncovers more precise type information (reflected in
evidence) without affecting the static type reasoning (reflected in the typing
judgments).  This deduced type information is not present in corresponding
source terms, which appear as part of the runtime typing judgment.
Finally, the extrinsic presentation more closely connects with
older cast-based presentations of gradual typing, where a distinct and
seemingly unrelated internal language is used define evaluation.  Here, RL
plays that role, but its connection to the source program and its gradual type
discipline is precise and firm.

To clarify the correspondence between extrinsic RL terms and intrinsic terms,
we define a correspondence relation between the two kinds of terms, and prove
that reduction preserves and reflects correspondence.  This correspondence can
be used to prove that theorems about intrinsic terms due to
\citet{garciaAl:popl2016} can be transported to well-typed RL terms.

\begin{figure}
  \centering

  \begin{gather*}
    \begin{block}
      \Omega \in \oblset{CgrCtxt} =
      \set{\Omega \in \VarT{*} \rightleftharpoons \Var \times \GType |
        \braket{x^{*}, x, \cT} \in \Omega \implies x^{*} \in \VarT{\cT}
      }
      \\[0.5ex]
      \Omega ::= \overline{x^\cT \sim x : \cT}
    \end{block}
  \end{gather*}
  \flushleft\boxed{t^s \sim \rr \derivof \Omega |- t : \cT}
  \quad\textbf{Correspondence}
  \begin{mathpar}
    \infer{x^{\cT} \sim \rx \derivof \Omega |- x : \cT
    }{
      x^{\cT} \sim x  : \cT \in \Omega
    }
    \and
    \infer{n \sim \rn \derivof \Omega |- n : \Int }{}
    \and
    \infer{b \sim \rb \derivof \Omega |- b : \Bool }{}
    \and
    \infer{
      (\evcast{\ev_1}{t^{\cT_1}} + 
      \evcast{\ev_2}{t^{\cT_2}})
      \sim 
      (\evcast{\ev_1}{\rr_1} + 
      \evcast{\ev_2}{\rr_2})
      \derivof
      \Omega |- t_1\;t_2 : \Int
    }{
      \overline{t^{\cT_i} \sim \rr_i \derivof \Omega |- t_i : \cT_i}
    }
    \and
    \infer{
      (\evcast{\ev_1}{t^{\cT_1}}%
      \iapp{\cT'_1 -> \cT'_2}%
      \evcast{\ev_2}{t^{\cT_2}})
      \sim
      (\cast{\ev_1}{\rr_1}\;\cast{\ev_2}{\rr_2}) \derivof
      \Omega |- t_1\;t_2 : \Gbox{\cT'_2} 
    }{
      \overline{t^{\cT_i} \sim \rr_i \derivof \Omega |- t_i : \cT_i}
    }
    \and
    \infer{
      \lambda x^{\cT_1}.t^{\cT_2} \sim
      \lambda x.\rr \derivof \Omega |-
      \lambda x : \cT_1. t : \Gbox{\cT_2} 
    }{
      t^{\cT_2} \sim
      \rr\derivof \Omega,\,x \sim x^{\cT_1} : \cT_1 |- t : \Gbox{\cT_2} 
    }
    \and
    \infer{
      \rec{\overline{l_i=t^{\cT_i}}}
      \sim
      \rec{\overline{l_i=\rr_i}}
      \derivof
      \Omega |- \rec{\overline{l_i=t_i}} : \rec{\overline{l_i:\cT_i}}
    }{
      \overline{t^{\cT_i} \sim \rr_i \derivof \Omega |- t_i : \cT_i}
    }
    \and
    \infer{
      \evcast{\ev}{t^{\cT_1}} :: \cT_2
      \sim \evcast{\ev}{\rr}
      \derivof 
      \Omega |- t : \cT_2
    }{
      t^{\cT_1}
      \sim \rr
      \derivof 
      \Omega |- t : \cT_1
    }
    \and
    \infer{
      \evcast{\ev}{t^{\cT}.l^{\cT_1}}
      \sim
      \evcast{\ev}{\rr.l} \derivof \Omega |- t.l : \cT_1
    }{
      t^{\cT} \sim \rr \derivof \Omega |- t : \cT
    }
    \and
    \infer{
      \begin{block}
        \<if>   \evcast{\ev_1}{t^{\cT_1}} 
        \<then> \evcast{\ev_2}{t^{\cT_2}}
        \<else> \evcast{\ev_3}{t^{\cT_3}}
        \sim \\
        \<if>   \evcast{\ev_1}{\rr_1} 
        \<then> \evcast{\ev_2}{\rr_2}
        \<else> \evcast{\ev_3}{\rr_3}
      \end{block}
      \derivof \Omega |- 
      \<if>   t_1
      \<then> t_2
      \<else> t_3
      : {\cT_2 \csubjoin \cT_3}
    }{
      \overline{t^{\cT_i} \sim \rr_i \derivof \Omega |- t_i : \cT_i}
    }
  \end{mathpar}
  
  \caption{Intrinsic-Extrinsic Correspondence}
  \label{fig:correspondence}
\end{figure}

Fig.~\ref{fig:correspondence} defines a \emph{correspondence} relationship
between intrinsically-typed terms and their extrinsic counterparts.  The
judgment $t^{\cT} \sim \rr \derivof \Omega |- t : \cT$ says that given the
typed variable correspondences in $\Omega$, the intrinsic term $t^{\cT}$
corresponds to the extrinsic term $\rr$ as viewed through the runtime typing
judgment $\rr \derivof \Gamma |- t : \cT$, where
$\Gamma = \set{x : \cT | x^{\cT} \sim x : \cT \in \Omega }$, a runtime typing
context from the variable correspondence.
We use the notation $A \rightleftharpoons B \times C$ to denote
subsets of $A \times B \times C$ that exhibit a bijective relationship between
$A$'s and $B$'s when $C$ is omitted.
Formally,
\newcommand{\existsmaybe}{\ensuremath{\exists^{\leq 1}}}

\begin{gather*}
  A \rightleftharpoons B \times C =
  \set{S \in A \times B \times C |
    \left(\forall a \in A.\,\existsmaybe s \in S. \pi_1(s) = a \right)
    \land \left(\forall b \in B.\,\existsmaybe s \in S. \pi_2(s) = b\right)
  } \\
  \text{where}\;
  \existsmaybe x.P(x) \equiv
  (\exists x.P(x)) \implies \exists x.P(x) \land \forall y.P(y) \implies x=y.
\end{gather*}
In other words, $\Omega$ uniquely relates
intrinsic variables to extrinsic variables at the given type.
The well-formedness of the intrinsic term $t^{\cT}$ in the judgment
implicitly imposes invariants on the evidence objects referenced in the
rules.

We formally establish the intended properties of the judgment as follows.
\begin{proposition}
  The extrinsic part of a correspondence is well-typed: let $\Omega$ be some
  variable correspondence, and
  $\Gamma = \set{x : \cT | x^{\cT} \sim x : \cT \in \Omega }$ be its induced
  extrinsic typing context.  Then
  ${t^{\cT} \sim \rr \derivof \Omega |- t : \cT}$ implies
  $\rr \derivof \Gamma |- t : \cT$
\end{proposition}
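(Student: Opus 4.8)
The plan is to prove the proposition by structural induction on the derivation of the correspondence judgment $t^{\cT} \sim \rr \derivof \Omega |- t : \cT$ of Fig.~\ref{fig:correspondence}. The crucial observation is that the correspondence rules stand in one-to-one correspondence with the runtime typing rules of Fig.~\ref{fig:gtflr-statics}: each correspondence rule carries the same extrinsic term $\rr$, the same source term $t$, and the same type $\cT$ in its conclusion as a unique runtime typing rule, and its premises — after erasing the intrinsic terms $t^{\cT_i}$ — are exactly the subterm-typing premises of that runtime rule. Hence in every case the induction hypothesis, applied to the subterm correspondences, yields precisely the runtime typing derivations needed to instantiate the matching runtime typing rule.

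In detail: for the variable case, the side condition $x^{\cT} \sim x : \cT \in \Omega$ gives $x : \cT \in \Gamma$ by the definition of $\Gamma$ (and the $\rightleftharpoons$-functionality of $\Omega$ guarantees that $\Gamma$ is a well-defined partial map), so rule ($\cT$x) applies. The constants $n$ and $b$ follow immediately from ($\cT$n) and ($\cT$b). For each compound construct — application, addition, conditional, projection, ascription, record, and abstraction — we apply the induction hypothesis to each subterm-correspondence premise to obtain a runtime typing derivation for that subterm, then reassemble the conclusion via the homonymous runtime rule. The abstraction case involves a little extra bookkeeping: its correspondence premise is typed under $\Omega, x \sim x^{\cT_1} : \cT_1$, which induces the extended context $\Gamma, x : \cT_1$, so the induction hypothesis delivers $\rr \derivof \Gamma, x{:}\cT_1 |- t : \cT_2$ and ($\cT\lambda$) concludes.

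The only information used beyond what is displayed in the correspondence rules is the evidence side conditions demanded by the runtime typing rules — e.g., $\ev_1 \evidenceof \cT_1 \csub \cT'_1 -> \cT'_2$ and $\ev_2 \evidenceof \cT_2 \csub \cT'_1$ in ($\cT$app), or $\ev \evidenceof \cT \csub \rec{l:\cT'}$ in ($\cT$proj). These are not restated in the correspondence judgment, but, as remarked below Fig.~\ref{fig:correspondence}, they are implicit in the well-formedness of the intrinsic term appearing in the conclusion. Concretely, by inspecting Fig.~\ref{fig:intrinsic-terms}, any well-formed intrinsic term of the form $(\evcast{\ev_1}{t^{\cT_1}}) \iapp{\cT'_1 -> \cT'_2} (\evcast{\ev_2}{t^{\cT_2}})$ — and analogously for the other elimination and ascription forms — carries exactly the evidence judgments $\ev_1 |- \cT_1 \csub \cT'_1 -> \cT'_2$ and $\ev_2 |- \cT_2 \csub \cT'_1$ that the runtime rule needs, including the ``extra'' consistent subtyping judgments that ($\cT$app) and ($\cT$if) make explicit. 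We extract these from the implicit well-formedness hypothesis and feed them to the runtime rule.

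I do not anticipate a genuine obstacle: this is a routine structural induction, and the real work is purely in lining up the correspondence rules with their runtime counterparts and in threading the implicit intrinsic-term well-formedness invariants through to the evidence premises. The main point requiring care is notational — recognizing that the intrinsic evidence relation $\ev |- \cT_1 \csub \cT_2$ of Fig.~\ref{fig:intrinsic-terms} is literally the runtime evidence judgment $\ev \evidenceof \cT_1 \csub \cT_2$, and checking that the grey-boxed result types of the runtime rules (present in the derivation but absent from the term) are instantiated consistently with the intrinsic type annotations. An essentially symmetric induction would establish the converse, that every well-typed RL term is the extrinsic component of some correspondence, but that is not needed here.
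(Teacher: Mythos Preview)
Your proposal is correct and takes essentially the same approach as the paper, which simply states ``By induction over $t^{\cT} \sim \rr \derivof \Omega |- t : \cT$.'' You have spelled out the cases and, importantly, made explicit what the paper leaves implicit: that the evidence side-conditions required by the runtime typing rules are supplied by the well-formedness of the intrinsic term in the correspondence judgment.
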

\begin{proof}
  By induction over $t^{\cT} \sim \rr \derivof \Omega |- t : \cT$.
\end{proof}

Furthermore, intrinsically-typed terms correspond uniquely (up to choice of
bound variable names) to well-typed RL terms.  In the opposite direction,
extrinsic typing \emph{derivations} correspond uniquely to intrinsically-typed
terms.

\begin{proposition}
  \mbox{}
  \begin{enumerate}
  \item 
    If ${t^{\cT} \in \TermT{\cT}}$ then there is a unique $\rr$, up to
    $\alpha$-equivalence, such that \linebreak
    ${t^{\cT} \sim \rr \derivof \Omega |- t : \cT}$
    i.e.,
    \begin{equation*}
      \forall \cT \in \GType,t^{\cT} \in\TermT{\cT}. \exists \rr,\Omega,t.\,
      t^{\cT} \sim \rr \derivof \Omega |- t : \cT \land
      \forall \rr'.t^{\cT} \sim \rr' \derivof \Omega |- t : \cT \implies
      \rr \sim_{\alpha} \rr';
    \end{equation*}
  \item
    If $\rr \derivof \Gamma |- t : \cT$, then there is a unique $t^{\cT}$, up
    to $\alpha$-equivalence, such that
    ${t^{\cT} \sim \rr \derivof \Omega |- t : \cT.}$
  \end{enumerate}
\end{proposition}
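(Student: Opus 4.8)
The plan is to prove both parts by a routine structural induction, the crux being that the correspondence rules of Fig.~\ref{fig:correspondence} are \emph{syntax-directed} in the appropriate argument, so at each step exactly one rule can apply. For part~(1) I would induct on the derivation of $t^{\cT} \in \TermT{\cT}$, i.e.\ on the structure of the intrinsically-typed term. Each constructor of Fig.~\ref{fig:intrinsic-terms} is matched by precisely one correspondence rule: for instance an intrinsic application $(\evcast{\ev_1}{t^{\cT_1}}) \iapp{\cT'_1 -> \cT'_2} (\evcast{\ev_2}{t^{\cT_2}})$ forces the rule whose conclusion builds $\cast{\ev_1}{\rr_1}\;\cast{\ev_2}{\rr_2}$ with source term $t_1\,t_2$ and premises $t^{\cT_i} \sim \rr_i \derivof \Omega |- t_i : \cT_i$. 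Existence of $\rr$, of the source term $t$, and of a variable correspondence $\Omega$ containing (at least) the free intrinsic variables then follows by applying the inductive hypotheses to the immediate subterms; in the $\lambda$ case we extend $\Omega$ by a fresh pair $x^{\cT_1} \sim x : \cT_1$, choosing the extrinsic name arbitrarily. For uniqueness, since the applicable correspondence rule is determined by the head constructor of $t^{\cT}$, its premises pin down each subderivation via the induction hypothesis, and the only residual freedom is the choice of bound extrinsic variable in the $\lambda$ rule --- which is exactly $\alpha$-equivalence.

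For part~(2) I would induct on the RL typing derivation $\rr \derivof \Gamma |- t : \cT$, taking $\Omega$ to be any variable correspondence inducing $\Gamma$ (canonical up to renaming of the intrinsic variables). Here the essential observation is that the RL typing judgment of Fig.~\ref{fig:gtflr-statics} carries enough information --- including the greyed-out gradual types (the variable type in $(\cT x)$, the ascription target $\cT'$ in $(\cT\!\!::)$, the function domain/codomain in $(\cT\textrm{app})$, the field type in $(\cT\textrm{proj})$) --- to supply exactly the type annotations that the intrinsic constructors demand, and that each evidence premise $\ev \evidenceof \cT_1 \csub \cT_2$ of an RL rule \emph{is} the premise $\ev |- \cT_1 \csub \cT_2$ of the matching intrinsic rule, so the reconstructed intrinsic term is well-formed (this is precisely the ``implicit invariants on evidence'' remark preceding the proposition). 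The inductive hypotheses give the intrinsic subterms, and reassembly yields $t^{\cT} \sim \rr \derivof \Omega |- t : \cT$. Uniqueness of $t^{\cT}$ again reduces to the fact that the RL typing rule at the root is determined by the source term $t$, after which the correspondence rule is forced, leaving only bound-variable choices.

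The main obstacle I anticipate is not conceptual but the bookkeeping around $\Omega$ and the $\rightleftharpoons$ (bijective-when-the-type-component-is-erased) invariant: one must verify that extending $\Omega$ under binders preserves bijectivity, that the induced extrinsic context $\Gamma$ stays in sync with what the runtime typing judgment expects, and --- in part~(2) --- that the reconstructed intrinsic variables can always be chosen mutually distinct so the intrinsic typing rules apply. A secondary point worth making precise is the ``up to $\alpha$-equivalence'' clause: I would state it via simultaneous renaming of matched intrinsic/extrinsic bound variables and observe that the correspondence relation is stable under such renamings, so any two witnesses differ only by one. Neither is deep, but these are the spots where a careless induction would leave a gap; everything else is a mechanical case analysis, one case per (matching pair of) formation rules.
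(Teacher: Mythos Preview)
Your proposal is correct and takes essentially the same approach as the paper: induction on the formation of $t^{\cT}$ for part~(1) and induction on the derivation of $\rr \derivof \Gamma |- t : \cT$ for part~(2). You have in fact supplied considerably more detail than the paper's own proof, which dispatches both parts in a single sentence each.
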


\begin{proof}
  \mbox{}
  \begin{enumerate}
  \item By induction on the formation of $t^{\cT} \in \Term$.
  \item By induction on the derivation of $\rr \derivof \Gamma |- t : \cT$.
  \end{enumerate}
\end{proof}

Also, corresponding values are observationally equivalent.
\begin{proposition}[Canonical Forms]
  \mbox{}
  \begin{enumerate}
  \item If $n_1 \sim \rn_2 \derivof \emptyset |- n_3 : \cT$ then
    $n_1 = \rn_2 = n_3$;
  \item If $b_1 \sim \rb_2 \derivof \emptyset |- b_3 : \cT$ then
    $b_1 = \rb_2 = b_3$.
  \item If
    $\lambda x^{\cT_1}.t^{\cT'2}_{11} \sim \rr \derivof \emptyset |- t : \cT$ then
    $\rr = \lambda x.\rr_{11}$ and $t = \lambda x : \cT_1.t_{11}$.
  \item If
    $t^{\cT} \sim \lambda x.\rr_{11} \derivof \emptyset |- t : \cT$ then
    $t^{\cT} = \lambda x^{\cT_1}.t^{\cT'2}_{11}$ and $t = \lambda x : \cT_1.t_{11}$.
  \end{enumerate}
\end{proposition}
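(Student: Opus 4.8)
The plan is to prove all four parts by a single \emph{inversion} on the derivation of the correspondence judgment $\cdot \sim \cdot \derivof \cdot |- \cdot : \cT$ of \Figref{fig:correspondence}. The key observation is that every rule of that figure is syntax-directed on \emph{both} sides at once: each rule fixes the top-level constructor of the intrinsic term and, simultaneously, the top-level constructor of the extrinsic term (and of the source term). Hence knowing the outermost shape of \emph{either} component of a correspondence already determines which rule was applied last, and inversion then reads off the shape of the remaining component and of the source term.

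For parts (1) and (2) the hypothesis exhibits a bare numeral (respectively boolean) as the intrinsic term. Scanning \Figref{fig:correspondence}, the unique rule whose intrinsic side is a bare numeral is the one concluding $n \sim \rn \derivof \Omega |- n : \Int$, and similarly for booleans. In particular the ascription rule cannot be the last rule, since its intrinsic side is syntactically an ascription $\evcast{\ev}{t^{\cT_1}} :: \cT_2$, never a literal; and no other rule has a literal on the intrinsic side. Inversion therefore forces $n_1$, $\rn_2$, and the source term $n_3$ to be the same numeral, and likewise $b_1 = \rb_2 = b_3$.

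For parts (3) and (4) the hypothesis exhibits a $\lambda$-abstraction, concretely on the intrinsic side for (3) and on the extrinsic side for (4). Inspecting \Figref{fig:correspondence}, the only rule whose intrinsic \emph{and} extrinsic components are $\lambda$-abstractions is the abstraction rule
\[
  \infer{
    \lambda x^{\cT_1}.t^{\cT_2} \;\sim\; \lambda x.\rr
    \;\derivof\; \Omega |- \lambda x : \cT_1.\, t : \cT_1 -> \cT_2
  }{
    t^{\cT_2} \;\sim\; \rr \;\derivof\;
    \Omega,\, x^{\cT_1} \sim x : \cT_1 |- t : \cT_2
  };
\]
the ascription rule is again excluded, as it carries explicit ascription syntax on the intrinsic side and a cast marker $\evcast{\ev}{\rr}$ on the extrinsic side, neither of which is a bare $\lambda$. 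So inversion yields, for part (3), $\rr = \lambda x.\rr_{11}$ and $t = \lambda x : \cT_1.\, t_{11}$ (with $t_{11}$ the source term corresponding to the body), and, for part (4), $t^{\cT} = \lambda x^{\cT_1}.t^{\cT_2}_{11}$ and $t = \lambda x : \cT_1.\, t_{11}$, with $\cT = \cT_1 -> \cT_2$. The bound variable is pinned down only up to renaming, consistent with the variable correspondence recorded in $\Omega$, so these equalities are read up to $\alpha$-equivalence, exactly as in the uniqueness-of-correspondence proposition proved just above; in fact (3) and (4) can also be obtained as specializations of that proposition to the abstraction rule.

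I expect no real obstacle: the entire argument is one step of inversion. The only point that deserves a moment's attention is confirming that the case analysis is exhaustive, i.e.\ that the ascription rule---the one rule of \Figref{fig:correspondence} that does not change the source term in lockstep---still cannot contribute, which is immediate because it introduces ascription syntax on the intrinsic side and a cast $\evcast{\ev}{\rr}$ on the extrinsic side, so it can never produce a literal or a bare $\lambda$ on either side. The hypothesis that the context is empty ($\emptyset$) plays no role in the inversion itself; it is stated only because these canonical-forms facts are consumed for closed runtime programs when transporting the type-safety results of \citet{garciaAl:popl2016} along the correspondence.
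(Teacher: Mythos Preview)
Your inversion argument is correct and is exactly the natural approach. The paper states this proposition without proof, so there is nothing to compare against; your analysis that each rule of \Figref{fig:correspondence} is syntax-directed on both the intrinsic and extrinsic sides, and hence admits direct inversion, is the intended (and only reasonable) route.
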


Finally, closed intrinsic terms and their corresponding well-typed RL terms
have the same reduction behaviour. 
\begin{proposition}[Bisimulation]
  If ${t_1^{\cT} \sim \rr_1 \derivof \emptyset |- t_1 : \cT}$ then:
  \begin{enumerate}
  \item If $t_1^{\cT} --> t^{\cT}_2$ then $\rr_1 --> \rr_2$ and
    ${t_2^{\cT} \sim \rr_2 \derivof \emptyset |- t_2 : \cT}$ for some
    $\rr_2,t_2$
  \item If $\rr_1 --> \rr_2$ then  $t_1^{\cT} --> t^{\cT}_2$ and
    ${t_2^{\cT} \sim \rr_2 \derivof \emptyset |- t_2 : \cT}$ for some
    $t^{\cT}_2,t_2$.
  \end{enumerate}
\end{proposition}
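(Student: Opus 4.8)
The plan is to reduce the bisimulation to three lemmas about the correspondence relation of Fig.~\ref{fig:correspondence}: a \emph{plugging} lemma, a \emph{decomposition} lemma, and a \emph{redex} lemma. First I would introduce intrinsic evaluation contexts $\hat E$ and $\hat F$ — the intrinsic counterparts of $\ECtxt$ and $\EvFrame$, carrying the type decorations ($\iapp{\cdots}$, $.l^{\cdots}$, $\cdot :: \cdots$) that distinguish intrinsic terms from RL terms — together with a hole-for-hole correspondence $\hat E \sim E$ defined by structural recursion in the style of Fig.~\ref{fig:correspondence}. The plugging lemma then states that if $\hat E \sim E$ and $\hat t \sim \rr$ then $\hat E[\hat t] \sim E[\rr]$, and is compatible with the runtime typing judgment by Prop.~\ref{prop:gtfl-to-rr-wf}-style reasoning; it follows by a routine induction on $\hat E$.

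The decomposition lemma is the crux: whenever $t_1^{\cT} \sim \rr_1 \derivof \emptyset \vdash t_1 : \cT$ and $\rr_1 = E[\rr]$ with $\rr$ a notion-of-reduction redex (or a cast-composition redex $F[\cast{\ev_1}{\cast{\ev_2}{\ru}}]$), then $t_1^{\cT} = \hat E[\hat t]$ for a unique intrinsic context $\hat E \sim E$ and intrinsic redex $\hat t \sim \rr$, and symmetrically starting from an intrinsic decomposition. I would prove this by induction on the correspondence derivation (equivalently, on the structure of the term): an elimination form either \emph{is} the redex in question, once its immediate subterms are values, or its evaluation position lies in a uniquely determined subterm to which the induction hypothesis applies. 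The one delicate point is the nested-ascription shape $E[F[\cast{\ev}{[]}]]$ of RL evaluation contexts: descending an RL context through a chain of evidence casts $\ev_1 \cdots \ev_n \ru$ corresponds on the intrinsic side to descending a chain of intrinsic ascriptions, and I must check that the same innermost adjacent pair of casts is selected in both presentations and that the erased type annotations recovered are forced. The Canonical Forms proposition and the intrinsic/extrinsic uniqueness proposition (both proved above) guarantee exactly this: there is no ambiguity in the decomposition, and no choice in the decorations.

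Next comes the redex lemma: if $\hat t \sim \rr$ with both redexes, then either both step to $\error$, or $\hat t \leadsto \hat t'$ and $\rr \leadsto \rr'$ (resp.\ $\rr \to \rr'$ for a cast-composition redex) with $\hat t' \sim \rr'$. This is a case analysis over the notions of reduction of Fig.~\ref{fig:gtflr-dynamics}. Since the evidence operators $\idom$, $\icod$, $\iproj$ and the composition $\trans{}$ are \emph{literally the same} in both semantics, each case reduces to checking that the type-decoration bookkeeping matches — e.g.\ for application, that $\icod(\ev_1)$ is reattached to the substituted body on both sides and that the intrinsic codomain annotation $\cT_{12}$ is precisely the type $\cT'_2$ recovered by the runtime typing derivation, which is the content of the application rule of the correspondence; the Canonical Forms proposition ensures corresponding values agree, so substituting corresponding arguments yields corresponding bodies. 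Assembling everything: given $t_1^{\cT} \to t_2^{\cT}$, decompose $t_1^{\cT} = \hat E[\hat t]$, transport via the decomposition lemma to $\rr_1 = E[\rr]$ with $\hat E \sim E$ and $\hat t \sim \rr$, apply the redex lemma, plug back with the plugging lemma to obtain $\rr_1 \to \rr_2$ with $t_2^{\cT} \sim \rr_2$, and obtain the witnessing source term $t_2$ together with $\rr_2 \derivof \emptyset \vdash t_2 : \cT$ from Preservation (Prop.~\ref{prop:rr-preservation}); the error cases are handled by the $\leadsto \error$ / $\to \error$ rules in the same fashion, and the reverse direction is symmetric. I expect the decomposition lemma — concretely, the handling of ascription/evidence-cast chains and the recovery of the erased annotations — to be the main obstacle, as it is the one place where the two presentations genuinely differ; the rest is congruence closure plus a case split over a shared rule set.
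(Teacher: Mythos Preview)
Your proposal is correct and considerably more detailed than the paper's own proof, which is a single line: ``By induction on derivations of ${t_1^{\cT} \sim \rr_1 \derivof \emptyset |- t_1 : \cT}$.'' Your factoring through explicit intrinsic evaluation contexts $\hat E$, a plugging lemma, a decomposition lemma, and a redex lemma is a standard and sound way to structure such an induction; the paper simply leaves all of that implicit. The one place where you do slightly more work than needed is in invoking Preservation (Prop.~\ref{prop:rr-preservation}) to recover the witnessing source term $t_2$: the correspondence judgment ${t_2^{\cT} \sim \rr_2 \derivof \emptyset |- t_2 : \cT}$ already carries $t_2$ as part of its form, so once your plugging lemma reestablishes the correspondence, the source term and its typing come for free without appealing to preservation of the RL typing judgment.
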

\begin{proof}
  By induction on derivations of
  ${t_1^{\cT} \sim \rr_1 \derivof \emptyset |- t_1 : \cT}$
\end{proof}
This property shows that the static type information is irrelevant to
computational results.  By the structure of the inductive rules that define it,
the correspondence relation is compatible with well-typed term formation.  It
is \emph{not} a congruence because the relation is between distinct languages,
so neither reflexive nor transitive.

\newpage
\section{Associativity implies Space Efficiency}
\label{apdx:assoc-implies-space-efficiency}
If evidence composition is associative, we can construct a space
efficient semantics that simulates the original semantics.  This
section aims to prove associativity suffices for semantic equivalence.  To simplify the proof effort, we
first develop an equivalent Structural Operational Semantics to the
Reduction Semantics presented on the paper.  This makes it easier
for us to appeal to the induction hypothesis. 

\subsection{Non-Space Efficient Structural Operational Semantics}
We first define our Structural Operational Semantics for the
$\GTFLcsub$ runtime language.  We will use colors to distinguish our
semantics: Our Non-Space Efficient Structural Operational Semantics
equivalent to the standard semantics previously presented in the
paper will use the style $\pla{e --> \dots}$.

\flushleft\textbf{Syntax}
\begin{displaymath}
  \begin{array}{rcll}
    \pla{e} \in \pla{\RTerm} & ::= & \pla{n} | \pla{\cast{\ev}{e} + \cast{\ev}{e}} | \pla{x} | \pla{\lambda x.e} |  \pla{e\;e} | 
    \pla{\evcast{\ev}{e}} | \pla{\rec{\overline{l = e}}} | \pla{\cast{\ev}{e}.l} | \pla{ \<if> \cast{\ev}{e} \<then> \cast{\ev}{e} \<else> \cast{\ev}{e} } & \text{(runtime
      terms)} \\
    \pla{F} \in \pla{\EvFrame} & ::= & \pla{[]} | \pla{[]\;\cast{\ev}{e}} | \pla{\cast{\ev}{u}\;[]} | \pla{[] + \cast{\ev}{e}} | \pla{\cast{\ev}{u} + []} | \pla{[].l} \\
    & | & \pla{\<if> [] \<then> \cast{\ev}{e} \<else> \cast{\ev}{e}}
    & \text{(runtime
      frames)}
  \end{array}
\end{displaymath}

\flushleft\textbf{Operational Semantics}    
\flushleft\boxed{\pla{e --> e},\; \pla{e --> \error}}

\begin{mathpar}
  \inference[Ev]{\pla{e} \leadsto \pla{e'}}
  {\pla{e --> e'}}

  \inference[Ex]{\pla{e} \leadsto \pla{\error}}{
    \pla{e --> \error}}

  \inference[Fv]{\pla{{e} --> {e'}}}{
    \pla{F[\evcast{\ev}{e}] --> F[\evcast{\ev}{e'}]}}

  \inference[Fx]{\pla{{e} --> \error}}{
    \pla{F[\evcast{\ev}{e}] --> \error}}

  \inference[Rec-v]{\pla{{e} --> {e'}}}{
    \pla{\rec{\overline{l = \rv}, l = e, \overline{l = \rr}} -->
      \rec{\overline{l = \rv}, l = e', \overline{l = \rr}}}}

  \inference[Rec-x]{\pla{{e} --> \error}}{
    \pla{\rec{\overline{l = \rv}, l = e, \overline{l = \rr}} -->
      \error}}
  
  \inference[Cv]{}{
    \pla{F[\evcast{\ev_1}{\evcast{\ev_2}{u}}] --> F[\evcast{\ev_2
        \trans{} \ev_1}{u}]}}

  \inference[Cx]{ \ev_2 \trans{} \ev_1 \text{ undefined}}{
    \pla{F[\evcast{\ev_1}{\evcast{\ev_2}{u}}] --> \error}}

\end{mathpar}

The notions of reduction ($\leadsto$) are exactly the same as in $\GTFLcsub$.

We can prove that this semantics is equivalent to the one described
in Fig.~\ref{fig:gtflr-dynamics} by simple structural induction
over the appropriate derivation on the assumption (the reduction semantics for $=>$
and the structural operational semantics for $\Leftarrow$).

\begin{lemma}[Growing contexts]
  For any $E$ and $E'$ evaluation contexts in $\GTFLcsub$, then also
  $E [E']$, where $[ ]$ is the standard plugging operation, is also
  an evaluation context.  (We do need to prove this lemma separately
  as we follow a construction order for contexts that is the opposite
  from the one usual in Felleisen style reduction semantices)
\end{lemma}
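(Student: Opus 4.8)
The plan is to prove the statement by structural induction on the \emph{inner} context $E'$ rather than on $E$. This is the crucial design choice, and it is dictated by the ``reversed'' construction order flagged in the statement: the grammar builds an evaluation context from the outside in, via productions of the shape $E ::= E[\kappa]$ in which $\kappa$ is a single-hole frame, so the outermost frame of a context is its \emph{first} syntactic constituent, and it is precisely the outermost frame of $E'$ that gets peeled off during the induction. The only auxiliary fact needed is associativity of context plugging, the standard identity $(C_1[C_2])[C_3] = C_1[C_2[C_3]]$ for evaluation contexts $C_1,C_2,C_3$, which I take as given.

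First I would record the shape of an arbitrary evaluation context: every $E' \in \ECtxt$ is either $[]$ or of the form $E_0[\kappa]$, where $E_0 \in \ECtxt$ and $\kappa$ is one of the two single-hole frame shapes occurring in the grammar, namely $\rec{\overline{l=\rv},l=[],\overline{l=\rr}}$ or $F[\cast{\ev}{[]}]$ for some evidence frame $F$. The induction then has one base case and a single inductive case, treated uniformly in $\kappa$.

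In the base case $E' = []$ we have $E[E'] = E$, an evaluation context by hypothesis, so there is nothing to prove. In the inductive case $E' = E_0[\kappa]$, associativity of plugging rewrites $E[E'] = E[E_0[\kappa]]$ as $(E[E_0])[\kappa]$. The induction hypothesis, applied to the structurally smaller context $E_0$, gives that $E[E_0]$ is an evaluation context; then $(E[E_0])[\kappa]$ is an evaluation context by the matching grammar production ($E[\rec{\overline{l=\rv},l=[],\overline{l=\rr}}]$ or $E[F[\cast{\ev}{[]}]]$), whose leading metavariable $E$ is instantiated to $E[E_0]$. This closes the induction.

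I do not anticipate a genuine obstacle; the one thing that must be gotten right is the choice of induction variable. An induction on $E$ would stall in the step case $E = E_0[\kappa]$: there $E[E']$ reassociates to $E_0[\kappa[E']]$, and $\kappa[E']$ — e.g., $\rec{\overline{l=\rv},l=E',\overline{l=\rr}}$ — is no longer one of the grammar's frame shapes, so no grammar production applies directly and one is forced to re-invoke the lemma at a shallower context, degrading a clean structural induction into an induction on context depth. Recursing on $E'$ avoids this, leaving in each case a one-line appeal to the induction hypothesis plus a grammar production.
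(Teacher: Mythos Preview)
Your argument is correct, and your instinct to induct on the inner context $E'$ rather than on $E$ is exactly right. The paper's own proof goes the other way: it does structural induction on $E$, and in the step case $E = E''[\kappa]$ (with $\kappa$ one of the two frame shapes) it invokes the induction hypothesis to obtain that $E''[E']$ is an evaluation context and then appeals to the grammar production to conclude that $(E''[E'])[\kappa]$ is an evaluation context, asserting this equals $E[E']$. But under the paper's grammar---where the production $E[\kappa]$ plugs $\kappa$ into $E$'s hole, making $\kappa$ the \emph{innermost} frame of the resulting context---one actually has $E[E'] = E''[\kappa[E']]$, not $(E''[E'])[\kappa]$; the two differ in whether $\kappa$ sits outside or inside $E'$. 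This is precisely the obstruction you diagnose in your final paragraph.

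Your induction on $E'$ avoids the problem cleanly: peeling the innermost frame off $E'$ is the same as peeling the innermost frame off the composite $E[E']$, so associativity $E[E_0[\kappa]] = (E[E_0])[\kappa]$ plus one grammar production closes the step. What your approach buys is that the inductive case matches the grammar's construction direction; the paper's choice of induction variable does not, and its step case as written does not go through.
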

\begin{proof}
  By structural induction on the definition of $E$.
  \begin{itemize}
  \item Case $E = []$ is trivial, as $E[E']=E'$.
  \item Case $E = E''[F[ \evcast{\ev}{[]} ]]$. By induction, then
    $E''[E']$ is also an evaluation context.  We can use then the
    $(E''[E'])[F [ \evcast{\ev}{[]} ]]$ constructor to build our
    evaluation context for $E [ E' ]$.
  \item Case
    $E' = E''[ \rec{\overline{l = \rv}, l = [], \overline{l = \rr}} ]$.
    Analogous to the previous case.
  \end{itemize}
\end{proof}

\begin{lemma}[Contextual Reduction can grow contexts]
  \label{lem:grow-contexts-gtfl}
  We will need this lemma for proving Theorem \ref{thm:sem-eq-rs-sos-gtfl}. This lemma has two parts:

  \begin{itemize}
  \item Let $E[\rr] --> E[\rr']$. Then for any $E'$ such that
    $E' = E''[E]$ for some $E''$, then also $E'[\rr] --> E'[\rr']$.
  \item Let $E[\rr] --> \error$.  Then for any $E'$ such that
    $E' = E''[E]$ for some $E''$, then also $E'[\rr] --> \error$.
  \end{itemize}

\end{lemma}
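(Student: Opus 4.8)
The plan is to prove both parts by inverting the contextual reduction in the hypothesis and then re-firing the \emph{same} rule under the enlarged context; the only non-bookkeeping ingredient is the preceding \emph{Growing contexts} lemma. First I would record one routine auxiliary fact about hole-filling: for contexts $E''$ and $D$ and a term $x$, the composite $E''[D]$ is a well-formed context and $E''[D[x]] = (E''[D])[x]$ (a short structural induction on $D$). I also take $E''$ to be an evaluation context, so that $E' = E''[E]$ is one too by \emph{Growing contexts}; this matches how the lemma gets invoked, e.g.\ with $E'' = F[\evcast{\ev}{[]}]$ or $E'' = \rec{\overline{l=\rv},l=[],\overline{l=\rr}}$, both of which lie in $\ECtxt$.

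For part 1, invert the derivation of $E[\rr] --> E[\rr']$; since the target is not $\error$ it is either the notion-of-reduction rule or the evidence-composition rule. In the first case the derivation supplies an evaluation context $E_0$ and terms with $\rr_0 \leadsto \rr_0'$ such that $E[\rr] = E_0[\rr_0]$ and $E[\rr'] = E_0[\rr_0']$. Then $E'[\rr] = E''[E[\rr]] = E''[E_0[\rr_0]] = (E''[E_0])[\rr_0]$, and $E''[E_0]$ is an evaluation context by \emph{Growing contexts}, so the same rule fires: $(E''[E_0])[\rr_0] --> (E''[E_0])[\rr_0'] = E''[E_0[\rr_0']] = E''[E[\rr']] = E'[\rr']$. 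In the second case the derivation supplies an evaluation context $E_0$, a frame $F$, evidences $\ev_1,\ev_2$ and a raw value $\ru$ with $E[\rr] = E_0[F[\evcast{\ev_1}{\evcast{\ev_2}{\ru}}]]$ and $E[\rr'] = E_0[F[\evcast{(\ev_2 \trans{} \ev_1)}{\ru}]]$; the identical manipulation --- re-firing the rule with $E_0$ replaced by the evaluation context $E''[E_0]$ --- yields $E'[\rr] --> E'[\rr']$.

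Part 2 repeats the argument against the two rules whose conclusion is $\error$. Inverting $E[\rr] --> \error$ gives either an evaluation context $E_0$ with $E[\rr] = E_0[\rr_0]$ and $\rr_0 \leadsto \error$, or an evaluation context $E_0$, a frame $F$, evidences $\ev_1,\ev_2$ and a raw value $\ru$ with $\ev_2 \trans{} \ev_1$ undefined and $E[\rr] = E_0[F[\evcast{\ev_1}{\evcast{\ev_2}{\ru}}]]$. Writing $E[\rr] = E_0[\rr_0]$ for the relevant local subterm $\rr_0$ in either case, we get $E'[\rr] = E''[E_0[\rr_0]] = (E''[E_0])[\rr_0]$, with $E''[E_0]$ an evaluation context by \emph{Growing contexts}, so the same error rule applies and gives $E'[\rr] --> \error$.

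I do not expect a substantive obstacle: there is no induction on the reduction relation, only a four-way case split, and every term identity used is hole-filling bookkeeping. The one point that genuinely needs care --- and the reason the lemma is isolated at all --- is the paper's ``outside-in'' construction of evaluation contexts, which makes ``$E''[E_0]$ is an evaluation context'' something that must be routed through \emph{Growing contexts} rather than read off the grammar; relatedly, in the evidence-composition cases the rewrite acts on $\evcast{\ev_1}{\evcast{\ev_2}{\ru}}$ as a whole (it consumes the outer evidence $\ev_1$), so the context must be enlarged at the outer $E_0$ while keeping $F[\evcast{\ev_1}{\evcast{\ev_2}{\ru}}]$ intact as the plugged subterm, rather than trying to expose $\evcast{\ev_2}{\ru}$ as a redex beneath an $\ECtxt$ context.
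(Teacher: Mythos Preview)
Your proposal is correct and takes essentially the same approach as the paper: a case split on the four contextual-reduction rule templates, followed by re-firing the same rule with the context enlarged to $E''[E_0]$, using the \emph{Growing contexts} lemma to guarantee that $E''[E_0]$ is again an evaluation context. The paper's proof is a single sentence (``use the exact same rule template''), whereas you spell out the inversion to an inner context $E_0$ and the hole-filling bookkeeping explicitly; this extra care is warranted since the $E$ in the hypothesis need not coincide with the context witnessing the rule instance.
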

\begin{proof}
  By cases on the contextual reduction rule used to derive either
  $E[\rr] --> E[\rr']$ or $E[\rr] --> \error$.  In each case, we can
  use the exact same rule template to derive either
  $E''[E[\rr]] --> E''[E[\rr']]$ or $E''[E[\rr]] --> \error$,
  respectively.
\end{proof}

\begin{theorem}[Semantic equivalence between Reduction Semantics and
  SOS]
  \label{thm:sem-eq-rs-sos-gtfl}
  $\rr --> \rr'$ in $\GTFLcsub$ if and only if
  $\pla{\rr --> \rr'}$ in the SOS semantics as well.  Also,
  $\rr --> \error$ in $\GTFLcsub$ if and only if
  $\pla{\rr --> \error}$ in the SOS semantics as well.
\end{theorem}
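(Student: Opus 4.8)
The plan is to prove both biconditionals by a pair of dual structural inductions, using the two context‑manipulation lemmas already established (the ``Growing contexts'' lemma and Lemma~\ref{lem:grow-contexts-gtfl}) to reconcile the two different conventions for assembling evaluation contexts. As a preliminary step for the $(\Rightarrow)$ direction I would record the fact \emph{dual} to Lemma~\ref{lem:grow-contexts-gtfl}, namely that the SOS is closed under evaluation contexts: if $\pla{\rr_0 --> \rr_0'}$ (resp. $\pla{\rr_0 --> \error}$) then $\pla{E[\rr_0] --> E[\rr_0']}$ (resp. $\pla{E[\rr_0] --> \error}$) for every $\GTFLcsub$ evaluation context $E$. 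This is proved by induction on $E$: the base case $E=[]$ is immediate, and the two inductive shapes $E = E'[F[\cast{\ev}{[]}]]$ and $E = E'[\rec{\overline{l=\rv},l=[],\overline{l=\rr}}]$ are handled by the congruence rules (Fv)/(Fx) and (Rec-v)/(Rec-x) together with the inductive hypothesis.

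For $(\Rightarrow)$, a reduction‑semantics step $\rr_1 --> \rr_2$ (or $\rr_1 --> \error$) must have been derived by one of the four contextual‑reduction rules, each of the form $E[\text{redex}] --> E[\text{contractum}]$ for some evaluation context $E$. When the redex is a notion of reduction $\rr_0 \leadsto \rr_0'$ I obtain the SOS base step by rule (Ev) (resp. (Ex) when $\rr_0 \leadsto \error$); when it is an evidence composition $\cast{\ev_1}{\cast{\ev_2}{\ru}}$ sitting inside an evidence frame $F$, I obtain the SOS base step $\pla{F[\cast{\ev_1}{\cast{\ev_2}{\ru}}] --> F[\cast{(\ev_2 \trans{} \ev_1)}{\ru}]}$ by rule (Cv) (resp. (Cx) when $\ev_2 \trans{} \ev_1$ is undefined). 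In every case I then lift this base step through the surrounding context $E$ using the context‑closure fact above, yielding $\pla{\rr_1 --> \rr_2}$ (resp. $\pla{\rr_1 --> \error}$).

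For $(\Leftarrow)$, I would induct on the derivation of $\pla{\rr_1 --> \rr_2}$ (and $\pla{\rr_1 --> \error}$) in the SOS. The base cases (Ev), (Ex), (Cv), (Cx) map directly onto the four reduction‑semantics contextual rules instantiated with the empty outer context. For the congruence cases --- for instance (Fv), concluding $\pla{F[\cast{\ev}{e}] --> F[\cast{\ev}{e'}]}$ from $\pla{e --> e'}$ --- the inductive hypothesis gives $e --> e'$ in the reduction semantics, so $e = E[\rr_0]$ and $e' = E[\rr_0']$ (or the evidence‑composition variant) for some evaluation context $E$ and redex $\rr_0$; since $F[\cast{\ev}{[]}]$ is itself an evaluation context, the ``Growing contexts'' lemma shows $F[\cast{\ev}{[]}][E]$ is an evaluation context, and reapplying the corresponding contextual‑reduction rule (via Lemma~\ref{lem:grow-contexts-gtfl}) yields $F[\cast{\ev}{e}] --> F[\cast{\ev}{e'}]$. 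The cases (Fx), (Rec-v), (Rec-x) are identical modulo which evaluation‑context former is used.

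The main obstacle is precisely the discrepancy the authors flag: the SOS threads a reduction through the term one frame at a time and hence assembles evaluation contexts ``outside‑in,'' whereas the Felleisen‑style evaluation contexts of $\GTFLcsub$ are assembled in the opposite order. Keeping the decompositions and recompositions of contexts aligned across the two conventions --- in particular ensuring that the ascription wrapper $\cast{\ev}{\cdot}$ and the evidence‑composition rules (Cv)/(Cx) match their reduction‑semantics counterparts exactly, including the boundary case where the evidence frame $F$ is the empty frame --- is the delicate bookkeeping. Once the two context lemmas and the dual context‑closure fact are in place, the remaining case analysis is routine.
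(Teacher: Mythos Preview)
Your proposal is correct and follows essentially the same approach as the paper's proof. The only presentational difference is that you factor out the SOS context-closure property (your ``dual to Lemma~\ref{lem:grow-contexts-gtfl}'') as a separate preliminary lemma, whereas the paper inlines that same induction on $E$ directly within the $(\Rightarrow)$ argument; the $(\Leftarrow)$ direction is handled identically in both, via induction on the SOS derivation together with the Growing contexts lemma and Lemma~\ref{lem:grow-contexts-gtfl}.
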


\begin{proof}
  We begin by proving the only if direction $(==>)$.
  
  By cases on  the definition of $-->$ we can pick the top-most rule that will be used
  in the SOS. For each rule template in the reduction semantics, we proceed by induction on
  the structure of $E$, adding rules on the derivation as needed,
  with the following protocol: $E = []$ adds no rule as you have a
  complete derivation by that point. $E = E[\rec{\overline{l=\rv}, l = [], \overline{l = \rr}}]$,
  adds an instance of  rule [Rec-v] if the derivation so far steps
  to an expression and rule [Rec-x] if it steps to an error.
  $E = E [ F [ \evcast{\ev}{[]} ]]$ 
  adds an instance of  rule [Fv] if the derivation so far steps
  to an expression and rule [Fx] if it steps to an error.
  
  For the top of the derivation, we chose rules as follows:
  \begin{itemize}
  \item Rule $\inference{\rr \leadsto \rr'}{E[\rr] --> E[\rr']}$.
    We apply rule
    [Ev] and we fill the stepping for evaluation contexts as described.
  \item Rule $\inference{\rr \leadsto \error}{E[\rr] --> \error}$.
    We apply rule
    [Ex] and we fill the stepping for evaluation contexts as described.
  \item Rule
    $\inference{}{E[F[\evcast{\ev_1}{\evcast{\ev_2}{u}}]] --> E[F[(\ev_2 \trans{} \ev1) u]]}$.
    We apply rule
    [Cv] and we fill the stepping for evaluation contexts as described.
  \item Rule
    $\inference{\ev_2 \trans{} \ev_1 \text{ undefined}}{E[F[\evcast{\ev_1}{\evcast{\ev_2}{u}}]] --> \error}$.
    We apply rule
    [Cx] and we fill the stepping for evaluation contexts as described.
  \end{itemize}

  We now proceed to prove the if case ($<==$). We proceed by structural induction on the definition of
  $\pla{-->}$:
  \begin{itemize}
  \item Rules [Ev], [Ex], [Cv] and [Cx] are trivial as we can build
    derivations for them in $-->$ using $E=[]$.
  \item Rule [Fv]. By induction, $e --> e'$.  By inversion on the
    definition of $-->$, then there exists $e_2, e'_2$ and $E$ such that
    $e = E[e_2]$ and $e' = E[e'_2]$. 
    We can then use $E''= ([] [ F [ \evcast{\ev}{[]}]]) [E]$ and Lemma
    \ref{lem:grow-contexts-gtfl} to prove that
    $F[ \evcast{\ev}{e} ] --> F[ \evcast{\ev}{e'} ]$.
  \item Rule [Fx]. Analogous to Rule [Fv].
  \item Rule [Rec-v]. Analogous to Rule [Fv], but with context
    $E'' = ([] [ \rec{\overline{l = \rv}, l = [], \overline{l = \rr}}]) [E]$.
  \item Rule [Rec-x]. Analogous to Rule [Rec-v].
  \end{itemize}
\end{proof}

\subsection{Space-Efficient Structural Operational Semantics}

We can now introduce the space-efficient language.  Since we want
the new language to be observationally equivalent, we must deal with
the fact that evidence composition ($\trans{}$) is a partial
function.  We work around this issue by introducing a notation for
\emph{``total'' evidence}, $\botev$.  A term $\botev$ can be a normal
evidence $\ev$ or $\bot$, the latter acting as a marker for an
output where an evidence composition was undefined.

The terms in the space efficient language are
almost the same as in the structural one, \emph{except} for the fact
that we replace evidence objects with total evidence objects.  We also choose to
color them differently: While a term $\pla{e}$ represents a term in
the SOS-version of the original semantics, a term $\pls{e}$
represents a term in the space-efficient version of the language.

\flushleft\textbf{Syntax}
\begin{displaymath}
  \begin{array}{rcll}
    \pls{\botev} & ::= & \pls{\bot} | \pls{\ev} & \text{(total
      evidence)} \\
    \pls{e} \in \pls{\RTerm} & ::= & \pls{n} | \pls{\cast{\botev}{e} + \cast{\botev}{e}} | \pls{x} | \pls{\lambda x.e} |  \pls{\evcast{\botev}{e}\;\evcast{\botev}{e}} | 
    \pls{\evcast{\botev}{e}} | \pls{\rec{\overline{l = e}}} | \pls{\cast{\botev}{e}.l} & \text{(runtime
      terms)} \\
    \pls{F'} \in \pls{\EvFrame} & ::= & \pls{[]\;\cast{\ev}{e}} | \pls{\cast{\ev}{u}\;[]} | \pls{[] + \cast{\ev}{e}} | \pls{\cast{\ev}{u} + []} |  \pls{[].l} \\
    & | & \pls{\<if> [] \<then> \cast{\ev}{e} \<else> \cast{\ev}{e}}& \text{(runtime frames)}
  \end{array}
\end{displaymath}

\flushleft\textbf{Definitions required to propagate evidence}
We substitute the partial evidence composition operation
$\trans{}$ with a total equivalent, denoted $\bottrans{}$ and defined as
follows:
\[\botev_1 \bottrans{} \botev_2 = \begin{cases}
    \ev_1 \trans{} \ev_2 & \text {if } \botev_1 = \ev_1
    \text{ and  }  \botev_2 = \ev_2 \text{ and composition is defined} \\
    \bot & \text{otherwise}
  \end{cases}\]

With these definitions we are ready to present the definition of our
space-efficient semantics:

\flushleft\textbf{Operational Semantics}

The operational semantics separates the cases that deal with nested
casts (originally appearing as nested uses of the $\pla{[]}$
evaluation frames, such as
$\pla{[] [\cast{\ev}{\dots [] [\cast{\ev}{e'}]}]}$ with
$\pla{e'} \neq \pla{\cast{\ev}{e''}}$.  We omit derivations like those
that would have happened through rules [Fv] and [Fx] and make explicit
that evidence composition must happen first and thus we will always
have at most one pending cast in the context surrounding a redex.

\newcommand{\notleadsto}[0]{\looparrowright}

It is also important to note that since our redexes now hold
\emph{total evidence} instead of plain evidence, the original notions
of reduction are insufficient for this language, as they only deal with
the ``not $\bot$'' cases.  We introduce an extra rule to deal with the
cases where a $\bot$ total evidence appears (Rule [SFx']).

With this extension we can present the full space-efficient
operational semantics:

\flushleft\boxed{\pls{e --> e},\; \pls{e --> \error}}

\begin{mathpar}
  \inference[SEv]{\pls{e} \leadsto \pls{e'}}
  {\pls{e --> e'}}

  \inference[SpEf-Ex]{\pls{e} \leadsto \pls{\error}}{
    \pls{e --> \error}}

  \inference[SRec-v]{\pls{{e} --> {e'}}}{
    \pls{\rec{\overline{l = \rv}, l = e, \overline{l = \rr}} -->
      \rec{\overline{l = \rv}, l = e', \overline{l = \rr}}}}

  \inference[SRec-x]{\pls{{e} --> \error}}{
    \pls{\rec{\overline{l = \rv}, l = e, \overline{l = \rr}} -->
      \error}}

  \inference[SCe]{}{
    \pls{F'[\evcast{\botev_1}{\evcast{\botev_2}{e}}] --> F'[\evcast{\botev_2
        \bottrans{} \botev_1}{e}]}}

  \inference[SCe']{}{
    \pls{\evcast{\botev_1}{\evcast{\botev_2}{e}} --> \evcast{\botev_2
        \bottrans{} \botev_1}{e}}}    
  
  \inference[SFx]{}{\pls{\evcast{\bot}{u} --> \error}}

  \inference[SFx']{}{
    \pls{F'[\evcast{\bot}{u}] --> \error}}
  
  \inference[SFe]{\pls{{e} --> {e'}} &&\pls{e} \neq
    \pls{\evcast{\botev'}{e''}}
  }{
    \pls{F'[\evcast{\botev}{e}] --> F'[\evcast{\botev}{e'}]}}

  \inference[SCx]{ \pls{{e} --> \error} && \pls{e} \neq
    \pls{\evcast{\botev'}{e'}}
  }{
    \pls{F'[\evcast{\botev}{e}] --> \error}}

  \inference[SStack]{\pls{e' --> e''}
    && \pls{e'} \neq \pls{\evcast{\botev'}{e'''}}
  }{\pls{\evcast{\botev_0}{e'}
      --> \evcast{\botev_0}{e''}}}

  \inference[SStack']{\pls{e'
      -->
      \error}
    && \pls{e'} \neq \pls{\evcast{\botev'}{e''}}
  }{\pls{\evcast{\botev_0}{e'}
      --> \error}}
\end{mathpar}

Rules [SEv] and [SpEf-Ex] are exactly as in the original
semantics. Rules [SCe] and [SCe'] perform the cast composition
eagerly, thus avoiding the stack growth present in the original
semantics. Once a $\pls{\evcast{\bot}{u}}$ value is reached through
cast composition, rules [SFx] and [SFx'] trigger an error just as in
the original semantics,  when composing evidence objects in an
innermost to outermost fashion. Rules [SFe] and [SCx] do grow the
stack for redexes that do not represent an ascription, and rules
[SStack] and [SStack'] do grow the stack when an ascription is
wrapping a different redex (not an ascription) that must be reduced
first. It is important to note that, with this approach, casts cannot
arbitrarily nest and produce a space leak.

\begin{lemma}[Outside-in inversion for Evaluation Contexts]
  \label{lem:brr-space-efficient-gc}
  For every evaluation context $G$ capable of producing valid
  syntactic terms once plugged, one of the following is true:
  \begin{enumerate}
  \item $G = []$.
  \item $G = [] [F[ [] ] ]$.
  \item Exists $G'$, such that
    $G = ([] [\rec{\overline{l = \rv}, l = [], \overline{l = \rr}} ]) [ G']$.
  \item Exists $G', \botev$ such that
    $G = ([] [ \evcast{\botev}{[]} ] [\rec{\overline{l = \rv}, l = [], \overline{l = \rr}} ]) [ G']$.
  \item Exists $F, G', \botev$ such that
    $G = E[G']$ with $E = ([] [F[ [] ] ] [ \evcast{\botev}{[]} ]) $ and
    $G' \neq ([] [ \evcast{\botev'}{[]} ]) [G'']$.
  \item Exists $F, G', \botev$ such that
    $G = E[G']$ with $E = ([] [ \evcast{\botev}{[]} ] [F[ [] ] ]) $.
  \end{enumerate}

  Any $G'$s would produce valid syntactic terms once the context is plugged.
\end{lemma}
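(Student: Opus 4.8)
The plan is to prove this as a structural inversion lemma, by induction on the derivation of $G$ in the $\text{ACtxt}$ grammar of Fig.~\ref{fig:serl-syntax} (equivalently, on the number of frame layers around its hole), with a mutual sub-analysis of the $\text{ECtxt}$ grammar $E ::= G \mid G[\evcast{\botev}{[]}]$. Before the induction I would record two standing observations about the grammar. First, a closure fact: if $G_0$ is a valid $\text{ACtxt}$ then so are $G_0[F[[]]]$ and $G_0[\rec{\overline{l=\rv},l=[],\overline{l=\rr}}]$ (these are exactly the second and third $\text{ACtxt}$ productions with $E = G_0$), and $G_0[\evcast{\botev}{[]}]$ is a valid $\text{ECtxt}$, so an ascription followed by an evidence or record frame again yields a valid $\text{ACtxt}$. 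Second, an invariant: in any $\text{ACtxt}$ every ascription layer is immediately followed (going inward) by an evidence frame or a record frame --- ascriptions enter only through the $\text{ECtxt}$ production, and the resulting $\text{ECtxt}$ is turned back into an $\text{ACtxt}$ only by attaching such a frame --- so two ascription layers are never adjacent. The closure fact is what lets a newly-attached innermost layer be absorbed into the ``remainder'' context $G'$ of whichever of the six shapes applies without leaving the grammar (this also discharges the closing claim that every extracted $G'$ plugs to valid syntax), and the invariant is what establishes the side condition $G' \neq ([][\evcast{\botev'}{[]}])[G'']$ in form (5).

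For the induction proper: the base case $G = []$ is form (1). In the two recursive cases $G = E[F[[]]]$ and $G = E[\rec{\overline{l=\rv},l=[],\overline{l=\rr}}]$ I would split first on the $\text{ECtxt}$ production ($E = G_0$ or $E = G_0[\evcast{\botev}{[]}]$) and then on whether the underlying $\text{ACtxt}$ $G_0$ is $[]$. When $G_0 = []$ the context has only its outermost layer, possibly preceded by a single ascription, above the hole, and this directly yields: form (2) (an evidence frame, no ascription), form (3) with $G' = []$ (a record frame, no ascription), form (6) with $G' = []$ (ascription then evidence frame), form (4) with $G' = []$ (ascription then record frame), or form (5) (evidence frame then ascription, with $G'$ the one extra frame, which is not ascription-topped by the invariant). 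When $G_0 \neq []$ the induction hypothesis already exhibits the outermost layer of $G_0$ as one of the six shapes; since the layer attached by the current production sits strictly beneath that outermost layer, the closure fact lets me fold it --- together with the optional ascription coming from $E$ --- into the inner context $G'$ of that shape, leaving $G$ in the same form with $G'$ again a valid $\text{ACtxt}$, and the form-(5) side condition is re-checked against the invariant.

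I expect the difficulty here to be bookkeeping rather than mathematics. The statement is phrased so carefully precisely because of the reversed context-composition convention that the paper already flags in the discussion surrounding Lemma~\ref{lem:grow-contexts-gtfl} (``a construction order for contexts that is the opposite from the one usual in Felleisen style''): one must be meticulous about which occurrence in $E[F[[]]]$ is the outer one and which the inner one when matching against the six forms, and about the fact that in $\text{RL}^{+}$ two ascription frames are never immediately adjacent in a context. A secondary nuisance is that the record-frame production carries the premise that the preceding fields are values; this is preserved for free because $G'$ --- hence those fields --- is re-used unchanged. The crux is therefore simply to keep the six-way case split exhaustive and correctly aligned with the grammar productions under the paper's bracket convention, with no deep obstacle beyond that.
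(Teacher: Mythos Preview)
Your approach matches the paper's: structural induction on the $\text{ACtxt}$ grammar (after expanding the two $\text{ECtxt}$ alternatives), followed by a case split on what the inductive hypothesis says about the outer context $G_0$, absorbing the newly added innermost layer into the residual $G'$. Your closure fact and no-adjacent-ascriptions invariant are the right auxiliary observations.

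There is one real gap. In your $G_0 \neq []$ branch you plan to ``fold [the new layer] into the inner context $G'$ of that shape,'' but shape~(2), namely $G_0 = [] [F_0[ [] ] ]$, has no $G'$ to fold into, and it is not excluded by $G_0 \neq []$. The paper handles this case explicitly and it does \emph{not} reduce to folding: when the added innermost layer carries no ascription, the resulting context (two adjacent evidence frames, or an evidence frame immediately around a record frame) is discharged by the ``valid syntactic terms'' hypothesis, since an evidence frame's hole must receive an evidence-wrapped subterm; when the added layer is ascription-then-frame, the whole context becomes form~(5), with $G'$ equal to that single new frame. Your stray ``form~(5)'' bullet under the $G_0 = []$ heading looks like exactly this case landed in the wrong place---it cannot arise from $G_0 = []$ (only four shapes do there), only from $G_0$ of shape~(2). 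Once you relocate that case and add the two ``impossible by syntactic validity'' sub-cases, the argument is complete.
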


\begin{proof}
  By structural induction on the definition of G after expansion of
  all E cases:

  \begin{itemize}
  \item $[]$. Trivial.
  \item $G [ F [ [] ] ]$.
    By cases on the induction hypothesis: First trivial, second
    impossible (Frames of the form $ [] [F[ [] ] ] [ F [ [] ] ]$ never
    produce valid syntactic terms), rest direct from induction hypothesis.
  \item
    $ G [ \rec{\overline{l = \rv}, l = [], \overline{l = \rr}} ]$
    By cases on the induction hypothesis: First exists $G'=[]$ and
    thus (3) holds,
    second impossible, rest direct from induction hypothesis.
  \item $ G [ \evcast{\botev}{[]} ] [ F [ [] ] ]$
    By cases on the induction hypothesis: First exists $G' = []$ and
    thus (6) holds,
    second case exists $\botev = \botev$ and thus (5) holds, rest
    from induction hypothesis.
  \item $G [ \evcast{\botev}{[]} ] [ \rec{\overline{l = \rv}, l = [], \overline{l = \rr}} ]$
    By cases on the induction hypothesis: First exists $G' = []$ and
    thus (4) holds, second case exists $\botev = \botev$ and thus (5) holds.
  \end{itemize}
  
\end{proof}

\begin{lemma}[Outside extension for evaluation contexts].
  \label{lem:outside-extension-G}
  For any evaluation context $G \in \pls{\text{ACtxt}}$
  \begin{itemize}
  \item
    $ [] [ \rec{\overline{l = \rv}, l = [], \overline{l = \rr}} ] [G]$
    is also a valid evaluation context in $\pls{\text{ACtxt}}$.
  \item If $G \neq ([] [ \evcast{\botev'}{[]} ])[G']$, then also
    $ [] [F [ \evcast{\botev}{[]} ] ] [G]$ and
    $ [] [ \evcast{\botev}{[]} ] [G]$ are valid evaluation contexts
    in $\pls{\text{ECtxt}}$.
  \end{itemize}

  Note: since $E$ always has a $G$ in the end, this lemma generalizes
  to evaluation contexts $E$ as well.
\end{lemma}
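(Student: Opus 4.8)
The plan is to prove both items by case analysis on the shape of the evaluation context $\pls{G}$, reading off the possible cases from the outside-in inversion principle of Lemma~\ref{lem:brr-space-efficient-gc}; this is equivalent to a structural induction on the derivation of $\pls{G} \in \pls{\text{ACtxt}}$ after expanding the two $\pls{\ECtxt}$ productions, exactly as in the proof of that lemma. The first item is essentially immediate from the grammar: since $\pls{\text{ACtxt}} \subseteq \pls{\ECtxt}$ via the first $\pls{\ECtxt}$ production, the given $\pls{G}$ is itself a legal $\pls{E}$, so the context $\pls{([][\rec{\overline{l=\rv},l=[],\overline{l=\rr}}])[G]}$ — which simply wraps a record evaluation frame around $\pls{G}$ — is an instance of the $\pls{\text{ACtxt}}$ production $\pls{E[\rec{\overline{l=\rv},l=[],\overline{l=\rr}}]}$ with $\pls{E}:=\pls{G}$. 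The only side obligation is that this context still produces a syntactically valid $\text{RL}^{+}$ term when its hole is filled, which holds because a record field may hold an arbitrary term.

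The second item carries the work. I would go through the cases that Lemma~\ref{lem:brr-space-efficient-gc} gives for $\pls{G}$. In the base case $\pls{G}=\pls{[]}$, the context $\pls{([][F[\evcast{\botev}{[]}]])[G]}$ is literally an instance of the $\pls{E[F[[]]]}$ production with the innermost frame $F$ carrying the fresh cast, and $\pls{([][\evcast{\botev}{[]}])[G]}$ is exactly the second $\pls{\ECtxt}$ production $\pls{G[\evcast{\botev}{[]}]}$ instantiated at $\pls{[]}$. In the cases where $\pls{G}$ is headed by an evaluation frame $F$ or by a record frame, I re-associate the frame stack so that the newly attached $F[\evcast{\botev}{[]}]$ (respectively the bare $\evcast{\botev}{[]}$) becomes the outermost layer and the remainder of $\pls{G}$ supplies the inner $\pls{E}$/$\pls{G}$ subcontext — the same bookkeeping used for the $\GTFLcsub$ ``Growing contexts'' lemma, namely that these context-building operators compose. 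The hypothesis $\pls{G}\neq\pls{([][\evcast{\botev'}{[]}])[G']}$ is used precisely here: it guarantees that $\pls{G}$ does not already present a cast at the point where we are attaching the fresh cast, so we never form a nested-cast configuration $\pls{\evcast{\botev}{\evcast{\botev'}{\cdots}}}$, which the $\text{RL}^{+}$ grammar forbids inside evaluation contexts (nested casts are an immediate redex, composed eagerly by rules [SCe] and [SCe'], not a position one descends into). In the remaining cases, where $\pls{G}$ has an inner cast below a frame or record layer, I additionally verify that the case-5 side condition ``$\pls{G'}\neq\pls{([][\evcast{\botev'}{[]}])[G'']}$'' of the inversion lemma is preserved for the recursive appeal, keeping the analysis well-founded. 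Finally I would record the lemma's concluding remark: the statement lifts from $\pls{\text{ACtxt}}$ to $\pls{\ECtxt}$ for free, since any $\pls{E}$ is either a $\pls{G}$ or a $\pls{G}$ under one cast, and the latter shape is exactly what the side condition excludes.

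The main obstacle I anticipate is not mathematical depth but the bookkeeping of the two mutually recursive nonterminals $\pls{\text{ACtxt}}$ and $\pls{\ECtxt}$ together with their ``no innermost ascription'' / ``no cast adjacent to the hole'' invariants: one must thread the side condition through each case so that no double-cast context is ever produced, and check that every re-association lands on a genuine grammar production, including the ``valid syntactic term when plugged'' proviso that rules out pathological compositions such as a frame plugged directly into another frame with no intervening cast — precisely the sub-cases flagged ``impossible'' in the proof of Lemma~\ref{lem:brr-space-efficient-gc}. Once those invariants are tracked, each case closes by pattern-matching against a single production of $\pls{\text{ACtxt}}$ or $\pls{\ECtxt}$.
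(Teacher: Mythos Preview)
Your argument for the first item has an inside/outside confusion. You claim $([][\rec{\overline{l=\rv},l=[],\overline{l=\rr}}])[G]$ is an instance of the $\pls{\text{ACtxt}}$ production $E[\rec{\overline{l=\rv},l=[],\overline{l=\rr}}]$ with $E := G$, but that instantiation yields $G[\rec{\ldots}]$, which places $G$ on the \emph{outside} and the record frame at the innermost position, adjacent to the hole. The lemma asks for the opposite arrangement: the record frame wrapped \emph{around} $G$. In this grammar the recursive occurrence of $E$ in the productions $E[F[[]]]$ and $E[\rec{\ldots}]$ supplies the outer layers and each production adds one innermost layer; this is exactly the ``opposite from Felleisen style'' construction order flagged in the note to the Growing Contexts lemma. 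No single production attaches a new \emph{outer} layer, so item~1 is not immediate.

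The paper's proof therefore descends to the leaf of $G$'s derivation and replaces the base case there. For item~1 it swaps the base $G = []$ for the valid $\pls{\text{ACtxt}}$ $[][\rec{\ldots}]$; for item~2 it swaps the final $E = G = []$ for $E = [][\evcast{\botev}{[]}]$, a valid $\pls{\text{ECtxt}}$. The side condition in item~2 guarantees that the leaf is reached through the production $E = G$ rather than $E = G[\evcast{\botev'}{[]}]$: in the latter case the leaf position demands a $G$ and cannot accept an $E$, so the substitution would be ill-formed --- which is precisely the forbidden adjacent-cast situation you correctly identified. Your intuition about the role of the side condition is right, but the mechanism operates at the base of the derivation rather than at its head, and the case analysis via Lemma~\ref{lem:brr-space-efficient-gc} is not needed.
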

\begin{proof}
  First case is obvious as we can replace the closing top hole $[]$
  in the
  original term derivation by the new term
  $([] [ \rec{\overline{l = \rv}, l = [], \overline{l = \rr}} ])$.

  For the second case, the  left-most case would be expanded as
  $E = G = []$, we instead do
  $E = G[ \evcast{\botev}{[]} ] = [][ \evcast{\botev}{[]} ]$ .
\end{proof}

\begin{theorem}[Semantic equivalence between Reduction Semantics and
  SOS]
  \label{thm:sem-eq-rs-sos-brr}
  $\pls{\rr} --> \pls{\rr'}$ in $\BRRcsub$, if and only if 
  $\pls{\rr --> \rr'}$ in the SOS semantics for BRR.  Similarly,
  $\pls{\rr} --> \pls{\error}$ in $\BRRcsub$, if and only if 
  $\pls{\rr --> \error}$ in the SOS semantics for BRR. 
\end{theorem}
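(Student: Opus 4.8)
The plan is to follow the template of Theorem~\ref{thm:sem-eq-rs-sos-gtfl}, adapted to the larger rule set and the more constrained context shapes of the space-efficient semantics. Both directions are proved by induction: Lemma~\ref{lem:brr-space-efficient-gc} (outside-in inversion for evaluation contexts) drives the forward direction, and Lemma~\ref{lem:outside-extension-G} (outside extension) drives the backward direction. The essential new wrinkle, compared with the $\GTFLcsub$ case, is that $\BRRcsub$'s contextual reduction (Fig.~\ref{fig:serl-syntax}) composes adjacent ascriptions eagerly via the single schema $G[\evcast{\botev_1}{\evcast{\botev_2}{e}}] --> G[\evcast{\botev_2 \bottrans{} \botev_1}{e}]$, whereas the SOS spreads this behaviour across rules [SCe], [SCe'], [SStack], and [SStack']; the heart of the argument is a careful matching of these.

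For the forward direction (reduction semantics implies SOS), I would case on which of the four $\BRRcsub$ contextual-reduction rules was used, and then induct on the structure of the enclosing evaluation context, using Lemma~\ref{lem:brr-space-efficient-gc} to classify its shape. When the context is empty: a notion-of-reduction step is handled by rule [SEv] or [SpEf-Ex]; an eager-composition step is handled by [SCe'] when the redex is directly enclosed and by [SCe] when it sits under an evidence frame $F$; and an error step on a $\bot$-annotated value is handled by [SFx] or [SFx'] analogously. When the context is compound, I peel off its outermost layer and prepend the corresponding SOS rule: [SRec-v]/[SRec-x] for a record-field hole, [SFe]/[SCx] for an evidence frame surrounding a non-ascription redex, and [SStack]/[SStack'] for a pending ascription wrapping a non-ascription redex. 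The side conditions on [SFe], [SCx], [SStack], and [SStack'] requiring that the inner term is not itself an ascription are precisely what make this line up: in $\BRRcsub$ two nested ascriptions always form a redex of the eager-composition schema, hence never occur under one of those frames, so the six context shapes of Lemma~\ref{lem:brr-space-efficient-gc} correspond one-to-one with the admissible SOS derivation skeletons.

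For the backward direction (SOS implies reduction semantics), I would induct on the SOS derivation. The leaf rules [SEv], [SpEf-Ex], [SCe], [SCe'], [SFx], [SFx'] are discharged directly by exhibiting the matching $\BRRcsub$ contextual-reduction instance with the trivial (empty or single-frame) context. For each inductive rule --- [SRec-v]/[SRec-x], [SFe]/[SCx], [SStack]/[SStack'] --- I apply the induction hypothesis to the premise, invert the resulting $\BRRcsub$ reduction to expose its evaluation context and redex, and invoke Lemma~\ref{lem:outside-extension-G} to grow that context by the appropriate outermost layer (record-field hole, evidence frame, or pending ascription), obtaining the desired $\BRRcsub$ step. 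The non-ascription side conditions on the inductive SOS rules supply exactly the hypothesis Lemma~\ref{lem:outside-extension-G} needs to ensure the extended context is a legal element of $\pls{\ECtxt}$.

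The main obstacle will be the bookkeeping in the forward direction: relating the single eager-composition schema of $\BRRcsub$ to the stack of SOS rules that realizes it. One must check, for each of the six context shapes given by Lemma~\ref{lem:brr-space-efficient-gc}, which SOS rule fires at the redex --- for instance distinguishing [SCe'] from [SCe] from [SStack] --- and verify that threading the remaining context layers through [SRec-v], [SFe], and [SStack] reconstructs exactly the claimed transition. Once this correspondence table is pinned down, the individual cases of both inductions are routine, essentially the same manipulations used in the proof of Theorem~\ref{thm:sem-eq-rs-sos-gtfl}.
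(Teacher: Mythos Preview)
Your proposal is correct and takes essentially the same approach as the paper: both directions are proved exactly as you describe, with the forward direction driven by Lemma~\ref{lem:brr-space-efficient-gc}'s six context shapes (matching [SEv]/[SpEf-Ex]/[SCe']/[SFx] at the empty context, [SCe]/[SFx'] at a bare frame, and [SRec-v]/[SRec-x], [SFe]/[SCx], [SStack]/[SStack'] as you peel compound layers), and the backward direction by induction on the SOS derivation using Lemma~\ref{lem:outside-extension-G} to extend contexts. The paper organizes the forward case analysis by context shape rather than by reduction rule, but this is only a presentational difference.
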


\begin{proof}
  We begin by proving the only if case ($==>$).

  We proceed by cases on the contextual reduction rule templates.
  For the first two cases, 
  we begin our proof by cases on $E$.  We can then treat each of the
  6 proofs by induction on the structure of G.
  Lemma
  \ref{lem:brr-space-efficient-gc} exposes an alternative induction
  principle on the structure of G, which we choose to use.

  We will group the proof by the cases of the induction principle,
  and deal in each case with the contextual reduction rules.

  \begin{enumerate}
  \item $G = []$. We can use rules [SEv], [SpEf-Ex], [SCe'] and
    [SFx] to build a SOS reduction for each respective case.
  \item $G = [] [F [ [] ] ]$. Contextual reduction cannot happen on
    the cases that rely on notions of reductions, as all terms that
    can take a step (either to another $e$ or to $\error$) are
    non-ascription redexes.  We are only left with the cases for
    evidence composition, for which we can build a derivation using
    rules [SCe] and [SFx'], respectively.
  \item Exists $G'$, such that
    $G = ([] [\rec{\overline{l = \rv}, l = [], \overline{l = \rr}}]) [ G']$.
    By induction, we can build a derivation on $G'[e]$ in each case,
    and we can extend that derivation with a rule [SRec-v] or
    [SRec-x], respectively.
  \item Exists $G', \botev$ such that
    $G = ([] [ \evcast{\botev}{[]} ] [\rec{\overline{l = \rv}, l = [], \overline{l = \rr}}]) [ G']$.
    By induction, we can build a derivation on $G'[e]$ in each case,
    and we can extend that derivation with the sequence of rules
    ([SRec-v], [SStack]) or ([SRec-x], [SStack']), respectively.
    
  \item Exists $F, G', \botev$ such that
    $G = E[G']$ with $E = ([] [F[ [] ]] [ \evcast{\botev}{[]}]) $
    and
    $G' \neq ([] [ \evcast{\botev'}{[]} ]) [G'']$.  By induction, we
    can build a derivation on $G'[e]$ in each case, and we can extend
    that rule with rules [SFe] or [SCx], respectively.
  \item Exists $F, G', \botev$ such that
    $G = E[G']$ with $E = ([] [ \evcast{\botev}{[]} ] [F[ [] ]]) $.

    By the structure of evaluation contexts, two ascriptions cannot
    appear next to each other in a $G$.  In the case of the
    contextual reduction rules
    dealing with notions of reduction, we know these cannot reduce on
    ascriptions so there must be a $G''$ such that
    $G' = ([] [ \evcast{\botev}{[]} ] G''$.  We can thus apply either
    the sequence ([SFe], [SStack]) or ([SCx], [SStack']),
    respectively.
    
  \end{enumerate}

  We can now deal with the if direction ($<==$).

  We proceed by induction on the derivation of the structural
  operational semantics ($\pls{-->}$).
  \begin{itemize}
  \item Case [SEv]. We can build a derivation using $E = []$.
  \item Case [SpEf-Ex]. We can build a derivation using $E = []$.
  \item Case [SRec-v]. By induction hypothesis, we get some $E$, or a
    $G$, depending on the contextual reduction in use. We
    can apply lemma \ref{lem:outside-extension-G} to build the
    derivation with the same structure for [SRec-v]'s conclusion.
  \item Case [SRec-x]. Analogous to [SRec-v].
  \item Case [SCe]. We can build a derivation using
    $E = G = [] [F[ [] ] ]$.
  \item Case [SCe']. We can build a derivation using $E = G = []$.
  \item Case [SFx]. Analogous to [SCe'].
  \item Case [SFx']. Analogous to [SCe].
  \item Case [SFe]. Analogous to [SRec-v]. The extra restriction in
    this rule's premise is already handled by lemma
    \ref{lem:outside-extension-G}.
  \item Case [SCx]. Analogous to [SFe].
  \item Case [SStack]. Analogous to [SFe].
  \item Case [SStack']. Analogous to [SFe].
  \end{itemize}
  
\end{proof}

\subsection{Bisimulation Relation}
\label{apdx:bisimulation-relation}
The bisimulation relation must be in charge of ``giving leeway'' on each
point where both semantices diverge, bearing in mind that, for them to
be bisimilar, we do want them to eventually be reconciled. We propose
the following simulation relation.  The rule of real interest is
[Sim-Casts], which encodes the leeway of changing orders in
evaluation of casts,  The rest of the simulation rules are just there to
complete the structural induction. 

\flushleft\boxed{\pla{e} \approx \pls{e}, \pla{\error} \approx \pls{\error}}
\begin{mathpar}
  \inference[Sim-Casts]{\pla{e_1} \neq \pla{\evcast{\ev'}{e'}} &
    \pls{e_2} \neq \pls{\evcast{\botev'}{e'}} & 
    \pla{e_1} \approx \pls{e_2} &
    \pla{\ev_n} \bottrans{} \! \left( \cdots \bottrans{} \pla{\ev_1}\right) =
    \left(\pls{\botev_m} \bottrans{} \! \cdots \right)\bottrans{} \pls{\botev_1}}{
    \pla{\ev_1 \cdots \evcast{\ev_n}{e_1}} \approx \pls{\botev_1
      \cdots \evcast{\botev_m}{e_2}}}

  \inference[Sim-Err]{}{\pla{\error} \approx \pls{\error}}

  \inference[Sim-U]{}{\pla{u} \approx \pls{u}}

  \inference[Sim-App]{\pla{\evcast{\ev_1}{e_1}} \approx
    \pls{\evcast{\botev_3}{e_3}} &
    \pla{\evcast{\ev_2}{e_2} \approx \pls{\evcast{\botev_4}{e_4}}}}{
    \pla{\cast{\ev_1}{e_1}\;\cast{\ev_2}{e_2}} \approx
    \pls{\cast{\botev_3}{e_3}\;\cast{\botev_4}{e_4}}}

  \inference[Sim-Plus]{\pla{\evcast{\ev_1}{e_1}} \approx
    \pls{\evcast{\botev_3}{e_3}} &
    \pla{\evcast{\ev_2}{e_2} \approx \pls{\evcast{\botev_4}{e_4}}}}{
    \pla{\cast{\ev_1}{e_1}+\cast{\ev_2}{e_2}} \approx
    \pls{\cast{\botev_3}{e_3}+\cast{\botev_4}{e_4}}}

  \inference[Sim-If]{\pla{\evcast{\ev_1}{e_1}} \approx
    \pls{\evcast{\botev_4}{e_4}} &
    \pla{\evcast{\ev_2}{e_2} \approx \pls{\evcast{\botev_5}{e_5}}} &
    \pla{\evcast{\ev_3}{e_3} \approx \pls{\evcast{\botev_6}{e_6}}} 
  }{
    \pla{\<if> \cast{\ev_1}{e_1} \<then> \cast{\ev_2}{e_2} \<else>
      \cast{\ev_3}{e_3}} \approx
    \pls{\<if> \cast{\botev_4}{e_4} \<then> \cast{\botev_5}{e_5}
      \<else> \cast{\botev_6}{e_6}}}

  \inference[Sim-Proj]{\pla{\evcast{\ev_1}{e_1}} \approx
    \pls{\evcast{\botev_2}{e_2}}}{
    \pla{\cast{\ev_1}{e_1}.l} \approx
    \pls{\cast{\botev_2}{e_2}.l}}

  \inference[Sim-Rec]{\pla{e_i} \approx \pls{e'_i}}{
    \pla{\rec{\overline{l_i = e_i}}} \approx \pls{\rec{\overline{l_i = e'_i}}}}

\end{mathpar}

\subsection{Proofs}      

Having set up all the required definitions, we can proceed to prove
the properties we desire: 

\begin{lemma}[Substitution preserves simulation]
  If $\pla{e_1} \approx \pls{e_2}$ and
  $\pla{e_3} \approx \pls{e_4}$, then
  ${\pla{\left[ e_3 \middle/ x \right] e_1} \approx \pls{\left[ e_4 \middle/ x \right] e_2}}$.
\end{lemma}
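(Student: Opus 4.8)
The plan is to proceed by induction on the derivation of $\pla{e_1} \approx \pls{e_2}$, working throughout up to $\alpha$-equivalence so that the binders in $e_1$ and $e_2$ may be renamed to avoid capturing the free variables of $e_3$ and $e_4$. The single structural observation that makes the routine cases mechanical is that evidence objects (and total-evidence objects $\botev$) contain no free term variables, so substitution commutes with cast wrapping: $[e_3/x](\evcast{\botev}{e}) = \evcast{\botev}{[e_3/x]e}$, and more generally $[e_3/x](\ev_1\cdots\evcast{\ev_n}{e}) = \ev_1\cdots\evcast{\ev_n}{[e_3/x]e}$. In particular the side condition of [Sim-Casts], which mentions only evidence and never terms, is untouched by substitution.

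For the base cases, [Sim-Err] is immediate since $\pla{\error}$ is closed, as is [Sim-U] when $u\in\{n,b\}$. When [Sim-U] relates a variable $y$ to itself: if $y=x$ then $[e_3/x]y = e_3$ and $[e_4/x]y = e_4$, and $\pla{e_3}\approx\pls{e_4}$ is exactly the second hypothesis; if $y\neq x$ both substitutions are the identity and we re-apply [Sim-U]. The remaining rules — the congruence rules [Sim-App], [Sim-Plus], [Sim-If], [Sim-Proj], [Sim-Rec], and the analogous congruence rule for $\lambda$-abstractions — are handled uniformly: substitution commutes with the term former (after $\alpha$-renaming the $\lambda$-binder), the induction hypothesis applies to each immediate subderivation, and we reassemble with the same rule.

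The only case requiring genuine care is [Sim-Casts], which derives $\pla{\ev_1\cdots\evcast{\ev_n}{e_1}} \approx \pls{\botev_1\cdots\evcast{\botev_m}{e_2}}$ from $\pla{e_1}\approx\pls{e_2}$, with $\pla{e_1}$ and $\pls{e_2}$ not cast-headed and with the two composed stacks equal. Applying $[\cdot/x]$ pushes the substitution under the cast stacks, so it suffices to show $\pla{[e_3/x]e_1}$ and $\pls{[e_4/x]e_2}$ are related and sit under equal cast stacks. If the core $e_1$ is not the substituted variable $x$, then $[e_3/x]e_1$ is still not cast-headed (substitution never introduces a leading cast on a non-variable), likewise $[e_4/x]e_2$, the induction hypothesis gives $\pla{[e_3/x]e_1}\approx\pls{[e_4/x]e_2}$, and [Sim-Casts] re-applies with the unchanged side condition. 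The real obstacle is $e_1 = x$: then by inversion (a bare variable is not cast-headed and is not a stack) $e_2 = x$ as well, and after substitution the two cores become $e_3$ and $e_4$, which may themselves be cast-headed. Here I would first strip $e_3$ and $e_4$ to their cast-free cores, say $e_3 = \ev'_1\cdots\evcast{\ev'_p}{u_3}$ and $e_4 = \botev'_1\cdots\evcast{\botev'_q}{u_4}$, so that inverting $\pla{e_3}\approx\pls{e_4}$ yields $\pla{u_3}\approx\pls{u_4}$ together with equality of the two inner stack compositions. Appending the outer stacks $\ev_1,\dots,\ev_n$ and $\botev_1,\dots,\botev_m$, the full left and right cast stacks around $u_3$ and $u_4$ compose (innermost first) to $(\ev'_p\bottrans{}\cdots\bottrans{}\ev'_1)\bottrans{}(\ev_n\bottrans{}\cdots\bottrans{}\ev_1)$ and $(\botev'_q\bottrans{}\cdots\bottrans{}\botev'_1)\bottrans{}(\botev_m\bottrans{}\cdots\bottrans{}\botev_1)$, where the regrouping uses \emph{associativity} of $\bottrans{}$ (the running assumption of this section, Prop.~\ref{prop:ev-assoc}). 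The first factors are equal (from $e_3\approx e_4$), the second factors are equal (from the original [Sim-Casts] instance), so the composed stacks agree, and one application of [Sim-Casts] to $\pla{u_3}\approx\pls{u_4}$ closes the case. I expect this stack-merging argument — getting the accounting of innermost-first composition right and invoking associativity at precisely that point — to be the only substantive step; every other case is a mechanical congruence.
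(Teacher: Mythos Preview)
Your approach—structural induction on the derivation of $\pla{e_1}\approx\pls{e_2}$—is exactly the paper's; the paper records only ``By simple structural induction over the definition of $\pla{\cdot}\approx\pls{\cdot}$.'' Your treatment of the one nontrivial case (the [Sim-Casts] core is the variable $x$ and the substituent $e_3$ is itself cast-headed), including the appeal to associativity of $\bottrans{}$ to equate the merged inner/outer stack compositions, is correct and fills in detail the paper omits entirely.

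One caveat worth flagging: you invoke ``the analogous congruence rule for $\lambda$-abstractions,'' but Fig.~\ref{fig:bisimulation-relation} has no such rule—$\lambda$-abstractions fall under [Sim-U], which relates a raw value only to the syntactically identical raw value. With [Sim-U] alone, substituting $e_3\neq e_4$ under a binder would not yield related bodies, so the inductive step for $\lambda$ does not literally go through. This is more plausibly an omission in the paper's relation than a flaw in your reasoning; and in any case, at the one place the lemma is actually invoked (the $\beta$-case of ``Notions of reduction preserve simulation''), inversion through [Sim-App], [Sim-Casts], and [Sim-U] forces the two substituents to coincide, making the issue moot there.
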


\begin{proof}
  By simple structural induction over the definition of
  $ \pla{\cdot} \approx \pls{\cdot}$. 
\end{proof}

\begin{lemma}[Notions of reduction preserve simulation]
  If $\pla{e_1} \approx \pls{e_2}$, then if
  $\pla{e_1 \leadsto e'_1}$, then there exists an $\pls{e'_2}$ such
  that $\pls{e_2 \leadsto e'_2}$ and $\pla{e'_1} \approx \pls{e'_2}$.
\end{lemma}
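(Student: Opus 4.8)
The plan is to prove the lemma by case analysis on the derivation of $\pla{e_1 \leadsto e'_1}$ — that is, on which of the five notions of reduction (numeric addition, function application, the two conditional rules, record projection) is used — and, inside each case, by inversion on the derivation of $\pla{e_1} \approx \pls{e_2}$. In every case the top-level constructor of $\pla{e_1}$ is fixed and is not an ascription, so the $\approx$-derivation cannot end in \textbf{[Sim-Casts]}; it must end in the matching structural rule (\textbf{[Sim-Plus]}, \textbf{[Sim-App]}, \textbf{[Sim-If]}, \textbf{[Sim-Proj]}, and — for the record head in the projection case — \textbf{[Sim-Rec]}). Hence $\pls{e_2}$ has the same top-level shape, and its immediate subterms are pairwise $\approx$-related to those of $\pla{e_1}$.

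The first substantive step, repeated in each case, is a canonical-forms argument that pins down the olive operands. On the violet side, the operand positions of a notion-of-reduction redex are values of the form $\cast{\ev}{\ru}$, a single ascription over a raw value (a doubled ascription over a raw value is itself a redex in the violet SOS, so it cannot occur here). Inverting the corresponding sub-relation $\pla{\cast{\ev}{\ru}} \approx \pls{\,\cdot\,}$ through \textbf{[Sim-Casts]}, together with \textbf{[Sim-U]}/\textbf{[Sim-Rec]} on the bases and the eager composition of ascriptions over raw values in $\text{RL}^{+}$ (so the relevant olive operand is already a value; otherwise the broader bisimulation theorem first performs the composition steps $\text{RL}^{+}$ mandates), forces the olive operand to be $\cast{\botev}{\ru'}$ with $\botev = \ev$ — from the stack-composition equality built into \textbf{[Sim-Casts]}, both stacks being singletons — and with $\ru' \approx \ru$ (indeed $\ru' = \ru$ for numerals, booleans and variables). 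With the operands identified, the five cases close quickly:
\begin{itemize}
\item \emph{Addition}: $\pls{e_2} \leadsto \rn_3$ with $\rn_3 = \rn_1 + \rn_2$, the same numeral as on the violet side; $\pla{\rn_3} \approx \pls{\rn_3}$ by \textbf{[Sim-U]}.
\item \emph{Conditionals}: the contractum is the ascribed chosen branch, whose $\approx$ is already a premise of \textbf{[Sim-If]} (the evidence on the predicate is discarded in both semantics, and it is non-$\bot$ on the olive side because it equals the violet evidence).
\item \emph{Projection}: $\pls{e_2} \leadsto \iproj(\botev,l_j)\,\rv'_j$; since $\botev = \ev$, $\iproj$ returns the same evidence on both sides, while $\pla{\rv_j} \approx \pls{\rv'_j}$ is in hand, so the two contracta are $\approx$-related by \textbf{[Sim-Casts]} (closure of $\approx$ under wrapping with one common evidence object).
\end{itemize}

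The application case is the crux and the expected main obstacle. After canonical forms, the olive redex is $\cast{\ev_1}{(\lambda \rx.\rr')}\;\cast{\ev_2}{\ru'}$ with $\pla{\rr} \approx \pls{\rr'}$ (the $\lambda$-bodies are $\approx$-related, not necessarily syntactically equal, since neither semantics reduces under $\lambda$) and $\ru \approx \ru'$. I then verify three compatibilities. First, the inversion operators agree: $\icod(\ev_1)$ and $\idom(\ev_1)$ are literally the same on both sides because the function's top evidence matches. Second, the total composition agrees with the partial one, $\ev_2 \bottrans{} \idom(\ev_1) = \ev_2 \trans{} \idom(\ev_1)$ and is not $\bot$ — this holds because the hypothesis $\pla{e_1 \leadsto e'_1}$ excludes the violet error case, so $\ev_2 \trans{} \idom(\ev_1)$ is defined. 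Third, the contracta relate: writing $\ev_3 = \ev_2 \trans{} \idom(\ev_1)$, we have $\cast{\ev_3}{\ru} \approx \cast{\ev_3}{\ru'}$ by \textbf{[Sim-Casts]}, so the already-proved substitution lemma applied to $\pla{\rr} \approx \pls{\rr'}$ gives $\pla{[(\cast{\ev_3}{\ru})/\rx]\rr} \approx \pls{[(\cast{\ev_3}{\ru'})/\rx]\rr'}$, and wrapping both sides with the common codomain evidence $\icod(\ev_1)$ yields $\pla{e'_1} \approx \pls{e'_2}$ by \textbf{[Sim-Casts]} again. No associativity of composition is needed in this lemma — that only enters when reconciling reordered ascription stacks during contextual reduction. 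The bulk of the difficulty is the bookkeeping for $\idom$/$\icod$, the transfer of definedness of composition, and the interaction of the substitution lemma with the non-syntactic equality of $\lambda$-bodies.
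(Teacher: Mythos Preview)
Your case-analysis-plus-inversion strategy, with the substitution lemma carrying the $\beta$ case, is exactly the paper's approach; its proof is the one-line version of your outline.

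Two remarks on the extra detail. The worry that $\lambda$-bodies may be only $\approx$-related rather than equal is unnecessary here: under the paper's relation the only rule whose violet side can be a lambda is [Sim-U], and it forces syntactic identity, so in fact $r'=r$ (and likewise $\ru'=\ru$ except for records, where [Sim-Rec] permits genuine $\approx$). More importantly, your parenthetical about the olive operand ``already being a value; otherwise the broader bisimulation theorem first performs the composition steps'' names a real gap in the \emph{lemma as stated}: [Sim-Casts] allows the olive stack length $m>1$ even when the violet stack is a singleton, and then the olive term is not a $\leadsto$-redex at all (for instance $\pla{\ev\,5+\ev\,3}\approx\pls{\botev_a(\botev_b\,5)+\botev_c\,3}$ with $\botev_b\bottrans{}\botev_a=\ev$). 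The paper's one-liner glosses over this too. Deferring to the enclosing theorem is not a valid step inside this lemma; the clean fix is either to add a single-evidence hypothesis on the olive operands (an invariant $\text{RL}^{+}$ maintains but $\approx$ alone does not) or to weaken the conclusion to several $\text{RL}^{+}$ steps so the needed [SCe]/[SCe'] reductions can precede the $\leadsto$-step.
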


\begin{proof}
  By cases on the applicable notions of reduction, and most follow by
  inversion on the simulation relation.
  The only notion of reduction left is $\beta$.  In this case, 
  theorem holds by substitution preserves simulation lemma.
\end{proof}

\begin{lemma}[Weak Simulation]
  Let $\pla{e_1 --> e'_1}$. If
  $\pla{e_1} \approx \pls{e_2}$, then 
  there exists $\pls{e'_2}$ such that $\pls{e_2 ->^{*} e'_2}$ and
  $\pla{e'_1} \approx \pls{e'_2}$.
\end{lemma}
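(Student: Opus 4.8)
The plan is to prove the lemma by structural induction on the derivation of $\pla{e_1 --> e'_1}$ in the SOS, doing inversion on $\pla{e_1} \approx \pls{e_2}$ within each case, and invoking the two preceding lemmas (\emph{Substitution preserves simulation}, used transitively through \emph{Notions of reduction preserve simulation}) together with the assumed associativity of $\bottrans{}$ to reconcile the two evidence-composition orders. The point of the induction on the SOS derivation (rather than the reduction semantics) is that the congruence rules [Fv], [Fx], [Rec-v], [Rec-x] have $\pla{-->}$ premises, so a direct appeal to the induction hypothesis is available.

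First I would handle the axiom rules [Ev] and [Ex]. Since $\pla{e_1} \leadsto \pla{e'_1}$ (respectively $\pla{e_1} \leadsto \pla{\error}$), the redex $\pla{e_1}$ is an application, addition, conditional, or projection form — never a top-level ascription — so [Sim-Casts] cannot match and $\pla{e_1} \approx \pls{e_2}$ must be derived by one of the structural rules [Sim-App], [Sim-Plus], [Sim-If], [Sim-Proj], [Sim-Rec], [Sim-U]. Hence $\pls{e_2}$ has the same top-level shape and \emph{Notions of reduction preserve simulation} (and its obvious error analogue) gives $\pls{e_2} \leadsto \pls{e'_2}$ with $\pla{e'_1} \approx \pls{e'_2}$ or $\pls{e'_2} = \pls{\error}$; rule [SEv] (resp.\ [SpEf-Ex]) then yields a single SE step. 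For the congruence rules [Fv], [Fx], [Rec-v], [Rec-x] I would invert $\approx$ to expose the relation on the active subterm, apply the induction hypothesis to the subderivation, and reassemble. The wrinkle is that the SE semantics has no literal [Fv]: stepping under a frame is split into [SFe] (inner redex not an ascription) and the [SCe]/[SStack] rules (inner redex an ascription). So I case-split on whether the active subterm sits under a stack of pending casts; when the SOS subderivation is itself a cast composition this is where the $->^{*}$ in the conclusion is genuinely exploited — the SE side may already be ``ahead'' and take zero steps, with associativity keeping the [Sim-Casts] side condition intact.

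The heart of the argument, and the expected main obstacle, is the pair of cast-composition rules [Cv] and [Cx]. For [Cv], $\pla{e_1} = \pla{F[\ev_1 \ev_2 u]} --> \pla{F[(\ev_2 \trans{} \ev_1) u]}$; inverting $\approx$ through $F$ and then through a [Sim-Casts] stack with a [Sim-U]-related core, I get that $\pls{e_2}$ is $\pls{F'[\botev_1 \cdots \botev_m u]}$ whose total composition, taken in the SE order, equals the total composition of the $\pla{\ev_i}$ in the SOS order. The reduced term has one fewer cast layer, and by associativity the [Sim-Casts] side condition is undisturbed, so $\pls{e'_2} = \pls{e_2}$ (zero SE steps) with $\pla{e'_1} \approx \pls{e_2}$. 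For [Cx], $\ev_2 \trans{} \ev_1$ undefined means $\ev_2 \bottrans{} \ev_1 = \bot$, so by associativity the total SE composition is also $\bot$; the delicate point is to show $\pls{e_2}$ actually reaches $\pls{\error}$ in finitely many steps. I would argue that [SCe]/[SCe'] collapses the stack two casts at a time — never blocked, since $\bottrans{}$ is total and absorbs $\bot$ — terminating at $\pls{\evcast{\bot}{u}}$, whereupon [SFx] or [SFx'] fires, and $\pla{\error} \approx \pls{\error}$ by [Sim-Err]. The genuinely error-prone bookkeeping is matching the SOS frame $F$ against the SE frame $F'$ and tracking how [Sim-Casts] stacks thread through the structural rules, which is what forces the careful case analysis sketched above.
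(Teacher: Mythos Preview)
Your overall strategy matches the paper's: structural induction on the SOS derivation of $\pla{e_1 --> e'_1}$, inversion on $\approx$, and appeal to the notions-of-reduction and substitution lemmas plus associativity of $\bottrans{}$. Your handling of [Ev], [Ex], [Rec-v], [Rec-x] is fine, and your zero-step argument for [Cv] and collapse-then-[SFx] argument for [Cx] are correct.

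The gap is in your treatment of [Fv] (and symmetrically [Fx]), which you describe as ``invert $\approx$, apply the IH to the subderivation, reassemble.'' The subderivation in [Fv] concludes $\pla{e --> e'}$ where $e$ is the term sitting \emph{under} the cast in $\pla{F[\evcast{\ev}{e}]}$. But inverting $\approx$ through the structural rule for $F$ and then through [Sim-Casts] gives you a simulation partner only for the \emph{innermost non-cast core} of the left-hand stack, not for $e$ itself. When $e$ is again a cast (i.e., you are inside a nested stack $\pla{\ev_1\cdots\evcast{\ev_n}{e_0}}$), you have no direct $\pla{e}\approx\pls{?}$ to feed the IH. The paper flags exactly this: ``we would like to appeal directly to induction, but we cannot as the stepping terms do not take into consideration the evidence that appears in the assumptions of these rules.''

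Consequently, the paper treats [Fv] with $F=[]$ as \emph{the} key case --- not [Cv]/[Cx] --- and handles it not by IH but by unfolding the entire chain of nested [Fv] $F{=}[]$ applications and doing case analysis on the rule at the top of that chain ([Ev]/[Ex], [Cv]/[Cx], an inner [Fv]/[Fx] with $F'\neq[]$, or [Rec-v]/[Rec-x]). In each case it first fires enough [SCe']/[SCe] on the SE side to flatten the cast stack, then uses [SStack]/[SFe]/[SRec-v] etc.\ as appropriate, and finally rebuilds [Sim-Casts] using associativity. Your sketch gestures at this (``case-split on whether the active subterm sits under a stack of pending casts''), but the actual work is this derivation-shape analysis, and your plan to simply apply IH at the [Fv] premise will not go through as stated.
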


\begin{proof}
  By structural induction on $\pla{-->}$, and proceed by cases on
  the simulation relation.
  
  \begin{itemize}
  \item Rules [Ev] and [Ex] are trivial by ``Notions of reduction
    preserve simulation'' lemma.
  \item Rule [Fv]. We distinguish two cases:
    \begin{itemize}
    \item $F = []$.  
      \textbf{This is the key case of the proof}.
      The only rules of the simulation relation that apply in this case
      is [Sim-Casts], 
      and thus $\pla{e_1} = \pla{\ev_1 \cdots \evcast{\ev_n}{e'}}$.
      We also know that
      $\pla{e'} \neq \pla{\evcast{\ev'}{e''}}$.  We can then proceed by
      case on the form of the derivation tree $\pla{\mathcal{D}}$:
      \begin{itemize}
      \item If
        $\pla{\mathcal{D}} = \inferrule{[\text{Ev}] \\\\[\text{Fv}]...}{}$ or
        $\pla{\mathcal{D}} = \inferrule{[\text{Ex}] \\\\[\text{Fv}]...}{}$,
        we
        can first apply on the related term Rule
        $\pls{[\text{SCe}]}$ as needed to compose the evidence objects, and
        then, respectively, [SEv] or [SpEf-Ex] and the notions of
        reduction preserves simulation lemma.
      \item If
        $\pla{\mathcal{D}} = \inferrule{[\text{Cv}] \\\\[\text{Fv}]...}{}$,
        we
        can first apply on the related term Rule
        $\pls{[\text{SCe}]}$ as needed to compose the evidence objects, and
        then we
        can rebuild the derivation tree using Rule [Sim-Casts].
      \item If
        $\pla{\mathcal{D}} = \inferrule{[\text{Cx}] \\\\[\text{Fv}]...}{}$, 
        we
        can first apply on the related term Rule
        $\pls{[\text{SCe}]}$ as needed to compose the evidence objects, and
        then apply rule [SFx] or [SFx'] to step to an error.  These values are
        related by Rule [Sim-Err].
      \item If
        $\pla{\mathcal{D}} = \inference{\inferrule{\pla{\mathcal{D}'}\\\\ \inference[\text{Fv}]{\pla{e'' --> e'''}}{\pla{F[\cast{\ev'}{e''}
                ] --> F[\cast{\ev'}{e'''}]}}}{}}{[\text{Fv}]...}$,
        we also know that $\pla{e'} = \pla{F[\cast{\ev'}{e''}]}$ with
        $\pla{F} \neq \pla{[]}$.
        As there exists $\pls{e'_2}$ such that $\pla{e'} \approx \pls{e'_2}$.
        we can use our induction hypothesis to infer that there exists
        $\pls{e''_2}$ such that $\pls{e'_2 -->^{*} e''_2}$ and know
        that $\pla{F[\cast{\ev'}{e'''}]} \approx \pls{e''_2}$.
        We can also then step the original related
        term through as many [SCe'] as needed and then we can use rule [SStack].  We have
        to deal now with two cases:
        \begin{itemize}
        \item If $\pls{e''_2} \neq \pls{\evcast{\botev''}{e''}}$, we
          can rebuild the simulation derivation using rule [Sim-Casts].
        \item If $\pls{e''_2} = \pls{\evcast{\botev''}{e''}}$,
          the simulation $\pla{e'} \approx \pls{e'_2}$ must end with
          a rule [Sim-Casts].  We can then take as many more [SCe]
          steps as needed and rebuild the full derivation using rule [Sim-Casts].
        \end{itemize}
      \item
        If
        $\pla{\mathcal{D}} = \inference{\inferrule{\pla{\mathcal{D}'}\\\\ \inference[\text{Fx}]{\pla{e'' --> \error}}{\pla{F[\cast{\ev'}{e''}
                ] --> \error
              }}}{}}{[\text{Fx}]...}$,
        we also know that $\pla{e'} = \pla{F[\cast{\ev'}{e''}]}$ with
        $\pla{F} \neq \pla{[]}$.
        As there exists $\pls{e'_2}$ such that $\pla{e'} \approx \pls{e'_2}$.
        we can use our induction hypothesis to infer that there exists
        $\pls{e''_2}$ such that $\pls{e'_2 -->^{*} e''_2}$ and know
        that $\pla{\error} \approx \pls{e''_2}$.  Thus also
        $\pls{e''_2} = \pls{\error}$.  We can first take as many steps
        as needed through [SCe'] to reduce first the ascriptions
        coming from [Sim-Casts], and then use [SStack'] to get to
        $\pls{error}$.  

      \item If
        $\pla{\mathcal{D}} = \inference{\inferrule{\pla{\mathcal{D}'}\\\\ \inference[\text{Rec-v}]{\pla{e'' --> e'''}}{\pla{\rec{\overline{l = \rv},{l=e''},\overline{l = \rr}}
                --> \rec{\overline{l = \rv},{l=e'''},\overline{l = \rr}}}}}{}}{[\text{Fv}]...}$.
        Since records can only be related through rule [Sim-Rec], then
        we know there exists $\pls{e'_2}$ such that $\pla{e''} \approx \pls{e'_2}$.
        we can use our induction hypothesis to infer that there exists
        $\pls{e''_2}$ such that $\pls{e'_2 -->^{*} e''_2}$ and know
        that $\pla{e'''} \approx \pls{e''_2}$.
        We can again first step as many times as needed through
        [SCe'], then step the  related
        term through [SRec-v] and then we can use rule [SStack].  We
        can rebuild the relation through rule [Sim-Casts] as we have a
        record that we can re-relate through [Sim-Rec]. 
      \item
        If
        $\pla{\mathcal{D}} = \inference{\inferrule{\pla{\mathcal{D}'}\\\\ \inference[\text{Rec-x}]{\pla{e'' --> \error}}{\pla{\rec{\overline{l = \rv},{l=e''},\overline{l = \rr}}
                --> \error}}}{}}{[\text{Fx}]...}$.

        Since records can only be related through rule [Sim-Rec], then
        we know there exists $\pls{e'_2}$ such that $\pla{e''} \approx \pls{e'_2}$.
        we can use our induction hypothesis to infer that there exists
        $\pls{e''_2}$ such that $\pls{e'_2 -->^{*} e''_2}$ and know
        that $\pla{\error} \approx \pls{e''_2}$, thus $\pls{e''_2} = \pls{\error}$.

        We can take as many [SCe'] steps to reduce first the casts
        from [Sim-Casts], and then take [SRec-x] to step to the error.

      \end{itemize}
    \item $ \pla{F} \neq \pla{[]}$.  Thus the simulation relation on
      terms holds via either [Sim-App], [Sim-If], [Sim-Plus], or
      [Sim-Proj].  In each of these cases we would like
      to appeal directly to induction, but we cannot as the stepping
      terms do not take into consideration the evidence that appears
      in the assumptions of these rules.  Instead we appeal inductively
      to the fact that the next rule must be [Sim-Casts], and then
      we can follow an analogous by-cases analysis to the case where
      $F = []$ at a nested level.
    \end{itemize}

  \item Rule [Fx] analogous to rule [Fv].

  \item Rules [Rec-v] and [Rec-x] can appeal directly to the induction
    hypothesis, and use their respective [SRec-v] and [SRec-x]
    counterparts in the space-efficient domain.

  \item Rules [Cv] and [Cx] follow an analogous analysis to rule [Fv],
    but much simpler a there is no need to account for arbitrary
    nesting of casts.
  \end{itemize}
\end{proof}

\begin{lemma}[Reverse Weak Simulation]
  Let $\pls{e_2 --> e'_2}$. If
  $\pla{e_1} \approx \pls{e_2}$, then 
  there exists $\pla{e'_1}$ such that $\pla{e_1 ->^{*} e'_1}$ and
  $\pla{e'_1} \approx \pls{e'_2}$.
\end{lemma}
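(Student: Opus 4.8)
The plan is to prove this by structural induction on the derivation of $\pls{e_2 --> e'_2}$ in the space‑efficient SOS, and, within each case, by inversion on the last rule of $\pla{e_1} \approx \pls{e_2}$ --- the exact mirror image of the preceding \emph{Weak Simulation} lemma. I would keep the standing assumption that evidence composition is associative (the hypothesis of Theorem~\ref{thm:weak-bis}, and a consequence of forward completeness via Theorem~\ref{thm:fc-assoc}), since that is what makes the side condition of [Sim-Casts] --- that the two stacks of total evidence fold to the same value, each in its own semantics' order --- a coherent invariant to maintain.

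The recursive $\pls$-rules whose premise is itself a reduction, namely [SRec-v], [SRec-x], [SFe], [SCx], [SStack] and [SStack'], I would dispatch with the induction hypothesis. In each, the shape of $\pls{e_2}$ forces $\pla{e_1} \approx \pls{e_2}$ to be an instance of the matching congruence rule ([Sim-Rec], or one of [Sim-App]/[Sim-Plus]/[Sim-If]/[Sim-Proj] according to the frame), which exposes a subterm relation $\pla{\evcast{\ev}{e''}} \approx \pls{\evcast{\botev}{e}}$; since its $\pls$-side is an ascription, it is in turn an instance of [Sim-Casts], so $\pla{\evcast{\ev}{e''}}$ is an evidence stack $\pla{\ev_1 \cdots \evcast{\ev_n}{c}}$ over a non-ascription core $c$ with $c \approx e$, folding to $\botev$. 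Applying the induction hypothesis to $\pls{e --> e'}$ and $c \approx e$ yields $\pla{c -->^{*} c'}$ with $c' \approx e'$; I would lift this reduction first through the evidence stack and then through the enclosing $\pla$-frame, by iterating the congruence rules [Fv] and [Rec-v] of the $\pla$ SOS, obtaining $\pla{e_1 -->^{*} e'_1}$ with $c'$ in the core position, and then rebuild the correspondence: the inner relation by [Sim-Casts] with the stack unchanged (still folding to $\botev$) and cores related by $c' \approx e'$ --- flattening the two stacks and re-checking the fold with associativity when $e'$ or $c'$ has itself become an ascription --- and the whole-term relation by the original congruence rule. The cases [SCx] and [SStack'] are the same except the induction hypothesis delivers $\pls{\error}$, $\pls{e'_2} = \pls{\error}$, and the correspondence closes by [Sim-Err]. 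The notion-of-reduction base cases [SEv] and [SpEf-Ex] I would handle by noting that $\pls{e_2}$ is syntactically one of the redexes of Fig.~\ref{fig:gtflr-dynamics}, which forces a congruence rule whose $\pla$-side operands are evidence stacks over values; these I collapse first with [Cv] (each composition succeeds, since by [Sim-Casts] the stacks fold to the very same non-$\bot$ evidences appearing in the $\pls$-redex), bringing $\pla{e_1}$ into redex form, then fire the matching $\pla$ notion of reduction to obtain the counterpart of $\pls{e'_2}$; the $\beta$-case also uses the substitution-preserves-simulation lemma.

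The genuinely interesting cases are the eager-composition rules [SCe] and [SCe'] and the latent-failure rules [SFx] and [SFx'] --- exactly the internal steps the $\pls$-machine makes that the $\pla$-machine need \emph{not} mirror, and the reason the bisimulation is only weak. For [SCe'], $\pls{e_2} = \pls{\evcast{\botev_1}{\evcast{\botev_2}{e}}}$ is an ascription, so $\pla{e_1} \approx \pls{e_2}$ is an instance of [Sim-Casts]: $\pla{e_1}$ is an evidence stack over a non-ascription core related to $e$, whose $\pla$-fold equals the $\pls$-fold $\botev_2 \bottrans{} \botev_1$. After the step, $\pls{e'_2} = \pls{\evcast{(\botev_2 \bottrans{} \botev_1)}{e}}$ has a one-element stack folding to exactly $\botev_2 \bottrans{} \botev_1$ --- unchanged --- so $\pla{e_1} \approx \pls{e'_2}$ holds by [Sim-Casts] with \emph{zero} $\pla$-steps; [SCe] with a non-empty frame is identical, one level inside the ambient congruence rule. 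For [SFx'] and [SFx] the $\pls$-term carries $\bot$, so by the [Sim-Casts] side condition the matching $\pla$-stack folds (totally) to $\bot$; collapsing that stack from the inside out with [Cv] must therefore at some point meet an undefined $\trans{}$, triggering [Cx] and sending the $\pla$-machine to $\pls{\error}$, matched to $\pls{e'_2} = \pls{\error}$ by [Sim-Err].

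The hard part will be the same bookkeeping that dominates the forward direction: in [SFe]/[SCx] and [SStack]/[SStack'] the $\pls$-machine reduces a redex sitting under a single pending cast, while on the $\pla$ side that redex may be buried under an arbitrarily deep, not-yet-collapsed stack of casts, so matching the step requires lifting the induction-hypothesis reduction through the whole stack and, when the reduced subterm emits fresh casts, reflattening both stacks and appealing to associativity to see that the folds still agree. Keeping the many nested-cast configurations straight --- the same ones enumerated by the outside-in inversion principle of Lemma~\ref{lem:brr-space-efficient-gc} --- rather than any one conceptual step, is where essentially all the effort goes; the zero-step composition cases, which look delicate, turn out to be immediate once [Sim-Casts] is in place.
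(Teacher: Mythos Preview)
Your proposal is correct and follows essentially the same approach as the paper's own proof: structural induction on the $\pls{-->}$ derivation, inversion through the congruence simulation rules to reach [Sim-Casts], zero $\pla$-steps for [SCe]/[SCe'], collapse-then-[Cx] for [SFx]/[SFx'], and lifting the induction hypothesis through the evidence stack and frame for the recursive rules. Your treatment of [SEv]/[SpEf-Ex] (first collapsing the $\pla$-side stacks with [Cv] before firing the notion of reduction) is in fact more explicit than the paper's, which dismisses these cases as ``trivial by Notions of reduction preserve simulation'' even though that lemma is stated only in the forward direction; your version makes the needed argument visible. Likewise, your uniform handling of [SFe]---apply the IH on the core relation $c \approx e$, lift, then reflatten and re-check the fold via associativity when the reduct is itself an ascription---is cleaner than the paper's sketch of ``reason by cases on previous rules in the derivation of $\pls{-->}$,'' but amounts to the same work.
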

\begin{proof}
  By structural induction on $\pls{-->}$, and proceeding by cases on
  the simulation relation.
  \begin{itemize}
  \item Rules [SEv] and [SpEf-Ex] are trivial by the ``Notions of
    reduction preserve simulation'' lemma.
  \item Rule [SCe]. Terms are related via [Sim-App] and use
    [Sim-Casts] on the hole, and can be
    related back using [Sim-Casts] after taking no steps on $\pla{->^{*}}$.
  \item Rule [SCe']. Terms are related via [Sim-Casts], and
    we can simply take no steps on $\pla{->^{*}}$. 
  \item Rule [SFx]. Terms can only be related via
    [Sim-Casts]. By associativity of evidence composition, we can
    take a combination of [Cv] steps and then a final [Cx] step to
    at some point reach an undefined evidence composition and
    produce an $\pla{error}$.
    
  \item Rule [SFx']. In the case for this rule, there is a subterm that
    has $\bot$ as evidence, and that term must be related by
    [Sim-Casts] to a subterm of the related term.

    Since all $\pls{F'}$ are also $\pla{F}$s,
    we will take a combination of [Cv] steps and then a [Cx] step
    to produce a $\pla{\error}$.
  \item Rule [SFe]         \textbf{This is the key case of this proof}.
    We would like to appeal directly to the induction hypothesis,
    but we cannot because the term stepping in our assumption is
    not directly related by the simulation relation.  Just like we
    did on the Weak Simulation proof, we can proceed by cases
    assuming that just as [Sim-App]/[Sim-If]/[Sim-Proj]/[Sim-Plus] will be the last rule,
    [Sim-Casts] will be the second-to-last, and we can then
    reason by cases on previous rules that could apply on the
    derivation of $\pls{->}$.

  \item Rule [SCx]. We can apply rules [Fv] and a final rule [Fx]
    at each step of the reduction we get from applying the
    induction hypothesis.  Error terms are then related by [Sim-Err].
  \item Rule [SStack]. We can apply rule [Fv] at each step of the
    reduction we get from applying the induction hypothesis. Since
    the output terms are also
    related by the simulation relation according to the induction
    hypothesis, we can connect the full terms back together
    via [Sim-Casts].
  \item Rule [SStack']. We can apply rule [Fv] at each step of the reduction we get from applying the
    induction hypothesis.  Given the way we have defined all the
    rules in the simulation relation that have an $\pls{\error}$
    term on the right, and given the presences of the [SFx] and [SFx'] rules (which imply that every redex that reduces to
    something with a $\bot$ evidence will eventually produce an
    $\pls{\error}$), all terms will reduce to error, and thus we
    can use rule [Sim-Err] to build the simulation derivation.

  \item Rules [SRec-v] and [SRec-x] can appeal directly to the induction
    hypothesis, and use their respective [Rec-v] and [Rec-x]
    counterparts in the non-space-efficient domain, as we can either
    rebuild the related terms directly via [Sim-Rec] or [Sim-Err], respectively.

  \end{itemize}
\end{proof}

\begin{theorem}[Weak Bisimulation]
  Relation $\pla{\cdot} \approx \pls{\cdot}$ forms a Weak
  Bisimulation~\cite{sangiorgi_2011} between $\pla{-->}$ and $\pls{-->}$.
\end{theorem}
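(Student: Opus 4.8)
The plan is to obtain the theorem as an immediate corollary of the two simulation lemmas already proved above: the \emph{Weak Simulation} lemma and the \emph{Reverse Weak Simulation} lemma. Since neither $\pla{-->}$ nor $\pls{-->}$ carries transition labels, a relation $R$ between their configurations is a weak bisimulation in the sense of~\cite{sangiorgi_2011} exactly when, whenever $\pla{e_1}\mathrel{R}\pls{e_2}$: (i) each step $\pla{e_1 --> e'_1}$ is matched by some possibly-empty reduction $\pls{e_2 ->^{*} e'_2}$ with $\pla{e'_1}\mathrel{R}\pls{e'_2}$; and (ii) symmetrically, each step $\pls{e_2 --> e'_2}$ is matched by some $\pla{e_1 ->^{*} e'_1}$ with $\pla{e'_1}\mathrel{R}\pls{e'_2}$. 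Clause (i) is precisely the \emph{Weak Simulation} lemma and clause (ii) is precisely the \emph{Reverse Weak Simulation} lemma, both instantiated with $R$ taken to be $\approx$; so the proof reduces to checking that $\approx$ is the relation those lemmas are stated about and that their quantifier shape matches the definition.

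First I would make explicit what $\approx$ relates besides plain terms: rule [Sim-Err] places the pair $\pla{\error}\approx\pls{\error}$ in the relation, and both semantics treat $\error$ as terminal (no outgoing transitions), so the transfer conditions hold vacuously there. The two lemmas are already phrased so that ``stepping to $\error$'' is a special case --- their right-hand side $\pls{e'_2}$ may be $\pls{\error}$ --- so no separate argument for error configurations is needed. I would also note, for meaningfulness, that the shape of the relation forces related configurations to agree on being a cast stack around a raw value ([Sim-U] together with [Sim-Casts]), a cast stack around an elimination/introduction redex ([Sim-App], [Sim-If], [Sim-Plus], [Sim-Proj], [Sim-Rec]), or $\error$ ([Sim-Err]); combined with the [SFx]/[SFx'] cases of the \emph{Reverse Weak Simulation} lemma (a latent $\bot$ eventually yields $\error$ on the $\text{RL}$ side), this yields the convergence-to-value and convergence-to-error preservation underlying the observational-equivalence statement of Theorem~\ref{thm:weak-bis} and the space-efficiency development.

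Second I would record the one point needing care: a weak bisimulation explicitly permits the matching side to answer with \emph{zero} reduction steps, and both lemmas exploit this --- the [SCe] and [SCe'] cases of \emph{Reverse Weak Simulation} answer with the empty $\text{RL}$ reduction, since these are exactly the bookkeeping compositions $\text{RL}^{+}$ performs that $\text{RL}$ postpones --- so $\approx$ is a genuine \emph{weak} bisimulation and not merely a one-directional simulation. Finally I would observe that $\approx$ is closed under the configurations reachable from a related starting pair, since each lemma returns a fresh pair of $\approx$ after every match, so we never leave its carrier; assembling clauses (i) and (ii) then exhibits $\approx$ as a weak bisimulation, which is the claim. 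The genuine difficulty here --- the nested-cast-frame case analysis ([Fv] with an empty frame, and [SFe]) that must invoke associativity of $\bottrans{}$ to reconcile the two evaluation orders --- has already been discharged inside the two lemmas; for the theorem itself only the definitional matching against~\cite{sangiorgi_2011} and the uniform handling of $\error$ remain, both routine.
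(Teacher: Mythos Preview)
Your proposal is correct and takes essentially the same approach as the paper: the paper's proof is the single line ``By Weak Simulation and Reverse Weak Simulation lemmas,'' and your argument is exactly this, with additional (accurate) commentary unpacking why those two lemmas match the two clauses of the weak-bisimulation definition and why the $\error$ and zero-step cases pose no difficulty.
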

\begin{proof}
  By Weak Simulation and Reverse Weak Simulation lemmas.
\end{proof}

\section{Bounded records and bounded rows}
\label{apdx:brr}

\subsection{Proofs of Galois Connection}

\begin{lemma}[$\alpha$ is Sound]
  If $\collecting{T}$ is not empty, then
  $\collecting{T} \subseteq \gamma\left(\alpha\left(\collecting{T}\right)\right)$.
\end{lemma}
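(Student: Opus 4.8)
The plan is to prove soundness by well-founded induction on the structure of the admissible collection $\collecting{T}$, case-analysing on which clause of the definition of $\alpha$ (Fig.~\ref{fig:brr-alpha}) applies, while carrying along an auxiliary claim at the mapping level, namely that $C^{\varnothing} \subseteq \gamma^M(\alpha^M(C^{\varnothing}))$ whenever $\alpha^M(C^{\varnothing})$ is defined. The clauses of $\alpha$ are mutually exclusive, selected by the outermost shape of the types in $\collecting{T}$, so this is a clean split. Three families of cases are immediate. If $\alpha(\collecting{T}) = \?$ (the ``otherwise'' clause), then $\gamma(\?) = \Type$ contains $\collecting{T}$ trivially. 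If $\collecting{T} = \generate{\Bool}$ or $\collecting{T} = \generate{\Int}$, then $\gamma(\alpha(\collecting{T}))$ recomputes exactly that singleton, so containment is equality. If $\collecting{T} = \generate{C_1 -> C_2}$, then nonemptiness of $\collecting{T}$ forces $C_1$ and $C_2$ nonempty, the induction hypothesis gives $C_i \subseteq \gamma(\alpha(C_i))$, and unfolding $\gamma(\alpha(C_1) -> \alpha(C_2)) = \generate{\gamma(\alpha(C_1)) -> \gamma(\alpha(C_2))}$ together with the evident monotonicity of $\generator$ in each argument collection yields $\collecting{T} \subseteq \gamma(\alpha(\collecting{T}))$.

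The substantive case is the record/row clause, where $\collecting{T} = \generate{*\;\overline{\lx_i : C^{\varnothing}_i}}$ and $\alpha(\collecting{T}) = \left[\overline{\lx_i : \alpha^M(C^{\varnothing}_i)}\;*\right]$, so that $\gamma(\alpha(\collecting{T})) = \generate{*\;\overline{\lx_i : \gamma^M(\alpha^M(C^{\varnothing}_i))}}$. First I would dispatch the auxiliary mapping claim by cases on $C^{\varnothing}_i$: if $C^{\varnothing}_i = \set{\varnothing}$ then $\alpha^M(C^{\varnothing}_i) = \varnothing$ and $\gamma^M(\varnothing) = \set{\varnothing}$; if $\varnothing \notin C^{\varnothing}_i$ then $\alpha^M(C^{\varnothing}_i) = (\alpha(C^{\varnothing}_i))_R$, whose concretization is $\gamma(\alpha(C^{\varnothing}_i))$, which contains $C^{\varnothing}_i$ by the induction hypothesis; and if $C^{\varnothing}_i = \set{\varnothing} \cup C$ with $C$ nonempty and $\varnothing$-free then $\alpha^M(C^{\varnothing}_i) = (\alpha(C))_O$, whose concretization is $\set{\varnothing} \cup \gamma(\alpha(C)) \supseteq \set{\varnothing} \cup C = C^{\varnothing}_i$. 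A byproduct is the synchronization fact that $\varnothing \in C^{\varnothing}_i$ iff $\varnothing \in \gamma^M(\alpha^M(C^{\varnothing}_i))$, since no $\gamma(\cdot)$ ever contains $\varnothing$.

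It then remains to push the per-field inclusions through the recursive definition of $\generator$ for records. The generator equations peel off any field whose collection contains $\varnothing$, splitting into a ``present'' case and an ``absent'' case union, and terminate, once no field collection contains $\varnothing$, with the pointwise product of the field collections. I would argue by a nested induction on the number of fields still carrying $\varnothing$ that $\generate{*\;\overline{\lx_i : C^{\varnothing}_i}} \subseteq \generate{*\;\overline{\lx_i : \gamma^M(\alpha^M(C^{\varnothing}_i))}}$, the induction step being immediate from monotonicity of $\generator$ at each splitting and, at the base, from the pointwise inclusions above; the synchronization fact ensures both generator computations perform the same sequence of splittings, so they stay in lock-step. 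When $* = \?$, the generator additionally conjoins an arbitrary record over a disjoint finite label set via $\generate{\overline{\lx_j : \Pow(\Type \cup \set{\varnothing})}}$; this extra factor is literally identical on both sides, since $\alpha$ leaves the row designator alone, so containment is preserved. The delicate part is precisely this interplay between the $\varnothing$-splitting recursion inside $\generator$ and keeping the two concretization computations aligned; everything else reduces to unfolding definitions. Throughout, I rely on the hereditary admissibility of the domain of $\alpha$ — that every sub-collection to which the induction hypothesis is applied is itself admissible and nonempty where required, and that every admissible collection of records or rows genuinely has the uniform finite-field shape matched by the record/row clause of $\alpha$.
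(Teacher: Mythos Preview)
Your proposal is correct and follows essentially the same approach as the paper: a mutual induction over the clauses of $\alpha$ and $\alpha^M$, with the atomic, function, and $\?$ cases immediate and the record/row case carried by the per-field soundness of $\alpha^M$. The only real difference is presentational: where the paper dispatches the record/row case by direct element-chasing (take an arbitrary $[\overline{\lx_j:T_j}]$ in the source generator, observe that each present field survives by the mutual IH and each absent field is justified by $\varnothing$-preservation), you instead package this as monotonicity of $\generator$ plus a nested induction on the number of $\varnothing$-carrying fields, together with your ``synchronization fact.'' Both routes are sound; your framing is slightly more structured, but the synchronization fact is stronger than strictly needed, since monotonicity of $\generator$ already holds without requiring the two computations to split on the same fields in the same order.
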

\begin{proof}
  By mutual induction on the structure of sets induced by the definition of the
  abstraction functions $\alpha$ and $\alpha^M$.

  \boxed{\alpha\text{ cases}}
  \begin{itemize}
  \item   \(\alpha\left(\generate{\Int}\right) = \Int\).
    
    It is the case that \(\generate{\Int} \subseteq \set{\Int}\)
  \item   \(\alpha\left(\generate{\Bool}\right) = \Bool\).
    
    It is the case that \(\generate{\Bool} \subseteq \set{\Bool}\)

  \item \(\alpha\left(\generate{C_1 -> C_2}\right) =  \alpha\left(C_1\right) -> \alpha\left(C_2\right)\).

    Let $T_{a} -> T_{b} \in \generate{C_1 -> C_2}$.  By Induction
    Hypothesis, $T_{a} \in \gamma(\alpha(C_1))$ and
    $T_{b} \in \gamma(\alpha(C_2))$. By the definition of
    the concretization function, then
    $T_a -> T_b \in \gamma(\alpha(\generate{C_1 -> C_2}))$.
  \item
    \(\alpha\left(\generate{\mappings{i}{n}{\lx_i : {C^{\varnothing}}_i}}\right) =
    \left[\mappings{i}{n}{\lx_i : \alpha^M({C^{\varnothing}}_i)}\right]\).

    Let $\left[\mappings{j}{m}{\lx_j : T_j}\right] \in \generate{\mappings{i}{n}{\lx_i : {C^{\varnothing}}_i}}$.

    By mutual induction hypothesis on each label,
    $\left[\mappings{j}{m}{\lx_j : T_j}\right] \in \generate{\mappings{i}{n}{\lx_i : \gamma^M(\alpha^M({C^{\varnothing}}_i))}}$.
    The labels missing in $j$ from $i$ must come from a set
    ${C^{\varnothing}}_i$ such that
    $\varnothing \in {C^{\varnothing}}_i$, and for all
    $C^{\varnothing}$,
    $\varnothing \in C^{\varnothing} => \varnothing \in \gamma^M(\alpha^M({C^{\varnothing}}))$.
    
  \item
    \(\alpha\left(\generate{\?\;\mappings{i}{n}{\lx_i : {C^{\varnothing}}_i}}\right) =
    \left[\mappings{i}{n}{\lx_i : \alpha^M({C^{\varnothing}}_i)}\;\?\right]\).

    Let
    $\left[\mappings{j}{m}{\lx_j : T_j}\right] \in \generate{\?\;\mappings{i}{n}{\lx_i : {C^{\varnothing}}_i}}$. Therefore,
    $\mappings{j}{m}{\lx_j : T_j} = \mappings{k}{p}{\lx_p : T_p} \mappings{h}{q}{\lx_q : T_q}$,
    where
    $\left[\mappings{k}{p}{\lx_k : T_k}\right] \in \generate{\mappings{i}{n}{\lx_i : {C^{\varnothing}}_i}}$
    and
    $\left[\mappings{h}{q}{\lx_h : T_h}\right] \in \generate{\mappings{h}{q}{\lx_h : \Pow(\Type \cup \set{\varnothing})}}$,
    where $\dom(\lx_i) \cap \dom(\lx_h) = \emptyset$.  We follow the
    mutual induction hypothesis on each restricted label $\lx_k$ as
    in the previous case
    to know
    $\left[\mappings{k}{p}{\lx_k : T_k}\right] \in \generate{\mappings{i}{n}{\lx_i : \gamma^M(\alpha^M({C^{\varnothing}}_i))}}$,
    and since the abstraction generates a bounded row, we can
    extract from its concretization the appropriate type for the
    rest of the mappings.

  \item
    \(\alpha\left(C\right) = \? \).

    Trivial since $\gamma\left(\alpha(C)\right) = \Type$.
  \end{itemize}

  \boxed{\alpha^M\text{ cases}}
  \begin{itemize}
  \item \(\alpha^M\left(\set{\varnothing}\right) = \varnothing\).
    It is the case that
    $\set{\varnothing} \subseteq \set{\varnothing}$.

  \item \(\alpha^M\left(C\right) = (\alpha(C))_R\),
    \(\varnothing \not\in C\).

    Let $T \in C$.  By mutual induction hypothesis, $T \in \gamma(\alpha(C))$.
    
  \item
    \(\alpha^M\left(\set{\varnothing} \cup C\right) = (\alpha(C)_O)\).

    If $C$ is empty, we fall back to the previous case.  If $C$ is
    not empty, we case by the kind of elements of the set:

    \begin{itemize}
    \item \(\varnothing \in \set{\varnothing} \cup C\).  By
      definition of $\gamma^M(S_O)$, $\varnothing \in
      \gamma(\alpha(C)_O)$.
    \item \(T \in C\).  By mutual induction hypothesis, $T \in \gamma(\alpha(C))$.
    \end{itemize}
    
  \end{itemize}
\end{proof}
\begin{lemma}[$\alpha$ is Optimal]
  If $\collecting{T}$ is not empty and
  $\collecting{T} \subseteq \gamma\left(\cT\right)$, then
  $\alpha\left(\collecting{T}\right) \sqsubseteq \cT$
\end{lemma}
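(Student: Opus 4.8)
The plan is to prove the lemma by mutual structural induction on the gradual type $\cT$ (and, in the record and row cases, on its mappings $M$), dispatching on the syntax of $\cT$ — the natural companion to the Soundness proof above, which dispatched instead on the shape of the set. We take $\collecting{T}$ in the (restricted) domain of $\alpha$, so that $\alpha(\collecting{T})$ can be computed by the equations of Fig.~\ref{fig:brr-alpha}; the hypothesis $\collecting{T} \subseteq \gamma(\cT)$ is used throughout to pin down the shape of $\collecting{T}$ and then to compare $\alpha(\collecting{T})$ with $\cT$ under the BRR precision order $\sqsubseteq$.

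The easy cases: if $\cT = \?$ then $\alpha(\collecting{T}) \sqsubseteq \?$ since $\?$ is the top of $\sqsubseteq$. If $\cT \in \set{\Int,\Bool}$ then $\gamma(\cT) = \generate{\cT} = \set{\cT}$, so non-emptiness of $\collecting{T}$ forces $\collecting{T} = \set{\cT}$ and hence $\alpha(\collecting{T}) = \cT$. If $\cT = \cT_1 \to \cT_2$, then since $\collecting{T} \subseteq \gamma(\cT_1\to\cT_2) = \generate{\gamma(\cT_1)\to\gamma(\cT_2)}$ consists of arrow types and lies in the domain of $\alpha$, it has the form $\generate{C_1 \to C_2}$ with $C_i = \pi_i(\collecting{T})$ non-empty and $C_i \subseteq \gamma(\cT_i)$; thus $\alpha(\collecting{T}) = \alpha(C_1) \to \alpha(C_2)$, the induction hypotheses give $\alpha(C_i) \sqsubseteq \cT_i$, and $\alpha(\collecting{T}) \sqsubseteq \cT$ follows since the arrow constructor is monotone for $\sqsubseteq$.

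The substantive case is $\cT = \brec{\overline{\lx_i : M_i}}$ or $\cT = \brow{\overline{\lx_i : M_i}}$. The plan is threefold. (1) \emph{Reconcile label domains.} From $\collecting{T}\subseteq\gamma(\cT)$ and the domain restriction on $\alpha$, $\collecting{T}$ is a decomposition over a common \emph{finite} label domain — this is exactly where the topology-style finiteness condition matters, since $\collecting{T}$ has either finitely many potentially-present fields (abstracting to a bounded record) or finitely many absent fields (abstracting to a bounded row) — which, together with $\overline{\lx_i}$, can be organized so that $\alpha$'s record/row equation applies. (2) \emph{Reduce to a per-label statement.} For each label $\lx$ in the reconciled domain, let $N_\lx$ be the projection of $\collecting{T}$ onto $\lx$ (records lacking $\lx$ contributing $\varnothing$), and let $M_\lx$ be the mapping that $\cT$ assigns to $\lx$, i.e.\ $M_i$ when $\lx = \lx_i$ and the default mapping of $\cT$ — $\varnothing$ if $\cT$ is a record, the optional-unknown mapping if $\cT$ is a row — otherwise. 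Then $\collecting{T}\subseteq\gamma(\cT)$ yields $N_\lx \subseteq \gamma^M(M_\lx)$, and it remains to show $\alpha^M(N_\lx)\sqsubseteq M_\lx$. (3) \emph{Close each label by the inner induction hypothesis on $\alpha^M$:} if $N_\lx = \set{\varnothing}$, $\alpha^M$ returns $\varnothing \sqsubseteq M_\lx$; if $\varnothing\notin N_\lx$, then $M_\lx$ is $(\cT')_R$ or $(\cT')_O$ with $N_\lx\subseteq\gamma(\cT')$, so $\alpha(N_\lx)\sqsubseteq\cT'$ by IH and hence $(\alpha(N_\lx))_R\sqsubseteq M_\lx$; if $N_\lx = \set{\varnothing}\cup C$ with $C$ non-empty, then $M_\lx$ must be $(\cT')_O$ (a record mapping cannot hold $\varnothing$) with $C\subseteq\gamma(\cT')$, and the IH closes it again. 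Finally, verify that the row/record designator emitted by $\alpha$ is $\sqsubseteq$-below that of $\cT$ in every combination — in particular that $\alpha$ never emits the $\?$ designator when $\cT$ is a plain record, which holds because $\collecting{T}\subseteq\gamma(\cT)$ with $\cT$ a record rules out records carrying labels outside $\overline{\lx_i}$.

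I expect the main obstacle to be step (1): carefully defining the reconciled finite label domain, justifying its finiteness from the domain restriction on $\alpha$, and checking designator compatibility in all row/record combinations. Once the label domains are aligned, steps (2)–(3) are a routine unfolding of the definitions of $\gamma$, $\gamma^M$, $\alpha$, and $\alpha^M$ with the induction hypotheses, the mapping subcases mirroring those listed for $\alpha^M$ in the Soundness proof.
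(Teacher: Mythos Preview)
Your proposal is correct and takes essentially the same approach as the paper: mutual structural induction on $\cT$ and $M$, with the base and arrow cases handled identically and the record/row cases reduced to a per-label mapping analysis that mirrors the paper's $M$ cases. Your write-up is in fact more careful than the paper's about the label-domain reconciliation, the finiteness constraints on the domain of $\alpha$, and the row/record designator compatibility, all of which the paper treats rather informally.
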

\begin{proof}
  Remember that $\cT_1 \sqsubseteq \cT_2$ iff
  $\gamma(\cT_1) \subseteq \gamma(\cT_2)$. Then we proceed by
  mutual structural induction on the definitions of $\cT$ and $M$.

  \boxed{S\text{ cases}}
  \begin{itemize}

  \item $\cT = \Int$.

    Only nonempty subset of $\gamma(\cT)$ is $\set{\Int}$, and 
    $\alpha(\set{\Int}) = \Int$.
    
  \item $\cT = \Bool$.

    Only nonempty subset of $\gamma(\cT)$ is $\set{\Bool}$, and
    $\alpha(\set{\Bool}) = \Bool$.

  \item $\cT = \cT_1 -> \cT_2$.
    By induction hypothesis.
    
  \item $\cT = \?$

    Trivial since $\gamma(\?) = \Type$.

  \item $\cT = \brec{\mappings{i}{n}{\lx_i : M_i}}$
    Since $\collecting{T} \subseteq \gamma(\cT)$, 
    $\alpha(\collecting{T}) = \brec{\mappings{j}{m}{\lx_j : M_j}}$,
    where $\dom(\lx_j) \subseteq \dom(\lx_i)$.

    Let $T = \rec{\overline{l_a:T_a}} \in \gamma(\alpha(\collecting{T}))$. The
    label domain for $T$ must be a subset of the domain of $\cT$ and
    for every mapping $\lx_a : T_a$, by definition and induction
    hypothesis, $T_a \in \gamma^M(M_j)$ for some $j$ such that
    $\lx_a = \lx_j$.

  \item $\cT = \brec{\mappings{i}{n}{\lx_i : M_i} \;\?}$

    Depending on the subset taken from the concretization,
    abstraction might generate a bounded row or a bounded record.
    In any case, all types in the concretization of that type would
    be a  subset of those in the original row.

    The interesting case is when we get a smaller set that still
    generates a row.  For any record type in this concretization,
    there is a potential set labels for which their type came from
    the row designator $\?$, but those will all have been in the
    original concretization of the row.  For those mappings in the
    type that now come from declared labels in the bounded row, by
    induction hypothesis, their types will be in the concretization.
    We do not need to worry about the set of absent
    labels, as by shrinking the set on the left, the set of absent
    labels can only grow, which will only impact the concretization
    by making it smaller, as we want to.
  \end{itemize}

  \boxed{M\text{ cases}}
  \begin{itemize}
  \item \(M = \set{\varnothing}\).

    Only nonempty subset of $\gamma^M(M)$ is $\set{\varnothing}$,
    and $\alpha^M(\set{\varnothing}) = \varnothing$.
    
  \item \(M = \cT_R\).

    Since $\varnothing$ is not a member of $\gamma^M(\cT_R)$, we
    can appeal to the mutual induction hypothesis directly.

  \item \(M = \cT_O\).

    Let's consider separately the case we take a subset containing
    $\varnothing$ or not.

    \begin{itemize}
    \item If $\varnothing$ not in the set, then we can appeal to
      the mutual induction hypothesis directly.
    \item If $\varnothing$ in the set, we must check for the rest
      of the set contents.  If the set is the singleton
      $\set{\varnothing}$, $\gamma^M(\cT_O)$ contains it.
      If there is any other elements, we remove $\varnothing$ for
      the set and appeal to the mutual induction hypothesis, while
      later adding $\varnothing$ on both sides of the inequality.
    \end{itemize}
    
  \end{itemize}

\end{proof}
\begin{theorem}[Bounded Rows form a Galois Connection]
\end{theorem}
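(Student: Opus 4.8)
The plan is to derive the Galois-connection property directly from the two results already in hand, \emph{$\alpha$ is Sound} and \emph{$\alpha$ is Optimal}, together with the characterization of precision by concretization, $\cT_1 \sqsubseteq \cT_2$ iff $\gamma(\cT_1) \subseteq \gamma(\cT_2)$, established in Sec.~\ref{sec:agt}. Rather than checking the unit and counit inequations separately, I would prove the single adjunction statement $\alpha(\collecting{T}) \sqsubseteq \cT$ iff $\collecting{T} \subseteq \gamma(\cT)$, for every nonempty admissible collection $\collecting{T}$ and every $\cT \in \GType$, and then observe that this adjunction is by definition what it means for $(\alpha,\gamma)$ to be a Galois connection between the two posets.

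First I would fix the domain, since the statement is only meaningful once that is pinned down. I would name the family $\mathcal{F} \subseteq \Pow^{+}(\Type)$ of \emph{admissible} collections sketched informally in Sec.~\ref{ssec:brrai} --- those that hereditarily (i) bound only a finite set of field types away from $\?$ and (ii) have either finitely many potentially-present fields or finitely many absent fields --- and check two routine facts: that the equations of Fig.~\ref{fig:brr-alpha} are exhaustive and deterministic on nonempty members of $\mathcal{F}$, so that $\alpha$ restricts to a genuine function into $\GType$, and that $\gamma(\cT) \in \mathcal{F}$ for every $\cT$, so that both composites $\alpha\circ\gamma$ and $\gamma\circ\alpha$ are well-formed. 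The second is an easy induction on $\cT$ from the shape of the generator equations; the first amounts to noting that any admissible collection matches exactly one defining pattern ($\Bool$, $\Int$, an arrow decomposition, a record/row decomposition, or the catch-all $\?$), and similarly for mappings.

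With the domain fixed, the adjunction is immediate. For the direction $\alpha(\collecting{T}) \sqsubseteq \cT \Rightarrow \collecting{T} \subseteq \gamma(\cT)$: monotonicity of $\gamma$ (the forward half of the precision characterization) gives $\gamma(\alpha(\collecting{T})) \subseteq \gamma(\cT)$, and the Soundness Lemma gives $\collecting{T} \subseteq \gamma(\alpha(\collecting{T}))$, so we conclude by transitivity. For the converse, if $\collecting{T} \subseteq \gamma(\cT)$ then the Optimality Lemma yields $\alpha(\collecting{T}) \sqsubseteq \cT$ directly. Instantiating $\cT := \alpha(\collecting{T})$ in the first direction recovers the unit inequation $\collecting{T} \subseteq \gamma(\alpha(\collecting{T}))$, and instantiating $\collecting{T} := \gamma(\cT)$ in the second recovers reductivity $\alpha(\gamma(\cT)) \sqsubseteq \cT$; monotonicity of $\alpha$ on $\mathcal{F}$ likewise follows from Optimality. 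These are the standard ingredients, so no further work is needed.

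I expect the genuinely hard part to be contained entirely in the two preceding lemmas, whose mutual structural inductions over decompositions and mappings carry all the real case analysis --- in particular the row cases, where shrinking the left-hand collection can convert a row abstraction into a record abstraction and can only enlarge the set of absent labels. Given those, the theorem itself is bookkeeping; the one place where care is genuinely required is the domain restriction, because on unrestricted collections the greatest lower bound $\bigsqcap\{\cT \mid \collecting{T} \subseteq \gamma(\cT)\}$ need not exist (witness $\{\rec{\lx:\Int} \mid \lx \in \Label\}$ with $\Label$ infinite), so the connection must be read as holding between $\mathcal{F}$ ordered by $\subseteq$ and $\GType$ ordered by $\sqsubseteq$, not between all of $\Pow^{+}(\Type)$ and $\GType$.
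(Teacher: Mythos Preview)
Your proposal is correct and takes essentially the same approach as the paper: the paper's proof is the one-liner ``By Sound and Optimality Lemmas,'' and you derive the Galois connection from exactly those two lemmas, merely spelling out the standard adjunction argument and the domain restriction that the paper leaves implicit.
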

\begin{proof}
  By Sound and Optimality Lemmas.
\end{proof}

\subsection{ BRR is Gamma-Complete }
In this section we prove the property that BRR is gamma-complete.

\begin{lemma}[Consistent transitivity preserves evidence
  well-formedness]
  \label{lem:brr-trans-wf}
  For every two evidence objects $|- \ev_1 \wf$ and
  $|- \ev_2 \wf$,   if there exists
  $\ev_3 = \ev_1 \trans{<:} \ev_2$, then $|- \ev_3 \wf$.
\end{lemma}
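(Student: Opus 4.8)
The plan is to exploit the characterization established in \Secref{sec:agt}: the well-formed evidence objects are \emph{exactly} the pairs of gradual types in the image of $\alpha^{<:}\circ\gamma^{<:}$, equivalently the fixed points of $\alpha^{<:}\circ\gamma^{<:}$ (a standard Galois-connection fact, $\alpha\circ\gamma\circ\alpha = \alpha$, which makes the image of $\alpha$ and the fixed points of $\alpha\circ\gamma$ coincide). By the AGT definition, $\ev_1 \trans{<:} \ev_2 = \alpha^{<:}(\Idp(\gamma^{<:}(\ev_1)\relcomp\gamma^{<:}(\ev_2)))$. The hypothesis that $\ev_3 = \ev_1\trans{<:}\ev_2$ exists means $R := \gamma^{<:}(\ev_1)\relcomp\gamma^{<:}(\ev_2)$ is non-empty, so $\ev_3 = \alpha^{<:}(R)$, and it then suffices to show $\alpha^{<:}(\gamma^{<:}(\alpha^{<:}(R))) = \alpha^{<:}(R)$, i.e.\ the Galois identity, provided $R$ lies in the domain on which $\alpha^{<:}$ is well-defined. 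A slight variant routes through \Propref{prop:ctrans-iml}: $\ev_1\trans{<:}\ev_2$ is the output of the initial-evidence operator $\Isub$ applied to a consistent-subtyping instance, and by \Propref{prop:wfev} together with the definition of the evidence-for-consistent-subtyping judgment, every \emph{defined} application of $\Isub$ is already a well-formed evidence object; since $\ev_3$ is assumed defined, that outermost $\Isub$ is defined and hence $\ev_3$ is well formed.

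Carrying this out requires two checks. First, $R\subseteq{\sub}$: immediate, since each $\gamma^{<:}(\ev_i)$ consists of pairs related by $\sub$ and relational composition of such sets stays inside $\sub$ because static subtyping is transitive. Second --- and this is the real work --- $R$ must satisfy the finitary admissibility conditions of \Secref{ssec:brrai} that make $\alpha^{<:}$ well-defined on it: hereditarily, every set of records occurring inside $R$ must bound only finitely many field types by a non-$\?$ type and must have either finitely many potentially-present fields or finitely many absent fields. This holds for the concretizations $\gamma^{<:}(\ev_i)$, so the content of this step is that relational composition preserves admissibility. This is precisely the non-empty case of \Lemref{lem:rc-closed-wrt-gamma}; proved on its own (as it must be here, since that lemma is itself established in this appendix) it goes by induction on $\vdash\ev_1\wf$ and $\vdash\ev_2\wf$, reading off which labels can survive in the composed record/row --- a shared label survives only when its two mappings admit a common value, a label present only on the left of the first factor or only on the right of the second is carried through unchanged, and the row designators $*_1,*_2$ together with the default mappings $D(\cdot)$, $D(\?)$ fix the rest --- and observing that each resulting set is again finitary. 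I expect this admissibility-under-composition argument to be the main obstacle.

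If one prefers to avoid the Galois machinery and argue purely syntactically, the alternative is a direct double induction on the derivations of $\vdash\ev_1\wf$ and $\vdash\ev_2\wf$, with an inner induction on $\vdash\braket{M,M}\wf$ for the mapping sub-derivations, following the recursive definition of $\trans{<:}$ (the BRR analogue of Figs.~\ref{fig:gr_ctrans1}--\ref{fig:gr_ctrans2}, or equivalently \Propref{prop:ctrans-iml} with the definitions of $\Isub$ in \Figref{fig:gtfl-initial-evidence} and gradual meet in \Figref{fig:meet}). The base cases ($\Int$, $\Bool$, $\?$) are immediate, and the arrow case follows from the induction hypothesis applied contravariantly on domains and covariantly on codomains, re-assembled via the arrow rule of \Figref{fig:brr-wf-ev}. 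The crux here --- and the main obstacle in this route --- is the record/row cases: one partitions label domains into the labels shared by all component types, those occurring only on the left of a factor, and those occurring only on the right; composes mappings pointwise using the well-formed-mapping rules (re-establishing the condition $\cT_1 <= \cT_3$ of the optional-mapping rule from monotonicity of $\meet$ and $\Isub$ with respect to precision); and verifies the side conditions of the record well-formedness rule, namely that $\braket{*_1,*_2}$ does not degenerate to $\braket{\cdot,\?}$ and that the three label groups remain pairwise disjoint. This is exactly the case explosion that \Thmref{thm:brr-fc} is designed to sidestep for the more delicate associativity property; for well-formedness alone it is still manageable, though tedious.
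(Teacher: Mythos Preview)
Your proposal is correct. The paper's proof is a terse structural induction on the well-formedness judgment---matching the skeleton of your second route---but for the interesting record/row cases it does not carry out the syntactic case analysis you outline; instead it appeals directly to the semantic definition of $\trans{<:}$ as $\alpha^{<:}$ applied to a relational composition, which is essentially your first route invoked locally. So the paper's argument is in effect a hybrid of your two approaches.

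Your first route, taken globally, is the cleaner choice: once one knows that BRR well-formed evidence coincides with the image of $\alpha^{<:}$ (the BRR analogue of Prop.~\ref{prop:wfev}), well-formedness of $\ev_3$ is immediate from its definition as $\alpha^{<:}(R)$, and the only remaining work is confirming that $R$ lies in the admissible domain---which, as you correctly identify, is the content of Lemma~\ref{lem:rc-closed-wrt-gamma}. Your fully syntactic second route avoids depending on that BRR analogue of Prop.~\ref{prop:wfev} but pays for it with the record/row case explosion you describe; the paper sidesteps that explosion by falling back on the semantic argument precisely at those cases.
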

\begin{proof}
  By structural induction over the definition of the
  well-formedness judgment. The only interesting cases are
  those with Bounded Records/Rows in both evidence objects, and the
  definition of $\trans{<:}$ in terms of
  $\gamma^{<:}(\ev_1) \relcomp \gamma^{<:}(\ev_2)$ guarantees
  that a well-formed evidence object is produced.
\end{proof}

\begin{theorem}[Gamma-completeness of BRR]
  \label{thm:brr-gamma-completeness}
  
  For every two evidence objects $|- \ev_1 \wf$ and
  $|- \ev_2 \wf$, 
  \begin{displaymath}
    \gamma^{<:}(\ev_1) \relcomp \gamma^{<:}(\ev_2) =
    \gamma^{<:}(\ev_1 \trans{<:} \ev_2)
  \end{displaymath}
\end{theorem}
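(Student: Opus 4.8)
The plan is to obtain the equation as a short unfolding of the indirect definition of consistent transitivity from Sec.~\ref{sec:agt} together with the structural work already packaged in Lemma~\ref{lem:rc-closed-wrt-gamma}; this is exactly the content of the forward-completeness theorem (Theorem~\ref{thm:brr-fc}), now spelled out for the concrete BRR instance. First I would recall that, by definition,
\[
  \ev_1 \trans{<:} \ev_2 = \alpha^{<:}\bigl(\Idp(\gamma^{<:}(\ev_1) \relcomp \gamma^{<:}(\ev_2))\bigr),
\]
and abbreviate $\R := \gamma^{<:}(\ev_1) \relcomp \gamma^{<:}(\ev_2)$, so that everything reduces to understanding $\alpha^{<:}(\Idp(\R))$ and comparing its concretization with $\R$.

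The steps, in order, are: (1) invoke Lemma~\ref{lem:rc-closed-wrt-gamma} on the well-formed $\ev_1,\ev_2$ to get the dichotomy that either $\R = \emptyset$ or $\R = \gamma^{<:}(\ev_3)$ for some $|- \ev_3 \wf$; (2) in the empty case, $\Idp(\R)$ is undefined, hence $\ev_1 \trans{<:} \ev_2$ is undefined, so — under the reading forced by the partiality of $\trans{<:}$, where an undefined composition has empty concretization — both sides of the claimed equality are $\emptyset$; (3) in the nonempty case, $\Idp(\R) = \R$, so $\ev_1 \trans{<:} \ev_2 = \alpha^{<:}(\gamma^{<:}(\ev_3))$, and I would then note that $\alpha^{<:}(\gamma^{<:}(\ev)) = \ev$ for every well-formed evidence object $\ev$ — because evidence objects are precisely the image of $\alpha^{<:}\circ\gamma^{<:}$ (Sec.~\ref{sec:agt}) and any Galois connection satisfies $\alpha^{<:}\circ\gamma^{<:}\circ\alpha^{<:} = \alpha^{<:}$, making each evidence object a fixed point of $\alpha^{<:}\circ\gamma^{<:}$; alternatively a routine induction on $|- \ev \wf$ works. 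Applying this to $\ev_3$ gives $\ev_1 \trans{<:} \ev_2 = \ev_3$, whence $\gamma^{<:}(\ev_1 \trans{<:} \ev_2) = \gamma^{<:}(\ev_3) = \R = \gamma^{<:}(\ev_1) \relcomp \gamma^{<:}(\ev_2)$. As a byproduct this reproves Lemma~\ref{lem:brr-trans-wf}, since $\ev_1 \trans{<:} \ev_2$ is literally the well-formed $\ev_3$.

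The genuine obstacle is not this packaging but Lemma~\ref{lem:rc-closed-wrt-gamma} itself: that the relational composition of two evidence concretizations, whenever nonempty, is again the concretization of some well-formed evidence object. Its proof is the double induction on the derivations of $|- \ev_1 \wf$ and $|- \ev_2 \wf$, with a nested induction on the well-formed-mapping judgment; the atomic and arrow cases fall out immediately, and all the difficulty is in the bounded-record and bounded-row cases, where one must verify that composing — field by field, and accounting for labels present in one side but not the other via the default-mapping convention — families of static records that share a label domain, carry finitely many optional fields, and carry either finitely many necessarily-absent fields or an unknown row tail, again yields such a family, and that the BRR syntax has a symbol ($R$, $O$, $\varnothing$, or the row $\?$) for every resulting outcome. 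This is exactly where required, optional, and absent annotations each pull their weight; omitting any of them forces the composite to spill into the unknown row and destroys the equality, as the motivating examples of Sec.~\ref{sec:brr} illustrate. Once that lemma is in hand, the theorem above is the two-line unfolding just described.
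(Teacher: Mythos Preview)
Your proposal is correct and follows essentially the same route as the paper: unfold the definition of $\trans{<:}$, invoke Lemma~\ref{lem:rc-closed-wrt-gamma} to write the relational composition as $\gamma^{<:}(\ev_3)$ for some well-formed $\ev_3$, and then close the loop via a Galois-connection identity. The only cosmetic difference is that the paper argues the two inclusions separately via soundness and optimality, whereas you use the fixed-point form $\alpha^{<:}\circ\gamma^{<:}(\ev_3)=\ev_3$ directly; these are equivalent packagings of the same fact, and your explicit handling of the empty case is if anything slightly more careful than the paper's.
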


\begin{proof}

  By definition of evidence composition,
  $\gamma^{<:}(\ev_1 \trans{<:} \ev_2) = \gamma^{<:}(\alpha^{<:}(\gamma^{<:}(\ev_1) \relcomp \gamma^{<:}(\ev_2)))$.

  By Soundness of the Galois Connection,
  $\gamma^{<:}(\ev_1 \trans{<:} \ev_2) \supseteq \gamma^{<:}(\ev_1) \relcomp \gamma^{<:}(\ev_2)$,
  thus we are only left to prove set containment on the other direction.

  Now, by Lemma~\ref{lem:rc-closed-wrt-gamma}, if
  $\gamma^{<:}(\ev_1 \trans{<:} \ev_2) $ is defined, there exists $\ev_3$ such
  that
  $\gamma^{<:}(\ev_1) \relcomp \gamma^{<:}(\ev_2) = \gamma^{<:}(\ev_3)$, 
  and thus by Optimality of the galois connection,
  $ \gamma^{<:}(\ev_1 \trans{<:} \ev_2) \subseteq (\gamma^{<:}(\ev_1) \relcomp \gamma^{<:}(\ev_2))$.
\end{proof}

\subsection{Definition of Initial Evidence}
We present Initial evidence for BRR as if Bounded Records and Rows
were available to the programmer in Fig.~\ref{fig:initial-evidence-brr}
.  This presentation is useful in
writing an inductive definition of evidence composition, and does not impact the definition of initial
evidence with BRRs for $\GTFLcsub$, as we can apply the same
rules for Gradual Rows and Record, under the condition for building initial evidence that Gradual Records become Bounded
Records where every mapping in the Gradual Record is included and annotated as Required, and Gradual
Rows become Bounded Rows where every mapping in the Gradual Row is
included and annotated and Required.

In combination with the definition of gradual meet for BRR
presented in Fig.~\ref{fig:meet-brr}, we can build  an
algorithmic definition of evidence composition.

\begin{figure}
  \centering
  \begin{small}
    \flushleft\boxed{\Isub|[\cT \csub \cT|] : \GType \times \GType \rightharpoonup \Ev{}}
    \begin{displaymath}
      \begin{array}{rll}
        \Isub|[\cT \csub \cT |] &= \evpr{\cT,\cT} & \cT \in \set{\Int,\Bool,\?} \\
        \Isub|[\cT \csub \? |] &= \evpr{\cT,\cT} & \cT \in \set{\Int,\Bool} \\
        \Isub|[\? \csub \cT |] &= \evpr{\cT,\cT} & \cT \in \set{\Int,\Bool} \\
        \Isub|[\cT_{11} -> \cT_{12} \csub \? |] &= \Isub|[\cT_{11} -> \cT_{12} \csub \? -> \?|] \\
        \Isub|[\? \csub\cT_{21} -> \cT_{22} |] &= \Isub|[\? -> \? \csub \cT_{21} -> \cT_{22}|] \\
        \Isub|[\cT_{11} -> \cT_{12} \csub \cT_{21} -> \cT_{22} |] & = \evpr{\cT'_{11} -> \cT'_{12}, \cT'_{21} -> \cT'_{22}} \\[0.5em]
        & \multicolumn{2}{l}{
          \begin{block}
            \quad \text{
              where }
            \begin{block}
              \Isub|[\cT_{21} \csub \cT_{11} |] = \evpr{\cT'_{21},\cT'_{11}} \\
              \Isub|[\cT_{12} \csub \cT_{22} |] = \evpr{\cT'_{12},\cT'_{22}}
            \end{block}
          \end{block}}\\
        \Isub|[\?   \csub \brec{\mappings{i}{n}{\lx_i : M_i}\;*_2}|] &=
        \Isub|[\brec{\?} \csub \brec{\mappings{i}{n}{\lx_i : M_i}\;*_2}|] \\
        \Isub|[\brec{\mappings{i}{n}{\lx_i : M_i}\;*_1} \csub \?|] &=
        \Isub|[\brec{\mappings{i}{n}{\lx_i : M_i}\;*_1} \csub \brec{\?}|] \\
        \Isub|[\brec{\mappings{i}{n}{\lx_i : M_{i1}}\mappings{j}{m}{\lx_j : M_j} \;*_1} \csub \brec{\mappings{i}{n}{\lx_i : M_{i2}}\mappings{k}{p}{\lx_k : M_k} \;*_2}|] &                                                                                                                  \multicolumn{2}{l}{                                                                                            =\begin{block}\left\langle\brec{\mappings{i}{n}{\lx_i : M'_{i1}}\mappings{j}{m}{\lx_j : M'_{j1}}\mappings{k}{p}{\lx_k : M'_{k1}} \;*_1},\right.\\
            \left.\brec{\mappings{i}{n}{\lx_i : M'_{i2}}\mappings{j}{m}{\lx_j : M'_{j2}}\mappings{k}{p}{\lx_k : M'_{k2}} \;*_4}\right\rangle
          \end{block}}\\
        & \multicolumn{2}{l}{
          \begin{block}
            \text{
              where}\\
            \text{for every }i, \Isub|[M_{i1} \csub M_{i2}|] = \evpr{M'_{i1},M'_{i2}}\\
            \text{for every }j, \Isub|[M_{j} \csub D(*_2)|] = \evpr{M'_{j1},M'_{j2}}\\
            \text{for every
            }k, \Isub|[D(*_1) \csub M_{k}|] = \evpr{M'_{k1},M'_{k2}}\\
            *_4 = \begin{cases}
              \? & \text {if } *_1 = *_2 = \? \\
              \cdot & \text{otherwise}
            \end{cases}\\
            D(\cdot) = \varnothing \\
            D(\?) = \?_O
          \end{block}
        }
      \end{array}
    \end{displaymath}

    \flushleft\boxed{\Isub|[M \csub M|] : \Mapping \times \Mapping \rightharpoonup \Mapping * \Mapping}
    \begin{displaymath}
      \begin{array}{rcl}
        \Isub|[M \csub \varnothing |] &=& \evpr{M,\varnothing} \\
        \Isub|[\left(\cT_1\right)_{*1} \csub \left(\cT_2\right)_R |] &=& \evpr{\left(\cT'_1\right)_R,\left(\cT'_2\right)_R}\\[0.5em]
        && \text{ where } \Isub|[\cT_1 \csub \cT_2|] = \evpr{\cT'_1,\cT'_2}\\[0.5em]
        \Isub|[\left(\cT_1\right)_{*1} \csub \left(\cT_2\right)_O |] &=& \begin{cases}
          \evpr{\Gbox{\cT_1}_{*1},\left(\cT'_2\right)_O}
          & \text{ where
          } \Isub|[\cT_1 \csub \cT_2|] = \evpr{\cT'_1,\cT'_2}\\
          \evpr{\Gbox{\cT_1}_{*1},\varnothing} & \text{ if }
          \Isub|[\cT_1 \csub \cT_2|] \text{ is undefined.}
        \end{cases}
      \end{array}
    \end{displaymath}
  \end{small}%
  \caption{Definition of Initial Evidence for Bounded Rows}
  \label{fig:initial-evidence-brr}
\end{figure}

\begin{figure}
  \begin{small}
    \centering
    \flushleft\boxed{\cT \meet \cT}\quad\textbf{Gradual Meet for BRR}
    \begin{align*}
      \cT_1 \meet \cT_2 &= \cT_2 \meet \cT_1 \\
      \? \meet \? &= \? \\
      \Int \meet \Int &= \Int \\
      \Bool \meet \Bool &= \Bool \\
      \cT \meet \? &= \cT \\
      (\cT_{11} -> \cT_{12}) \meet (\cT_{21} -> \cT_{22}) &= 
      (\cT_{11} \meet \cT_{21}) -> (\cT_{12} \meet \cT_{22}) \\
      \brec{\mappings{i}{n}{\lx_i : M_{i1}}\mappings{j}{m}{\lx_j : M_j}\; *_1} \meet
      \brec{\mappings{i}{n}{\lx_i : M_{i2}}\mappings{k}{p}{\lx_k : M_k}\; *_2}
      &=
      \brec{\mappings{i}{n}{\lx_i : M_{i1} \meet M_{i2}} \mappings{j}{m}{\lx_j : M_j \meet D(*_2)} \mappings{k}{p}{\lx_k : D(*_1) \meet M_k} *_3} \\
      &\quad \text{ where } *_3 = \begin{cases}
        ? & \text{ if } *_1 = *_2 = ?\\
        \cdot & \text{otherwise}
      \end{cases}
      \cT_1 \meet \cT_2 &\;\text{undefined otherwise}
    \end{align*}
    \flushleft\boxed{M \meet M}\quad\textbf{Gradual Meet for BRR Mappings}
    \begin{align*}
      M_1 \meet M_2 &= M_2 \meet M_1 \\
      \varnothing \meet \varnothing &= \varnothing \\
      \varnothing \meet \cT_O &= \varnothing \\
      \left(\cT_1\right)_R \meet \left(\cT_2\right)_{*} &= \left(\cT_1 \meet \cT_2\right)_R \\
      \left(\cT_1\right)_O \meet \left(\cT_2\right)_O
      &= \begin{cases}
        \left(\cT_1 \meet \cT_2\right)_O & \text{ if
        } \cT_1 \meet \cT_2 \text{ defined} \\
        \varnothing & \text { if } \cT_1 \meet \cT_2 \text{ undefined}
      \end{cases}
    \end{align*}
  \end{small}
  \caption{Gradual Meet for BRR}
  \label{fig:meet-brr}
\end{figure}

\subsection{Definition of Consistent Subtype Meet and Join for BRR}
The original statically typed language we are basing our systems
upon includes conditional branching via the $\<if>$ construct.  To
assign appropriate types in the context of branching, the static
system includes the Meet $(\submeet)$ and Join $(\subjoin)$ operations
traversing the subtyping lattice over static types defined in Fig.~\ref{fig:static-subtype-meet-join}:

\begin{figure}
  \begin{small}
    \begin{displaymath}
      \begin{block}
        \boxed{T \subjoin T}\quad \textbf{Static Subtype Join}\\[0.5em]
        \subjoin : \Type \times \Type \rightharpoonup \Type \\[0.5em]
        T_1 \subjoin T_2 = T_2 \subjoin T_1 \\[0.5em]
        \Int \subjoin \Int = \Int \\[0.5em]
        \Bool \subjoin \Bool = \Bool \\[0.5em]
        (T_{11} -> T_{12}) \subjoin (T_{21} -> T_{22}) =\\ 
        \hspace{2cm}
        (T_{11} \submeet T_{21}) -> (T_{12} \subjoin T_{22}) \\[0.5em]
        \brec{\mappings{i}{n}{\lx_i:T_{i1}}\mappings{j}{m}{\lx_j:T_j}} \subjoin \\
        \brec{\mappings{i}{n}{\lx_i:T_{i2}}\mappings{k}{p}{\lx_k:T_k}} =
        \brec{\mappings{i}{n}{\lx_i:T_{i1} \subjoin T_{i2}}}
        \\
      \end{block}
      \qquad
      \begin{block}
        \boxed{T \submeet T}\quad \textbf{Static Subtype Meet}\\[0.5em]
        \submeet : \Type \times \Type \rightharpoonup \Type \\[0.5em]
        T_1 \submeet T_2 = T_2 \submeet T_1 \\[0.5em]
        \Int \submeet \Int = \Int \\[0.5em]
        \Bool \submeet \Bool = \Bool \\[0.5em]
        (T_{11} -> T_{12}) \submeet (T_{21} -> T_{22}) =\\ 
        \hspace{2cm}
        (T_{11} \subjoin T_{21}) -> (T_{12} \submeet T_{22}) \\[0.5em]
        \brec{\mappings{i}{n}{\lx_i:T_{i1}}\mappings{j}{m}{\lx_j:T_j}} \submeet \\
        \brec{\mappings{i}{n}{\lx_i:T_{i2}}\mappings{k}{p}{\lx_k:T_k}} = \\
        \hspace{3em}\brec{\mappings{i}{n}{\lx_i:T_{i1} \submeet T_{i2}}\mappings{j}{m}{\lx_j:T_j}\mappings{k}{p}{\lx_k:T_k}}
        \\
      \end{block}
    \end{displaymath}
  \end{small}
  \caption{Static Subtype Meet and Join}
  \label{fig:static-subtype-meet-join}
\end{figure}

We include the gradual versions of these definitions in the
context of BRR in Fig.~\ref{fig:cs-meet-and-join-brr}.

\begin{figure}
  \centering
  \begin{small}
    \begin{displaymath}
      \begin{block}
        \boxed{\cT \csubjoin \cT}\quad \textbf{Consistent Subtype Join with BRR}\\[0.5em]
        \csubjoin : \GType \times \GType \rightharpoonup \GType \\[0.5em]
        \cT_1 \csubjoin \cT_2 = \cT_2 \csubjoin \cT_1 \\[0.5em]
        \? \csubjoin \? = \? \\[0.5em]
        \Int \csubjoin \Int = \Int \\[0.5em]
        \Int \csubjoin \? = \Int \\[0.5em]
        \Bool \csubjoin \Bool = \Bool \\[0.5em]
        \Bool \csubjoin \? = \Bool \\[0.5em]
        (\cT_{11} -> \cT_{12}) \csubjoin (\cT_{21} -> \cT_{22}) =\\ 
        \hspace{2cm}
        (\cT_{11} \csubmeet \cT_{21}) -> (\cT_{12} \csubjoin \cT_{22}) \\[0.5em]
        (\cT_{11} -> \cT_{12}) \csubjoin \? = 
        (\cT_{11} -> \cT_{12}) \csubjoin (\? -> \?) \\[0.5em]
        \begin{footnotesize}
          \brec{\mappings{i}{n}{\lx_i : M_i}\;*} \csubjoin \? =
          \brec{\mappings{i}{n}{\lx_i : M_i}\;*} \csubjoin \brec{\?}
        \end{footnotesize}\\[0.5em]
        \begin{footnotesize}
          \brec{\mappings{i}{n}{\lx_i : M_{i1}}\mappings{j}{m}{\lx_j: M_j}\;*_1} \csubjoin
          \brec{\mappings{i}{n}{\lx_i : M_{i2}}\mappings{k}{p}{\lx_k : M_k}\; *_2} =
        \end{footnotesize}\\
        \begin{footnotesize}
          \brec{\mappings{i}{n}{\lx_i : M_{i1} \csubjoin M_{i2}}\mappings{j}{m}{\lx_j: M_j \csubjoin D(*_2)} \mappings{k}{p}{\lx_k : D(*_1) \csubjoin M_k}\;*_3}\end{footnotesize}\\
        \hspace{2cm}
        \begin{footnotesize}
          \text{ where } *_3 = \begin{cases}
            ? & \text{ if } *_1 = *_2 = \? \\
            \cdot & \text{ otherwise}
          \end{cases}
        \end{footnotesize}
        \\[0.5em]
        \\ 
        \cT \csubjoin \cT \text{ undefined otherwise} \\[0.5em]
      \end{block}  
      \qquad
      \begin{block}
        \boxed{\cT \csubmeet \cT}\quad \textbf{Consistent Subtype Meet with BRR}\\[0.5em]
        \csubmeet : \GType \times \GType \rightharpoonup \GType \\[0.5em]
        \cT_1 \csubmeet \cT_2 = \cT_2 \csubmeet \cT_1 \\[0.5em]
        \? \csubmeet \? = \? \\[0.5em]
        \Int \csubmeet \Int = \Int \\[0.5em]
        \Int \csubmeet \? = \Int \\[0.5em]
        \Bool \csubmeet \Bool = \Bool \\[0.5em]
        \Bool \csubmeet \? = \Bool \\[0.5em]
        (\cT_{11} -> \cT_{12}) \csubmeet (\cT_{21} -> \cT_{22}) =\\ 
        \hspace{2cm}
        (\cT_{11} \csubjoin \cT_{21}) -> (\cT_{12} \csubmeet \cT_{22}) \\[0.5em]
        (\cT_{11} -> \cT_{12}) \csubmeet \? = 
        (\cT_{11} -> \cT_{12}) \csubmeet (\? -> \?) \\[0.5em]
        \begin{footnotesize}
          \brec{\mappings{i}{n}{\lx_i : M_i}\;*} \csubmeet \? =
          \brec{\mappings{i}{n}{\lx_i : M_i}\;*} \csubmeet \brec{\?}
        \end{footnotesize}\\[0.5em]
        \begin{footnotesize}
          \brec{\mappings{i}{n}{\lx_i : M_{i1}}\mappings{j}{m}{\lx_j: M_j}\;*_1} \csubmeet
          \brec{\mappings{i}{n}{\lx_i : M_{i2}}\mappings{k}{p}{\lx_k : M_k}\; *_2} =
        \end{footnotesize}\\
        \begin{footnotesize}
          \brec{\mappings{i}{n}{\lx_i : M_{i1} \csubmeet M_{i2}}\mappings{j}{m}{\lx_j: M_j \csubmeet D(*_2)} \mappings{k}{p}{\lx_k : D(*_1) \csubmeet M_k}\;*_3}
        \end{footnotesize}\\
        \hspace{2cm}
           \begin{footnotesize}
         \text{ where } *_3 = \begin{cases}
           \cdot & \text{ if } *_1 = *_2 = \cdot \\
           \? & \text{ otherwise}
         \end{cases}
         \end{footnotesize}
         \\[0.5em]
         \\ 
         \cT \csubmeet \cT \text{ undefined otherwise} \\[0.5em]
       \end{block}  
     \end{displaymath}
     \begin{displaymath}
       \begin{block}
         \boxed{M \csubjoin M}\quad \textbf{Consistent Subtype Join with BRR}\\[0.5em]
         \csubjoin : \Mapping \times \Mapping \rightharpoonup \Mapping \\[0.5em]
         M_1 \csubjoin M_2 = M_2 \csubjoin M_1 \\[0.5em]
         \varnothing \csubjoin M= \varnothing \\[0.5em]
         \left(\cT_1\right)_R \csubjoin \left(\cT_2\right)_R = \left(\cT_1 \csubjoin \cT_2\right)_R \\[0.5em]
         \left(\cT_1\right)_O \csubjoin \left(\cT_2\right)_{*} = \begin{cases}
           \left(\cT_1 \csubjoin \cT_2\right)_O & \text{ if }
           \cT_1 \csubjoin \cT_2 \text{ defined} \\
           \varnothing & \text{ if }
           \cT_1 \csubjoin \cT_2 \text{ undefined} \\
           \end{cases}
       \end{block}  
       \qquad
       \begin{block}
         \boxed{M \csubmeet M}\quad \textbf{Consistent Subtype Meet with BRR}\\[0.5em]
         \csubmeet : \Mapping \times \Mapping \rightharpoonup \Mapping \\[0.5em]
         M_1 \csubmeet M_2 = M_2 \csubmeet M_1 \\[0.5em]
         \varnothing \csubmeet M = M \\[0.5em]
         \left(\cT_1\right)_R \csubmeet \left(\cT_2\right)_{*} = \left(\cT_1 \csubmeet \cT_2\right)_R \\[0.5em]
         \left(\cT_1\right)_O \csubmeet \left(\cT_2\right)_{O} = \begin{cases}
           \left(\cT_1 \csubmeet \cT_2\right)_O & \text{ if }
           \cT_1 \csubmeet \cT_2 \text{ defined} \\
           \varnothing & \text{ if }
          \cT_1 \csubmeet \cT_2 \text{ undefined} \\
           \end{cases}
       \end{block}  
     \end{displaymath}
   \end{small}
   \caption{Consistent Subtype Extrema with BRR}
   \label{fig:cs-meet-and-join-brr}
 \end{figure}

 \subsection{Space efficiency theorems}
 \begin{corollary}[Fixed overhead a-la-\citet{herman10space}]
  If  $\pls{e} -->^{*} \pls{e'}$ is bound for overhead, then
  \[\spacep{\size}{\pls{e'}} \leq 3*B(\pls{e})* \spacep{0}{\pls{e'}}\]
\end{corollary}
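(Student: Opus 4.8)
The plan is to prove the bound by structural induction on the term $\pls{e'}$. The only hypothesis needed is the reformulation of bound-for-overhead stated just above the corollary: writing $B = B(\pls{e})$, every evidence object occurring in $\pls{e'}$ has $\size \leq B$. I would carry this as the induction invariant, since it is plainly inherited by subterms — every evidence inside a subterm of $\pls{e'}$ is also evidence inside $\pls{e'}$ — so the hypothesis holds for each recursive call. I would also note at the outset that we may assume $B \geq 1$ without loss of generality: replacing $B$ by $\max(B,1)$ still makes the invariant true and only weakens the conclusion.

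For the base cases $\pls{e'} = \pls{n}$ and $\pls{e'} = \pls{x}$, both $\spacep{\size}{\pls{e'}}$ and $\spacep{0}{\pls{e'}}$ equal $1$, so the goal reduces to $1 \leq 3B$. For each inductive case I would expand the structural defining equation for $\spacep{\cdot}{\cdot}$ from Fig.~\ref{fig:spacep}, bound each $\size(\pls{\ev})$ summand by $B$ using the invariant, replace each recursive $\spacep{\size}{\pls{e_i}}$ by $3B \cdot \spacep{0}{\pls{e_i}}$ using the induction hypothesis, and verify that the resulting arithmetic inequality follows from $B \geq 1$. The binding case is application (and, identically, addition): there the node costs $1$ and carries two evidence objects, so the goal becomes $1 + 2B + 3B\,(\spacep{0}{\pls{e_1}} + \spacep{0}{\pls{e_2}}) \leq 3B\,(1 + \spacep{0}{\pls{e_1}} + \spacep{0}{\pls{e_2}})$, i.e.\ $1 + 2B \leq 3B$, i.e.\ $B \geq 1$ — this is exactly where the constant $3$ comes from. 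The ascription and projection cases need only $1 + B \leq 3B$; the $\lambda$ and record cases need only $1 \leq 3B$; and the conditional case, although omitted from Fig.~\ref{fig:spacep}, is handled the same way.

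The main obstacle here is not depth but bookkeeping care: one must notice that $3$ is tuned to the worst-case node shape (unit structural cost plus two nested evidence objects) and that the whole argument quietly depends on $B \geq 1$, which must be discharged once up front. Beyond that the induction is entirely routine, so I would present it tersely — doing the application case in full and remarking that the remaining cases are analogous arithmetic checks against $B \geq 1$.
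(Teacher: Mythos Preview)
Your proposal is correct and follows essentially the same approach as the paper: structural induction on $\pls{e'}$, identifying the worst-case node shape (unit cost plus two evidence objects plus recursion over two subterms) as the source of the constant $3$, bounding each evidence by $B$, and applying the induction hypothesis to subterms. Your explicit remark that the argument relies on $B \geq 1$ is a point the paper leaves implicit.
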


\begin{proof}
  By induction over the structure of $\pls{e'}$.

  At every level of the AST where
  evidence objects occur, in the worst case, the definition of
  $\spacep{f}{\cdot}$ equals 1 plus recursion over two subterms plus applying
  $f$ over two evidences $\ev_1$ and $\ev_2$.  By the bound for overhead,
  $1 + \size(\ev_1) + \size(\ev_2) \leq 3 * (B(\pls{e}) * (1 + 0 + 0))$, and we
  can apply the induction hypothesis over each of the subterms to get the
  desired bound.
\end{proof}

\begin{theorem}[Properties of $\text{RL}^{+}$]
  If evidence composition is associative and has a bound, then the semantics of $\text{RL}^{+}$ is
  space-efficient and is observationally equivalent to the semantics
  of $\text{RL}$.
\end{theorem}

\begin{proof}
  Theorem~\ref{thm:weak-bis} guarantees observational equivalence in
  the presence of associative evidence composition.  For
  space-efficiency, we must separately prove stack and overhead bounds.
  The stack bound holds by the definition of contextual
  reduction in $\text{RL}^{+}$.

  We prove the overhead bound by structural induction over
  the transitive closure of $\text{RL}^{+}$.
  The reflective case holds if Prop.~\ref{prop:ev-bound} holds.
  For a particular step, we proceed inductively over the definition of the
  reduction relation. We only need to deal with the cases that
  modify the evidences in the program:
  \begin{itemize}
  \item Case
    $\pls{G[\cast{\botev_1}\cast{\botev_2}{e}]} --> \pls{G[\cast{\botev_2 \bottrans{} \botev_1}{e}]}$
    holds if
    $\size(\pls{\botev_2 \bottrans{} \botev_1}) \leq 1 + 2 * \size(B(\pls{e}))$,
    which holds when Prop.~\ref{prop:ev-bound} holds.
  \item Cases for addition and conditionals always
    produce smaller terms independently of the definition of
    composition.
  \item Case for projection holds as $\iproj$ always produces a
    smaller evidence and $\rv_j$ is always smaller than the record
    it belongs to.
  \item Case for functions holds if
    $\ev_2 \trans{} \idom(\ev_1) \leq B (\pls{e})$.  Since
    $\idom$ always produces a smaller or equally sized evidence,
    $\size(\ev_2 \trans{} \idom(\ev_1)) \leq B(\pls{e})$ by Prop.~\ref{prop:ev-bound}.
  \end{itemize}
\end{proof}

\clearpage
\section{Forward Completeness implies Associativity}

The following is a generic proof that forward completeness suffices to ensure
that the ideal abstract counterpart to an associative concrete operator is
itself associative.

\newcommand{\abullet}{\mathbin{\widehat{\bullet}}}

Let:
\begin{itemize}
\item $C \in \oblset{Concrete}$ be a set of concrete objects;
\item $A \in \oblset{Abstract}$ be a set of abstract objects;
\item $\gamma : \oblset{Concrete} -> \oblset{Abstract}$ and
  $\alpha : \oblset{Abstract} -> \oblset{Concrete}$ form a Galois connection;
\item
  $\bullet : \oblset{Concrete} \times \oblset{Concrete} -> \oblset{Concrete}$
  be an associative function on concrete objects;
\item
  $\abullet : \oblset{Abstract} \times \oblset{Abstract} ->
  \oblset{Abstract}$
\item $\abullet$ is forward complete:
  $\gamma(A_1) \bullet \gamma(A_2) = \gamma(A_1 \abullet A_2)$.
\end{itemize}

By properties of Galois connections, we have
$\gamma \circ \alpha \circ \gamma = \gamma$.

\begin{proposition}
  $(A_1 \abullet A_2)  \abullet A_3
  = A_1  \abullet (A_2  \abullet A_3).$
\end{proposition}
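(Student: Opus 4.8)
The plan is to reduce associativity of $\abullet$ to associativity of $\bullet$ by pushing everything through the concretization $\gamma$, using the fact that $\gamma$ is injective (which follows from $\gamma \circ \alpha \circ \gamma = \gamma$, a property of Galois insertions, or more carefully from the fact that the Galois connection gives $\alpha \circ \gamma = \mathrm{id}$ here, so $\gamma$ has a left inverse and is therefore injective). Once we know $\gamma$ is injective, it suffices to prove $\gamma\big((A_1 \abullet A_2) \abullet A_3\big) = \gamma\big(A_1 \abullet (A_2 \abullet A_3)\big)$.

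First I would expand the left-hand side using forward completeness twice: $\gamma\big((A_1 \abullet A_2) \abullet A_3\big) = \gamma(A_1 \abullet A_2) \bullet \gamma(A_3) = \big(\gamma(A_1) \bullet \gamma(A_2)\big) \bullet \gamma(A_3)$. Symmetrically, expanding the right-hand side gives $\gamma\big(A_1 \abullet (A_2 \abullet A_3)\big) = \gamma(A_1) \bullet \big(\gamma(A_2) \bullet \gamma(A_3)\big)$. Now the two right-hand sides are equal by associativity of the concrete operator $\bullet$. Applying injectivity of $\gamma$ then yields $(A_1 \abullet A_2) \abullet A_3 = A_1 \abullet (A_2 \abullet A_3)$.

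The main subtlety — the only place any care is required — is justifying that $\gamma$ is injective, since the excerpt explicitly emphasizes that $\gamma^{<:}$ and $\alpha^{<:}$ form a Galois connection but \emph{not} a Galois insertion. However, injectivity of $\gamma$ does not require a Galois insertion on the abstract side: it only requires $\alpha(\gamma(A)) = A$ for all abstract $A$ (i.e., that $\gamma$ is a \emph{section}), which is exactly the stated property $\gamma \circ \alpha \circ \gamma = \gamma$ combined with $\gamma$ being order-reflecting, or more directly is one of the standard equivalent formulations of a Galois connection where the abstraction recovers abstract elements exactly. I would state this as a short lemma: from $\alpha \circ \gamma = \mathrm{id}_{\oblset{Abstract}}$, if $\gamma(A_1) = \gamma(A_2)$ then $A_1 = \alpha(\gamma(A_1)) = \alpha(\gamma(A_2)) = A_2$. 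For the concrete instantiation in the paper ($C$ ranging over fragments of $<:$, $A$ over evidence objects), this is precisely the observation from Sec.~\ref{sec:agt} that evidence objects are the image of $\alpha^{<:} \circ \gamma^{<:}$ and that $\alpha^{<:}$ acts as identity on them.

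In short, the proof is three lines of rewriting sandwiched between one appeal to forward completeness on each side and one appeal to concrete associativity in the middle, with the injectivity lemma doing the real structural work. I expect no obstacle beyond correctly phrasing the injectivity step; the case-explosion that a direct proof of $\abullet$-associativity would entail is entirely avoided because forward completeness lets us never reason about three abstract objects simultaneously — only about how $\abullet$ interacts with $\gamma$ two objects at a time.
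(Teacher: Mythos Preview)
Your proof is correct and coincides with the paper's second proof, which assumes a Galois insertion ($\alpha \circ \gamma = \mathrm{id}$) and applies $\alpha$ to both sides rather than phrasing it as injectivity of $\gamma$; these are the same move. One terminological correction: you claim injectivity of $\gamma$ ``does not require a Galois insertion,'' but the property you then invoke, $\alpha \circ \gamma = \mathrm{id}$, \emph{is} the Galois-insertion condition---the identity $\gamma \circ \alpha \circ \gamma = \gamma$ holds for every Galois connection and does not by itself yield injectivity. Your resolution for the concrete case (restricting the abstract domain to well-formed evidence, on which $\alpha^{<:} \circ \gamma^{<:}$ is the identity) is exactly what the paper relies on.
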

\begin{proof}
  The following proof depends on $\abullet$ being the \emph{best abstraction}
  of $\bullet$ with respect to its input and output abstractions
  ($\oblset{Abstract}^2 \dashv \oblset{Concrete}^2$ and
  $\oblset{Abstract} \dashv \oblset{Concrete}$ respectively): Let $\abullet$ be
  defined by $A_1 \abullet A_2 = \alpha(\gamma(A_1) \bullet \gamma(A_2))$.
  \begin{align*}
       & (A_1 \abullet A_2)  \abullet A_3 \\
    =\;& \alpha(\gamma(A_1) \bullet \gamma(A_2))  \abullet A_3
       & \text{(defn of $\abullet$)} \\
    =\;& \alpha(\gamma(A_1 \abullet A_2))  \abullet A_3
       & \text{(forward-completeness)}  \\
    =\;& \alpha(\gamma(\alpha(\gamma(A_1 \abullet A_2)))  \bullet \gamma(A_3))
       & \text{(defn of $\abullet$)} \\
    =\;& \alpha(\gamma(A_1 \abullet A_2))  \bullet \gamma(A_3))
       & \text{($\gamma \circ \alpha \circ \gamma = \gamma$)}  \\
    =\;& \alpha((\gamma(A_1) \bullet \gamma(A_2))  \bullet \gamma(A_3))
       & \text{(forward-completeness)}  \\
    =\;& \alpha(\gamma(A_1) \bullet (\gamma(A_2)  \bullet \gamma(A_3)))
       & \text{(associativity of $\bullet$)}  \\
    =\;& \alpha(\gamma(A_1) \bullet \gamma(A_2 \abullet A_3))
       & \text{(forward-completeness)}  \\
    =\;&  A_1  \abullet (A_2  \abullet A_3).
    & \text{(defn of $\abullet$)} 
  \end{align*}
\end{proof}

\begin{proof}
  The following proof depends on $\alpha,\gamma$ forming a Galois insertion,
  $\alpha \circ \gamma = 1$.
  It does not depend on the particulars of $\abullet$'s definition.
  \begin{align*}
       & (A_1 \abullet A_2)  \abullet A_3 \\
       =\;& \alpha(\gamma((A_1 \abullet A_2)  \abullet A_3))
       & \text{(Galois insertion)} \\
    =\;& \alpha(\gamma(A_1 \abullet A_2)  \bullet \gamma(A_3))
       & \text{(forward-completeness)}  \\
    =\;& \alpha((\gamma(A_1) \bullet \gamma(A_2))  \bullet \gamma(A_3))
       & \text{(forward-completeness)}  \\
    =\;& \alpha(\gamma(A_1) \bullet (\gamma(A_2)  \bullet \gamma(A_3)))
       & \text{(associativity of $\bullet$)}  \\
    =\;& \alpha(\gamma(A_1) \bullet \gamma(A_2 \abullet A_3))
       & \text{(forward-completeness)}  \\
    =\;& \alpha(\gamma(A_1  \abullet (A_2  \abullet A_3)))
       & \text{(forward-completeness)}  \\
    =\;&  A_1  \abullet (A_2  \abullet A_3).
       & \text{(Galois insertion)}
  \end{align*}
  
\end{proof}

\end{document}